\DeclareMathAlphabet\mathbfcal{OMS}{cmsy}{b}{n}
\renewcommand{\d}{{\mathrm d}}
\newcommand{\im}{\mathrm{i}}
\newcommand{\e}{\mathrm{e}}
\def\tr{\mathop{\mathrm{tr}}\limits}
\newtheorem{theo}{Theorem}[section]
\newtheorem{lem}[theo]{Lemma}
\newtheorem{rem}[theo]{Remark}
\newtheorem{problem}[theo]{Riemann-Hilbert Problem}
\newtheorem{remark}[theo]{Remark}
\newtheorem{prop}[theo]{Proposition} 
\newtheorem{cor}[theo]{Corollary} 
\newtheorem{definition}[theo]{Definition}
\newtheorem{assum}[theo]{Assumption}
\begin{document}

\title[Kernel atlas]{A Riemann-Hilbert approach to Fredholm determinants of Hankel composition operators: scalar-valued kernels}

\author{Thomas Bothner}
\address{School of Mathematics, University of Bristol, Fry Building, Woodland Road, Bristol, BS8 1UG, United Kingdom}
\email{thomas.bothner@bristol.ac.uk}
\date{\today}

\keywords{Hankel composition operators, Fredholm determinants, integrable systems, Riemann-Hilbert problem, nonlinear steepest descent method, Akhiezer-Kac theorems.}

\subjclass[2010]{Primary 45B05; Secondary 47B35, 35Q15, 30E25, 37J35, 70H06, 34E05}
\thanks{The author is grateful to J. Baik, M. Bertola, A. Its and A. Krajenbrink for stimulating discussions. This work is supported by the Engineering and Physical Sciences Research Council through grant EP/T013893/2 and we dedicate the paper to Harold Widom (1932-2021) and his many path-breaking works. The author would also like to thank the anonymous referees for their valuable suggestions which improved the paper in a variety of ways, in particular with regards to Remark \ref{PM}.}

\begin{abstract}
We characterize Fredholm determinants of a class of Hankel composition operators via matrix-valued Riemann-Hilbert problems, for additive and multiplicative compositions. The scalar-valued kernels of the underlying integral operators are not assumed to display the integrable structure known from the seminal work of Its, Izergin, Korepin and Slavnov \cite{IIKS}. Yet we are able to describe the corresponding Fredholm determinants through a naturally associated Riemann-Hilbert problem of Zakharov-Shabat type by solely exploiting the kernels' Hankel composition structures. We showcase the efficiency of this approach through a series of examples, we then compute several rank one perturbed determinants in terms of Riemann-Hilbert data and finally derive Akhiezer-Kac asymptotic theorems for suitable kernel classes.
\end{abstract}

\dedicatory{Dedicated to the memory of Harold Widom}
\maketitle

%%%%%%%%%%%%%%%%%%%%%%%%%%%%%%%%%%%%%%%%%%%%%%%%%%%%%%%%%%%%%%%%%%%%%%%%%%%%%%%%%%%%%%%%%%%%%%%%%%%
\section{Introduction and motivation}\label{sec1}
This paper is concerned with a family of trace class integral operators acting on $L^2(0,\infty)$ or $L^2(0,1)$ whose Fredholm determinants can be characterized in terms of a canonical, auxiliary Riemann-Hilbert problem.\smallskip\\

To motivate our work we consider two examples from random matrix theory, cf. \cite{F1}.
\begin{definition}[Hsu \cite{H}, 1939; Wigner \cite{Wig}, 1957] A random real symmetric matrix ${\bf X}\in\mathbb{R}^{n\times n}$ belongs to the Gaussian orthogonal ensemble \textnormal{(GOE)} if its diagonal and upper triangular entries are independently chosen with pdfs
\begin{equation*}
	\frac{1}{\sqrt{2\pi}}\e^{-\frac{1}{2}x_{jj}^2}\ \ \ \textnormal{and}\ \ \ \frac{1}{\sqrt{\pi}}\e^{-x_{jk}^2},
\end{equation*}
respectively.
\end{definition}
\begin{definition}[Ginibre \cite{Gin}, 1965] A random real matrix ${\bf X}\in\mathbb{R}^{n\times n}$ belongs to the real Ginibre ensemble \textnormal{(GinOE)} if its entries are independently chosen with pdfs
\begin{equation*}
	\frac{1}{\sqrt{2\pi}}\e^{-\frac{1}{2}x_{jk}^2}.
\end{equation*}
\end{definition}
The GOE and GinOE constitute two of the most well-studied random matrix ensembles and many of their properties are textbook knowledge. For instance, the statistical behavior of extreme eigenvalues in both ensembles is classical: first, we have in distribution for any ${\bf X}\in\textnormal{GOE}$, see \cite{Bai,F2},
\begin{equation*}
	\sqrt{2}n^{\frac{1}{6}}\Big(\max_{i=1,\ldots,n}\lambda_i({\bf X})-\sqrt{2n}\,\Big)\Rightarrow F_1,\ \ \ \textnormal{as}\ n\rightarrow\infty,
\end{equation*}
where the random variable $F_1$, the ubiquitous Tracy-Widom GOE variable, obeys the law
\begin{equation}\label{i:1}
	\mathbb{P}\left(F_1\leq t\right)=\exp\left[-\frac{1}{2}\int_t^{\infty}(s-t)\big(q(s)\big)^2\,\d s-\frac{1}{2}\int_t^{\infty}q(s)\,\d s\right],\ \ t\in\mathbb{R},
\end{equation}
determined through a particular solution $q=q(s)$ of a second order nonlinear ODE boundary value problem, see \cite[$(53)$]{TW3}. Alternatively, and equivalently, the special function $q$ is determined through the solution of a distinguished Riemann-Hilbert problem, see Section \ref{seci23}. Second, for any ${\bf X}\in\textnormal{GinOE}$, it is known from \cite{RS,PTZ} that in distribution
\begin{equation*}
	\max_{\substack{i=1,\ldots,n\\ \lambda_i\in\mathbb{R}}}\lambda_i({\bf X})-\sqrt{n}\Rightarrow G_1,\ \ \ \textnormal{as}\ n\rightarrow\infty,
\end{equation*}
where the Rider-Sinclair GinOE variable $G_1$ obeys the law, cf. \cite[Theorem $1.4$]{BB0},
\begin{equation}\label{i:2}
	\mathbb{P}\left(G_1\leq t\right)=\exp\left[-\frac{1}{2}\int_t^{\infty}(s-t)\big(p(s)\big)^2\,\d s-\frac{1}{2}\int_t^{\infty}p(s)\,\d s\right],\ \ t\in\mathbb{R},
\end{equation}
determined through another special function $p=p(s)$ which can also be characterized via a Riemann-Hilbert problem, see Section \ref{seci23}. Thus, comparing \eqref{i:1} and \eqref{i:2}, we shall ask, and answer, the following question:\smallskip
\begin{quote}
	\hspace{2cm}Why are the right hand sides in formul\ae\,\eqref{i:1} and \eqref{i:2} so similar? 
\end{quote}\smallskip
After all, identity \eqref{i:1} describes the fluctuations of the largest real eigenvalue in a Hermitian ensemble, while identity \eqref{i:2} describes the fluctuations of the same extreme eigenvalue in a non-Hermitian ensemble. Yet, the expressions for the two soft edge limit laws look very similar, so perhaps this is not a coincidence!
\begin{rem} Next to any potential structural universality, there is the famous probabilistic universality: the limit law \eqref{i:1} captures the extreme eigenvalue fluctuations not only in the GOE, but in a larger class of real Wigner random matrices, see \cite{Sos}. Likewise, \eqref{i:2} holds true in a class of real non-Hermitian random matrices with independent, identically distributed entries \cite{Ces}. And these are just appearances of \eqref{i:1} and \eqref{i:2} random matrix theory, but they appear in other areas of mathematical physics, too, cf. \cite{Cor,KDO}.
\end{rem}
In order to answer our question, we will now recall how one typically proves \eqref{i:1} and \eqref{i:2}: Given that both, GOE and GinOE, are exactly solvable matrix models, their correlation functions are computable in terms of skew-orthogonal polynomials. More to the point, the random processes formed by the eigenvalues are Pfaffian point processes, so one can express gap functions in both ensembles as Fredholm Pfaffians for any finite $n$. Once simplified, along the lines of \cite[Section II]{TW3}, the Fredholm Pfaffians with matrix-valued kernels become Fredholm determinants with scalar-valued kernels, and are then amenable to the large $n$ limit. See \cite[Chapters $7,9$]{F1} for the details of this approach in case of the GOE and \cite[Section $2.1$]{RS} in case of the GinOE.
\begin{rem} Instead of converting Fredholm Pfaffians to Fredholm determinants, one can study the scaling limit without the conversion step and derive limiting Fredholm Pfaffian formul\ae\, for the distribution functions, cf. \cite{TW4} for the GOE. However it is not clear how the limiting Fredholm Pfaffians, without converting them to  Fredholm determinants, can directly yield the compact expressions in the right hand side of \eqref{i:1} or \eqref{i:2}.
\end{rem}
Executing the above approach, the left hand sides in \eqref{i:1} and \eqref{i:2} become square roots of Fredholm determinants of an operator $K_t+\alpha_t\otimes\beta_t$ acting on $L^2(0,\infty)$, where $K_t:L^2(0,\infty)\rightarrow L^2(0,\infty)$ is an integral operator with \textit{additive} Hankel composition kernel
\begin{equation}\label{i:4}
	K_t(x,y):=\int_0^{\infty}\phi(x+z+t)\psi(z+y+t)\,\d z,\ \ \ \ \ t\in\mathbb{R}.
\end{equation}
Here, $\phi,\psi,\alpha_t,\beta_t$ are certain scalar-valued functions and $\alpha\otimes\beta$ denotes a general rank one integral operator on $L^2(0,\infty)$. Next, in order to simplify the Fredholm determinant of $K_t+\alpha_t\otimes\beta_t$ and subsequently identify the special functions $q$ and $p$ in \eqref{i:1} and \eqref{i:2}, the traditional approach would ask one to convert the Hankel composition kernel \eqref{i:4} into an \textit{integrable} kernel, i.e. one would try to find functions $f_j,g_j\in L^{\infty}(0,\infty)$ such that
\begin{equation}\label{i:5}
	K_t(x,y)=\frac{\sum_{j=1}^Nf_j(x)g_j(y)}{x-y},\ \ \ \ \ \ \ \sum_{j=1}^Nf_j(x)g_j(x)=0,
\end{equation}
holds true on $\mathbb{R}^2$ for all $t\in\mathbb{R}$ with $N\in\mathbb{Z}_{\geq 1}$\footnote{The Fredholm determinant of $K_t+\alpha_t\otimes\beta_t$ involves a rank one perturbation, however after factoring out $I-K_t$, the major obstacle in simplifying it originates from the Fredholm determinant of $K_t$. The remaining finite-dimensional determinant is easier to handle, see Theorem \ref{theo2}.}. This step works out well for the GOE where $\phi$ and $\psi$ in \eqref{i:4} are Airy functions, but it does not work out for the GinOE where $\phi$ and $\psi$ are Gaussian functions. Still, achieving equality \eqref{i:5} has been considered desirable for a long time as integral operators with integrable kernels have many remarkable properties, see \cite{DInt}: linear combinations and compositions of integrable operators are integrable, so are resolvents of integrable operators, and most remarkably, the resolvent of an integrable operator can be computed in terms of the solution of a naturally associated Riemann-Hilbert problem. Examples of integrable operators have been studied for many years and their full theory was developed by Its, Izergin, Korepin and Slavnov \cite{IIKS} in the early 1990s. The same theory allows for a systematic analysis of the Fredholm determinant of an integrable operator from the viewpoint of integrable systems theory, as done in \cite{TW,TW4} for \eqref{i:1}, and from the viewpoint of asymptotic analysis, see \cite{DIZ}.
\begin{rem} A Hankel composition kernel \eqref{i:4} is frequently easier to work with when estimating and answering trace class kind of questions for the corresponding integral operator. Thus, effort has been spent on finding conditions so that integrable kernels are Hankel composition kernels, cf. \cite{Blow1,Blow2}. These conditions are mostly formulated in terms of differential equations satisfied by $f_j$ and $g_j$ in \eqref{i:5}, and thus more or less in terms of regularity properties of the same functions. As we will see, the methods developed in this paper do not rely on differential equations but on the algebraic composition structure of the kernel \eqref{i:4}.
\end{rem}

What happens if \eqref{i:5} cannot be achieved? In that case, one could recall that a Fredholm determinant, for a reasonable trace class operator, is defined in terms of the operator's traces and thus the determinant enjoys some conjugation invariance by Lidskii's theorem \cite[Corollary $3.8$]{S}. The very same gauge freedom was exploited in the works of Bertola and Cafasso \cite{BerCa0,BerCa1} on multi-time processes and it allowed them to convert some additive Hankel composition operators to integrable operators in Fourier space. In more detail, assuming that $\phi$ and $\psi$ in \eqref{i:4} admit contour integral representations of the form
\begin{equation}\label{i:6}
	\phi(x)=\int_{\gamma_1}\hat{\phi}(\lambda)\e^{-\im\lambda x}\,\d\lambda,\ \ \ \ \ \psi(y)=\int_{\gamma_2}\hat{\psi}(\mu)\e^{\im\mu y}\,\d\mu,\ \ \ \ \ (x,y)\in\mathbb{R}^2,
\end{equation}
for suitably chosen functions $\hat{\phi},\hat{\psi}$ and contours $\gamma_j\subset\mathbb{C}$, we obtain in \eqref{i:4},
\begin{equation*}
	K_t(x,y)=-\im\int_{\gamma_1}\int_{\gamma_2}\hat{\phi}(\lambda)\hat{\psi}(\mu)\e^{-\im(\lambda x-\mu y)}\e^{-\im t(\lambda-\mu)}\frac{\d\mu\,\d\lambda}{\lambda-\mu}\ \ \ \ \ \textnormal{provided}\ \ \ \Im(\lambda-\mu)<0,
\end{equation*}
and so with an integral formula for the characteristic function $\chi_{+}$ on $(0,\infty)$, such as, cf. \cite[Lemma $2.2$]{BB0},
\begin{equation*}
	\chi_+(y)=\frac{1}{2\pi\im}\int_{-\infty}^{\infty}\e^{\im y(\xi-\mu)}\frac{\d\xi}{\xi-\mu},\ \ \ \ y\in\mathbb{R}\setminus\{0\},\ \ \Im\mu>0,
\end{equation*}
all together,
\begin{equation}\label{i:7}
	K_t(x,y)\chi_+(y)=\int_{\gamma_1}\int_{-\infty}^{\infty}\frac{\e^{-\im\lambda x}}{\sqrt{2\pi}}\underbrace{\left[\hat{\phi}(\lambda)\e^{-\im t\lambda}\int_{\gamma_2}\frac{\hat{\psi}(\mu)\e^{\im t\mu}}{(\lambda-\mu)(\mu-\xi)}\,\d\mu\right]}_{=:N_t(\lambda,\xi)}\frac{\e^{\im\xi y}}{\sqrt{2\pi}}\,\d\xi\,\d\lambda.
\end{equation}
By identity \eqref{i:7}, $K_t\chi_+$ on $L^2(\mathbb{R})$, the operator with kernel $K_t(x,y)\chi_+(y)$, is essentially Fourier equivalent to the integral transformation $N_t:L^2(\mathbb{R})\rightarrow L^2(\gamma_1)$. More importantly, the kernel $N_t(\lambda,\xi)$ in \eqref{i:7} has the variable difference $\lambda-\xi$ in its denominator after using partial fractions, and it is of integrable type up to conjugation with a suitable multiplication operator. Hence, assuming \eqref{i:6}, a Hankel composition kernel of type \eqref{i:4} can sometimes %\footnote{sometimes, as one has to answer trace class questions along the way and as one would like to choose $\gamma_1=\mathbb{R}$ in \eqref{i:7}.} 
be converted to an integrable operator in Fourier variables and so the integrable operator techniques of \cite{IIKS} are readily available for such kernels. In fact, the very same Fourier conjugation lies at the heart of the workings in \cite{BB0} and thus the derivation of \eqref{i:2} in the GinOE.\bigskip

However, working with the contour integral formul\ae\,\eqref{i:6} requires some degree of analyticity from either $\phi$ or $\psi$ since we cannot choose $\gamma_1=\gamma_2=\mathbb{R}$. As we will see, the method presented in this paper for additive Hankel composition operators does not rely on contour integral representations and is in that sense more general and comprehensive than the approach of Bertola and Cafasso. The fact that one can go without Fourier conjugation was first worked out in theoretical physics by Krajenbrink \cite{K}, itself inspired by \cite{DMS,DMSa}. The current paper puts the ideas in \cite{K} on firm mathematical grounds while extending them in several directions, both algebraic and asymptotic. For instance, \cite{K} is concerned with additive Hankel composition kernels only, but there are other integral operators in random matrix theory and/or mathematical physics. Here is one source of examples:
\begin{definition}[Wishart \cite{Wis}, 1928] A random real symmetric matrix ${\bf X}={\bf Y}^{\dagger}{\bf Y}\in\mathbb{R}^{n\times n}$ belongs to the Laguerre orthogonal ensemble \textnormal{(LOE)} if the entries of ${\bf Y}\in\mathbb{R}^{m\times n},m\geq n$ are independently chosen with pdfs
\begin{equation*}
	\frac{1}{\sqrt{2\pi}}\e^{-\frac{1}{2}y_{jk}^2}.
\end{equation*}
\end{definition}
Although somewhat less celebrated than the GOE or GinOE, the statistical behavior of the LOE's extreme eigenvalues is well understood: in distribution, for any ${\bf X}\in\textnormal{LOE}$, see \cite{Bro,F2,MP},
\begin{equation*}
	4n\Big(\min_{i=1,\ldots,n}\lambda_i({\bf X})\Big)\Rightarrow W_1,\ \ \ \ \textnormal{as}\ \ n,m\rightarrow\infty:\ \frac{n}{m}\rightarrow 1,
\end{equation*}
where the random variable $W_1$ obeys the law
\begin{equation}\label{i:8}
	\mathbb{P}\left(W_1\geq t\right)=\exp\left[-\frac{1}{8}\int_0^t\ln\Big(\frac{t}{s}\Big)\big(r(s)\big)^2\,\d s-\frac{1}{4}\int_0^tr(s)\frac{\d s}{\sqrt{s}}\right],\ \ \ t\in(0,\infty),
\end{equation}
determined through a particular solution $r=r(s)$ of a second order nonlinear ODE boundary value problem, see \cite[$(1.31)$]{F0}. Alternatively, and equivalently, the special function $r$ is determined through the solution of a distinguished Riemann-Hilbert problem, see Section \ref{seci26}. The point we are making is that the structure in the right hand side of \eqref{i:8} is common to other hard edge ensembles, cf. \cite{CGS}, and we will explain why this is the case: For one, following the same approach as in the computation of the soft edge limits \eqref{i:1},\eqref{i:2}, the left hand side in \eqref{i:8} is again a square root of a Fredholm determinant, this time of $K_t+\alpha_t\otimes\beta_t$ acting on $L^2(0,1)$ where $K_t:L^2(0,1)\rightarrow L^2(0,1)$ is an integral operator with \textit{multiplicative} Hankel composition kernel
\begin{equation}\label{i:10}
	K_t(x,y):=t\int_0^1\phi(xzt)\psi(zyt)\,\d z,\ \ \ \ \ \ t\in(0,\infty),
\end{equation}
with certain scalar-valued functions $\phi,\psi,\alpha_t,\beta_t$. Hence, to simplify the determinant of $K_t+\alpha_t\otimes\beta_t$, we are again tempted to bring the Hankel composition kernel in integrable shape, i.e. we try to achieve equality \eqref{i:5} but with \eqref{i:10} in its left hand side instead of \eqref{i:4}. This is surely possible for the LOE, where $\phi$ and $\psi$ are Bessel functions of square root argument, see \cite{TW2,F0}. However it is impossible for other hard-edge ensembles, e.g. for Muttalib-Borodin ensembles, Meijer-G product ensembles or chain matrix ensembles, see \cite{CGS,BerBo}. In those cases where \eqref{i:5} cannot be achieved for \eqref{i:10} the next best idea is to adapt \eqref{i:6} to the needs of \eqref{i:10}, so instead of Fourier representations, one uses Mellin integral representations for kernels with multiplicative composition structure. In detail, assuming
\begin{equation}\label{i:11}
	\phi(x)=\int_{\gamma_1}\hat{\phi}(\lambda)x^{\lambda-1}\,\d\lambda,\ \ \ \ \ \ \psi(y)=\int_{\gamma_2}\hat{\psi}(\mu)y^{-\mu}\,\d\mu,\ \ \ \ \ \ (x,y)\in(0,\infty)\times(0,\infty),
\end{equation}
hold true with some $\hat{\phi},\hat{\psi}$ and contours $\gamma_j\subset\mathbb{C}$, one transforms \eqref{i:10} into
\begin{equation*}
	K_t(x,y)=\int_{\gamma_1}\int_{\gamma_2}\hat{\phi}(\lambda)\hat{\psi}(\mu)x^{\lambda-1}y^{-\mu}t^{\lambda-\mu}\frac{\d\mu\,\d\lambda}{\lambda-\mu}\ \ \ \ \ \textnormal{provided}\ \ \ \ \Re(\lambda-\mu)>0,
\end{equation*}
and afterwards uses an integral formula for the characteristic function $\chi_{(0,1)}$ on $(0,1)\subset\mathbb{R}$. In turn, $K_t\chi_{(0,1)}$ on $L^2(0,\infty)$, the operator with kernel $K_t(x,y)\chi_{(0,1)}(y)$, is Mellin equivalent to an integrable operator, modulo a multiplication operator. The very same Mellin conjugation was first used in the works of Girotti \cite{Gir1,Gir2} on multi-time Bessel processes and subsequently in \cite{CGS} related to other hard-edge ensembles. Still, just as in \eqref{i:6}, the representation \eqref{i:11} implicitly requires some degree of regularity from $\phi$ and $\psi$, so we are not inclined to follow this route. Instead, without assuming contour integral formul\ae\, or differential equations for $\phi$ and $\psi$, the multiplicative Hankel composition structure itself will prove sufficient in the derivation of the right hand side in \eqref{i:8}. We now proceed with the statement and discussion of our results.
\section{Statement of results}\label{sec2}
Throughout, we abbreviate $\mathbb{R}_+:=(0,\infty)\subset\mathbb{R}$, we take $D$ as weak differentiation with respect to the independent variable, $M$ as multiplication by the independent variable and we require the following function spaces: with $N\in\mathbb{Z}_{\geq 1}$ and $1\leq p<\infty$ and some open subset $\Omega\subset\mathbb{R}$, the Sobolev space $W^{N,p}(\Omega):=\big\{f\in L^p(\Omega):\,D^kf\in L^p(\Omega),\ k=1,\ldots,N\big\}$, the space $H^{N,p}(\Omega):=\big\{f\in L^p(\Omega):\,(MD)^kf\in L^p(\Omega),\ k=1,\ldots,N\big\}$, and, in terms of 
\begin{equation*}
	L_{\circ}^1(\mathbb{R}_+):=\left\{f:\mathbb{R}_+\rightarrow\mathbb{C}\ \textnormal{measurable such that}\ f(x)/\sqrt{x}\in L^1(\mathbb{R}_+)\right\},\ \ \ \ \ \|f\|_{L_{\circ}^1(\mathbb{R}_+)}:=\int_0^{\infty}|f(x)|\frac{\d x}{\sqrt{x}},
\end{equation*}
the function space $H_{\circ}^{N,1}(\mathbb{R}_+):=\big\{f\in L_{\circ}^1(\mathbb{R}_+):\,(MD)^kf\in L_{\circ}^1(\mathbb{R}_+),\ k=1,\ldots,N\big\}$. Lastly, with $X$ and $Y$ Banach spaces and $\mathcal{H}$ a separable Hilbert space, we denote by $\mathcal{L}(X,Y)$ the linear space of all bounded and linear transformations from $X$ to $Y$ (abbreviating $\mathcal{L}(X)=\mathcal{L}(X,X)$), by $\mathcal{C}_0(X)\subset\mathcal{L}(X)$ the set of all finite-rank operators on $X$, and by $\mathcal{C}_p(\mathcal{H}),1\leq p<\infty$ the Schatten class of operators on $\mathcal{H}$.

%%%%%%%%%%%%%%%%%%%%%%%%%%%%%%%%%%%%%%%%%%%%%%%%%%%%%%%%%%%%%%%%%%%%%%%%%%%%%

\subsection{Additive composition, part 1}\label{seci21} Let $t\in J$ be chosen from some open subset $J\subseteq\mathbb{R}$ and consider two integral operators $M_t,N_t\in\mathcal{C}_2(L^2(\mathbb{R}_+))$ of \textit{additive} Hankel type. Precisely,
\begin{equation}\label{n0}
	(M_tf)(x):=\int_0^{\infty}\phi(x+y+t)f(y)\,\d y,\ \ \ \ \ (N_tf)(x):=\int_0^{\infty}\psi(x+y+t)f(y)\,\d y,
\end{equation}
where $\phi,\psi:\mathbb{R}\rightarrow\mathbb{C}$ are such that for all $t\in J$,
\begin{equation}\label{n00}
	\int_0^{\infty}x|\phi(x+t)|^2\d x<\infty\ \ \ \ \textnormal{and}\ \ \ \ \int_0^{\infty}x|\psi(x+t)|^2\d x<\infty.
\end{equation}
Now let $K_t\in\mathcal{L}(L^2(\mathbb{R}_+))$ denote the composition $K_t:=M_tN_t$ on $L^2(\mathbb{R}_+)$, equivalently,
\begin{equation}\label{n1}
	(K_tf)(x)=\int_0^{\infty}K_t(x,y)f(y)\,\d y,\ \ \ \ \ \ \ K_t(x,y):=\int_0^{\infty}\phi(x+z+t)\psi(z+y+t)\,\d z,
\end{equation}
and note that, although both, $M_t$ and $N_t$, are compact symmetric operators on $L^2(\mathbb{R}_+)$, their composition \eqref{n1} is in general not symmetric. Moving ahead, we need the following dominance notion, cf. \cite[$(4.6.9)$]{S2}.
\begin{definition}\label{domplus}
A family of functions, $\mathcal{F}:=\{f_{\alpha}\}_{\alpha\in S}$, in $L^p(\mathbb{R}_+),p\in\{1,2\}$, indexed by a set $S\subseteq\mathbb{R}$, is called $L^p(\mathbb{R}_+)$ dominated if for some $g\in L^p(\mathbb{R}_+)$, we have for all $\alpha\in S$ and for a.e. $x\in\mathbb{R}_+$ that 
\begin{equation*}
	|f_{\alpha}(x)|\leq g(x).
\end{equation*}
\end{definition}
Moreover, we use the shift of $f$ by $-t$, $(\tau_tf)(x):=f(x+t)$, and now introduce our main quantity of interest: since $K_t,t\in J$ equals the composition of two Hilbert-Schmidt operators, compare \eqref{n00}, its Fredholm determinant $F(t)$ on $L^2(\mathbb{R}_+)$,
\begin{equation}\label{i:12}
	F(t):=\prod_{k=1}^{\infty}\big(1-\lambda_k(t)\big),\ \ \ t\in J,
\end{equation}
with $\lambda_k(t)$ as the nonzero eigenvalues of $K_t$, counting multiplicity, is well-defined for all $t\in J$, see \cite[$(3.11)$]{S}. In turn, we record our first result, a very natural generalization of \cite[$(2.15)$]{K}.
\begin{lem}\label{lem1} Suppose $\phi,\psi:\mathbb{R}\rightarrow\mathbb{C}$, besides satisfying \eqref{n00}, are continuously differentiable on $\mathbb{R}$, obey
\begin{equation}\label{infbeh}
	\lim_{x\rightarrow+\infty}\phi(x)=\lim_{x\rightarrow+\infty}\psi(x)=0, 
\end{equation}
and for every $t\in J$, 
\begin{equation}\label{intbeh}
	\int_0^{\infty}x|(D\tau_t\phi)(x)|^2\,\d x<\infty,\ \ \ \ \ \ \ \int_0^{\infty}x|(D\tau_t\psi)(x)|^2\,\d x<\infty.%,\ \ \ \ \ \ \ \ \tau_t\phi,\tau_t\psi\in W^{1,2}(\mathbb{R}_+).
\end{equation}
Then we have, provided $I-K_t,t\in J$ is invertible on $L^2(\mathbb{R}_+)$ and the families $\{\tau_tf\}_{t\in\mathbb{R}}$, resp. $\{(I-K_t)^{-1}\tau_tg\}_{t\in J}$, with $f\in\{\phi,\psi,D\phi,D\psi\}$, resp. $g\in\{\phi,D\phi\}$, are $L^2(\mathbb{R}_+)$ dominated,
\begin{equation}\label{r1}
	\frac{\d^2}{\d t^2}\ln F(t)=-q_0(t)q_0^{\ast}(t),\ \ \ t\in J,\ \ \ \ \ \ \ \ \ \begin{cases}q_0(t):=\big((I-K_t)^{-1}\tau_t\phi\big)(0)&\smallskip\\ 
	q_0^{\ast}(t):=\big((I-K_t^{\ast})^{-1}\tau_t\psi\big)(0)&\end{cases}.
\end{equation}
%using a fixed branch for the logarithm. 
Here, $K_t^{\ast}:=N_tM_t$ denotes the real adjoint of $K_t$.\footnote{As $J\ni t\mapsto F(t)$ might have non-zero winding, the second logarithmic derivative in the left hand side of \eqref{r1} is to be seen as shorthand for $\big(\frac{F'(t)}{F(t)}\big)'$ with $(')=\frac{\d}{\d t}$.}
\end{lem}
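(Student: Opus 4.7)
My plan is to follow the Jacobi-Tracy-Widom recipe for the logarithmic derivative of a Fredholm determinant, adapted to the Hankel composition setting as in \cite{K}: compute $\dot K_t$, apply Jacobi's formula, then differentiate once more and reorganize via an integration-by-parts identity for the resolvent $(I-K_t)^{-1}$. The integrable structure that would make $\dot K_t$ directly tractable in the IIKS framework is replaced here by a pair of boundary identities extracted from the Hankel composition \eqref{n1}.

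For $\dot K_t$, I write $\phi'(x+z+t)=\partial_z\phi(x+z+t)$ inside \eqref{n1}, integrate by parts in $z$, and use \eqref{infbeh} to kill the boundary at $z=\infty$. This produces the fundamental identity
\begin{equation*}
\partial_xK_t(x,y)+\partial_yK_t(x,y)=-\phi(x+t)\psi(y+t)=\partial_tK_t(x,y),
\end{equation*}
so $\dot{K}_t=-(\tau_t\phi)\otimes(\tau_t\psi)$ is rank one. Hypotheses \eqref{n00}, \eqref{intbeh} together with the $L^2(\mathbb{R}_+)$-domination of $\{\tau_tf\}_{t\in\mathbb{R}}$ for $f\in\{\phi,\psi,D\phi,D\psi\}$ make $t\mapsto M_t,N_t$ continuous in Hilbert-Schmidt norm and $K_t$ continuous in trace norm, so Jacobi's formula gives
\begin{equation*}
\frac{\d}{\d t}\ln F(t)=-\tr\bigl[(I-K_t)^{-1}\dot{K}_t\bigr]=\alpha(t),\quad\alpha(t):=\langle R_t,\tau_t\psi\rangle,\ R_t:=(I-K_t)^{-1}\tau_t\phi,
\end{equation*}
with $\langle\cdot,\cdot\rangle$ the bilinear $L^2(\mathbb{R}_+)$-pairing (the correct dual in view of the real adjoint $K_t^*$); symmetry gives also $\alpha(t)=\langle R_t^*,\tau_t\phi\rangle$ where $R_t^*:=(I-K_t^*)^{-1}\tau_t\psi$.

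Differentiating again, using $\partial_t(I-K_t)^{-1}=-R_t\otimes R_t^*$ and the product rule on $\dot{K}_t$, I get
\begin{equation*}
\frac{\d^2}{\d t^2}\ln F(t)=-\alpha(t)^2+\langle D\tau_t\phi,R_t^*\rangle+\langle R_t,D\tau_t\psi\rangle,
\end{equation*}
so \eqref{r1} reduces to showing $\langle D\tau_t\phi,R_t^*\rangle+\langle R_t,D\tau_t\psi\rangle=\alpha(t)^2-q_0(t)q_0^*(t)$. Integration by parts in the free argument moves $D$ from $\tau_t\phi,\tau_t\psi$ onto $R_t,R_t^*$; the contributions at $x=\infty$ vanish by \eqref{infbeh} and the domination of $R_t,R_t^*$, leaving only the boundary contributions $-\phi(t)R_t^*(0)-R_t(0)\psi(t)$ and the bulk pair $-\langle DR_t,\tau_t\psi\rangle-\langle\tau_t\phi,DR_t^*\rangle$.

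The main obstacle is to identify $DR_t$. Differentiating $R_t=\tau_t\phi+K_tR_t$ in its argument, substituting $\partial_xK_t=-\phi(x+t)\psi(y+t)-\partial_yK_t$ from the kernel identity above, and integrating by parts once more in $y$ (legitimate since $R_t$ is dominated and vanishes at $y=\infty$), I extract
\begin{equation*}
(I-K_t)DR_t=D\tau_t\phi-\alpha(t)\tau_t\phi+R_t(0)\,M_t\tau_t\psi,
\end{equation*}
with a symmetric companion for $DR_t^*$. Pairing this against $R_t^*$ via $\tau_t\psi=(I-K_t^*)R_t^*$ and invoking the intertwining $M_t(I-K_t^*)^{-1}=(I-K_t)^{-1}M_t$ yields
\begin{equation*}
\langle DR_t,\tau_t\psi\rangle=\langle D\tau_t\phi,R_t^*\rangle-\alpha(t)^2+R_t(0)\,\langle\tau_t\psi,M_tR_t^*\rangle.
\end{equation*}
The decisive ingredient is the boundary identity $\langle\tau_t\psi,M_tR_t^*\rangle=R_t^*(0)-\psi(t)$, which follows simply by evaluating $(I-K_t^*)R_t^*=\tau_t\psi$ at $x=0$. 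Combined with the symmetric counterpart for $\langle DR_t^*,\tau_t\phi\rangle$, the two $\alpha(t)^2$ and the mixed $\phi(t)R_t^*(0),\psi(t)R_t(0)$ contributions cancel in the IBP expression, leaving precisely $\langle D\tau_t\phi,R_t^*\rangle+\langle R_t,D\tau_t\psi\rangle=\alpha(t)^2-R_t(0)R_t^*(0)$, which gives \eqref{r1}.
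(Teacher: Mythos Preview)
Your proof is correct and follows essentially the same route as the paper's: both compute $\dot K_t=-\tau_t\phi\otimes\tau_t\psi$ from the Hankel kernel identity $\partial_xK_t+\partial_yK_t=\partial_tK_t$, apply Jacobi's formula, and then extract the boundary contribution $K_t(\cdot,0)q_0(t)$ via an integration-by-parts argument on the resolvent. Your symmetrized organization---pairing $(I-K_t)DR_t$ against $R_t^*$ and its dual, then solving for the sum---is a minor repackaging of the paper's single integration by parts combined with the commutator identity $D(I-K_t)^{-1}-(I-K_t)^{-1}D=(I-K_t)^{-1}[D,K_t](I-K_t)^{-1}$; note also that the intertwining $M_t(I-K_t^*)^{-1}=(I-K_t)^{-1}M_t$ you invoke is not actually needed, only the symmetry of $M_t$.
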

\begin{rem} Any smooth functions $\phi,\psi:\mathbb{R}\rightarrow\mathbb{R}$ of rapid decay at $+\infty$ such that $\|K_t\|<1,t\in J$ in operator norm on $L^2(\mathbb{R}_+)$ will work in Lemma \ref{lem1}. This is the setup of \cite{K}, here we will assume less.
\end{rem}

Identity \eqref{r1} is implicitly contained in the works of Tracy and Widom, cf. \cite[$(1.14)$]{TW}, on the Airy kernel where $\phi=\psi$ in \eqref{n0} are Airy functions. However, \cite{TW} relies on the integrable structure of the same kernel and so on the differential equation of the Airy function. Here we show that \eqref{r1} is a simple consequence of the kernel's algebraic structure \eqref{n1} and thus holds for \textit{any} additive Hankel composition operator in the class of Lemma \ref{lem1}. Observe that \eqref{r1} was first derived in \cite{K} for real-valued symmetric Hankel composition operators $K_t$ for which $q_0^{\ast}=q_0$. Identity \eqref{r1} is remarkable since it shows that the second logarithmic derivative of the Fredholm determinant \eqref{i:12} localizes into a product of two functions evaluated at $t\in J$. There is no apparent reason why this should happen for an arbitrary trace class integral operator.\bigskip

Once $J=\mathbb{R}$ in Lemma \ref{lem1} and $t\mapsto q_0(t),t\mapsto q_0^{\ast}(t)$ decay sufficiently fast at $t=+\infty$, formula \eqref{r1} implies a Tracy-Widom representation of $F(t)$, see \cite[$(1.17)$]{TW} for the original example of such a formula. The very same formula explains the common structure of the first factors in \eqref{i:1} and \eqref{i:2}.
\begin{cor}\label{gencoo} Let $\epsilon>0$. Assume that $\phi,\psi:\mathbb{R}\rightarrow\mathbb{C}$, besides satisfying \eqref{n00}, are continuously differentiable on $\mathbb{R}$, obey \eqref{infbeh},\eqref{intbeh} for all $t\in\mathbb{R}$ and the families $\{\sqrt{(\cdot)}\tau_tf\}_{t\in\mathbb{R}}$, resp. $\{\tau_tg\}_{t\in\mathbb{R}}$ and $\{(I-K_t)^{-1}\tau_th\}_{t\in\mathbb{R}}$, with $f\in\{\phi,\psi\}$, resp. $g\in\{\phi,\psi,D\phi,D\psi\}$ and $h\in\{\phi,D\phi\}$, are $L^2(\mathbb{R}_+)$ dominated. Then 
\begin{equation}\label{bet77}
	\ln F(t)=-\int_t^{\infty}(s-t)q_0(s)q_0^{\ast}(s)\,\d s,\ \ t\in\mathbb{R},
\end{equation}
provided $I-K_t$ is invertible on $L^2(\mathbb{R}_+)$ for all $t\in\mathbb{R}$ and provided there exist $c,t_0>0$ such that 
\begin{equation*}
	|q_0(t)q_0^{\ast}(t)|\leq ct^{-2-\epsilon}\ \ \  \textnormal{for all}\ t\geq t_0.
\end{equation*}
\end{cor}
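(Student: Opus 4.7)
The plan is to apply Lemma \ref{lem1} with $J=\mathbb{R}$, so that $(\ln F)''(t)=-q_0(t)q_0^{\ast}(t)$ on the whole real line, and then integrate this identity twice from $t$ to $+\infty$, invoking Fubini to reshape the resulting iterated integral into the weighted single integral \eqref{bet77}. Two boundary conditions at $+\infty$ must be established before integrating: $\ln F(t)\to 0$ and $(\ln F)'(t)\to 0$ as $t\to+\infty$.

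I expect the first of these to be the principal obstacle, since it is a statement about the determinant itself rather than about its derivatives. My plan is to use the factorization $K_t=M_tN_t$ together with $\|K_t\|_{\mathcal{C}_1}\leq \|M_t\|_{\mathcal{C}_2}\|N_t\|_{\mathcal{C}_2}$ and the Hilbert--Schmidt computation
\begin{equation*}
\|M_t\|_{\mathcal{C}_2}^2=\int_0^\infty\!\!\int_0^\infty |\phi(x+y+t)|^2\,\d x\,\d y=\int_0^\infty v|\phi(v+t)|^2\,\d v,
\end{equation*}
and analogously for $N_t$. The $L^2(\mathbb{R}_+)$-dominated hypothesis on $\{\sqrt{(\cdot)}\tau_t\phi\}_{t\in\mathbb{R}}$ majorizes the integrand uniformly in $t$ by a fixed $g^2\in L^1(\mathbb{R}_+)$, while \eqref{infbeh} ensures pointwise vanishing as $t\to+\infty$. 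Dominated convergence then yields $\|M_t\|_{\mathcal{C}_2}\to 0$ and $\|N_t\|_{\mathcal{C}_2}\to 0$, hence $\|K_t\|_{\mathcal{C}_1}\to 0$, and the standard Fredholm bound $|F(t)-1|\leq \|K_t\|_{\mathcal{C}_1}\exp(1+\|K_t\|_{\mathcal{C}_1})$ from \cite[Theorem 3.4]{S} delivers $F(t)\to 1$ and thereby $\ln F(t)\to 0$.

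For the derivative boundary condition, I would note that the decay assumption $|q_0(t)q_0^{\ast}(t)|\leq c\,t^{-2-\epsilon}$ on $[t_0,\infty)$ renders $(\ln F)''$ absolutely integrable at $+\infty$, so $\lim_{t\to\infty}(\ln F)'(t)$ exists; if this limit equalled a nonzero constant $L$, then $\ln F(t)\sim Lt$ as $t\to\infty$, contradicting $\ln F(t)\to 0$. Hence the limit is zero, and integrating $(\ln F)''=-q_0q_0^{\ast}$ from $t$ to $+\infty$ produces
\begin{equation*}
(\ln F)'(t)=\int_t^\infty q_0(s)q_0^{\ast}(s)\,\d s.
\end{equation*}

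A further integration from $t$ to $+\infty$, combined with $\ln F(+\infty)=0$, then gives
\begin{equation*}
-\ln F(t)=\int_t^\infty\!\!\int_u^\infty q_0(s)q_0^{\ast}(s)\,\d s\,\d u,
\end{equation*}
and Fubini, justified by the absolute bound $(s-t)|q_0(s)q_0^{\ast}(s)|\leq c\,s^{-1-\epsilon}$ on $[\max(t,t_0),\infty)$, collapses this iterated integral into \eqref{bet77}. Beyond the Hilbert--Schmidt estimate, no step should pose a serious difficulty.
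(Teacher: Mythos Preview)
Your proposal is correct and follows essentially the same route as the paper: invoke Lemma~\ref{lem1} on all of $\mathbb{R}$, establish $F(t)\to 1$ via the trace-norm bound $\|K_t\|_1\leq\|M_t\|_2\|N_t\|_2$ together with dominated convergence on $\{\sqrt{(\cdot)}\tau_t f\}$ (this is exactly the content of \eqref{i:22}--\eqref{i:23}), deduce $(\ln F)'(t)\to 0$ from the integrability of $(\ln F)''$ and the boundedness of $\ln F$ at $+\infty$, and then integrate twice with Fubini. The paper's argument for the vanishing of $(\ln F)'$ is phrased slightly differently---it observes that $\int_t^\infty q_0q_0^\ast$ lies in $L^1(\mathbb{R}_+)$ as a function of $t$, forcing the constant $\lim_{s\to\infty}F'(s)/F(s)$ to be zero---but this is the same contradiction you describe.
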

Up to this point, our analysis has not yet unveiled a convenient way for the characterization of the functions $q_0$ and $q_0^{\ast}$ beyond the explicit formul\ae\,stated in Lemma \ref{lem1}. This will change, starting with the next algebraic result which provides a dynamical system for $\{q_0,q_0^{\ast}\}$ and certain generalizations thereof.
\begin{cor}\label{highercor} Let $N\in\mathbb{Z}_{\geq 1}$. Assume that $\phi,\psi:\mathbb{R}\rightarrow\mathbb{C}$, besides satisfying \eqref{n00}, are $N$-times continuously differentiable such that for every $k\in\{0,1,\ldots,N-1\}$,
\begin{equation}\label{n7}
	\lim_{x\rightarrow+\infty}(D^k\phi)(x)=\lim_{x\rightarrow+\infty}(D^k\psi)(x)=0,
\end{equation}
and for every $t\in J$,
\begin{equation*}
	\int_0^{\infty}x|(D\tau_t\phi)(x)|^2\d x<\infty,\ \ \ \ \ \int_0^{\infty}x|(D\tau_t\psi)(x)|^2\d x<\infty.%,\ \ \ \ \ \tau_t\phi,\tau_t\psi\in W^{N,2}(\mathbb{R}_+).
\end{equation*}
Then, provided $I-K_t,t\in J$ is invertible on $L^2(\mathbb{R}_+)$ and $\{D^n\tau_tf\}_{t\in\mathbb{R}}$, resp. $\{(I-K_t)^{-1}D^n\tau_tg\}_{t\in J}, \{(I-K_t^{\ast})^{-1}D^nh\}_{t\in J}$, with $f\in\{\phi,\psi\}$, resp. $g\in\{\phi\},h\in\{\psi\}$, are $L^2(\mathbb{R}_+)$ dominated for all $n\in\{0,1,\ldots,N\}$, the functions
\begin{equation}\label{n8}
	q_n(t):=\big((I-K_t)^{-1}D^n\tau_t\phi\big)(0),\ \ \ \ \ \ p_n(t):=\tr_{L^2(\mathbb{R}_+)}\big((I-K_t)^{-1}D^n\tau_t\phi\otimes\tau_t\psi\big),
\end{equation}
\begin{equation}\label{n9}
	\,q_n^{\ast}(t):=\big((I-K_t^{\ast})^{-1}D^n\tau_t\psi\big)(0),\ \ \ \ \ p_n^{\ast}(t):=\tr_{L^2(\mathbb{R}_+)}\big((I-K_t^{\ast})^{-1}D^n\tau_t\psi\otimes\tau_t\phi\big),
\end{equation}
defined for $t\in J$ and $n=0,1,\ldots,N$, satisfy the coupled system
\begin{equation}\label{r2}
	\begin{cases}\displaystyle\frac{\d q_n}{\d t}(t)=q_{n+1}(t)-q_0(t)p_n(t),\ \ \ \ \ \ \ \ \frac{\d p_n}{\d t}(t)=-q_0^{\ast}(t)q_n(t),&\bigskip\\
	\displaystyle \frac{\d q_n^{\ast}}{\d t}(t)=q_{n+1}^{\ast}(t)-q_0^{\ast}(t)p_n^{\ast}(t),\ \ \ \ \ \ \ \ \frac{\d p_n^{\ast}}{\d t}(t)=-q_0(t)q_n^{\ast}(t),&\end{cases}\ \ n=0,1,\ldots,N-1,\ \ \ \ \ \ t\in J.
\end{equation}
\end{cor}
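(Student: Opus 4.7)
The proof hinges on one algebraic fact: the Hankel composition structure forces $\dot K_t$ to be rank one. Integration by parts in the inner variable $z$ of \eqref{n1}, using $\phi(x)\to 0$ at $+\infty$ from \eqref{n7} with $k=0$, yields
\begin{equation*}
\dot K_t = -\tau_t\phi \otimes \tau_t\psi,\qquad \dot K_t^{\ast} = -\tau_t\psi \otimes \tau_t\phi,
\end{equation*}
and hence $\frac{d}{dt}(I-K_t)^{-1} = -(I-K_t)^{-1}(\tau_t\phi\otimes\tau_t\psi)(I-K_t)^{-1}$. The $L^2$-dominance hypotheses together with \eqref{n00} justify differentiating the Hilbert--Schmidt operators involved and interchanging $d/dt$ with pointwise evaluation at $x=0$ as well as with the bilinear pairings defining $q_n$ and $p_n$.

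For $q_n$, set $u_n(t):=(I-K_t)^{-1}D^n\tau_t\phi$ so $q_n(t)=u_n(0)$. The resolvent formula above collapses the middle rank-one operator via $(\tau_t\phi\otimes\tau_t\psi)u_n = \tau_t\phi\cdot p_n$, while $\frac{d}{dt}D^n\tau_t\phi = D^{n+1}\tau_t\phi$ contributes the extra $u_{n+1}$; evaluating at $x=0$ gives the first equation $\dot q_n = q_{n+1} - q_0 p_n$. The starred version $\dot q_n^{\ast} = q_{n+1}^{\ast} - q_0^{\ast} p_n^{\ast}$ follows by the same argument under $(\phi,\psi)\mapsto(\psi,\phi)$, $K_t\mapsto K_t^{\ast}$.

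The $p_n$-equation is the main obstacle. Writing $p_n=\langle \tau_t\psi, u_n\rangle$ with the bilinear pairing compatible with the real adjoint, direct differentiation produces
\begin{equation*}
\dot p_n = -p_0 p_n + p_{n+1} + \langle D\tau_t\psi,\,u_n\rangle,
\end{equation*}
and the entire issue is to rewrite the last term. My plan is to integrate by parts once, the $+\infty$ boundary contribution vanishing by \eqref{n7} and the dominance on $\{u_n\}$, to convert it to $-\psi(t)\,q_n - \langle \tau_t\psi, Du_n\rangle$. To handle $Du_n$, I derive the commutator identity
\begin{equation*}
[D,K_t] = -\tau_t\phi \otimes \tau_t\psi + K_t(\cdot,0)\otimes\delta_0
\end{equation*}
by differentiating the kernel in $x$ and integrating by parts in $y$; applied to $u_n = D^n\tau_t\phi + K_t u_n$ and solved for $Du_n$, this gives
\begin{equation*}
Du_n = u_{n+1} - (I-K_t)^{-1}\tau_t\phi\cdot p_n + (I-K_t)^{-1}(M_t\tau_t\psi)\cdot q_n.
\end{equation*}

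Substituting back, the $p_{n+1}$ and $p_0 p_n$ terms cancel exactly, leaving $\dot p_n = -q_n\bigl(\psi(t)+X\bigr)$ with $X:=\langle\tau_t\psi,(I-K_t)^{-1}M_t\tau_t\psi\rangle$. Using that $M_t$ is self-adjoint in the bilinear sense together with the swap identity $(I-M_tN_t)^{-1}M_t = M_t(I-N_tM_t)^{-1}$, I recognise $X = (K_t^{\ast}v)(0)$ for $v:=(I-K_t^{\ast})^{-1}\tau_t\psi$; the defining relation $v = \tau_t\psi + K_t^{\ast}v$ then gives $\psi(t) + X = v(0) = q_0^{\ast}$, and the $p_n$-equation closes to $\dot p_n = -q_0^{\ast}q_n$. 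The starred equation $\dot p_n^{\ast} = -q_0 q_n^{\ast}$ is obtained by running the same argument with the roles of $\phi,\psi$ and $K_t, K_t^{\ast}$ interchanged.
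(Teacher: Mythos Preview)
Your proof is correct and follows essentially the same route as the paper: both use $\dot K_t=-\tau_t\phi\otimes\tau_t\psi$, the commutator identity $(DK_t-K_tD)u_n=K_t(\cdot,0)q_n-\tau_t\phi\cdot p_n$, and an integration by parts to reduce $\dot p_n$ to $-q_n\big(\psi(t)+\langle(I-K_t)^{-1}K_t(\cdot,0),\tau_t\psi\rangle\big)$. Your final identification via the intertwining relation $(I-M_tN_t)^{-1}M_t=M_t(I-N_tM_t)^{-1}$ is simply the operator-level statement of the kernel identity $(I-K_t)^{-1}(x,0)=(I-K_t^{\ast})^{-1}(0,x)$ that the paper uses, so the two arguments differ only cosmetically.
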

System \eqref{r2} constitutes a chain of differential recurrence relations which, in general, does not close on its own, see below.
%a Zakharov-Shabat system \cite{ZS}, well-known in integrable systems and scattering theory, cf. \cite{AC,BDT,BC}. 
The same system was first derived for Fredholm determinants of additive real-valued symmetric Hankel composition operators in \cite[$(2.20)$]{K} and \cite[$(109)$]{DMSa}. In the symmetric setup $\phi=\psi$ and so one has $q_n=q_n^{\ast}$ and $p_n=p_n^{\ast}$. System \eqref{r2} admits several conserved quantities since
\begin{equation}\label{i:13}
	I_n:=p_{n+1}(t)+(-1)^np_{n+1}^{\ast}(t)+\sum_{k=0}^n(-1)^k\big(q_k^{\ast}(t)q_{n-k}(t)-p_k^{\ast}(t)p_{n-k}(t)\big)
\end{equation}
is $J\ni t$-independent for any $n\in\{0,1,\ldots,N-1\}$ by \eqref{r2} (see \cite[$(2.24)$]{K} for a special case of \eqref{i:13}). However, without placing further constraints on $\phi$ and $\psi$, system \eqref{r2} seemingly cannot yield a closed differential equation for $q_0$ or $q_0^{\ast}$, say, nor does it constitute a Hamiltonian dynamical system, compare the discussion in \cite[Section $4$]{K}. Still, we will show that system \eqref{r2} can be encoded in the solution of a canonical, auxiliary Riemann-Hilbert problem (RHP), and thus using \eqref{r1}, the same RHP allows us to access and efficiently analyze the Fredholm determinant $F(t)$. The details are as follows: consider the below problem, formulated for two arbitrary functions $\phi,\psi\in L^1(\mathbb{R})$. This problem relates to the Zakharov-Shabat systems \cite{ZS}, well-known in integrable systems and scattering theory, cf. \cite{AC,BDT,BC}, via a dressing transformation.
\begin{problem}\label{master2} Fix $t\in\mathbb{R}$ and $\phi,\psi\in L^1(\mathbb{R})$. Now determine ${\bf X}(z)={\bf X}(z;t,\phi,\psi)\in\mathbb{C}^{2\times 2}$ such that
\begin{enumerate}
	\item[(1)] ${\bf X}(z)$ is analytic for $z\in\mathbb{C}\setminus\mathbb{R}$.
	\item[(2)] ${\bf X}(z)$ admits continuous pointwise limits ${\bf X}_{\pm}(z):=\lim_{\epsilon\downarrow 0}{\bf X}(z\pm\im\epsilon),z\in\mathbb{R}$ which obey
	\begin{equation}\label{s8}
		{\bf X}_+(z)={\bf X}_-(z)\begin{bmatrix}1-r_1(z)r_2(z) & -r_2(z)\e^{-\im tz}\smallskip\\
		r_1(z)\e^{\im tz} & 1\end{bmatrix},\ \ z\in\mathbb{R},
	\end{equation}
	with
	\begin{equation}\label{s9}
		r_1(z):=-\im\int_{-\infty}^{\infty}\phi(y)\e^{-\im zy}\,\d y,\ \ \ \ \ \ r_2(z):=\im\int_{-\infty}^{\infty}\psi(y)\e^{\im zy}\,\d y,\ \ z\in\mathbb{R}.
	\end{equation}
	\item[(3)] As $z\rightarrow\infty$, %Uniformly as $z\rightarrow\infty$ in $\mathbb{C}\setminus\mathbb{R}$,
	\begin{equation}\label{s10}
		{\bf X}(z)=\mathbb{I}+o(1).
		%{\bf X}(z)=\mathbb{I}+{\bf X}_1z^{-1}+o\big(z^{-1}\big);\ \ \ \ \ \ {\bf X}_1={\bf X}_1(t,\phi,\psi)=\big[X_1^{mn}(t,\phi,\psi)\big]_{m,n=1}^2.
	\end{equation}
\end{enumerate}
\end{problem}
We will prove in Lemma \ref{unique2} that conditions $(1)-(3)$ in RHP \ref{master2} determine ${\bf X}(z)$ uniquely. Moreover, subject to some additional assumptions placed on $\phi$ and $\psi$, the same RHP is solvable and its solution connects to $q_0$ and $q_0^{\ast}$ in the following way:
\begin{theo}\label{theo1} Suppose $\phi,\psi\in W^{1,1}(\mathbb{R})\cap L^{\infty}(\mathbb{R})$, besides obeying \eqref{n00}, are continuously differentiable on $\mathbb{R}$ with $D\phi,D\psi\in L^{\infty}(\mathbb{R}_+)$, satisfy
\begin{equation}\label{i:14}
	%\lim_{|x|\rightarrow+\infty}\phi(x)=\lim_{|x|\rightarrow+\infty}\psi(x)=0,\ \ \ \ \ 
	\int_0^{\infty}x|(D\tau_tf)(x)|^2\,\d x<\infty,\ \ \ \ \int_0^{\infty}\sqrt{\int_0^{\infty}|(\tau_tf)(x+y)|^2\,\d y}\,\d x<\infty,\ \ \ \ f\in\{\phi,\psi\},
\end{equation}
and $I-K_t$ is invertible on $L^2(\mathbb{R}_+)$. Then RHP \ref{master2} is uniquely solvable and its solution satisfies
\begin{equation}\label{i:14a}
	\lim_{\substack{z\rightarrow\infty\\ \Im z\not\equiv\textnormal{const.}}}z\big({\bf X}(z)-\mathbb{I}\big)=\begin{bmatrix}-\im p_0 & q_0^{\ast}\\ q_0 & \im p_0^{\ast}\end{bmatrix}=:{\bf X}_1={\bf X}_1(t,\phi,\psi)=\big[X_1^{mn}(t,\phi,\psi)\big]_{m,n=1}^2,
\end{equation}
with $\{q_0,q_0^{\ast},p_0=p_0^{\ast}\}$ as in \eqref{n8} and \eqref{n9}.
\end{theo}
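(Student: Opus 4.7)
Uniqueness of the solution of RHP \ref{master2} is the content of Lemma \ref{unique2}, so the task is to exhibit a solution and identify its subleading coefficient \eqref{i:14a}. I would construct $\X(z)$ explicitly from the resolvent data of $K_t$ and $K_t^\ast$, verify the three RHP conditions by direct computation, and then read off $\X_1$ by a single integration by parts.

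\textbf{Construction.} Introduce the dressed functions
\begin{equation*}
F(\cdot;t):=(I-K_t)^{-1}\tau_t\phi,\qquad G(\cdot;t):=(I-K_t^{\ast})^{-1}\tau_t\psi,
\end{equation*}
both in $L^2(\mathbb{R}_+)$ (the operator $I-K_t^\ast$ is invertible since $K_t$ and $K_t^\ast$ share their nonzero spectrum), so that $F(0;t)=q_0(t)$ and $G(0;t)=q_0^\ast(t)$. Assumptions \eqref{i:14}, combined with $\phi,\psi\in W^{1,1}(\mathbb{R})\cap L^\infty(\mathbb{R})$ and $D\phi,D\psi\in L^\infty(\mathbb{R}_+)$, upgrade $F,G$ and their weak derivatives to $L^1(\mathbb{R}_+)$, so that the half-line Fourier transforms
\begin{equation*}
\mathcal F(z;t):=-\im\int_0^\infty F(y;t)\e^{-\im yz}\,\d y,\qquad \mathcal G(z;t):=\im\int_0^\infty G(y;t)\e^{\im yz}\,\d y
\end{equation*}
are analytic in $\{\Im z<0\}$ and $\{\Im z>0\}$ respectively, extend continuously to $\mathbb{R}$, and decay as $|z|\to\infty$ in their half planes by Riemann--Lebesgue. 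Guided by the lens factorization of the jump matrix in \eqref{s8} into an upper-triangular factor (containing $-r_2\e^{-\im tz}$) times a lower-triangular factor (containing $r_1\e^{\im tz}$), I assemble $\X(z)$ piecewise: in $\{\Im z>0\}$ as $\mathbb{I}$ plus rank-one corrections built from $\e^{-\im tz}\mathcal G(z;t)$ and the upper-half-plane analytic part of $r_1(z)\e^{\im tz}$ (coming from $(\tau_t\phi)\chi_{\mathbb{R}_-}$), and symmetrically in $\{\Im z<0\}$ using $\mathcal F$ and the lower-half-plane part associated to $r_2\e^{-\im tz}$. The scalar coefficients of the corrections are then pinned down by demanding that the boundary values from above and below satisfy exactly \eqref{s8}.

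\textbf{Verification and extraction of $\X_1$.} Analyticity off $\mathbb{R}$ is automatic. The jump condition \eqref{s8} is the technical heart of the argument: after Fourier inversion it reduces to the resolvent identities $(I-K_t)F=\tau_t\phi$ and $(I-K_t^\ast)G=\tau_t\psi$, tested against complex exponentials, together with Sokhotski--Plemelj relations for Cauchy-type integrals; the regularity in \eqref{i:14}, beyond mere $L^2$-dominance, is needed here to interchange integrals and to control boundary values on $\mathbb{R}$ where $r_1,r_2$ are only continuous. Normalization $\X(z)\to\mathbb{I}$ at infinity follows again from Riemann--Lebesgue. Finally, one integration by parts in the definitions of $\mathcal F$ and $\mathcal G$ yields
\begin{equation*}
\mathcal F(z;t)=\frac{F(0;t)}{z}+o(z^{-1})=\frac{q_0(t)}{z}+o(z^{-1}),\qquad \mathcal G(z;t)=\frac{q_0^\ast(t)}{z}+o(z^{-1}),
\end{equation*}
placing $q_0,q_0^\ast$ in the off-diagonal entries $X_1^{21}$ and $X_1^{12}$ of \eqref{i:14a}. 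The diagonal entries $\mp\im p_0,\pm\im p_0^\ast$ arise from the $O(z^{-1})$ contribution of the compensating rank-one pieces in the ansatz, each of which contains the bilinear pairing $\int_0^\infty F(y;t)(\tau_t\psi)(y)\,\d y$; adjointness forces this pairing to equal both $p_0(t)$ and $p_0^\ast(t)$, reproducing the symmetric diagonal in \eqref{i:14a}. The main obstacle is the algebraic verification that the piecewise ansatz reproduces the jump \eqref{s8} on the nose, rather than up to an extra entire matrix factor.
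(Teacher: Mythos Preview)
Your instincts are right — build $\X(z)$ from the resolvent data $F=(I-K_t)^{-1}\tau_t\phi$, $G=(I-K_t^\ast)^{-1}\tau_t\psi$, and read off $q_0,q_0^\ast$ by a single integration by parts — and that part of the argument would go through. But the construction itself has a genuine gap. Your half-line transforms $\mathcal F$ and $\mathcal G$ live in \emph{opposite} half-planes, so they only supply one entry of $\X$ in each; to fill in the rest you invoke ``rank-one corrections'' with ``scalar coefficients pinned down by demanding the jump''. That is circular: you are asking \eqref{s8} to determine the ansatz rather than exhibiting a candidate and verifying it, and there is no a priori reason a finite-parameter correction suffices when $r_1,r_2$ have no analytic extension off $\mathbb{R}$. (A smaller issue: $\e^{-\im tz}\mathcal G(z;t)$ is unbounded in $\{\Im z>0\}$ whenever $t>0$, so that building block is already unsuitable as written.)

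The paper closes this gap with a device you are missing. It introduces piecewise Green's functions
\[
y_1^z(x)=\begin{cases}\displaystyle\int_{-\infty}^x\e^{\im z(x-s)}\phi(s)\,\d s,&\Im z>0\\[1ex] \displaystyle-\int_x^\infty\e^{\im z(x-s)}\phi(s)\,\d s,&\Im z<0\end{cases}
\]
(and an analogous $y_2^z$ from $\psi$), which are bounded by $\|\phi\|_{L^1},\|\psi\|_{L^1}$, analytic off $\mathbb{R}$, and jump across $\mathbb{R}$ by exactly $r_1(z)\im\e^{\im zx}$ and $r_2(z)\im\e^{-\im zx}$. Feeding $\tau_ty_1^z,\tau_ty_2^z$ through $(I-K_t)^{-1}$ and $(I-K_t^\ast)^{-1}$ yields a \emph{single} explicit $2\times2$ formula for $\X(z)$, valid in both half-planes, whose boundary values can then be computed directly (this is where the second condition in \eqref{i:14} and the appendix identities of Lemmas~\ref{new3}--\ref{new4} enter). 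Your $\mathcal F,\mathcal G$ reappear as the boundary values $X_-^{21},X_+^{12}$ of this construction, but it is the formula itself — not its limits on $\mathbb{R}$ — that must be written down first and then shown to have the required jump. The diagonal entries $\mp\im p_0$ then fall out of the same formula via the $\mathcal O(z^{-1})$ expansion of $y_k^z$, without any separate ``compensating pieces''.
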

The proof of Theorem \ref{theo1} is constructive and inspired by the non-rigorous workings in \cite[Section $5$]{K}, see Section \ref{sec4}.
\begin{rem}\label{IIKScon1} We now compare \eqref{s8} to the jump condition in the RHP for an integrable operator, cf. \cite[$(5.19)$]{IIKS}: if either $z\mapsto r_1(z)$ in \eqref{s9} admits analytic continuation to $H_+:=\{z\in\mathbb{C}:\ 0<\Im z<\epsilon\}$ such that $|r_1(z)|\rightarrow 0$ as $|z|\rightarrow\infty$ in $H_+$, or if $z\mapsto r_2(z)$ admits analytic continuation to $H_-:=\{z\in\mathbb{C}:\ 0<-\Im z<\epsilon\}$ such that $|r_2(z)|\rightarrow 0$ as $|z|\rightarrow\infty$ in $H_-$, then the factorization
\begin{equation*}
	\begin{bmatrix}1-r_1(z)r_2(z) & -r_2(z)\e^{-\im tz}\smallskip\\
		r_1(z)\e^{\im tz} & 1\end{bmatrix}=\begin{bmatrix}1 & -r_2(z)\e^{-\im tz}\\ 0 & 1\end{bmatrix}\begin{bmatrix}1 & 0\\ r_1(z)\e^{\im tz} & 1\end{bmatrix},\ \ \ z\in\mathbb{R},
\end{equation*}
readily shows that RHP \ref{master2} can be transformed to a RHP for an integrable operator by moving one of the matrix factors off the axis.
% with kernel
%\begin{equation*}
%	L_t(z,w)=\frac{1}{2\pi\im(z-w)}\Big\{r_1(w)\chi_{-\epsilon}(z)\chi_{\epsilon}(w)-\e^{\im t(z-w)}r_2(w)\chi_{\epsilon}(z)\chi_{-\epsilon}(w)\Big\},
%\end{equation*}
%where $\chi_{\pm\epsilon}$ are the characteristic functions of the horizontal lines $\Im z=\pm\epsilon$. 
However, without placing explicit decay assumptions on $\phi$ and $\psi$, we cannot guarantee that their $L^1$-Fourier transforms \eqref{s9} admit analytic extensions in the first place. Still, a suitable approximation of $r_j$  and/or $\phi,\psi$, with passage to a limit after completing the corresponding Riemann-Hilbert analysis, might circumvent the extension problem. However, even then, there does not seem to exist an obvious relation between RHP \ref{master2} and the RHP for an integrable operator.
\end{rem}
Combining Lemma \ref{lem1} and Theorem \ref{theo1} we arrive at the following Zakharov-Shabat Riemann-Hilbert characterization of the Fredholm determinant \eqref{i:12}.
\begin{cor}\label{impcor} Under the assumptions of Lemma \ref{lem1} and Theorem \ref{theo1}, with \eqref{i:14} valid for all $t\in J$ and $I-K_t$ invertible on $L^2(\mathbb{R}_+)$ for the same $t$,
\begin{equation*}
	\frac{\d}{\d t}\ln F(t)=\im X_1^{11}(t,\phi,\psi),\ \ \ \ \frac{\d^2}{\d t^2}\ln F(t)=-X_1^{12}(t,\phi,\psi)X_1^{21}(t,\phi,\psi),\ \ t\in J,
\end{equation*}
in terms of the entries of the matrix coefficient ${\bf X}_1$ in \eqref{i:14a}.
\end{cor}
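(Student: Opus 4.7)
\noindent The plan is simply to read the entries of $\mathbf{X}_1$ off Theorem~\ref{theo1} and combine this with Lemma~\ref{lem1} and the standard Jacobi (Poincar\'e) trace formula for Fredholm determinants. Expanding the matrix limit \eqref{i:14a} entrywise gives
\begin{equation*}
X_1^{11}(t,\phi,\psi) = -\im p_0(t),\quad X_1^{12}(t,\phi,\psi) = q_0^{\ast}(t),\quad X_1^{21}(t,\phi,\psi) = q_0(t),\quad X_1^{22}(t,\phi,\psi) = \im p_0^{\ast}(t),
\end{equation*}
with $q_0,q_0^{\ast},p_0=p_0^{\ast}$ as in \eqref{n8},\eqref{n9}. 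With this dictionary in hand, the second-derivative identity is essentially free: Lemma~\ref{lem1} asserts $\frac{\d^2}{\d t^2}\ln F(t) = -q_0(t)q_0^{\ast}(t)$, which is exactly $-X_1^{12}(t,\phi,\psi)\,X_1^{21}(t,\phi,\psi)$.

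For the first-derivative identity, I will prove the intermediate formula $\frac{\d}{\d t}\ln F(t) = p_0(t)$, whence $\im X_1^{11} = \im(-\im p_0) = p_0$ closes the loop. Since $K_t=M_tN_t$ is trace class as a product of two Hilbert--Schmidt operators (by \eqref{n00}) and $I-K_t$ is invertible by hypothesis, the Jacobi formula (cf.~\cite[Thm.~3.5]{S}) gives
\begin{equation*}
\frac{\d}{\d t}\ln F(t) = -\tr_{L^2(\mathbb{R}_+)}\!\bigl((I-K_t)^{-1}\dot{K}_t\bigr).
\end{equation*}
To compute $\dot{K}_t$ I substitute $u=z+t$ in \eqref{n1}, rewriting the kernel as $K_t(x,y)=\int_t^{\infty}\phi(x+u)\psi(u+y)\,\d u$; differentiation in $t$ then only hits the lower limit and yields the rank-one expression $\dot{K}_t(x,y) = -\phi(x+t)\psi(t+y)$, i.e.\ $\dot{K}_t = -\tau_t\phi\otimes\tau_t\psi$. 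Inserting this rank-one operator back under the trace reproduces verbatim the defining expression for $p_0(t)$ in \eqref{n8}, completing the derivation.

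The only genuinely technical point is justifying Jacobi's formula, i.e.\ $\mathcal{C}_1$-differentiability of $t\mapsto K_t$ in trace norm together with the interchange of $\tr$ and $\frac{\d}{\d t}$. I expect this to be routine bookkeeping under the present hypotheses: writing $\dot{K}_t = \dot{M}_tN_t + M_t\dot{N}_t$, the assumptions \eqref{n00}, \eqref{intbeh} ensure that $M_t,N_t,\dot{M}_t,\dot{N}_t$ are Hilbert--Schmidt, and the $L^2(\mathbb{R}_+)$-dominance of $\{\tau_tf\},\{D\tau_tf\}$ for $f\in\{\phi,\psi\}$ furnished by Lemma~\ref{lem1} provides uniform Hilbert--Schmidt control of the relevant difference quotients by dominated convergence. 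So the real content of the corollary lies in Lemma~\ref{lem1} and Theorem~\ref{theo1}, and this proof is principally a bookkeeping exercise together with one application of the classical trace formula.
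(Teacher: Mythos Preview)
Your proof is correct and follows essentially the same route as the paper: read off the entries of ${\bf X}_1$ from \eqref{i:14a}, invoke Lemma~\ref{lem1} for the second derivative, and use Jacobi's formula with $\dot{K}_t=-\tau_t\phi\otimes\tau_t\psi$ to get $\frac{\d}{\d t}\ln F(t)=p_0(t)$. The only cosmetic difference is that the Jacobi step you spell out here is already recorded as \eqref{n2} inside the proof of Lemma~\ref{lem1}, so the paper simply cites that equation rather than re-deriving it.
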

What's more, RHP \ref{master2} also allows us to access the higher functions \eqref{n8},\eqref{n9} for $n\in\mathbb{Z}_{\geq 1}$ provided we impose additional restrictions on $\phi$ and $\psi$, somewhat reminiscent of those in Corollary \ref{highercor}.
\begin{cor}\label{deeper} Fix $N\in\mathbb{Z}_{\geq 1}$. Provided $\phi,\psi\in W^{N,1}(\mathbb{R})$, besides obeying \eqref{n00}, are $N$-times continuously differentiable on $\mathbb{R}$, satisfy for every $k\in\{0,1,\ldots,N-1\}$,
\begin{equation*}
	D^kf\in L^{\infty}(\mathbb{R}),\ \ \ \ \int_0^{\infty}x|(D\tau_tf)(x)|^2\,\d x<\infty,\ \ \ \ \ \int_0^{\infty}\sqrt{\int_0^{\infty}|(\tau_tf)(x+y)|^2\,\d y}\,\d x<\infty,\ \ f\in\{\phi,\psi\},
\end{equation*}
as well as $D^Nf\in L^{\infty}(\mathbb{R}_+),f\in\{\phi,\psi\}$ and $I-K_t$ is invertible on $L^2(\mathbb{R}_+)$, then RHP \ref{master2} is uniquely solvable and its solution satisfies
\begin{equation}\label{s17}
	{\bf X}(z)=\mathbb{I}+\sum_{k=1}^N\begin{bmatrix}(-\im)^kp_{k-1} & \im^{k-1}q_{k-1}^{\ast}\smallskip\\
	(-\im)^{k-1}q_{k-1} & \im^kp_{k-1}^{\ast}\end{bmatrix}z^{-k}+o\big(z^{-N}\big),\ \ \ |z|\rightarrow\infty,\ \ \ \Im z\not\equiv\textnormal{const.}
\end{equation}
with $\{q_k,p_k,q_k^{\ast},p_k^{\ast}\}$ as in \eqref{n8} and \eqref{n9}.
\end{cor}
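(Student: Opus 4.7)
The plan is to leverage Theorem \ref{theo1} as the base case $N=1$ and extend its constructive argument to higher orders. First, I observe that the hypotheses here strictly contain those of Theorem \ref{theo1}: for $f\in W^{N,1}(\mathbb{R})$ the one-dimensional Sobolev embedding gives $f\in L^\infty(\mathbb{R})$, and the assumptions $D^kf\in L^\infty(\mathbb{R})$ for $k\le N-1$ together with $D^Nf\in L^\infty(\mathbb{R}_+)$ specialize to $Df\in L^\infty(\mathbb{R}_+)$ when $N=1$. Hence unique solvability of RHP \ref{master2} and the $k=1$ coefficient in \eqref{s17} follow directly from Theorem \ref{theo1}, so only the higher-order coefficients require new work.

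For the higher-order expansion, I would extend the constructive representation of ${\bf X}(z)$ used in the proof of Theorem \ref{theo1}. From the triangular factorization of \eqref{s8} highlighted in Remark \ref{IIKScon1}, that construction expresses columns of ${\bf X}(z)-\mathbb{I}$ as Fourier-type integrals, schematically
\begin{equation*}
\int_0^\infty h_\pm(t,y)\,\e^{\pm\im zy}\,\d y,\qquad z\in\mathbb{C}^\pm,
\end{equation*}
where $h_+$ is built from $(I-K_t)^{-1}\tau_t\phi$ and $\tau_t\psi$, while $h_-$ uses the starred counterparts. Under the $W^{N,1}$ hypothesis these integrands are $N$-times weakly differentiable in $y$ with derivatives vanishing at $+\infty$ by the Sobolev embedding applied to $D^k\phi,D^k\psi$ for $k\le N-1$ and by the dominance hypotheses on the resolvent action. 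Repeated integration by parts then yields the standard oscillatory-integral expansion
\begin{equation*}
\int_0^\infty h(y)\,\e^{\pm\im zy}\,\d y=\sum_{k=1}^N\alpha_k^{\pm}\,\frac{(D^{k-1}h)(0)}{z^k}+o(z^{-N}),\qquad|z|\to\infty,
\end{equation*}
with explicit constants $\alpha_k^{\pm}$ of the form $\pm\im^k$ or $\mp\im^k$, the remainder being $o(z^{-N})$ by Riemann-Lebesgue applied to $D^Nh\in L^1$.

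At each order $k$ the boundary values $(D^{k-1}h_\pm)(0)$ evaluate precisely to the quantities defined in \eqref{n8}-\eqref{n9}: pointwise evaluation at $y=0$ of $(I-K_t)^{-1}D^{k-1}\tau_t\phi$ yields $q_{k-1}(t)$, while the $L^2(\mathbb{R}_+)$-trace pairing with $\tau_t\psi$ yields $p_{k-1}(t)$, and symmetrically for $q_{k-1}^\ast(t),p_{k-1}^\ast(t)$ with $(I-K_t^\ast)^{-1}$ in place of $(I-K_t)^{-1}$. The alternating factors $\im^{k-1}$ and $(-\im)^{k-1}$ in \eqref{s17} arise from the constants $\alpha_k^\pm$, with the specific matrix position determined by which half-plane representation contributes to that entry.

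The principal obstacle is justifying the integration by parts rigorously at each order, which reduces to verifying that $(I-K_t)^{-1}D^n\tau_tf$ decays sufficiently at $+\infty$ for all boundary terms to vanish for every $n\le N$. This is precisely what the dominance assumptions inherited from Corollary \ref{highercor} are designed to guarantee, and it can be verified via the resolvent identity
\begin{equation*}
(I-K_t)^{-1}D^n\tau_t\phi=D^n\tau_t\phi+(I-K_t)^{-1}K_tD^n\tau_t\phi,
\end{equation*}
which transfers decay from $\phi$ to the resolvent action. A perhaps tidier alternative is induction on $N$: assuming \eqref{s17} at order $N-1$, subtract the partial sum from ${\bf X}(z)$, multiply by $z^{N-1}$, and check that the renormalized object satisfies a Riemann-Hilbert problem to which Theorem \ref{theo1} applies directly, yielding the next coefficient.
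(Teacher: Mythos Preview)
Your strategy—extracting the coefficients via repeated integration by parts in a Fourier-type representation—is the right one, but you perform the integration by parts in the wrong place, and this creates two concrete gaps. The paper's construction of $\X(z)$ is the explicit formula \eqref{j14}, which expresses each entry through the Green's functions $y_1^z,y_2^z$ of Proposition \ref{crucial1} (Remark \ref{IIKScon1} plays no role here). The paper then integrates by parts $N$ times in the \emph{definition} of $y_1^z,y_2^z$, obtaining
\[
(\tau_ty_1^z)(x)=-\sum_{k=1}^N(-\im)^k(D^{k-1}\tau_t\phi)(x)\,z^{-k}+m_{1N}^z(x;t),\qquad \|m_{1N}^z(\cdot;t)\|_{L^2(\mathbb{R}_+)}\le \frac{c}{|z|^N\sqrt{|\Im z|}}\|D^N\phi\|_{L^1(\mathbb{R})},
\]
and only \emph{afterwards} substitutes into \eqref{j14}. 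Because $(I-K_t)^{-1}$ is applied after the expansion, the coefficients land directly as $((I-K_t)^{-1}D^{k-1}\tau_t\phi)(0)=q_{k-1}$ and the remainder is controlled by Cauchy--Schwarz using only $\phi,\psi\in W^{N,1}(\mathbb{R})$.

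Your route instead writes the entries of $\X(z)-\mathbb{I}$ as half-line Laplace integrals $\int_0^\infty h(y)\e^{\pm\im zy}\,\d y$ with $h$ already containing the resolvent, and integrates by parts there. Two problems follow. First, the boundary terms you get are $(D^{k-1}h)(0)$ with the derivative \emph{outside} $(I-K_t)^{-1}$, whereas $q_{k-1}$ has it \emph{inside}; the two agree only via the symmetry $((I-K_t)^{-1}\tau_{t+a}\phi)(x)=((I-K_t)^{-1}\tau_{t+x}\phi)(a)$ (Lemma \ref{new3}), which you have not invoked and which is what actually moves the $y$-derivative onto $\phi$. Second, your Riemann--Lebesgue remainder estimate requires $D^Nh\in L^1(\mathbb{R}_+)$; unwinding this asks for $\int_0^\infty\sqrt{\int_0^\infty|(D^N\tau_tf)(x+y)|^2\,\d y}\,\d x<\infty$, which the Corollary does not assume (it assumes this only for $f$ itself, not $D^Nf$). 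The paper's ordering—expand the Green's function first, apply the resolvent second—sidesteps both issues. Finally, your inductive alternative does not work as stated: subtracting a polynomial in $z^{-1}$ from $\X$ destroys the multiplicative jump structure \eqref{s8}, so the renormalized object does not satisfy a RHP to which Theorem \ref{theo1} applies.
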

The last corollary concludes our first set of results for Fredholm determinants of additive Hankel composition operators. We now proceed to discuss some of their other properties, both algebraic and asymptotic, but within a narrower class of kernels.

%%%%%%%%%%%%%%%%%%%%%%%%%%%%%%%%%%%%%%%%%%%%%%%%%%%%%%%%%%%%%%%%%%%%%%%%%%%%%

\subsection{Additive composition, part 2}\label{seci22} As discussed in Section \ref{sec1}, certain determinants of additive Hankel composition operators appear in a perturbed fashion, especially in random matrix theory when concerned with extreme value statistics in real or symplectic ensembles. We now show how some of these perturbed determinants can be explicitly computed in terms of the solution of RHP \ref{master2}. The origin of this computation can be traced back to the works of Ferrari, Spohn \cite{FS}, Desrosiers, Forrester \cite{DF} and Krajenbrink \cite[Section $2.1$]{K} in theoretical physics. Consider the Hankel integral operator $H_t\in\mathcal{L}(L^2(\mathbb{R}_+))$ given by
\begin{equation}\label{sp:0}
	(H_tf)(x):=\int_0^{\infty}\phi(x+y+t)f(y)\,\d y,\ \ \ \ t\in\mathbb{R},
\end{equation} 
where $\phi:\mathbb{R}\rightarrow\mathbb{C}$ is continuously differentiable.
\begin{assum}\label{ass1} Assume $H_t\in\mathcal{C}_1(L^2(\mathbb{R}_+))$ and $\frac{\d}{\d t}H_t\in\mathcal{C}_1(L^2(\mathbb{R}_+))$ for all $t\in\mathbb{R}$, the latter with kernel $(D\tau_t\phi)(x+y)$, such that $\|H_t\|_1\rightarrow 0$ as $t\rightarrow+\infty$, for the trace norm $\|H_t\|_1$ of $H_t$. Moreover assume $I-\gamma H_t^2$ is invertible for all $(t,\gamma)\in\mathbb{R}\times[0,1]$.
\end{assum}
If $K_t$ denotes the composition operator $K_t:=H_t^2$ on $L^2(\mathbb{R}_+)$ with symmetric kernel
\begin{equation*}
	K_t(x,y)=\int_0^{\infty}\phi(x+z+t)\phi(z+y+t)\,\d z,
\end{equation*}
then the Fredholm determinant	$F(t,\gamma):=\prod_{k=1}^{\infty}\big(1-\gamma\lambda_k(t)\big),(t,\gamma)\in\mathbb{R}\times[0,1]$, where $\lambda_k(t)$ are the non-zero eigenvalues of $K_t$, satisfies by Corollary \ref{gencoo}, subject to the therein listed assumptions,
\begin{equation}\label{sp:-1}
	\ln F(t,\gamma)=-\int_t^{\infty}(s-t)\big(q(s,\gamma)\big)^2\d s,\ \ \ (t,\gamma)\in\mathbb{R}\times[0,1],
\end{equation}
with $q(t,\gamma):=\sqrt{\gamma}((I-\gamma K_t)^{-1}\tau_t\phi)(0)$. Now define the Fredholm determinant of $\gamma H_t$ on $L^2(\mathbb{R}_+)$, i.e.
\begin{equation}\label{i:15}
	G(t,\gamma):=\prod_{k=1}^{\infty}\big(1-\gamma\mu_k(t)\big),\ \ (t,\gamma)\in\mathbb{R}\times[-1,1],
\end{equation}
where $\mu_k(t)$ are the non-zero eigenvalues of $H_t$, counting multiplicity, and consider the following three functions, indexed according to their typical appearances in real or real-quaternion matrix ensembles:
\begin{equation}\label{i:16}
	F_1^{[1]}(t,\gamma):=F(t,\gamma)\left\{1-\gamma\int_0^{\infty}\big((I-\gamma K_t)^{-1}\tau_t\phi\big)(x)\left[1-\int_x^{\infty}(\tau_t\phi)(y)\,\d y\right]\d x\right\},
\end{equation}
followed by, where $\gamma_{\circ}:=\gamma(2-\gamma)$,
\begin{equation}\label{i:17}
	F_1^{[2]}(t,\gamma):=F(t,\gamma_{\circ})\left\{1-\gamma\int_0^{\infty}\big((I-\gamma_{\circ}K_t)^{-1}\tau_t\phi\big)(x)\left[1-\int_x^{\infty}(\tau_t\phi)(y)\,\d y\right]\d x\right\},
\end{equation}
and concluding with
\begin{equation}\label{i:18}
	F_4(t,\gamma):=F(t,\gamma)\left\{1+\gamma\int_0^{\infty}\big((I-\gamma K_t)^{-1}\tau_t\phi\big)(x)\left[\frac{1}{2}\int_x^{\infty}(\tau_t\phi)(y)\,\d y\right]\d x\right\}.
\end{equation}
Indeed, special cases of \eqref{i:16} appear in the GOE and GinOE, specifically in the superimposed orthogonal ensemble, cf. \cite{DF,F0}. Likewise, \eqref{i:17} appears in the analysis of the thinned GOE and GinOE \cite{BB,BBu} and \eqref{i:18}, for instance, in the thinned Gaussian symplectic ensemble, cf. \cite{BBu}. Thanks to the additive Hankel structure in \eqref{sp:0}, we can evaluate all three functions \eqref{i:16},\eqref{i:17},\eqref{i:18} either in terms of Riemann-Hilbert data, i.e. in terms of $q(t,\gamma)$, see \eqref{sp:-1} and Theorem \ref{theo1}, or, equivalently, in terms of $G(t,\gamma)$, compare \eqref{i:15}. The details are as follows and they explain in particular why the right-hand sides in \eqref{i:1} and \eqref{i:2} are so similar, see also Section \ref{seci23}.
\begin{theo}\label{theo2} Suppose $H_t\in\mathcal{L}(L^2(\mathbb{R}_+))$ defined in \eqref{sp:0} satisfies the conditions in Assumption \ref{ass1} with $\phi:\mathbb{R}\rightarrow\mathbb{C}$ continuously differentiable on $\mathbb{R}$. Assume further $\phi\in L^1(\mathbb{R}_+)$, that
\begin{equation*}
	\lim_{x\rightarrow+\infty}\phi(x)=0,
\end{equation*}
and that for every $t\in\mathbb{R}$,
\begin{equation}\label{i:18a}
	\int_0^{\infty}x|(\tau_tf)(x)|^2\,\d x<\infty,\ \ \ \int_0^{\infty}\sqrt{\int_0^{\infty}|(\tau_tf)(x+y)|^2\,\d y}\,\d x<\infty,\ \ \ f\in\{\phi,D\phi\}.
\end{equation}
Then we have for any $(t,\gamma)\in\mathbb{R}\times[0,1]$, denoting $\omega(t,\gamma):=\int_t^{\infty}q(s,\gamma)\,\d s$,
\begin{align}\label{i:19}
	F_1^{[1]}(t,\gamma)=F(t,\gamma)\Big\{\cosh\omega(t,\gamma)&\,-\sqrt{\gamma}\sinh\omega(t,\gamma)\Big\}\nonumber\\
	&\,=\frac{1}{2}(1-\sqrt{\gamma})\big(G(t,-\sqrt{\gamma})\big)^2+\frac{1}{2}(1+\sqrt{\gamma})\big(G(t,\sqrt{\gamma})\big)^2,
\end{align}
followed by
\begin{align}\label{i:20}
	F_1^{[2]}(t,\gamma)=F(t,\gamma_{\circ})&\,\left\{\frac{1-\gamma+\cosh\omega(t,\gamma_{\circ})-\sqrt{\gamma_{\circ}}\sinh\omega(t,\gamma_{\circ})}{2-\gamma}\right\}\nonumber\\
	&\hspace{1.65cm}=\left[\sqrt{\frac{1-\sqrt{\gamma_{\circ}}}{2(2-\gamma)}}G(t,-\sqrt{\gamma_{\circ}})+\sqrt{\frac{1+\sqrt{\gamma_{\circ}}}{2(2-\gamma)}}G(t,\sqrt{\gamma_{\circ}})\right]^2,
\end{align}
and concluding with
\begin{equation}\label{i:21}
	F_4(t,\gamma)=F(t,\gamma)\left\{\cosh\left(\frac{1}{2}\omega(t,\gamma)\right)\right\}^2=\left[\frac{1}{2}G(t,-\sqrt{\gamma})+\frac{1}{2}G(t,\sqrt{\gamma})\right]^2,
\end{equation}
provided $\{(I-\gamma K_t)^{-1}\tau_t\phi\}_{t\in\mathbb{R}},\{(I\pm\sqrt{\gamma}H_t)^{-1}\tau_tf\}_{t\in\mathbb{R}}$ with $f\in\{\phi,D\phi\}$ and $\{(I-\gamma K_t)^{-1}H_tD\tau_t\phi\}_{t\in\mathbb{R}}$ are $L^1(\mathbb{R}_+)$ dominated and $\{\tau_t\phi\}_{t\in\mathbb{R}},\{D\tau_t\phi\}_{t\in\mathbb{R}}$ are $L^2(\mathbb{R}_+)$ dominated, and provided there exist $c,t_0>0$ such that $|q(t,\gamma)|\leq ct^{-1-\epsilon}$ for all $t\geq t_0,\gamma\in[0,1]$ with some $\epsilon>0$,
\end{theo}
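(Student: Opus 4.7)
\smallskip
\noindent\textbf{Proof plan.} The plan is to split each of \eqref{i:19}, \eqref{i:20}, \eqref{i:21} into two independent tasks: (i) a dictionary relating $F(t,\gamma)$ and $\omega(t,\gamma)$ to the auxiliary Fredholm determinants $G(t,\pm\sqrt{\gamma})$, which renders the two right-hand sides of each identity automatically equivalent; (ii) a Zakharov-Shabat type ODE argument in $t$ that identifies the left-hand side with the trigonometric middle expression.

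For (i), multiplicativity of Fredholm determinants applied to $I-\gamma H_t^2=(I-\sqrt{\gamma}H_t)(I+\sqrt{\gamma}H_t)$ gives $F(t,\gamma)=G(t,\sqrt{\gamma})G(t,-\sqrt{\gamma})$. The classical Tracy-Widom endpoint-derivative formula on $L^2(s,\infty)$, transported to $L^2(\mathbb{R}_+)$ by the unitary equivalence $H_t\cong\widetilde H_{t/2}$, supplies $\frac{\d}{\d t}\ln G(t,\alpha)=\tfrac{\alpha}{2}\big((I-\alpha H_t)^{-1}\tau_t\phi\big)(0)$. Combining this with the partial fraction decomposition
\[
(I-\gamma H_t^2)^{-1}=\tfrac{1}{2}\big[(I-\sqrt{\gamma}H_t)^{-1}+(I+\sqrt{\gamma}H_t)^{-1}\big]
\]
identifies $q(t,\gamma)$ with $-\frac{\d}{\d t}\ln[G(t,-\sqrt{\gamma})/G(t,\sqrt{\gamma})]$; Assumption \ref{ass1} (through $\|H_t\|_1\to 0$ as $t\to+\infty$, hence $G(\infty,\alpha)=1$) then yields $\omega(t,\gamma)=\ln[G(t,-\sqrt{\gamma})/G(t,\sqrt{\gamma})]$. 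With $\e^{\pm\omega}=G(t,\mp\sqrt{\gamma})/G(t,\pm\sqrt{\gamma})$ and $F=G(\sqrt{\gamma})G(-\sqrt{\gamma})$, expanding the hyperbolic functions in the middle expressions of \eqref{i:19}--\eqref{i:21} reproduces the sum-of-squares right-hand sides by elementary algebra.

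For (ii), take \eqref{i:19} as the prototype and set $R(t):=F_1^{[1]}(t,\gamma)/F(t,\gamma)=1-\gamma\langle u,\Phi\rangle$ with $u:=(I-\gamma K_t)^{-1}\tau_t\phi$ and $\Phi(x):=1-\int_x^\infty(\tau_t\phi)(y)\,\d y$. The central computational input is the integration-by-parts identity
\[
(DH_t+H_tD)u=-(\tau_t\phi)\,u(0),
\]
obtained from the Hankel kernel $\phi(x+y+t)$, which in turn yields the formal operator identity $\dot K_t=DK_t-K_tD-H_t(\tau_t\phi)\otimes\mathrm{ev}_0$. Differentiating $(I-\gamma K_t)u=\tau_t\phi$ in $t$ and invoking the companion identity $\gamma H_t(I-\gamma H_t^2)^{-1}=\tfrac{\sqrt{\gamma}}{2}[(I-\sqrt{\gamma}H_t)^{-1}-(I+\sqrt{\gamma}H_t)^{-1}]$ collapses the result to
\[
\dot u=u'-\tfrac{1}{2}q(t,\gamma)(u_+-u_-),\qquad u_\pm:=(I\mp\sqrt{\gamma}H_t)^{-1}\tau_t\phi.
\]
Combined with $\dot\Phi=\Phi'=\tau_t\phi$ and one further integration by parts in $x$, this produces $R'(t)=-q(t,\gamma)S(t)$ with $S(t):=-\sqrt{\gamma}\,\Phi(0)-\tfrac{\gamma}{2}\langle u_+-u_-,\Phi\rangle$; an analogous derivation from $\dot u_\pm$ delivers the companion $S'(t)=-q(t,\gamma)R(t)$. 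Assumption \ref{ass1} and the $L^1$-domination hypotheses force $(R,S)(+\infty)=(1,-\sqrt{\gamma})$, so the Zakharov-Shabat $2\times 2$ system integrates uniquely to $R=\cosh\omega-\sqrt{\gamma}\sinh\omega$, proving \eqref{i:19}. The same mechanism handles \eqref{i:21} via $R(t):=\tfrac{1}{2}+\tfrac{\gamma}{2}\langle u,\tilde\phi\rangle$, $S(t):=\tfrac{\sqrt{\gamma}}{2}\tilde\phi(0)+\tfrac{\gamma}{4}\langle u_+-u_-,\tilde\phi\rangle$ (with $\tilde\phi:=1-\Phi$), boundary data $(1/2,0)$, and ZS solution $(\tfrac{1}{2}\cosh\omega,\tfrac{1}{2}\sinh\omega)$, giving $F_4/F=R+\tfrac{1}{2}=\cosh^2(\omega/2)$; and \eqref{i:20} is obtained by running the identical computation with $\gamma_\circ$ inside the resolvent (generating $u^\circ, u^\circ_\pm, q(t,\gamma_\circ), \omega_\circ$) while $\gamma$ remains the outer coupling constant, whose boundary values $(1/(2-\gamma), -\sqrt{\gamma_\circ}/(2-\gamma))$ -- via the identity $\gamma/\sqrt{\gamma_\circ}=\sqrt{\gamma_\circ}/(2-\gamma)$, equivalent to $\gamma_\circ=\gamma(2-\gamma)$ -- integrate to the stated rational-trigonometric form.

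The main obstacle is the rigorous derivation of $\dot u=u'-\tfrac{1}{2}q(u_+-u_-)$ and its analogues for $\dot u_\pm$: one must commute $\d/\d t$ with the resolvents and with the $L^1$-pairings, and control boundary terms at $x=0,\infty$ generated by the integrations by parts. The $L^1(\mathbb{R}_+)$-domination hypotheses on $\{(I-\gamma K_t)^{-1}\tau_t\phi\}$, $\{(I\pm\sqrt{\gamma}H_t)^{-1}\tau_tf\}_{f\in\{\phi,D\phi\}}$, $\{(I-\gamma K_t)^{-1}H_tD\tau_t\phi\}$ and the $L^2(\mathbb{R}_+)$-domination of $\{\tau_t\phi\},\{D\tau_t\phi\}$ are placed precisely to legalize these manipulations, while the decay estimate $|q(t,\gamma)|\le c\,t^{-1-\epsilon}$ ensures absolute convergence of $\omega(t,\gamma)$ and validity of the $t\to+\infty$ boundary data.
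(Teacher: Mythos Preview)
Your approach is correct and reaches the same conclusions, but the organization differs from the paper's in an instructive way. The paper proceeds in reverse order to you: it \emph{first} splits the bracket defining $F_1^{[1]}/F$ purely algebraically, using the partial fractions $(I-\gamma K_t)^{-1}=\tfrac12[(I-\sqrt{\gamma}H_t)^{-1}+(I+\sqrt{\gamma}H_t)^{-1}]$, $\gamma H_t(I-\gamma K_t)^{-1}=\tfrac{\sqrt{\gamma}}{2}[(I-\sqrt{\gamma}H_t)^{-1}-(I+\sqrt{\gamma}H_t)^{-1}]$ together with the observation $\Phi(x)=1-\int_0^\infty H_t(x,y)\,\d y$, to obtain directly
\[
R(t)=\tfrac12(1-\sqrt{\gamma})A_-(t)+\tfrac12(1+\sqrt{\gamma})A_+(t),\qquad A_\pm(t):=1\mp\sqrt{\gamma}\int_0^\infty\big((I\pm\sqrt{\gamma}H_t)^{-1}\tau_t\phi\big)(x)\,\d x.
\]
Only \emph{then} does it differentiate, proving two decoupled scalar ODEs $A_\pm'=\pm q\,A_\pm$ (identity \eqref{c10}), which integrate to $A_\pm=\e^{\mp\omega}$. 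Your coupled $2\times2$ system $(R',S')=(-qS,-qR)$ is the same dynamics after a linear change of basis: one checks $S=\tfrac12(1-\sqrt{\gamma})A_--\tfrac12(1+\sqrt{\gamma})A_+$, whence your ZS system is equivalent to the pair of $A_\pm$-equations. The paper's ordering is slightly more economical because the algebraic split reduces the ODE step to a single integration by parts in $\int A_\pm$; your route instead requires establishing $S'=-qR$ from $\dot u_\pm$, which is a heavier resolvent computation. For part (i), the paper does not invoke an individual endpoint formula for $\tfrac{\d}{\d t}\ln G(t,\alpha)$; it derives only the \emph{difference} $\tfrac{\d}{\d t}\ln G(t,\sqrt{\gamma})-\tfrac{\d}{\d t}\ln G(t,-\sqrt{\gamma})=q(t,\gamma)$ by a direct trace manipulation. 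Your formula $\tfrac{\d}{\d t}\ln G(t,\alpha)=\tfrac{\alpha}{2}((I-\alpha H_t)^{-1}\tau_t\phi)(0)$ is correct (it is the resolvent-diagonal identity $\tfrac{\d}{\d s}\ln\det(I-K_s)=R(s,s)$ on $L^2(s,\infty)$, transported by $H_t\cong\widetilde H_{t/2}$) and gives a cleaner route to \eqref{c3}, though it deserves one line of justification in place of the label ``classical Tracy--Widom''.
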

\begin{rem} Our focus in this paper lies on Fredholm determinants of Hankel composition operators. However, the proof of Theorem \ref{theo2} will use the identity
\begin{equation*}
	2\ln G(t,\sqrt{\gamma})=\ln F(t,\gamma)-\omega(t,\gamma),\ \ (t,\gamma)\in\mathbb{R}\times[0,1],
\end{equation*}
see \eqref{c3}, so our results, both analytic and asymptotic, allow us to study Fredholm determinants of ordinary trace class Hankel integral operators, too - subject to the necessary assumptions placed on their kernels.
\end{rem}
\begin{rem} Observe that \eqref{i:20} and \eqref{i:21} expresses the functions $F_1^{[2]}$ and $F_4$ as squares of convex combinations of the simpler determinants $G(t,\pm\sqrt{\gamma})$. This is particularly useful in numerical simulations.
\end{rem}
At this point of our analysis of Fredholm determinants of additive Hankel composition operators, we turn to their asymptotic behavior as $t\rightarrow\pm\infty$. Firstly, within the class of $K_t$ as in \eqref{n1},\eqref{n0}, by \cite[Theorem $3.4$]{S},
\begin{equation}\label{i:22}
	|F(t)-1|\leq\|K_t\|_1\e^{1+\|K_t\|_1},\ \ t\in\mathbb{R},
\end{equation}
where the trace norm $\|K_t\|_1$ of $K_t$ obeys
\begin{equation*}
	\|K_t\|_1\leq\sqrt{\int_0^{\infty}x|(\tau_t\phi)(x)|^2\,\d x}\,\sqrt{\int_0^{\infty}x|(\tau_t\psi)(x)|^2\,\d x}\rightarrow 0\ \ \ \textnormal{as}\ \ \ t\rightarrow+\infty,
\end{equation*}
by \eqref{infbeh}, provided the family $\{\sqrt{(\cdot)}\,\tau_tf\}_{t\geq t_0}$ for $f\in\{\phi,\psi\}$ is $L^2(\mathbb{R}_+)$ dominated for some $t_0>0$. Thus \eqref{i:22} translates into the following easy, albeit very crude, estimate.
\begin{cor} Suppose $\phi,\psi:\mathbb{R}\rightarrow\mathbb{C}$, besides satisfying \eqref{n00}, obey \eqref{infbeh} and the families $\{\sqrt{(\cdot)}\tau_tf\}_{t\geq t_0}$ with $f\in\{\phi,\psi\}$ are $L^2(\mathbb{R}_+)$ dominated for some $t_0>0$. Then, as $t\rightarrow+\infty$,
\begin{equation}\label{i:23}
	F(t)=1+o(1).
\end{equation}
\end{cor}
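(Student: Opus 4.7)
The strategy is essentially laid out in the paragraph immediately preceding the corollary, so the plan is to make that reasoning precise. The key inputs are the general determinant estimate \eqref{i:22} and the submultiplicativity of Schatten norms under composition, combined with a dominated convergence argument on the Hilbert--Schmidt norms of $M_t$ and $N_t$.

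First, I would apply \eqref{i:22} to reduce the statement to showing $\|K_t\|_1\rightarrow 0$ as $t\rightarrow+\infty$. Since $K_t=M_tN_t$, H\"older's inequality for Schatten ideals gives
\begin{equation*}
	\|K_t\|_1\leq\|M_t\|_2\|N_t\|_2,
\end{equation*}
so it suffices to verify $\|M_t\|_2,\|N_t\|_2\rightarrow 0$. For $M_t$ I would compute the Hilbert--Schmidt norm directly from the kernel $\phi(x+y+t)$: substituting $u=x+y$ and integrating out $y\in(0,u)$ yields
\begin{equation*}
	\|M_t\|_2^2=\int_0^{\infty}\!\!\int_0^{\infty}|\phi(x+y+t)|^2\,\d x\,\d y=\int_0^{\infty}u|\phi(u+t)|^2\,\d u=\int_0^{\infty}x|(\tau_t\phi)(x)|^2\,\d x,
\end{equation*}
and analogously for $N_t$ with $\psi$ in place of $\phi$. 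This gives the bound on $\|K_t\|_1$ displayed just before the corollary statement.

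It remains to show that $\int_0^{\infty}x|(\tau_tf)(x)|^2\,\d x\rightarrow 0$ as $t\rightarrow+\infty$ for $f\in\{\phi,\psi\}$. Here the $L^2(\mathbb{R}_+)$ domination of $\{\sqrt{(\cdot)}\tau_tf\}_{t\geq t_0}$ supplies a $g\in L^2(\mathbb{R}_+)$ with $x|(\tau_tf)(x)|^2\leq|g(x)|^2$ for all $t\geq t_0$ and a.e. $x\in\mathbb{R}_+$, while \eqref{infbeh} ensures the pointwise limit $(\tau_tf)(x)=f(x+t)\rightarrow 0$ as $t\rightarrow+\infty$ for every $x>0$. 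The dominated convergence theorem then delivers the desired vanishing.

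There is no real obstacle in this argument; the only care required is to check that the hypothesis \eqref{n00} is enough to make $M_t,N_t$ Hilbert--Schmidt in the first place (which is immediate from the same kernel computation) so that the product estimate $\|M_tN_t\|_1\leq\|M_t\|_2\|N_t\|_2$ legitimately applies, and that the domination hypothesis is indeed stronger than \eqref{n00} so it can be invoked uniformly in $t\geq t_0$. Plugging the resulting $\|K_t\|_1=o(1)$ back into \eqref{i:22} yields \eqref{i:23}.
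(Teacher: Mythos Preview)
Your proposal is correct and follows exactly the argument the paper sketches in the paragraph preceding the corollary: reduce to $\|K_t\|_1\to 0$ via \eqref{i:22}, bound $\|K_t\|_1\leq\|M_t\|_2\|N_t\|_2$, compute the Hilbert--Schmidt norms as $\int_0^\infty x|(\tau_tf)(x)|^2\,\d x$, and conclude by dominated convergence using \eqref{infbeh} and the domination hypothesis. The paper presents this as a one-line estimate rather than a formal proof, but your fleshed-out version is precisely its content.
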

One can clearly improve \eqref{i:23} by imposing further regularity and/or integrability constraints on $\phi,\psi$, but we shall focus on the behavior of $F(t)$ as $t\rightarrow-\infty$ instead. As it happens, using RHP \ref{master2} and the Deift-Zhou nonlinear steepest descent method \cite{DZ}, we are able to prove the following Akhiezer-Kac theorem for \eqref{i:12} in a suitable class of additive Hankel composition kernels.
\begin{theo}\label{theo2a} Let $\epsilon>0$. Suppose $\phi,\psi:\mathbb{R}\rightarrow\mathbb{C}$ are continuously differentiable such that
\begin{equation}\label{i:24}
	|f(x)|\leq \e^{-a|x|},\ \ x\in\mathbb{R}
\end{equation}
holds true for $f\in\{\phi,\psi,D\phi,D\psi\}$ with $a\geq 2+\epsilon$ and $I-\gamma K_t$ is invertible for all $(t,\gamma)\in\mathbb{R}\times[0,1]$. Then there exist $c=c(\epsilon),t_0=t_0(\epsilon)>0$ so that
\begin{equation}\label{i:25}
	\ln F(t)=s(0)t+\int_0^{\infty}s(x)s(-x)x\,\d x-\frac{1}{4\pi\im}\int_{-\infty}^{\infty}\left\{\frac{r_1'(\lambda)}{r_1(\lambda)}-\frac{r_2'(\lambda)}{r_2(\lambda)}\right\}\ln\big(1-r_1(\lambda)r_2(\lambda)\big)\,\d\lambda+r(t)
\end{equation}
for $(-t)\geq t_0$ with the function, in terms of the principal branch for the logarithm $\ln:\mathbb{C}\setminus(-\infty,0]\rightarrow\mathbb{C}$,
\begin{equation*}
	s(x):=-\frac{1}{2\pi}\int_{-\infty}^{\infty}\ln\big(1-r_1(y)r_2(y)\big)\e^{\im xy}\,\d y,\ \ \ x\in\mathbb{R}_{\geq 0}.
\end{equation*} 
The error term $r(t)$ in \eqref{i:25} is $t$-differentiable and satisfies
\begin{equation}\label{i:26}
	|r(t)|\leq c\e^{-\epsilon|t|}\ \ \ \forall\,(-t)\geq t_0.
\end{equation}
\end{theo}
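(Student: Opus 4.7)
The plan is to apply the Deift-Zhou nonlinear steepest descent method \cite{DZ} to RHP \ref{master2}, extract the $t\to-\infty$ asymptotics of the matrix entry $X_1^{11}(t,\phi,\psi)$, and then integrate the first identity of Corollary \ref{impcor} to recover $\ln F(t)$. The exponential bound \eqref{i:24} with $a\geq 2+\epsilon$ makes both $r_1$ and $r_2$ from \eqref{s9}, together with their $z$-derivatives, analytic in the horizontal strip $|\Im z|<a$, and the invertibility of $I-\gamma K_t$ for $\gamma\in[0,1]$ forces $1-r_1 r_2$ to be non-vanishing and of index zero on $\mathbb{R}$. The scalar Riemann-Hilbert problem $\delta_+/\delta_-=1-r_1 r_2$, $\delta(\infty)=1$, then admits the Cauchy-integral solution
\begin{equation*}
	\delta(z)=\exp\left[\frac{1}{2\pi\im}\int_{-\infty}^\infty\frac{\ln(1-r_1(y)r_2(y))}{y-z}\,\d y\right],
\end{equation*}
with large-$z$ expansion $\delta(z)=1-\im s(0)/z+O(z^{-2})$.

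I would next perform the $\delta$-conjugation $T(z):=\X(z)\delta(z)^{-\sigma_3}$, with $\sigma_3:=\textnormal{diag}(1,-1)$. Combined with the $LDU$ factorization of the jump \eqref{s8} whose diagonal factor is exactly $(1-r_1 r_2)^{\sigma_3}$, this reduces the jump of $T$ on $\mathbb{R}$ to a pure product $L_T U_T$ of triangular factors carrying $\e^{\im tz}$ (in $L_T$) and $\e^{-\im tz}$ (in $U_T$), with no diagonal factor remaining. Since $t\to-\infty$ forces $|\e^{\im tz}|$ to decay in $\Im z<0$ and $|\e^{-\im tz}|$ to decay in $\Im z>0$, I would open lens contours $\Sigma_\pm$ slightly above and below $\mathbb{R}$ inside the analyticity strip of $r_1, r_2$ and absorb $L_T, U_T^{-1}$ into the respective lens regions; this kills the jump on $\mathbb{R}$ and leaves exponentially small jumps on $\Sigma_\pm$ (the width bound $a\geq 2+\epsilon$ is consumed by the $(\delta_+\delta_-)^{\pm 1}$ factors appearing inside $L_T$ and $U_T$, leaving $\epsilon$ free for the final error). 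The standard small-norm theory for Riemann-Hilbert problems then yields $S(z)=\mathbb{I}+S_1/z+O(z^{-2})$ with $S_1=O(\e^{-\epsilon|t|})$ uniformly in $z$ off the contour.

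Unwinding $\X=S\cdot(\textnormal{lens factor})\cdot\delta^{\sigma_3}$ and collecting the $z^{-1}$-coefficient yields $X_1^{11}(t,\phi,\psi)=-\im s(0)+O(\e^{-\epsilon|t|})$, so Corollary \ref{impcor} delivers $\frac{\d}{\d t}\ln F(t)=\im X_1^{11}=s(0)+O(\e^{-\epsilon|t|})$, which integrates to $s(0)t$ modulo a $t$-independent constant. To identify that constant with the remaining two terms in \eqref{i:25}, I would push the small-norm expansion of $S$ one order further: $S_1$ is itself a Cauchy integral of $L_T-\mathbb{I}$ and $U_T-\mathbb{I}$ along $\Sigma_\pm$, and after deforming these integrals back to $\mathbb{R}$ and integrating by parts once in $\lambda$ (legitimate by the analyticity of $r_1', r_2'$ guaranteed by \eqref{i:24}) one obtains the logarithmic-derivative contribution $-\frac{1}{4\pi\im}\int\{r_1'/r_1-r_2'/r_2\}\ln(1-r_1 r_2)\,\d\lambda$, while $\int_0^\infty x\,s(x)s(-x)\,\d x$ emerges as the classical Szeg\H{o}-Akhiezer-Kac constant via Fourier-Parseval applied to the antiderivative in $t$, and is pinned down by matching against the boundary behavior $\ln F(t)\to 0$ as $t\to+\infty$ from Corollary \ref{gencoo}.

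The main obstacle I anticipate is the precise bookkeeping of these two $t$-independent constants: the linear term $s(0)t$ falls out cleanly from the leading $\delta$-conjugation, but isolating the subleading constants in the exact form prescribed by \eqref{i:25} requires a careful evaluation of the lens Cauchy integrals combined with a Parseval-type identity. Uniformity in $t$ of all the exponentially small remainders (from $S-\mathbb{I}$, from the lens corrections, and from the analytic extension used to deform $\Sigma_\pm$ back to $\mathbb{R}$) must be monitored throughout in order to assemble the final bound \eqref{i:26} with constants $c, t_0$ depending only on $\epsilon$.
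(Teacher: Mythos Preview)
Your steepest-descent scaffold is essentially the paper's: the $\delta$-conjugation is exactly the $g$-function transformation \eqref{c25}--\eqref{c26}, the lens opening is \eqref{c28}, and the small-norm estimates \eqref{c30}--\eqref{c32} produce $X_1^{11}=-\im s(0)+O(\e^{-\epsilon|t|})$ just as you describe. Integrating Corollary \ref{impcor} then gives $\ln F(t)=s(0)t+\tau+r(t)$ with $|r(t)|\leq c\e^{-\epsilon|t|}$; this is precisely \eqref{c42}.

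The genuine gap is your determination of the integration constant $\tau$. Neither of your two proposed mechanisms works. Pushing the small-norm expansion of ${\bf S}_1$ further only refines the $t$-dependent correction to $\frac{\d}{\d t}\ln F(t)$; since $({\bf G}_{\bf S}-\mathbb{I})$ is strictly off-diagonal, every such correction carries oscillatory factors $\e^{\pm\im t\lambda}$ and, upon $t$-integration, contributes $O(\e^{-\epsilon|t|})$ rather than an $O(1)$ constant. Matching against $\ln F(t)\to 0$ at $t\to+\infty$ would require integrating $\frac{\d}{\d t}\ln F(t)$ across the entire real line, but your steepest-descent expansion is only valid for $(-t)\geq t_0$; you have no control in the transition region.

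The paper supplies the missing idea: a $\gamma$-deformation. One introduces $D(t,\gamma):=\det(I-\gamma K_t)$, passes via \eqref{c19} to the equivalent integrable-operator RHP \ref{gam2} with Fredholm determinant $H(t,\gamma)$, and proves $D(t,\gamma)=H(t,\gamma)$ in \eqref{c23}. Then
\begin{equation*}
	\ln F(t)=\int_0^1\frac{\partial}{\partial\gamma}\ln H(t,\gamma)\,\d\gamma=-\frac{1}{2}\int_0^1\left[\int_{\Sigma_{\bf T}}R_{t,\gamma}(\lambda,\lambda)\,\d\lambda\right]\frac{\d\gamma}{\gamma},
\end{equation*}
see \eqref{c24}. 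The inner integral is evaluated through the ${\bf A},{\bf B},{\bf E}$ decomposition \eqref{c33}; the constants in \eqref{i:25} emerge from the explicit $\gamma$-antiderivatives in \eqref{c36} and \eqref{c39}, with the boundary $\gamma=0$ (where $D\equiv 1$) replacing your unavailable $t=+\infty$ anchor. Only after the constants are secured this way does the paper invoke \eqref{c42} to sharpen the error from $O(|t|^{-1})$ to $O(\e^{-\epsilon|t|})$.
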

\begin{rem} Assumption \eqref{i:24} is natural from the viewpoint of Paley-Wiener theory since our proof of Theorem \ref{theo2a} requires analytic extensions of $z\mapsto r_j(z)$ off the real line. However, the same assumption is by no means optimal, compare the recent work \cite{FTZ} further discussed in Remark \ref{Kacrem} below.
\end{rem}
The first two terms in estimate \eqref{i:25} bear a striking resemblance to the leading order terms in the classical Akhiezer-Kac theorem for truncated Wiener-Hopf operators \cite{A,Ka}. Indeed, the same theorem states that under appropriate conditions on the scalar function $k(x)$ the operator $T_{\alpha}$ on $L^2(0,\alpha)$ defined by
\begin{equation*}
	(T_{\alpha}f)(x):=f(x)+\int_0^{\alpha}k(x-y)f(y)\,\d y,
\end{equation*}
has its Fredholm determinant given asymptotically as $\alpha\rightarrow+\infty$ by
\begin{equation}\label{i:27}
	\ln\det T_{\alpha}=r(0)\alpha+\int_0^{\infty}r(x)r(-x)x\,\d x+o(1),
\end{equation}
where
\begin{equation*}
	r(x)=\frac{1}{2\pi}\int_{-\infty}^{\infty}\e^{\im xy}\ln\left\{1+\int_{-\infty}^{\infty}\e^{-\im yz}k(z)\,\d z\right\}\d y.
\end{equation*}
The difference between our $s(x)$ and $r(x)$ in \eqref{i:27}, and the appearance of the third term in \eqref{i:25}, is a manifestation of the fact that we are analyzing determinants of \textit{composition} operators, rather than determinants of ordinary integral operators. In particular, when $\phi=\psi$ and both functions are even, then the third term in \eqref{i:25} is identically zero since then $r_1=-r_2$. This choice of symmetric composition kernels thus matches precisely the classical form \eqref{i:27} of the Akhiezer-Kac theorem.
\begin{rem}\label{Kacrem} Noting that both, $r_1$ and $r_2$ are Fourier integrals, see \eqref{s9}, we can rewrite \eqref{i:25} as follows: let $f^{\ast n}$ denote the $n$-fold convolution of a function $f\in L^1(\mathbb{R})$ with itself, i.e.
\begin{equation*}
	f^{\ast 1}(x)=f(x),\ \ \ f^{\ast 2}(x)=\int_{-\infty}^{\infty}f(x-y)f(y)\,\d y,\ \ \ f^{\ast 3}(x)=\int_{-\infty}^{\infty}f(x-y)f^{\ast 2}(y)\,\d y,\ \ \ \ldots
\end{equation*}
Then
\begin{equation}\label{i:28}
	s(0)=\sum_{n=1}^{\infty}\frac{1}{n}\omega^{\ast n}(0),\ \ \ \ \ \ \omega(x):=\int_{-\infty}^{\infty}\phi(x+y)\psi(y)\,\d y,
\end{equation}
followed by
\begin{equation}\label{i:29}
	\int_0^{\infty}s(x)s(-x)x\,\d x=\sum_{n,m=1}^{\infty}\frac{1}{nm}\int_0^{\infty}\omega^{\ast n}(x)\omega^{\ast m}(-x)x\,\d x,
\end{equation}
and concluding with
\begin{equation}\label{i:30}
	-\frac{1}{4\pi\im}\int_{-\infty}^{\infty}\left\{\frac{r_1'(\lambda)}{r_1(\lambda)}-\frac{r_2'(\lambda)}{r_2(\lambda)}\right\}\ln\big(1-r_1(\lambda)r_2(\lambda)\big)\,\d\lambda=\sum_{n=1}^{\infty}\frac{1}{n^2}\int_{-\infty}^{\infty}x\phi^{\ast n}(x)\psi^{\ast n}(x)\,\d x.
\end{equation}
Estimate \eqref{i:25}, with \eqref{i:28},\eqref{i:29},\eqref{i:30} in place, for $\phi=\psi$ when $\omega^{\ast n}(x)=\omega^{\ast n}(-x)$, with $\phi=\psi$ real-valued, non-negative, continuous and in $L^2(\mathbb{R})\cap L^{\infty}(\mathbb{R})$ such that $M\phi\in L^1(\mathbb{R})$ was recently derived in \cite[Theorem $2$]{FTZ} through probabilistic reasoning. Here, based on nonlinear steepest descent complex analytic techniques, we prove \eqref{i:25} in the non-symmetric complex-valued setup but with stronger asymptotic constraints placed on $\phi,\psi$, see \eqref{i:24}.
\end{rem}
\begin{rem} Our proof workings for Theorem \ref{theo2a} allow us to compute the $t\rightarrow-\infty$ asymptotics of the specialized determinant $F(t,\gamma),\gamma\in[0,1]$ that appears in \eqref{i:16},\eqref{i:17},\eqref{i:18}, subject to the constraint \eqref{i:24}. Moreover, using the approach of \cite[Section $7.2.1$]{BB}, we can compute the $t\rightarrow-\infty$ asymptotics of $\omega(t,\gamma)$ appearing in Theorem \ref{theo2} and so the $t\rightarrow-\infty$ asymptotics of \eqref{i:19},\eqref{i:20} and \eqref{i:21} become accessible. We leave the details of this calculation to the dedicated reader.
\end{rem}
The last remark concludes this subsection.

\subsection{Additive composition, part 3}\label{seci23} We now apply the results of Section \ref{seci21} and \ref{seci22} to two composition operators occurring in random matrix theory, namely those that underlie \eqref{i:1} and \eqref{i:2}.

\subsubsection{The Gaussian kernel} Let $J=\mathbb{R}$ and choose
\begin{equation*}
	\phi(x)=\psi(x)=\frac{1}{\sqrt{\pi}}\e^{-x^2},\ \ \ x\in\mathbb{R}.
\end{equation*}
The Gaussian is a Schwartz function on $\mathbb{R}$, so all dominance and integrability assumptions in Section \ref{seci21} and \ref{seci22} are trivially satisfied. Moreover, the symmetric integral operator $K_t$ in \eqref{n1} with kernel
\begin{equation}\label{examp1}
	K_t(x,y)=\frac{1}{\pi}\int_0^{\infty}\e^{-(x+z+t)^2}\e^{-(z+y+t)^2}\,\d z,
\end{equation}
corresponding to the Gaussian choice, is trace class on $L^2(\mathbb{R}_+)$ and $I-\gamma K_t$ is invertible on the same space for all $(t,\gamma)\in\mathbb{R}\times[0,1]$, see \cite[Lemma $2.1$]{BB0}. Lastly, $H_t\in\mathcal{C}_1(L^2(\mathbb{R}_+))$, $\frac{\d}{\d t}H_t\in\mathcal{C}_1(L^2(\mathbb{R}_+))$ and $\|H_t\|_1\rightarrow 0$ as $t\rightarrow+\infty$, see \cite{BB}. Hence, all results in Section \ref{seci21} and \ref{seci22} apply to the Gaussian kernel and so almost all workings in \cite{BB0,BB} boil down to the below Corollary.
\begin{cor}\label{Gaucor} The Fredholm determinant $F(t)$ of the Gaussian kernel \eqref{examp1} equals
\begin{equation*}
	F(t)=\exp\left[-\int_t^{\infty}(s-t)\big(q_0(s)\big)^2\,\d s\right],\ \ \ t\in\mathbb{R},
\end{equation*}
where
\begin{equation*}
	q_0(t)=\big((I-K_t)^{-1}\tau_t\phi\big)(0)=\lim_{\substack{z\rightarrow\infty\\ z\notin\mathbb{R}}}\big(zX^{21}(z;t)\big),\ \ t\in\mathbb{R},
\end{equation*}
is expressed in terms of the solution ${\bf X}(z)={\bf X}(z;t)=[X^{mn}(z;t)]_{m,n=1}^2\in\mathbb{C}^{2\times 2}$ of RHP \ref{master2} with
\begin{equation*}
	r_1(z)=-\im\e^{-\frac{1}{4}z^2},\ \ \ \ \ \ \ r_2(z)=\im\e^{-\frac{1}{4}z^2},\ \ z\in\mathbb{R},
\end{equation*}
in the same problem, see \eqref{s8}. Moreover, \eqref{i:19} matches \cite[$(1.9)$]{BB0} and \eqref{i:20} is \cite[$(1.13),(1.14)$]{BB} after taking a square root. Also, \eqref{i:2} is the special case of \eqref{i:20} when $\gamma=1$ and $p\equiv q_0$, again after taking a square root. Lastly, \eqref{i:25} reduces to \cite[$(1.20),(1.21)$]{BB} with the indicated choices for $\phi$ and $\psi$.
\end{cor}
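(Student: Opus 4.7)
The plan is a verification of hypotheses plus a short Fourier computation. Since $\phi=\psi=\pi^{-1/2}\e^{-x^2}$ is a Schwartz function on $\mathbb{R}$, the integrability conditions \eqref{n00}, \eqref{infbeh}, \eqref{intbeh}, \eqref{i:14}, \eqref{i:18a} and all $L^1(\mathbb{R}_+)$ and $L^2(\mathbb{R}_+)$ dominance assumptions in Lemma \ref{lem1}, Corollary \ref{gencoo}, Theorem \ref{theo1}, and Theorem \ref{theo2} are immediate: every family $\{\tau_tf\}_{t\in\mathbb{R}}$ with $f$ a polynomial-weighted Gaussian admits a fixed dominator of the form $C(1+x)^k\e^{-x^2/2}$ on $\mathbb{R}_+$. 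The trace class properties $H_t,\frac{\d}{\d t}H_t\in\mathcal{C}_1(L^2(\mathbb{R}_+))$ with $\|H_t\|_1\rightarrow 0$ as $t\rightarrow+\infty$, together with invertibility of $I-\gamma K_t$ on $L^2(\mathbb{R}_+)$ for all $(t,\gamma)\in\mathbb{R}\times[0,1]$, will be quoted from \cite[Lemma $2.1$]{BB0} and \cite{BB}; the resolvent-applied dominations then follow from uniform Neumann-series control on $(I-\gamma K_t)^{-1}$ combined with the Schwartz bounds on $\tau_t\phi$ and $D\tau_t\phi$.

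With hypotheses in place, I would apply Corollary \ref{gencoo}: the symmetry $\phi=\psi$ being real forces $K_t^{\ast}=K_t$ and $q_0^{\ast}(t)=q_0(t)$, which collapses the Tracy-Widom integral to the single-$q_0^2$ form stated in the corollary. Theorem \ref{theo1} then identifies $q_0(t)=\lim_{z\rightarrow\infty,\,z\notin\mathbb{R}}zX^{21}(z;t)$, while the explicit reflection coefficients arise from the classical Gaussian Fourier identity
\begin{equation*}
\int_{-\infty}^{\infty}\frac{1}{\sqrt{\pi}}\e^{-y^2}\e^{\mp\im zy}\,\d y=\e^{-\frac{1}{4}z^2},\qquad z\in\mathbb{R},
\end{equation*}
substituted into \eqref{s9}.

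The three matches with \cite{BB0,BB} and with \eqref{i:2} are then direct substitutions in Theorem \ref{theo2}. Inserting $\phi=\pi^{-1/2}\e^{-\cdot^2}$ into \eqref{i:19} reproduces \cite[$(1.9)$]{BB0} verbatim; the square root of \eqref{i:20} reproduces the right-hand side of \cite[$(1.13),(1.14)$]{BB}; and taking $\gamma=1$ in the square root of \eqref{i:20}, so that $\gamma_{\circ}=1$ and $p\equiv q_0$, collapses the convex combination to a single term and recovers \eqref{i:2}.

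No genuine obstacle remains: the real work has been absorbed into Sections \ref{seci21} and \ref{seci22}, and this corollary is bookkeeping once the Schwartz class dominations have been written out. The only care point is tracking which of the many dominance hypotheses is relevant at each invocation, but for Gaussians this is uniform in $t$ and painless.
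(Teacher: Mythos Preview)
Your proposal is correct and follows essentially the same line as the paper. The paper's own proof is even terser: it computes the Gaussian Fourier integral $\pi^{-1/2}\int_{-\infty}^{\infty}\e^{-y^2}\e^{-\im zy}\,\d y=\e^{-\frac{1}{4}z^2}$, observes that RHP \ref{master2} with the resulting $r_j$ coincides with \cite[Riemann-Hilbert Problem $1.5$]{BB0} (up to the relabelling $2x\mapsto t$), and then cites \cite[$(3.33)$, Proposition $3.10$]{BB0} for the Tracy--Widom formula rather than invoking Corollary \ref{gencoo} and Theorem \ref{theo1} explicitly; but since those general results are precisely what \cite{BB0} establishes in this special case, the two routes are equivalent bookkeeping.
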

\begin{proof} For every $z\in\mathbb{R}$,
\begin{equation*}
	\frac{1}{\sqrt{\pi}}\int_{-\infty}^{\infty}\e^{-y^2}\e^{-izy}\,\d y=\e^{-\frac{1}{4}z^2},
\end{equation*}
so RHP \ref{master2} with the indicated choices for $r_j(z)$ matches exactly \cite[Riemann-Hilbert Problem $1.5$]{BB0}, modulo the labelling $2x\mapsto t$ in \cite[$(1.8)$]{BB0}. Consequently, the above formula for $F(t)$ is \cite[$(3.33)$]{BB0}, compare \cite[Proposition $3.10$]{BB0}.
\end{proof}
\subsubsection{The Airy kernel} Let $J=\mathbb{R}$ and choose
\begin{equation*}
	\phi(x)=\psi(x)=\textnormal{Ai}(x),\ \ \ x\in\mathbb{R}.
\end{equation*}
The Airy function is a Schwartz function on $\mathbb{R}_+$, so all dominance and integrability assumptions at $+\infty$ in Section \ref{seci21} and \ref{seci22} are satisfied. Moreover, the symmetric integral operator $K_t$ in \eqref{n1} with kernel
\begin{equation}\label{examp3}
	K_t(x,y)=\int_0^{\infty}\textnormal{Ai}(x+z+t)\textnormal{Ai}(z+y+t)\,\d z,
\end{equation}
corresponding to the Airy choice, is trace class on $L^2(\mathbb{R}_+)$ and $I-\gamma K_t$ invertible on the same space for all $(t,\gamma)\in\mathbb{R}\times[0,1]$, see for instance \cite[Lemma $6.15$]{BDS}. But the Airy function is not in $W^{1,1}(\mathbb{R})$, hence we cannot use Theorem \ref{theo1}. However, relying on the below ad hoc arguments the outcome of Theorem \ref{theo1} for the Airy kernel determinant is still valid, see Corollary \ref{AiCor}. On the other hand, Theorem \ref{theo2a} on the asymptotic behavior of $F(t)$ as $t\rightarrow-\infty$ famously fails for the Airy kernel, compare \cite[$(1.19)$]{TW}.
\begin{cor}\label{AiCor} The Fredholm determinant $F(t)$ of the Airy kernel \eqref{examp3} equals
\begin{equation*}
	F(t)=\exp\left[-\int_t^{\infty}(s-t)\big(q_0(s)\big)^2\,\d s\right],\ \ \ t\in\mathbb{R},
\end{equation*}
where
\begin{equation*}
	q_0(t)=\big((I-K_t)^{-1}\tau_t\phi\big)(0)=\lim_{\substack{z\rightarrow\infty\\ z\notin\mathbb{R}}}\big(zX^{21}(z;t)\big),\ \ t\in\mathbb{R}
\end{equation*}
is expressed in terms of the solution ${\bf X}(z)={\bf X}(z;t)=[X^{mn}(z;t)]_{m,n=1}^2\in\mathbb{C}^{2\times 2}$ of RHP \ref{master2} with
\begin{equation*}
	r_1(z)=-\im\e^{\frac{\im}{3}z^3},\ \ \ r_2(z)=\im\e^{-\frac{\im}{3}z^3},\ \ \ z\in\mathbb{R},
\end{equation*}
in the same problem, see \eqref{s8}, and with the modification that \eqref{s10} is valid only in certain directions, with ${\bf X}(z)$ having at worst polynomial growth at $z=\infty$. Moreover, $q_0(t)$ solves the Painlev\'e-II boundary value problem
\begin{equation}\label{co0}
	\frac{\d^2q_0}{\d t^2}=tq_0+2q_0^3,\ \ \ q_0(t)\sim\textnormal{Ai}(t)\ \textnormal{as}\ t\rightarrow+\infty,
\end{equation}
identity \eqref{i:19} matches exactly \cite[$(9.150)$]{F1}, \eqref{i:20} is \cite[$(2.1)$]{Di} and \eqref{i:21} is \cite[$(2.2)$]{Di}. In addition, \eqref{i:1} is the special case of \eqref{i:20} with $\gamma=1$ and $q\equiv q_0$, after taking a square root.
\end{cor}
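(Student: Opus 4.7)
The plan breaks into three steps, with the principal difficulty appearing in the middle step.

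\textbf{Step 1: The exponential formula for $F(t)$.} I would first apply Corollary~\ref{gencoo} with $\phi=\psi=\mathrm{Ai}$ to obtain the formula for $F(t)$. Since $\mathrm{Ai}$ is smooth, real-valued, and decays super-exponentially at $+\infty$, the conditions \eqref{n00}, \eqref{infbeh}, \eqref{intbeh} are immediate on any shifted half-line $[0,\infty)$. Invertibility of $I-K_t$ on $L^2(\mathbb{R}_+)$ is quoted from \cite[Lemma 6.15]{BDS}. The $L^2(\mathbb{R}_+)$-dominance hypotheses reduce to uniform pointwise bounds for $\tau_tf$, $f\in\{\mathrm{Ai},\mathrm{Ai}'\}$, and their resolvent transforms; these follow from super-exponential decay on $[0,\infty)$ together with continuity of $t\mapsto (I-K_t)^{-1}$ in operator norm on compact subsets of $\mathbb{R}$. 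A Neumann-series argument for $q_0(t)=\bigl((I-K_t)^{-1}\tau_t\mathrm{Ai}\bigr)(0)$, using $\|K_t\|_1\to 0$ as $t\to+\infty$, gives $q_0(t)=\mathrm{Ai}(t)+O(\mathrm{Ai}(t)^3)$, which both supplies the polynomial decay hypothesis of Corollary~\ref{gencoo} and anchors the boundary condition in \eqref{co0}.

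\textbf{Step 2: Ad hoc identification with RHP~\ref{master2}.} This is the main obstacle: since $\mathrm{Ai}\notin W^{1,1}(\mathbb{R})$, Theorem~\ref{theo1} does not apply, so I cannot connect $q_0$ to RHP~\ref{master2} through the general machinery of Section~\ref{seci21}. Instead, I would interpret $r_1(z)=-\mathrm{i}\,\mathrm{e}^{\mathrm{i}z^3/3}$ and $r_2(z)=\mathrm{i}\,\mathrm{e}^{-\mathrm{i}z^3/3}$ as the \emph{formal} Fourier transforms and exploit the fact that they extend to entire functions with $r_1(z)r_2(z)\equiv 1$. The jump \eqref{s8} therefore factorizes as in Remark~\ref{IIKScon1},
\[
\begin{bmatrix}0 & -\mathrm{i}\,\mathrm{e}^{-\mathrm{i}(z^3/3+tz)}\\ -\mathrm{i}\,\mathrm{e}^{\mathrm{i}(z^3/3+tz)}&1\end{bmatrix}=\begin{bmatrix}1 & -\mathrm{i}\,\mathrm{e}^{-\mathrm{i}(z^3/3+tz)}\\ 0 & 1\end{bmatrix}\begin{bmatrix}1 & 0\\ -\mathrm{i}\,\mathrm{e}^{\mathrm{i}(z^3/3+tz)}& 1\end{bmatrix},
\]
and since $|\mathrm{e}^{\mathrm{i}(z^3/3+tz)}|$ decays super-exponentially for $\arg z\in(0,\pi/3)\cup(2\pi/3,\pi)$ (and symmetrically for the conjugate), I can deform the lower-triangular factor onto a contour $\Gamma_+$ in the upper half-plane lying in those sectors and the upper-triangular factor onto its mirror $\Gamma_-$ below. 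The resulting contour problem is the classical Flaschka--Newell/Its--Kitaev RHP for Painlev\'e-II with Stokes multipliers $(0,1,-1)$, whose solution is the Hastings--McLeod tau-function.

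\textbf{Step 3: Painlev\'e-II and the published identifications.} Standard isomonodromic arguments at the deformed RHP (differentiation in $z$ and $t$ generates the Flaschka--Newell Lax pair) show that $q(t):=\lim_{z\to\infty}zX^{21}(z;t)$ solves $q''=tq+2q^3$ with $q(t)\sim\mathrm{Ai}(t)$ as $t\to+\infty$; this is the Hastings--McLeod solution, which is the unique such solution by the classical uniqueness theorem. Matching the $t\to+\infty$ asymptotic with the Neumann expansion of $q_0$ from Step~1 then forces $q_0\equiv q$, giving \eqref{co0} and the RHP characterization of $q_0$. The remaining claims are direct substitutions: inserting $\phi=\psi=\mathrm{Ai}$ and $q\equiv q_0$ into \eqref{i:19}, \eqref{i:20}, \eqref{i:21} and taking square roots reproduces verbatim the formulas \cite[(9.150)]{F1}, \cite[(2.1)]{Di}, \cite[(2.2)]{Di}; finally \eqref{i:1} is the $\gamma=1$ specialization of \eqref{i:20} after a square root, with $q\equiv q_0$.

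The hardest point, as already flagged, is Step~2: the failure of $\mathrm{Ai}\in W^{1,1}(\mathbb{R})$ means the derivation leading to Theorem~\ref{theo1} must be bypassed, and the connection between the composition-operator resolvent and RHP~\ref{master2} must be re-established by the explicit factorization and contour-deformation route just outlined, followed by uniqueness of the Hastings--McLeod solution to match the two representations of $q_0$.
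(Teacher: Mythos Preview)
Your Steps~1 and~2 track the paper's argument closely: the paper also factorizes the jump \eqref{s8}, deforms into the standard Painlev\'e-II RHP (its RHP~\ref{PII}, reached via an explicit transformation ${\bf Y}(z;t)={\bf X}(2z;t)\cdot(\ldots)$ with Stokes data $s_1=-\im,\,s_3=\im$), and then cites \cite{FIKN} to conclude that the RHP coefficient $q(t):=2Y_1^{12}(t)$ solves the Hastings--McLeod boundary value problem \eqref{co0}.

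The gap is in Step~3. From the RHP you have a function $q$ solving Painlev\'e-II with Airy asymptotics; from the Fredholm side you have $q_0$ with the same asymptotics. But matching asymptotics at $+\infty$ alone does \emph{not} force $q_0\equiv q$: Hastings--McLeod uniqueness is a statement about solutions of the Painlev\'e-II ODE, and you have nowhere shown that $q_0$ solves it. Nothing in Corollary~\ref{gencoo} or your Neumann expansion produces a differential equation for $q_0$; the paper's general machinery only yields the open Zakharov--Shabat chain \eqref{r2}. The paper flags exactly this as the missing link---``we cannot use Theorem~\ref{theo1} since the Airy function is not in $W^{1,1}(\mathbb{R})$''---and closes it by quoting \cite[Theorem~3.1]{BerCa0}, which directly identifies the Fredholm resolvent datum $((I-K_t)^{-1}\tau_t\phi)(0)$ with the RHP coefficient. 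An alternative repair on your side would be to invoke the classical Tracy--Widom result \cite{TW} that $q_0$ itself satisfies Painlev\'e-II, after which your uniqueness argument goes through; but as written, the identification $q_0\equiv q$ is unsupported.
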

\begin{proof} Given the contour integral representation \cite[$9.5.1$]{NIST} for the Airy function, the formul\ae\,for $r_j(z)$ can be obtained via Fourier inversion. Alternatively, and based on first principles, we derive the following integral identity
\begin{equation}\label{co1}
	\int_{-\infty}^{\infty}\textnormal{Ai}(y)\e^{-\im zy}\,\d y=\e^{\frac{\im}{3}z^3},\ \ z\in\mathbb{R}.
\end{equation}
Indeed, for any $z\in\mathbb{R}$,
\begin{align}
	&\int_{-\infty}^{\infty}\!\e^{-\im zy}\textnormal{Ai}(y)\,\d y=\int_0^{\infty}\!\e^{-\im zy}\textnormal{Ai}(y)\,\d y+\int_0^{\infty}\!\e^{\im zy}\textnormal{Ai}(-y)\,\d y
	=\int_0^{\infty}\e^{-\im zy}\textnormal{Ai}(y)\,\d y
	+\int_0^{\infty}\e^{\im zy}\e^{\im\frac{\pi}{3}}\textnormal{Ai}\big(y\e^{\im\frac{\pi}{3}}\big)\,\d y\nonumber\\
	&+\int_0^{\infty}\e^{\im zy}\e^{-\im\frac{\pi}{3}}\textnormal{Ai}\big(y\e^{-\im\frac{\pi}{3}}\big)\,\d y
	=\int_0^{\infty}\e^{-\im zy}\textnormal{Ai}(y)\,\d y+\int_0^{\infty\,\e^{\im\frac{\pi}{3}}}\e^{\omega zy}\textnormal{Ai}(y)\,\d y+\int_0^{\infty\,\e^{-\im\frac{\pi}{3}}}\e^{-\bar{\omega}zy}\textnormal{Ai}(y)\,\d y\label{ex:2}
\end{align}
with $\omega:=\e^{\im\frac{\pi}{6}}$ and where we used the cyclic constraint of the Airy function \cite[$9.2.14$]{NIST}. But from \cite[$9.7.5$]{NIST},
\begin{equation*}
	\textnormal{Ai}(z)=\mathcal{O}\left(z^{-\frac{1}{4}}\exp\left[-\frac{2}{3}z^{\frac{3}{2}}\right]\right),\ \ \ \ \ |\textnormal{arg}\,z|\leq\frac{\pi}{3},
\end{equation*}
so we can rotate both integration paths in \eqref{ex:2} back to the half ray $\mathbb{R}_+$ by Cauchy's theorem,
\begin{equation*}
	\int_{-\infty}^{\infty}\e^{-\im zy}\textnormal{Ai}(y)\,\d y=\int_0^{\infty}\e^{-\im zy}\textnormal{Ai}(y)\,\d y+\int_0^{\infty}\e^{\omega zy}\textnormal{Ai}(y)\,\d y+\int_0^{\infty}\e^{-\bar{\omega}zy}\textnormal{Ai}(y)\,\d y.
\end{equation*}
Each of the remaining three integrals can be evaluated in terms of confluent hypergeometric functions, see \cite[$9.10.14$]{NIST}, and their sum immediately yields the claimed result \eqref{co1}. Next, moving to RHP \ref{master2} with the aforementioned choices for $r_j(z)$ in \eqref{s8}, we replace \eqref{s10} by the weaker requirement that $X(z)$ has at worst polynomial growth at $z\rightarrow\infty$ but $X(z)\rightarrow \mathbb{I}$ as $z\rightarrow\infty$ in certain directions. This is necessary due to the fact that $r_j(z)\nrightarrow 0$ as $z\rightarrow\pm\infty$ on $\mathbb{R}$. Still, the so modified RHP \ref{master2} admits at most one solution, see \cite[Proposition $3.1$]{FZ}, and the problem itself relates to Painlev\'e-II by the following simple argument: factorize the relevant jump matrix \eqref{s8},
\begin{align*}
	\begin{bmatrix}
	1-r_1(z)r_2(z)&-r_2(z)\e^{-\im tz}\smallskip\\
	r_1(z)\e^{\im tz} & 1\end{bmatrix}=\begin{bmatrix}1 & -r_2(z)\e^{-\im tz}\smallskip\\
	0 & 1\end{bmatrix}&\,\begin{bmatrix}1 & 0\\
	r_1(z)\e^{\im tz} & 1\end{bmatrix}\\
	&\,\equiv\begin{bmatrix}1 & -\im\e^{-\im(\frac{1}{3}z^3+tz)}\smallskip\\
	0 & 1\end{bmatrix}\begin{bmatrix}1 & 0\smallskip\\
	-\im\e^{\im(\frac{1}{3}z^3+tz)} & 1\end{bmatrix},
\end{align*}
and define
\begin{equation}\label{hi1}
	{\bf Y}(z;t):={\bf X}(2z;t)\begin{cases}\begin{bmatrix}1 & 0\\
	\im\e^{2\im(\frac{4}{3}z^3+tz)} & 1\end{bmatrix},&z\in\Omega_+\smallskip\\
	\begin{bmatrix}1 & -\im\e^{-2\im(\frac{4}{3}z^3+tz)}\smallskip\\
	0 & 1\end{bmatrix},&z\in\Omega_-\smallskip\\
	\mathbb{I},&\textnormal{else}\end{cases}
\end{equation}
with the domains shown in Figure \ref{fig1}. 
\begin{figure}[tbh]
\begin{tikzpicture}[xscale=0.7,yscale=0.7]
\draw [->] (-6,0) -- (6,0) node[below]{{\small $\Re z$}};
\draw [->] (0,-3) -- (0,3) node[left]{{\small $\Im z$}};
\draw [very thin, dashed, color=darkgray,-] (0,0) -- (4.33012701892,2.5) node[right]{$\frac{\pi}{6}$};
\draw [very thin, dashed, color=darkgray,-] (0,0) -- (-4.33012701892,2.5) node[left]{$\frac{5\pi}{6}$};
\draw [very thin, dashed, color=darkgray,-] (0,0) -- (4.33012701892,-2.5) node[right]{$\frac{11\pi}{6}$};
\draw [very thin, dashed, color=darkgray,-] (0,0) -- (-4.33012701892,-2.5) node[left]{$\frac{7\pi}{6}$};
\draw [fill=red, dashed,opacity=0.4] (0,0) -- (3.46410161514,2) arc (30:0:4cm) -- (0,0);
\draw [fill=red, dashed,opacity=0.4] (0,0) -- (-3.46410161514,2) arc (150:180:4cm) -- (0,0);
\draw [fill=blue, dashed,opacity=0.4] (0,0) -- (3.46410161514,-2) arc (-30:0:4cm) -- (0,0);
\draw [fill=blue, dashed,opacity=0.4] (0,0) -- (-3.46410161514,-2) arc (210:180:4cm) -- (0,0);

\node [red] at (3.1,2.3) {{\small$\Gamma_1$}};
\node [red] at (-3.1,2.3) {{\small$\Gamma_3$}};
\node [red] at (-3.1,-2.3) {{\small$\Gamma_4$}};
\node [red] at (3.1,-2.3) {{\small$\Gamma_6$}};
\node [red,opacity=0.4] at (4.6,0.75) {{\small$\Omega_+$}};
\node [red,opacity=0.4] at (-4.6,0.75) {{\small$\Omega_+$}};
\node [blue,opacity=0.4] at (4.6,-0.75) {{\small$\Omega_-$}};
\node [blue,opacity=0.4] at (-4.6,-0.75) {{\small$\Omega_-$}};
\draw [thick, color=red, decoration={markings, mark=at position 0.5 with {\arrow{>}}}, postaction={decorate}] (0,0) -- (3.89711431703,2.25);
\draw [thick, color=red, decoration={markings, mark=at position 0.5 with {\arrow{>}}}, postaction={decorate}] (0,0) -- (3.89711431703,-2.25);
\draw [thick, color=red, decoration={markings, mark=at position 0.5 with {\arrow{>}}}, postaction={decorate}] (0,0) -- (-3.89711431703,-2.25);
\draw [thick, color=red, decoration={markings, mark=at position 0.5 with {\arrow{>}}}, postaction={decorate}] (0,0) -- (-3.89711431703,2.25);

\end{tikzpicture}
\caption{The oriented jump contour $\Sigma_{\bf Y}:=\Gamma_1\cup\Gamma_3\cup\Gamma_4\cup\Gamma_6$ in RHP \ref{PII}.}
\label{fig1}
\end{figure}
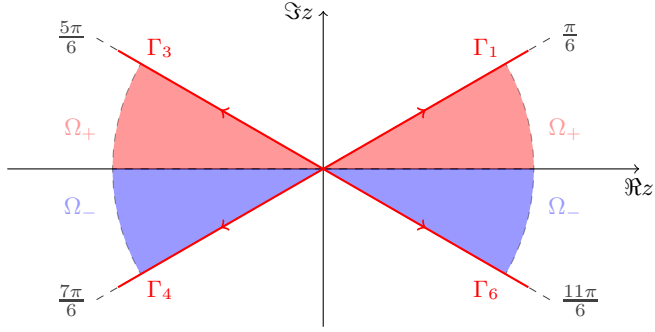
The analytic and asymptotic properties of ${\bf Y}(z)$ are summarized below.
\begin{problem}\label{PII} For every $t\in\mathbb{R}$, the $2\times2$ matrix-valued function ${\bf Y}(z)={\bf Y}(z;t)$ defined in \eqref{hi1} is such that
\begin{enumerate}
	\item[(1)] ${\bf Y}(z)$ is analytic for $z\in\mathbb{C}\setminus\Sigma_{\bf Y}$ and extends continuously from all sides to the oriented jump contour $\Sigma_{\bf Y}=\Gamma_1\cup\Gamma_3\cup\Gamma_4\cup\Gamma_6$ shown in Figure \ref{fig1}.
	\item[(2)] On $\Sigma_{\bf Y}$ the jump behavior of ${\bf Y}(z)$ reads as
	\begin{equation*}
		{\bf Y}_+(z)={\bf Y}_-(z)\e^{-\theta(z,t)\sigma_3}{\bf S}_k\e^{\theta(z,t)\sigma_3},\ \ \ z\in\Gamma_k,\ \ k=1,3,4,6
	\end{equation*}
	with $\theta(z,t):=\im(\frac{4}{3}z^3+tz)$ and the matrices
	\begin{equation*}
		{\bf S}_k:=\begin{bmatrix}1 & 0\\
		s_k & 1\end{bmatrix}, \ \ k=1,3;\hspace{1cm}{\bf S}_k:=\begin{bmatrix}1 & s_k\\ 0 & 1\end{bmatrix},\ \ k=4,6
	\end{equation*}
	defined with the multipliers $s_1:=-\im,s_3:=\im,s_4:=-s_1,s_6:=-s_3$.
	\item[(3)] As $z\rightarrow\infty$ with $z\notin\Sigma_{\bf Y}\cup\mathbb{R}$,
	\begin{equation*}
		{\bf Y}(z)=\mathbb{I}+o(1).%{\bf Y}_1z^{-1}+o\big(z^{-1}\big),\ \ \ \ {\bf Y}_1={\bf Y}_1(t)=\frac{1}{2}{\bf X}_1(t).
	\end{equation*}
\end{enumerate} 
\end{problem}
Observe that RHP \ref{PII} is the standard RHP used in the analysis of the homogeneous Painlev\'e-II equation, cf. \cite[Theorem $3.4$]{FIKN}, and by \cite[$(4.2.28),(4.2.29),(4.2.40),(4.2.42)$, Theorem $10.1$]{FIKN} we thus conclude that
\begin{equation}\label{ex:3}
	q(t):=2Y_1^{12}(t):=2\lim_{\substack{z\rightarrow\infty\\ z\notin\Sigma_{\bf Y}\cup\mathbb{R}}}\big(zY^{12}(z)\big)
\end{equation}
solves the boundary value problem \eqref{co0}. It now remains to establish the second equality in
\begin{equation}\label{ex:4}
	q_0(t)\stackrel{\eqref{n8}}{=}((I-K_t)^{-1}\tau_t\phi)(0)=q(t),\ \ \ \ t\in\mathbb{R},
\end{equation}
i.e. the relation between the Fredholm determinant $F(t)$ of the Airy kernel \eqref{examp3} and RHP \ref{master2} with coefficients $r_j(z)$ as above and the aforementioned modification at $z=\infty$ (we cannot use Theorem \ref{theo1} because of the modification and since the Airy function is \textit{not} in $W^{1,1}(\mathbb{R})$). Fortunately for us, the second equality in \eqref{ex:4} was derived in \cite[Section $3$]{BerCa0}: our RHP \ref{PII} is the AiO-RHP in \cite[Definition $3.1$]{BerCa0} rotated by ninety degrees, thus the outstanding claim follows from \cite[Theorem $3.1$]{BerCa0}.
\end{proof}
The last Corollary concludes the content of this short subsection and our list of results for \textit{additive} Hankel composition operators on $L^2(\mathbb{R}_+)$. We now discuss \textit{multiplicative} Hankel composition operators on $L^2(0,1)$, reusing the symbols and abbreviations of Section \ref{seci21} and \ref{seci22}. The reader will notice that most of our results for multiplicative composition operators bear a striking resemblance to the corresponding results for additive composition operators.
\subsection{Multiplicative composition, part 1}\label{seci24}
Let $t\in J$ be chosen from some open subset $J\subseteq\mathbb{R}_+$ and consider two integral operators $M_t,N_t\in\mathcal{C}_2(L^2(0,1))$ of \textit{multiplicative} Hankel type. Precisely,
\begin{equation*}
	(M_tf)(x):=\sqrt{t}\int_0^1\phi(xyt)f(y)\,\d y,\ \ \ \ \ \ (N_tf)(x):=\sqrt{t}\int_0^1\psi(xyt)f(y)\,\d y,
\end{equation*}
where $\phi,\psi:\mathbb{R}_+\rightarrow\mathbb{C}$ are such that for all $t\in J$,
\begin{equation}\label{b:1}
	\int_0^1|\phi(xt)|^2\ln\Big(\frac{1}{x}\Big)\,\d x<\infty\ \ \ \ \textnormal{and}\ \ \ \ \int_0^1|\psi(xt)|^2\ln\Big(\frac{1}{x}\Big)\,\d x<\infty.
\end{equation}
Now let $K_t\in\mathcal{L}(L^2(0,1))$ denote the composition $K_t:=M_tN_t$ on $L^2(0,1)$, equivalently,
\begin{equation}\label{b:2}
	(K_tf)(x)=\int_0^1K_t(x,y)f(y)\,\d y,\ \ \ \ \ \ K_t(x,y):=t\int_0^1\phi(xzt)\psi(zyt)\,\d z.
\end{equation}
Just as in Section \ref{seci21}, $M_t$ and $N_t$ are compact symmetric operators on $L^2(0,1)$, however their composition \eqref{b:2} is in general not symmetric. We now state the analogue of Definition \ref{domplus}
\begin{definition}
A family of functions, $\mathcal{F}:=\{f_{\alpha}\}_{\alpha\in S}$, in $L^p(0,1),p\in\{1,2\}$, indexed by a set $S\subseteq\mathbb{R}_+$, is called $L^p(0,1)$ dominated if for some $g\in L^p(0,1)$, we have for all $\alpha\in S$ and for a.e. $x\in(0,1)$ that
\begin{equation*}
	|f_{\alpha}(x)|\leq g(x).
\end{equation*}
\end{definition}
Next, we use the dilation of $f$ by $t$, $(\tau_tf)(x):=f(xt)$, and let $F(t)$ denote the Fredholm determinant of $K_t\in\mathcal{C}_1(L^2(0,1))$,
\begin{equation}\label{b:3}
	F(t):=\prod_{k=1}^{\infty}\big(1-\lambda_k(t)\big),\ \ \ t\in J,
\end{equation}
with $\lambda_k(t)$ as the nonzero eigenvalues of $K_t$, counting multiplicity. The analogue of Lemma \ref{lem1} for \eqref{b:3},\eqref{b:2} reads as follows.
\begin{lem}\label{lem2} Suppose $\phi,\psi:\mathbb{R}_+\rightarrow\mathbb{C}$, besides satisfying \eqref{b:1}, are continuously differentiable on $\mathbb{R}_+$, obey
\begin{equation}\label{b:4}
	\lim_{x\downarrow 0}\sqrt{x}\,\phi(x)=\lim_{x\downarrow 0}\sqrt{x}\,\psi(x)=0
\end{equation}
and for every $t\in J$,
\begin{equation}\label{b:5}
	\int_0^1|(MD\tau_t\phi)(x)|^2\ln\Big(\frac{1}{x}\Big)\,\d x<\infty,\ \ \ \ \int_0^1|(MD\tau_t\psi)(x)|^2\ln\Big(\frac{1}{x}\Big)\,\d x<\infty.
\end{equation}
Then we have, provided $I-K_t,t\in J$ is invertible on $L^2(0,1)$ and the families $\{\tau_tf\}_{t\in\mathbb{R}_+},\{MD\tau_tf\}_{t\in\mathbb{R}_+}$, resp. $\{(I-K_t)^{-1}\tau_tg\}_{t\in J},\{(I-K_t)^{-1}MD\tau_tg\}_{t\in J}$, with $f\in\{\phi,\psi\}$, resp. $g\in\{\phi\}$, are $L^2(0,1)$ dominated,
\begin{equation}\label{b:6}
	t\frac{\d}{\d t}\left\{t\frac{\d}{\d t}\ln F(t)\right\}=-q_0(t)q_0^{\ast}(t),\ \ \ t\in J,\ \ \ \ \ \ \ \ \begin{cases}q_0(t):=\big((I-K_t)^{-1}\tau_t\phi\big)(1)&\\ q_0^{\ast}(t):=t\big((I-K_t^{\ast})^{-1}\tau_t\psi\big)(1)&\end{cases}.
\end{equation}
Here, $K_t^{\ast}:=N_tM_t$ denotes the real adjoint of $K_t$.
\end{lem}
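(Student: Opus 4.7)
My plan is to reduce the multiplicative composition problem to the additive setup of Lemma \ref{lem1} via the logarithmic substitution $x\leftrightarrow\e^{-u}$. Introduce the unitary isometry $U\colon L^2(0,1)\to L^2(\mathbb{R}_+)$ defined by $(Uf)(u):=\e^{-u/2}f(\e^{-u})$, set $s:=-\ln t$ (so that $t\in J$ corresponds to $s$ in some open subset of $\mathbb{R}$), and define $\hat{\phi},\hat{\psi}\colon\mathbb{R}_+\to\mathbb{C}$ by $\hat{\phi}(u):=\e^{-u/2}\phi(\e^{-u})$ and $\hat{\psi}(u):=\e^{-u/2}\psi(\e^{-u})$. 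A direct change of variable $z=\e^{-w}$ in the inner integral of \eqref{b:2}, combined with bookkeeping of the exponential prefactors, yields
\begin{equation*}
	(UK_tU^{-1}f)(u)=\int_0^{\infty}\left[\int_0^{\infty}\hat{\phi}(u+w+s)\hat{\psi}(w+v+s)\,\d w\right]f(v)\,\d v,
\end{equation*}
so $\hat{K}_s:=UK_tU^{-1}$ is exactly the additive Hankel composition operator of Section \ref{seci21} associated with $(\hat{\phi},\hat{\psi},s)$, and $\det(I-K_t)=\det(I-\hat{K}_s)=:\hat{F}(s)$.

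The bulk of the work is to verify that all hypotheses of Lemma \ref{lem1} are inherited by $(\hat{\phi},\hat{\psi},s)$; this step is the main technical obstacle. Continuous differentiability of $\hat{\phi},\hat{\psi}$ on $\mathbb{R}$ is immediate from that of $\phi,\psi$ on $\mathbb{R}_+$; the decay condition \eqref{b:4} translates into $\hat{\phi}(u),\hat{\psi}(u)\to 0$ as $u\to+\infty$, which is \eqref{infbeh}; the weighted $L^2$ bound \eqref{b:1}, through the substitution $y=\e^{-(x+s)}$, becomes $\int_0^{\infty}x|\hat{\phi}(x+s)|^2\,\d x<\infty$ of \eqref{n00}; and the product-rule identity $\hat{\phi}'(u)=-\tfrac{1}{2}\hat{\phi}(u)-U(MD\phi)(u)$ shows that \eqref{b:5} together with \eqref{b:1} delivers \eqref{intbeh}. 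Because $U$ is a bijective isometry, $L^2(0,1)$ domination of each family listed in the hypotheses transfers to $L^2(\mathbb{R}_+)$ domination of its $U$-image (once the bounded scaling factor $\sqrt{t}$ is absorbed into the dominating function), and invertibility of $I-K_t$ passes to that of $I-\hat{K}_s$ by conjugation.

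With Lemma \ref{lem1} in force for $\hat{F}(s)$, I conclude by translating the identity back. The chain rule $s=-\ln t$ gives $\frac{\d}{\d s}=-t\frac{\d}{\d t}$, whence $\frac{\d^2}{\d s^2}\ln\hat{F}(s)=\bigl(t\frac{\d}{\d t}\bigr)^2\ln F(t)$. Next, with $\hat{\tau}_s$ denoting the additive shift $(\hat{\tau}_sf)(u):=f(u+s)$, one computes $\hat{\tau}_s\hat{\phi}=\sqrt{t}\,U\tau_t\phi$, which entails $(I-\hat{K}_s)^{-1}\hat{\tau}_s\hat{\phi}=\sqrt{t}\,U(I-K_t)^{-1}\tau_t\phi$. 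Evaluating at $u=0$, using $(Uf)(0)=f(1)$, yields $\hat{q}_0(s)=\sqrt{t}\,q_0(t)$; the same argument applied after the swap $\phi\leftrightarrow\psi$ produces $\hat{q}_0^{\ast}(s)=\sqrt{t}\,\bigl((I-K_t^{\ast})^{-1}\tau_t\psi\bigr)(1)=q_0^{\ast}(t)/\sqrt{t}$, so the two factors of $\sqrt{t}$ cancel in the product. Applying Lemma \ref{lem1} to $\hat{F}$ therefore gives $\bigl(t\frac{\d}{\d t}\bigr)^2\ln F(t)=-\hat{q}_0(s)\hat{q}_0^{\ast}(s)=-q_0(t)q_0^{\ast}(t)$, which is \eqref{b:6}.
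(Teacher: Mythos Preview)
Your reduction via the exponential change of variables $x=\e^{-u}$ is a genuinely different route from the paper's. The paper proves Lemma~\ref{lem2} by replaying the proof of Lemma~\ref{lem1} directly in the multiplicative setting: it shows $\frac{\d}{\d t}K_t=\tau_t\phi\otimes\tau_t\psi$, establishes the key kernel identity $x\partial_xK_t(x,y)+y\partial_yK_t(x,y)+K_t(x,y)=t(\tau_t\phi\otimes\tau_t\psi)(x,y)$, and integrates by parts on $(0,1)$. Your conjugation by $U$ is more economical in principle---the algebra $UK_tU^{-1}=\hat{K}_s$, $\hat\tau_s\hat\phi=\sqrt{t}\,U\tau_t\phi$, $\hat q_0(s)\hat q_0^\ast(s)=q_0(t)q_0^\ast(t)$, and $\frac{\d^2}{\d s^2}=(t\frac{\d}{\d t})^2$ is all correct---and it illuminates why the additive and multiplicative lemmas are formally parallel.

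There is, however, a gap in the hypothesis check. You write that the ``bounded scaling factor $\sqrt t$'' can be absorbed into the dominating function, but $\sqrt t$ is \emph{not} bounded as $t$ ranges over $\mathbb{R}_+$ (or over an arbitrary open $J\subseteq\mathbb{R}_+$). Since $\hat\tau_s\hat\phi=\sqrt{t}\,U\tau_t\phi$, the $L^2(0,1)$ domination of $\{\tau_t\phi\}_{t\in\mathbb{R}_+}$ only yields $|\hat\tau_s\hat\phi(u)|\leq\sqrt{t}\,(Ug)(u)$, which is \emph{not} a uniform bound in $s\in\mathbb{R}$; the same issue contaminates every dominated family you need. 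As stated, Lemma~\ref{lem1} demands domination of $\{\hat\tau_s f\}_{s\in\mathbb{R}}$, so its hypotheses are not verified. The repair is straightforward: the conclusion \eqref{b:6} is a pointwise identity, so fix $t_0\in J$ and work on a relatively compact neighborhood $J'\Subset J\cap\mathbb{R}_+$, on which $\sqrt t$ is bounded; then observe that the proof of Lemma~\ref{lem1} only uses the domination locally (to justify dominated convergence when differentiating in $t$), so domination over $s\in-\ln J'$ suffices. Once you make this localization explicit, your argument goes through. A minor side remark: you declared $\hat\phi,\hat\psi\colon\mathbb{R}_+\to\mathbb{C}$, but Lemma~\ref{lem1} requires them on all of $\mathbb{R}$; your formula $\hat\phi(u)=\e^{-u/2}\phi(\e^{-u})$ already extends to $u\in\mathbb{R}$, so this is only a typo.
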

Identity \eqref{b:6} is implicit in the works of Tracy and Widom \cite[$(2.26)$]{TW2} on the Bessel kernel where $\phi=\psi$ are Bessel functions with square root variables. But again, \cite{TW2} relies on the integrable structure of the same kernel, whereas we only exploit the kernel's algebraic structure \eqref{b:2} and so \eqref{b:6} holds for \textit{any} multiplicative Hankel composition operator in the class of Lemma \ref{lem2}. Just as in \eqref{r1}, we notice that the second modified $(D\mapsto MD)$ logarithmic derivative of \eqref{b:3} localizes again into a product of two functions evaluated at $t\in J$.\bigskip

Once $J=\mathbb{R}_+$ in Lemma \ref{lem2} and $t\mapsto q_0(t),t\mapsto q_0^{\ast}(t)$ are integrable at $t=0$, formula \eqref{b:6} implies a Tracy-Widom representation of $F(t)$, see \cite[$(1.19)$]{TW} for the original example of such a formula. The very same formula explains the structure of the first factor in \eqref{i:8} and in other hard-edge ensembles.
\begin{cor}\label{corb1} Let $\epsilon>0$. Assume that $\phi,\psi:\mathbb{R}_+\rightarrow\mathbb{C}$, besides satisfying \eqref{b:1}, are continuously differentiable on $\mathbb{R}_+$, obey \eqref{b:4},\eqref{b:5} for $t\in\mathbb{R}_+$ and the families $\{\sqrt{-\ln(\cdot)}\,\tau_tf\}_{t\in\mathbb{R}_+}$, resp. $\{\tau_tg\}_{t\in\mathbb{R}_+},\{MD\tau_tg\}_{t\in\mathbb{R}_+}$ and $\{(I-K_t)^{-1}\tau_th\}_{t\in\mathbb{R}_+},\{(I-K_t)^{-1}MD\tau_th\}_{t\in\mathbb{R}_+}$, with $f\in\{\phi,\psi\}$, resp. $g\in\{\phi,\psi\}$ and $h\in\{\phi\}$, are $L^2(0,1)$ dominated. Then
\begin{equation}\label{b:7}
	\ln F(t)=-\int_0^t\ln\Big(\frac{t}{s}\Big)q_0(s)q_0^{\ast}(s)\frac{\d s}{s},\ \ \ \ t\in\mathbb{R}_+,
\end{equation}
provided $I-K_t$ is invertible on $L^2(0,1)$ for all $t\in\mathbb{R}_+$ and provided there exist $c,t_0>0$ such that
\begin{equation*}
	|q_0(t)q_0^{\ast}(t)|\leq ct^{\epsilon}\ \ \ \ \textnormal{for all}\ 0<t\leq t_0^{-1}.
\end{equation*}
\end{cor}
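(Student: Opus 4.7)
The proof will integrate the Cauchy--Euler ODE supplied by Lemma \ref{lem2} twice, matching the two integration constants against the limit $t\downarrow 0$. Since the hypotheses of Corollary \ref{corb1} contain all those of Lemma \ref{lem2} on $J=\mathbb{R}_+$, that lemma applies directly and writing $\Phi(t):=\ln F(t)$ yields
\begin{equation*}
	t\frac{\d}{\d t}\Big(t\frac{\d\Phi}{\d t}(t)\Big)=-q_0(t)q_0^{\ast}(t),\qquad t\in\mathbb{R}_+.
\end{equation*}
The natural candidate for $\Phi$ is
\begin{equation*}
	H(t):=-\int_0^t\ln\Big(\frac{t}{s}\Big)q_0(s)q_0^{\ast}(s)\frac{\d s}{s},
\end{equation*}
which is absolutely convergent because $|q_0q_0^{\ast}|\leq cs^{\epsilon}$ near $s=0$. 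Splitting $\ln(t/s)=\ln t-\ln s$ and differentiating under the integral (the boundary contribution at $s=t$ vanishes since $\ln 1=0$) gives $tH'(t)=-\int_0^tq_0(s)q_0^{\ast}(s)\,\d s/s$, and one further derivative produces $t(tH'(t))'=-q_0(t)q_0^{\ast}(t)$. Hence $\Phi-H$ solves $t(t(\Phi-H)')'\equiv 0$ on $\mathbb{R}_+$, whose general solution is $A\ln t+B$ with $A,B\in\mathbb{C}$.

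To eliminate the two constants I would show $\Phi(t),H(t)\to 0$ as $t\downarrow 0$. The decay of $H$ is immediate from the substitution $s=tu$:
\begin{equation*}
	|H(t)|\leq c\int_0^1\ln(1/u)(tu)^{\epsilon-1}t\,\d u=\frac{c}{\epsilon^2}t^{\epsilon}.
\end{equation*}
For $\Phi=\ln F$ I would invoke the standard trace-norm bound $|F(t)-1|\leq\|K_t\|_1\exp(1+\|K_t\|_1)$ from \cite[Theorem $3.4$]{S} together with $\|K_t\|_1\leq\|M_t\|_2\|N_t\|_2$. A direct change of variables in the Hilbert--Schmidt integral (Fubini plus the substitutions $u=xyt$ and $v=yt$) shows
\begin{equation*}
	\|M_t\|_2^2=t\int_0^1|\phi(xt)|^2\ln(1/x)\,\d x,
\end{equation*}
and the $L^2(0,1)$-dominance of the family $\{\sqrt{-\ln(\cdot)}\,\tau_t\phi\}_{t\in\mathbb{R}_+}$ supplies a majorant $\eta\in L^2(0,1)$ with $|\phi(xt)|^2\ln(1/x)\leq\eta(x)^2$. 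Thus $\|M_t\|_2^2\leq t\|\eta\|_{L^2(0,1)}^2\to 0$; the same estimate for $N_t$ follows from the corresponding dominance for $\psi$. Therefore $\|K_t\|_1\to 0$ and $F(t)\to 1$, and choosing the branch of $\ln$ continuous at $1$ with $\ln 1=0$ gives $\Phi(t)\to 0$.

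Finally, $\Phi-H=A\ln t+B\to 0$ as $t\downarrow 0$ forces $A=0$ (otherwise $A\ln t$ diverges) and then $B=0$, so $\Phi\equiv H$ on $\mathbb{R}_+$, which is \eqref{b:7}. The one genuinely delicate point in this plan is the vanishing $\ln F(t)\to 0$ at the origin: the hypothesis bounds $|\phi(xt)|^2\ln(1/x)$ only by an $L^1$ function of finite (not small) mass, and it is precisely the extra factor of $t$ generated by the change of variables that renders $\|M_t\|_2$ small. Everything else reduces to routine differentiation under the integral and the uniqueness of solutions to the linear Cauchy--Euler equation.
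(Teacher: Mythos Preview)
Your argument is correct and mirrors the paper's: both integrate the Cauchy--Euler identity of Lemma \ref{lem2} and fix the constants via the limit $t\downarrow 0$, using $|q_0q_0^{\ast}|\leq ct^{\epsilon}$ for the integral side and the Hilbert--Schmidt bound $\|M_t\|_2^2=t\int_0^1|\phi(xt)|^2\ln(1/x)\,\d x$ (with its crucial prefactor $t$) together with $\|K_t\|_1\leq\|M_t\|_2\|N_t\|_2$ for $\ln F(t)\to 0$. The only difference is organizational---you verify the candidate $H$ and identify the homogeneous kernel $A\ln t+B$, whereas the paper integrates step by step and explicitly argues $\lim_{s\downarrow 0}sF'(s)/F(s)=0$.
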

Similarly to our approach in Section \ref{seci21}, we now aim to characterize $q_0$ and $q_0^{\ast}$ beyond the formul\ae\,stated in \eqref{b:6}. We begin with the following algebraic result, which the reader should contrast to Corollary \ref{highercor}.
\begin{cor}\label{corb2} Let $N\in\mathbb{Z}_{\geq 1}$. Assume $\phi,\psi:\mathbb{R}_+\rightarrow\mathbb{C}$, besides satisfying \eqref{b:1}, are $N$-times continuously differentiable such that for every $k\in\{0,1,\ldots,N-1\}$,
\begin{equation}\label{b:8}
	\lim_{x\downarrow 0}\sqrt{x}\big((MD)^k\phi\big)(x)=\lim_{x\downarrow 0}\sqrt{x}\big((DM)^k\psi\big)(x)=0,
\end{equation}
and for every $t\in J$,
\begin{equation*}
	\int_0^1|(MD\tau_t\phi)(x)|^2\ln\Big(\frac{1}{x}\Big)\,\d x<\infty,\ \ \ \ \ \ \int_0^1|(MD\tau_t\psi)(x)|^2\ln\Big(\frac{1}{x}\Big)\,\d x<\infty.
\end{equation*}
Then, provided $I-K_t,t\in J$ is invertible on $L^2(0,1)$ and $\{(MD)^n\tau_tf\}_{t\in\mathbb{R}_+},\{(I-K_t)^{-1}(MD)^n\tau_tf\}_{t\in J}$, resp. $\{(DM)^n\tau_tg\}_{t\in\mathbb{R}_+},\{(I-K_t^{\ast})^{-1}(DM)^n\tau_tg\}_{t\in J}$, with $f\in\{\phi\}$, resp. $g\in\{\psi\}$, are $L^2(0,1)$ dominated for all $n\in\{0,1,\ldots,N\}$, the functions
\begin{equation}\label{b:9}
	q_n(t):=\big((I-K_t)^{-1}(MD)^n\tau_t\phi\big)(1),\ \ \ \ \ \ \ \ \ \,p_n(t):=t\tr_{L^2(0,1)}\big((I-K_t)^{-1}(MD)^n\tau_t\phi\otimes\tau_t\psi\big),
\end{equation}
\begin{equation}\label{b:10}
	\,\ \ \ \ \,\,q_n^{\ast}(t):=t\big((I-K_t^{\ast})^{-1}(DM)^n\tau_t\psi\big)(1),\ \ \ \ \ \ \ \ p_n^{\ast}(t):=t\tr_{L^2(0,1)}\big((I-K_t^{\ast})^{-1}(DM)^n\tau_t\psi\otimes\tau_t\phi\big),
\end{equation}
defined for $t\in J$ and $n=0,1,\ldots,N$, satisfy the coupled system
\begin{equation}\label{b:11}
	\begin{cases}\displaystyle t\frac{\d q_n}{\d t}(t)=q_{n+1}(t)+q_0(t)p_n(t),\ \ \ \ \ \ \ t\frac{\d p_n}{\d t}(t)=q_0^{\ast}(t)q_n(t),&\bigskip\\
	\displaystyle t\frac{\d q_n^{\ast}}{\d t}(t)=q_{n+1}^{\ast}(t)+q_0^{\ast}(t)p_n^{\ast}(t),\ \ \ \ \ \ \ t\frac{\d p_n^{\ast}}{\d t}(t)=q_0(t)q_n^{\ast}(t),
	\end{cases}\ \ \ \  n=0,1,\ldots,N-1,\ \ \ \ t\in J.
\end{equation}
\end{cor}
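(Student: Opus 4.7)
The plan is to mirror the proof of Corollary \ref{highercor} with the scaling generator $t\partial_t$ replacing $\partial_t$. After the change of variables $u = zt$ in the inner integral, the kernel takes the clean form $K_t(x,y) = \int_0^t\phi(xu)\psi(uy)\,\d u$, yielding $\partial_t K_t = \tau_t\phi\otimes\tau_t\psi$ (the absence of the minus sign present in the additive case produces the plus signs in \eqref{b:11}). Combined with the commutations $(MD)\tau_t = \tau_t(MD)$ and $t\partial_t\tau_tf = (MD)\tau_tf$, this gives $t\partial_t(MD)^n\tau_tf = (MD)^{n+1}\tau_tf$ and the resolvent derivative $t\partial_t R_t = t\,u_0\otimes g$, where $R_t := (I-K_t)^{-1}$, $u_0:= R_t\tau_t\phi$ and $g:= R_t^*\tau_t\psi$. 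The crucial additional ingredient is the operator commutator
\begin{equation*}
[MD,K_t] = t\,\tau_t\phi\otimes\tau_t\psi - \kappa\otimes\delta_1, \qquad \kappa(x):= K_t(x,1),
\end{equation*}
which I extract from the scaling identity $(MD_x+MD_y)K_t = t\,\tau_t\phi\otimes\tau_t\psi - K_t$ (by integration by parts in the $u$-integral, with the $u\downarrow 0$ boundary vanishing thanks to \eqref{b:4}) together with the boundary formula $K_t MD = \kappa\otimes\delta_1 - K_t - MD_y K_t$ on $(0,1)$ coming from integration by parts in $y$.

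For the first equation in \eqref{b:11} I differentiate $q_n(t) = (R_tv_n)(1)$ with $v_n := (MD)^n\tau_t\phi$ by the product rule; using $\langle g,v_n\rangle = \langle\tau_t\psi, R_t v_n\rangle = p_n/t$, the computation yields $t\partial_t q_n = q_{n+1} + q_0 p_n$. For the second equation, differentiating $p_n = t\langle v_n,g\rangle$ produces
\begin{equation*}
t\partial_t p_n = p_n + p_0 p_n + p_{n+1} + t\langle u_n, MD\tau_t\psi\rangle, \qquad u_n := R_t v_n.
\end{equation*}
To simplify the last term I first apply integration by parts on $(0,1)$ to obtain $\langle u_n,MD\tau_t\psi\rangle = q_n\psi(t) - \langle DM u_n, \tau_t\psi\rangle$ (the $y=0$ boundary vanishes thanks to \eqref{b:8}), then rewrite $\psi(t) = g(1) - \langle\kappa,g\rangle = q_0^*/t - \langle\kappa,g\rangle$ from the adjoint equation $(I-K_t^*)g = \tau_t\psi$, and finally use the commutator above to derive the recursion
\begin{equation*}
MD u_n = u_{n+1} + p_n u_0 - q_n R_t\kappa.
\end{equation*}
Inserting this, together with $DM = I + MD$, into the integration-by-parts output yields exactly the cancellation
\begin{equation*}
t\langle u_n, MD\tau_t\psi\rangle = q_0^* q_n - p_n - p_0 p_n - p_{n+1},
\end{equation*}
so that $t\partial_t p_n = q_0^* q_n$.

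The equations for $q_n^*, p_n^*$ follow from the dual computation tracking $u_n^* := R_t^* w_n$, $w_n := (DM)^n\tau_t\psi$, and the dual commutator $[DM, K_t^*] = t\,\tau_t\psi\otimes\tau_t\phi - \kappa^*\otimes\delta_1$ with $\kappa^*(y):= K_t(1,y)$. The subtlety here is the noncommuting identity $t\partial_t w_n = MD w_n = w_{n+1} - w_n$ (which uses $[MD,DM] = 0$ and $DM = I + MD$); the extra $-w_n$ absorbs the stray $+q_n^*$ that arises when differentiating the $t$ prefactor of $q_n^* = t u_n^*(1)$, leaving $q_{n+1}^* + q_0^* p_n^*$. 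The dominance hypotheses on $\{(MD)^n\tau_t\phi\}$, $\{R_t(MD)^n\tau_t\phi\}$ and $\{R_t^*(DM)^n\tau_t\psi\}$ justify differentiation under the integral and trace at each step.

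The principal obstacle is the bookkeeping demanded by the scaling nature of $MD$: unlike the additive case, where $[D,K_t] = -\tau_t\phi\otimes\tau_t\psi$ is a clean rank-one expression, the commutator here injects an additional $-K_t$ term via $(MD_x+MD_y)K_t = t\,\tau_t\phi\otimes\tau_t\psi - K_t$ together with a boundary evaluation at $y=1$. These contributions must combine with the $q_n R_t\kappa$ factor in the $MDu_n$-recursion and with $(K_t^* g)(1) = \langle\kappa,g\rangle$ to produce the exact cancellation leaving only $q_0^* q_n$ on the right-hand side.
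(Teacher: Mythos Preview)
Your proposal is correct and follows essentially the same route as the paper's proof: both differentiate using $\partial_tK_t=\tau_t\phi\otimes\tau_t\psi$ and $t\partial_t(MD)^n\tau_t\phi=(MD)^{n+1}\tau_t\phi$, integrate by parts in $\langle u_n,MD\tau_t\psi\rangle$, and close the computation via the commutator identity (the paper writes it kernel-wise as $(MDK_t-K_tMD)u_n=-K_t(\cdot,1)q_n+t\tau_t\phi\langle\tau_t\psi,u_n\rangle$, which is your $[MD,K_t]=t\,\tau_t\phi\otimes\tau_t\psi-\kappa\otimes\delta_1$) together with $(I-K_t)^{-1}=I+(I-K_t)^{-1}K_t$ and the adjoint relation $(I-K_t)^{-1}(x,1)=(I-K_t^{\ast})^{-1}(1,x)$, exactly your identity $\psi(t)=q_0^{\ast}/t-\langle\kappa,g\rangle$. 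Your packaging via the recursion $MDu_n=u_{n+1}+p_nu_0-q_nR_t\kappa$ is a slightly cleaner bookkeeping device than the paper's explicit kernel manipulations, but the cancellation mechanism and the treatment of the $(DM)^n$ side (using $t\partial_tw_n=w_{n+1}-w_n$) are identical.
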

System \eqref{b:11} also constitutes a chain of differential relations %Zakharov-Shabat system 
with $t\frac{\d}{\d t}$ replacing $\frac{\d}{\d t}$ in \eqref{r2}, due to the multiplicative structure in \eqref{b:2}. The same system is seemingly unknown in the literature and we point out that it admits several conserved quantities. Indeed, exactly as in \eqref{i:13},
\begin{equation*}
	I_n:=p_{n+1}(t)+(-1)^np_{n+1}^{\ast}(t)-\sum_{k=0}^n(-1)^k(q_k^{\ast}(t)q_{n-k}(t)-p_k^{\ast}(t)p_{n-k}(t))
\end{equation*}	
is $J\ni t$-independent for any $n\in\{0,1,\ldots,N-1\}$ by \eqref{b:11}. However, without placing further constraints on $\phi$ and $\psi$, \eqref{b:11} cannot yield a closed differential equation for $q_0$ or $q_0^{\ast}$. Still, just as it had been the case for \eqref{r2}, we now show that system \eqref{b:11} can be encoded in the solution of a canonical, auxiliary RHP, and thus using \eqref{b:6}, the same RHP allows us to access and efficiently analyze the Fredholm determinant $F(t)$. The details are as follows: consider the below problem, formulated for two arbitrary functions $\phi,\psi\in L_{\circ}^1(\mathbb{R}_+)$.
\begin{problem}\label{masterb} Fix $t\in\mathbb{R}_+$ and $\phi,\psi\in L_{\circ}^1(\mathbb{R}_+)$. Now determine ${\bf X}(z)={\bf X}(z;t,\phi,\psi)\in\mathbb{C}^{2\times 2}$ such that
\begin{enumerate}
	\item[(1)] ${\bf X}(z)$ is analytic for $z\in\mathbb{C}\setminus(\frac{1}{2}+\im\mathbb{R})$.
	\item[(2)] ${\bf X}(z)$ admits continuous pointwise limits ${\bf X}_{\pm}(z):=\lim_{\epsilon\downarrow 0}{\bf X}(z\mp\epsilon),z\in\frac{1}{2}+\im\mathbb{R}$ which obey
	\begin{equation}\label{b:12}
		{\bf X}_+(z)={\bf X}_-(z)\begin{bmatrix}1-r_1(z)r_2(z) & -r_2(z)t^z\smallskip\\ r_1(z)t^{-z} & 1\end{bmatrix},\ \ z\in\frac{1}{2}+\im\mathbb{R},
	\end{equation}
	with
	\begin{equation}\label{b:13}
		r_1(z):=\int_0^{\infty}y^{z-1}\phi(y)\,\d y,\ \ \ \ \ \ \ r_2(z):=\int_0^{\infty}y^{-z}\psi(y)\,\d y,\ \ \ z\in\frac{1}{2}+\im\mathbb{R}.
	\end{equation}
	\item[(3)] As $z\rightarrow\infty$,
	%Uniformly as $z\rightarrow\infty$ in $\mathbb{C}\setminus(\frac{1}{2}+\im\mathbb{R})$,
	\begin{equation}\label{b:14}
		{\bf X}(z)=\mathbb{I}+o(1).
		%{\bf X}(z)=\mathbb{I}+{\bf X}_1z^{-1}+o\big(z^{-1}\big),\ \ \ \ \ {\bf X}_1={\bf X}_1(t,\phi,\psi)=\big[X_1^{mn}(t,\phi,\psi)\big]_{m,n=1}^2.
	\end{equation}
\end{enumerate}
\end{problem}
We establish in Lemma \ref{unique3} that conditions $(1)-(3)$ in RHP \ref{masterb} determine ${\bf X}(z)$ uniquely. Moreover, subject to some additional assumptions placed on $\phi$ and $\psi$, the same RHP is solvable and its solution connects to $q_0$ and $q_0^{\ast}$ in the following way:
\begin{theo}\label{theo3} Suppose $\phi,\psi\in H_{\circ}^{1,1}(\mathbb{R}_+)$, besides obeying \eqref{b:1}, are continuously differentiable on $\mathbb{R}_+$ with $\sqrt{(\cdot)}\phi,\sqrt{(\cdot)}\psi\in L^{\infty}(\mathbb{R}_+)$ and $\sqrt{(\cdot)}MD\phi,\sqrt{(\cdot)}MD\psi\in L^{\infty}(0,1)$, satisfy
\begin{equation}\label{b:15}
	\int_0^1|(MD\tau_tf)(x)|^2\ln\Big(\frac{1}{x}\Big)\,\d x<\infty,\ \ \ \ \int_0^1\sqrt{\int_0^1|(\tau_tf)(xy)|^2\,\d y}\,\frac{\d x}{\sqrt{x}}<\infty,\ \ \ \ f\in\{\phi,\psi\},
\end{equation}
and $I-K_t$ is invertible on $L^2(0,1)$. Then RHP \ref{masterb} is uniquely solvable and its solution satisfies
\begin{equation}\label{b:16}
	\lim_{\substack{z\rightarrow\infty\\ \Re z\not\equiv\textnormal{const.}}}z\big({\bf X}(z)-\mathbb{I}\big)=\begin{bmatrix}p_0 & q_0^{\ast}\\-q_0 & -p_0^{\ast}\end{bmatrix} =:{\bf X}_1={\bf X}_1(t,\phi,\psi)=\big[X_1^{mn}(t,\phi,\psi)\big]_{m,n=1}^2,
\end{equation}
with $\{q_0,q_0^{\ast},p_0=p_0^{\ast}\}$ as in \eqref{b:9} and \eqref{b:10}.
\end{theo}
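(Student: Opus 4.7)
The proof is constructive and follows the blueprint of Theorem \ref{theo1}, with Mellin transforms along $\frac{1}{2}+\im\mathbb{R}$ replacing the Fourier transforms that governed the additive setting. Uniqueness being handled by Lemma \ref{unique3}, the task is to exhibit a solution of RHP \ref{masterb} and to identify its $z^{-1}$ coefficient with the matrix in \eqref{b:16}.

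First I would introduce the resolvent-type functions
\begin{equation*}
\rho_t(x) := \big((I-K_t)^{-1}\tau_t\phi\big)(x), \qquad \rho_t^{\ast}(x) := \big((I-K_t^{\ast})^{-1}\tau_t\psi\big)(x), \qquad x\in(0,1),
\end{equation*}
and assemble an explicit candidate ${\bf X}(z)$ whose entries are combinations of $r_1(z),r_2(z)$ from \eqref{b:13} together with the Mellin-type integrals
\begin{equation*}
\int_0^1 \rho_t(x)\,x^{z-1}\,\d x \qquad \text{and} \qquad \int_0^1 \rho_t^{\ast}(x)\,x^{-z}\,\d x,
\end{equation*}
balanced by factors of $t^{\pm z}$ so as to reproduce the upper/lower triangular factorization
\begin{equation*}
\begin{bmatrix}1-r_1 r_2 & -r_2 t^z\\ r_1 t^{-z} & 1\end{bmatrix} = \begin{bmatrix}1 & -r_2 t^z\\ 0 & 1\end{bmatrix}\begin{bmatrix}1 & 0\\ r_1 t^{-z} & 1\end{bmatrix}
\end{equation*}
of the jump \eqref{b:12}. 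The weighted space $H_{\circ}^{1,1}(\mathbb{R}_+)$ together with the mixed $L^2$/$L^{\infty}$ hypotheses \eqref{b:15} are precisely what is needed for these Mellin integrals to converge and be continuous on the critical line.

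Analyticity off $\frac{1}{2}+\im\mathbb{R}$ and the normalization $\mathbb{I}+o(1)$ at infinity would follow directly from a Cauchy-integral representation and from the decay of $r_j$, which I would extract by one integration by parts using $\sqrt{(\cdot)}MD\phi,\sqrt{(\cdot)}MD\psi\in L^{\infty}(0,1)$. The jump \eqref{b:12} would then be verified via Sokhotski-Plemelj combined with the key resolvent identity
\begin{equation*}
\rho_t(x) - (\tau_t\phi)(x) = t\int_0^1\!\!\int_0^1 \phi(xzt)\psi(zyt)\rho_t(y)\,\d y\,\d z, \qquad x\in(0,1),
\end{equation*}
whose Mellin transform in $x$ couples the two spectral integrals above and produces the $-r_1 r_2$ entry in \eqref{b:12}; a symmetric computation starting from $\rho_t^{\ast}$ supplies the remaining off-diagonal factor.

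To read off \eqref{b:16}, I would multiply $({\bf X}(z)-\mathbb{I})$ by $z$ and integrate by parts in the Mellin integrals, transferring the $x^{z-1}$ weight onto an $x^z/z$ boundary term. The contributions at $x=1$ yield $\rho_t(1)=q_0(t)$ and $t\rho_t^{\ast}(1)=q_0^{\ast}(t)$, while Mellin-Parseval convolution across $\frac{1}{2}+\im\mathbb{R}$ produces the diagonal trace $p_0(t) = t\tr_{L^2(0,1)}\big((I-K_t)^{-1}\tau_t\phi\otimes\tau_t\psi\big)$. The main obstacle is analytic rather than algebraic: because $r_1$ and $r_2$ in \eqref{b:13} exist only in a Hardy-type sense on the critical line and do not in general admit analytic continuation off it, every contour manipulation, every exchange of integration order, and every swap of $\tr_{L^2(0,1)}$ with a Mellin integral has to be justified purely from $\phi,\psi\in H_{\circ}^{1,1}(\mathbb{R}_+)$ and from the dominance conditions built into \eqref{b:15}; once these points are secured, the identification with \eqref{b:16} reduces to the $n=0$ case of the Zakharov-Shabat system \eqref{b:11}.
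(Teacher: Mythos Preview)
Your overall strategy---construct an explicit candidate, verify the three conditions of RHP \ref{masterb}, then read off the $z^{-1}$ coefficient---matches the paper's, and you correctly locate the central analytic obstacle: $r_1,r_2$ live only on $\frac12+\im\mathbb{R}$ and admit no analytic continuation in general. But precisely for that reason your candidate cannot be assembled as described. You propose to build the entries from $r_1(z),r_2(z)$ together with the Mellin integrals $\int_0^1\rho_t(x)x^{z-1}\d x$ and $\int_0^1\rho_t^{\ast}(x)x^{-z}\d x$, and to obtain analyticity ``from a Cauchy-integral representation''. The first Mellin integral, however, is analytic only for $\Re z>\frac12$, the second only for $\Re z<\frac12$, and $r_1,r_2$ are confined to the line; no combination of these ingredients defines a function on both half-planes, and invoking Sokhotski--Plemelj to recover the jump presupposes that you already possess ${\bf X}_-$ on the line, which is circular. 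The triangular factorization of the jump you write down is the content of Remark \ref{IIKScon2} and is usable only when $r_1$ or $r_2$ extends analytically off the line---exactly the hypothesis you have acknowledged is unavailable.

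The device that breaks this circle, and the step missing from your outline, is Proposition \ref{crucial2}: a pair of Green's functions $y_1^z,y_2^z$ for the ODEs $x\frac{\d y}{\d x}=-zy+\phi$ and $x\frac{\d y}{\d x}=(z-1)y+\psi$, defined \emph{piecewise} on the two half-planes directly from $\phi,\psi$ (with no reference to $r_j$), and whose jumps \eqref{bp20} across $\frac12+\im\mathbb{R}$ are what manufacture $r_1,r_2$. The candidate is then \eqref{bp22}, with entries such as $-\big((I-K_t)^{-1}\tau_t y_1^z\big)(1)$ and $t\int_0^1\rho_t^{\ast}(x)(\tau_t y_1^z)(x)\,\d x$; analyticity is checked via Morera (with a mollifier argument for the off-diagonal entries), the jump \eqref{b:12} follows from \eqref{bp20} together with the algebraic Lemmas \ref{new5}--\ref{new6}, and the asymptotics \eqref{b:16} come not from integrating by parts in a Mellin integral but from the one-step expansion $y_1^z(x)=\frac1z\phi(x)+m_1^z(x)$ with $\|m_1^z\|_{L^2(0,1)}=\mathcal{O}\big(|z|^{-1}|\Re z-\tfrac12|^{-1/2}\big)$ in \eqref{bp21}. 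Your Mellin integrals are in fact exactly the boundary values $X_-^{21}$ and $X_+^{12}$ of this solution (see \eqref{bp33}, \eqref{bp34}), so your intuition points at the right object; what is missing is the mechanism that extends those boundary values analytically to the opposite half-plane.
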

Our proof of Theorem \ref{theo3} is constructive, see Section \ref{sec7}, and there is no equivalent of it in the literature, to the best of our knowledge.
\begin{rem}\label{IIKScon2} As done in Remark \ref{IIKScon1} we compare \eqref{b:12} to the jump constraint in the RHP for an integral operator: if either $z\mapsto r_1(z)$ in \eqref{b:13} admits analytic continuation to $H_+:=\{z\in\mathbb{C}: \frac{1}{2}-\epsilon<\Re z<\frac{1}{2}\}$ such that $|r_1(z)|\rightarrow 0$ as $|z|\rightarrow\infty$ in $H_+$, or if $z\mapsto r_2(z)$ admits analytic continuation to $H_-:=\{z\in\mathbb{C}:\,\frac{1}{2}<\Re z<\frac{1}{2}+\epsilon\}$ such that $|r_2(z)|\rightarrow 0$ as $|z|\rightarrow\infty$ in $H_-$, then
\begin{equation*}
	\begin{bmatrix} 1-r_1(z)r_2(z) & -r_2(z)t^z\\ r_1(z)t^{-z} & 1\end{bmatrix}=\begin{bmatrix}1 & -r_2(z)t^z\\ 0 & 1\end{bmatrix}\begin{bmatrix}1 & 0\\ r_1(z)t^{-z} & 1\end{bmatrix},\ \ \ z\in\frac{1}{2}+\im\mathbb{R},
\end{equation*}
readily shows that RHP \ref{masterb} can be transformed to a RHP for an integrable integral operator by moving one of the matrix factors off the critical line. Still, without explicit decay assumptions placed on $\phi,\psi$ their Mellin transforms \eqref{b:13} won't admit such extensions. Thus, in general, there is again no obvious relation between RHP \ref{masterb} and the RHP for an integrable operator, although some types of \eqref{b:13} might be approximable by suitable analytic functions, compare the pointers in Remark \ref{IIKScon1}.
\end{rem}
Combining Lemma \ref{lem2} with Theorem \ref{theo3}, we arrive at the following %Zakharov-Shabat 
Riemann-Hilbert characterization of the Fredholm determinant \eqref{b:3}.
\begin{cor}\label{impcor2} Under the assumptions of Lemma \ref{lem2} and Theorem \ref{theo3}, with \eqref{b:15} valid for all $t\in J$ and $I-K_t$ invertible on $L^2(0,1)$ for the same $t$,
\begin{equation*}
	t\frac{\d}{\d t}\ln F(t)=-X_1^{11}(t,\phi,\psi),\ \ \ \ \ t\frac{\d}{\d t}\left\{t\frac{\d}{\d t}\ln F(t)\right\}=X_1^{12}(t,\phi,\psi)X_1^{21}(t,\phi,\psi),\ \ \ t\in J,
\end{equation*}
in terms of the entries of the matrix coefficient ${\bf X}_1$ in \eqref{b:16}.
\end{cor}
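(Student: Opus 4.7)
The plan is to combine the dynamical identity of Lemma \ref{lem2} with the large-$z$ expansion of Theorem \ref{theo3}, linked by one direct computation that exhibits $\frac{\d}{\d t}K_t$ as an explicit rank-one operator. Namely, the substitution $u=zt$ in the kernel \eqref{b:2} rewrites
\begin{equation*}
	K_t(x,y)=t\int_0^1\phi(xzt)\psi(zyt)\,\d z=\int_0^t\phi(xu)\psi(uy)\,\d u,
\end{equation*}
so that termwise differentiation in $t$ yields
\begin{equation*}
	\frac{\d K_t}{\d t}(x,y)=\phi(xt)\psi(ty)=(\tau_t\phi)(x)(\tau_t\psi)(y),
\end{equation*}
i.e. $\frac{\d}{\d t}K_t=\tau_t\phi\otimes\tau_t\psi$ as a bona fide rank one operator on $L^2(0,1)$. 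The $L^2(0,1)$ dominance hypotheses on $\{\tau_t\phi\}_{t\in\mathbb{R}_+}$ and $\{\tau_t\psi\}_{t\in\mathbb{R}_+}$ already assumed in Lemma \ref{lem2} and Theorem \ref{theo3} together with continuity of $\phi,\psi$ permit differentiation under the integral sign and promote this to a $\mathcal{C}_1(L^2(0,1))$-valued derivative.

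Granted this, the Jacobi formula (e.g. \cite[Theorem $3.5$]{S}), applicable since $I-K_t$ is invertible and $\frac{\d}{\d t}K_t$ is trace class, delivers
\begin{equation*}
	\frac{\d}{\d t}\ln F(t)=-\tr_{L^2(0,1)}\bigl[(I-K_t)^{-1}(\tau_t\phi\otimes\tau_t\psi)\bigr].
\end{equation*}
Multiplying by $t$ and reading off the right hand side as $-p_0(t)$ according to the definition \eqref{b:9}, we obtain
\begin{equation*}
	t\frac{\d}{\d t}\ln F(t)=-p_0(t).
\end{equation*}
Theorem \ref{theo3} then identifies $p_0(t)=X_1^{11}(t,\phi,\psi)$ via the $(1,1)$ entry of \eqref{b:16}, proving the first identity of the corollary.

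For the second identity we simply invoke Lemma \ref{lem2} to get $t\frac{\d}{\d t}\{t\frac{\d}{\d t}\ln F(t)\}=-q_0(t)q_0^{\ast}(t)$, and then use Theorem \ref{theo3} again to recognize $X_1^{12}=q_0^{\ast}$ and $X_1^{21}=-q_0$, so that $X_1^{12}X_1^{21}=-q_0q_0^{\ast}$, matching the claim. The only point requiring real care is the trace-norm differentiability of $t\mapsto K_t$ used in the Jacobi step; beyond that the argument is purely an assembly of already established facts, mirroring the corresponding reasoning for Corollary \ref{impcor} in the additive setting (where the substitution $u=z+t$ plays the role of $u=zt$ here).
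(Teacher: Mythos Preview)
Your proof is correct and follows essentially the same route as the paper. The identity $\frac{\d}{\d t}K_t=\tau_t\phi\otimes\tau_t\psi$ and the resulting formula $t\frac{\d}{\d t}\ln F(t)=-p_0(t)$ are already established in the paper's proof of Lemma \ref{lem2} (see \eqref{bp1}), so your substitution argument re-derives that step; from there both arguments simply read off the entries of \eqref{b:16} and invoke Lemma \ref{lem2}.
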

What's more, RHP \ref{masterb} also lets us access the higher functions \eqref{b:9},\eqref{b:10} for $n\in\mathbb{Z}_{\geq 1}$, provided we impose additional restrictions on $\phi$ and $\psi$, somewhat reminiscent of those in Corollary \ref{corb2}.
\begin{cor}\label{deeper2} Fix $N\in\mathbb{Z}_{\geq 1}$. Provided $\phi,\psi\in H_{\circ}^{N,1}(\mathbb{R}_+)$, besides obeying \eqref{b:1}, are $N$-times continuously differentiable on $\mathbb{R}_+$, satisfy for every $k\in\{0,1,\ldots,N-1\}$ and $f\in\{\phi,\psi\}$,
\begin{equation*}
	\sqrt{(\cdot)}(MD)^kf\in L^{\infty}(\mathbb{R}_+),\ \ \int_0^1|(MD\tau_t f)(x)|^2\ln\Big(\frac{1}{x}\Big)\,\d x<\infty,\ \ \int_0^1\sqrt{\int_0^1|(\tau_tf)(xy)|^2\,\d y}\frac{\d x}{\sqrt{x}}<\infty,
\end{equation*}
as well as $\sqrt{(\cdot)}(MD)^Nf\in L^{\infty}(0,1),f\in\{\phi,\psi\}$ and $I-K_t$ is invertible on $L^2(0,1)$, then RHP \ref{masterb} is uniquely solvable and its solution satisfies
\begin{equation}\label{b:17}
	{\bf X}(z)=\mathbb{I}+\sum_{k=1}^N\begin{bmatrix}(-1)^{k-1}p_{k-1} & q_{k-1}^{\ast}\\ (-1)^kq_{k-1} & -p_{k-1}^{\ast}\end{bmatrix}z^{-k}+o\big(z^{-N}\big),\ \ \ \ |z|\rightarrow\infty,\ \ \ \Re z\not\equiv\textnormal{const.}
\end{equation}
with $\{q_k,p_k,q_k^{\ast},p_k^{\ast}\}$ as in \eqref{b:9} and \eqref{b:10}.
\end{cor}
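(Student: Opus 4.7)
My plan is to argue by induction on $N \geq 1$, with the base case $N=1$ being Theorem \ref{theo3}, which already supplies both the unique solvability of RHP \ref{masterb} and the $z^{-1}$ matrix coefficient matching $(p_0, q_0, q_0^\ast, p_0^\ast)$. For the inductive step I would take the asymptotic expansion to order $N-1$ for granted and refine the $o(z^{-(N-1)})$ remainder by extracting the $z^{-N}$ coefficient.

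The central technical tool is the Mellin integration-by-parts formula: on the critical line $z \in \frac{1}{2}+\im\mathbb{R}$, provided the boundary contributions $[y^z(MD)^j\phi]_0^\infty$ and $[y^{1-z}(DM)^j\psi]_0^\infty$ vanish for $0 \leq j < k$, iterated parts yield
\begin{equation*}
    z^k r_1(z) = (-1)^k \int_0^\infty y^{z-1}(MD)^k\phi(y)\,\d y, \qquad z^k r_2(z) = \int_0^\infty y^{-z}(DM)^k\psi(y)\,\d y,
\end{equation*}
the asymmetric sign pattern arising from $\frac{\d}{\d y}[y^z] = z y^{z-1}$ versus $\frac{\d}{\d y}[y^{1-z}] = (1-z)y^{-z}$. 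I would substitute these identities into the Cauchy integral representation of ${\bf X}(z)$ produced in the proof of Theorem \ref{theo3} and expand the Cauchy kernel along the critical line as $1/(\zeta - z) = -\sum_{j=1}^{N} \zeta^{j-1} z^{-j} + \zeta^N/(z^N(\zeta-z))$. The coefficient of $z^{-k}$ then becomes a linear combination of $(k-1)$-th Mellin moments of $r_1, r_2$ and $r_1 r_2$, which after the iterated IBP become integrals against $(MD)^{k-1}\tau_t\phi$ and $(DM)^{k-1}\tau_t\psi$; these integrals should match the evaluation/trace definitions \eqref{b:9}--\eqref{b:10} of $p_{k-1}, q_{k-1}, q_{k-1}^\ast, p_{k-1}^\ast$ by the same algebra used for $k=1$, while the alternating signs $(-1)^{k-1}, (-1)^k$ in \eqref{b:17} come directly from the sign factor in the $r_1$-IBP identity.

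The hard part will be the rigorous bookkeeping of boundary terms and the remainder estimate. Vanishing of the boundary contributions at each iteration requires $\sqrt{(\cdot)}(MD)^j f \in L^\infty(\mathbb{R}_+)$ for $0 \leq j \leq N-1$, precisely the Corollary's global $L^\infty$ hypothesis. The final step needs only $\sqrt{(\cdot)}(MD)^N f \in L^\infty(0,1)$, which is why the Corollary bifurcates the last assumption to a neighborhood of $y=1$ --- the dilation $\tau_t$ confines the new boundary term there. The $L^2(0,1)$ dominance hypotheses implicit in the setup, applied to $(I-K_t)^{-1}(MD)^n\tau_t\phi$ and $(I-K_t^\ast)^{-1}(DM)^n\tau_t\psi$, should ensure that the evaluations and traces defining $q_n, p_n, q_n^\ast, p_n^\ast$ are finite for all $0 \leq n \leq N$, and together with $\phi,\psi \in H_\circ^{N,1}(\mathbb{R}_+)$ they should yield the genuine $o(z^{-N})$ decay of the remainder via a Riemann--Lebesgue type argument applied to the residual Cauchy integral.
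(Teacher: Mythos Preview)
Your proposal rests on a misidentification of the object produced in the proof of Theorem~\ref{theo3}. That proof does \emph{not} give a Cauchy integral representation of ${\bf X}(z)$ along the critical line; as the paper stresses in its outline, the solution \eqref{bp22} ``does not depend on Cauchy-type integrals.'' Instead, ${\bf X}(z)$ is built explicitly from the Green's functions $y_1^z,y_2^z$ of Proposition~\ref{crucial2}, and the entire $z$-dependence of each entry of \eqref{bp22} sits inside $\tau_t y_1^z$ or $\tau_t y_2^z$. Consequently the paper's proof of Corollary~\ref{deeper2} does not touch $r_1,r_2$ or any Cauchy kernel at all: it iterates the single integration-by-parts in Proposition~\ref{crucial2}(3), applied to the defining integrals \eqref{bp19}, to produce
\[
(\tau_t y_1^z)(x)=\sum_{k=1}^N(-1)^{k-1}\big((MD)^{k-1}\tau_t\phi\big)(x)\,z^{-k}+m_{1N}^z(x;t),\quad
(\tau_t y_2^z)(x)=\sum_{k=1}^N(-1)^{k-1}\big((MD)^{k-1}\tau_t\psi\big)(x)\,(1-z)^{-k}+m_{2N}^z(x;t),
\]
with $L^2(0,1)$ remainder bounds of order $|z|^{-N}|\Re z-\tfrac12|^{-1/2}$ controlled by $\|(MD)^N\phi\|_{L_\circ^1(\mathbb{R}_+)}$, $\|(MD)^N\psi\|_{L_\circ^1(\mathbb{R}_+)}$ (this is the role of $\phi,\psi\in H_\circ^{N,1}$). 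A Taylor re-expansion of $(1-z)^{-k}$ in $z^{-j}$ converts the $(MD)^{k-1}$ coefficients of $y_2^z$ into $(DM)^{k-1}$ coefficients. Substituting these finite expansions straight into \eqref{bp22} makes each entry a finite sum plus an $o(z^{-N})$ error by Cauchy--Schwarz; the coefficients are then literally the definitions \eqref{b:9}--\eqref{b:10} because, e.g., $X^{21}(z)=-\big((I-K_t)^{-1}\tau_t y_1^z\big)(1)$ gives $(-1)^k q_{k-1}$ at order $z^{-k}$ with no further algebra. No induction is needed.

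Your route --- IBP on $z^k r_j(z)$, expansion of a Cauchy kernel, identification of ``Mellin moments of $r_1,r_2,r_1r_2$'' with $p_{k-1},q_{k-1},q_{k-1}^\ast,p_{k-1}^\ast$ --- has a real gap at the last step. Even if you wrote the standard RHP integral equation ${\bf X}(z)=\mathbb{I}+\frac{1}{2\pi\im}\int_\Sigma{\bf X}_-(\zeta)({\bf G}(\zeta)-\mathbb{I})\frac{\d\zeta}{\zeta-z}$, the $z^{-k}$ coefficient involves ${\bf X}_-$ on $\Sigma$, not just $r_1,r_2$; you have given no mechanism to reduce these contour integrals to the resolvent expressions $((I-K_t)^{-1}(MD)^{k-1}\tau_t\phi)(1)$ and $t\,\tr((I-K_t)^{-1}(MD)^{k-1}\tau_t\phi\otimes\tau_t\psi)$. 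Your reading of the $L^\infty(0,1)$ hypothesis on $\sqrt{(\cdot)}(MD)^Nf$ is also off: it is not about ``confining a boundary term near $y=1$'' via $\tau_t$, but mirrors exactly the $N=1$ hypothesis $\sqrt{(\cdot)}MD\phi\in L^\infty(0,1)$ in Proposition~\ref{crucial2}, used for the remainder $m_{1N}^z$ after the final IBP; the global $L^\infty(\mathbb{R}_+)$ control on the lower-order $(MD)^k f$ kills the intermediate boundary terms.
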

The above concludes our first set of results for Fredholm determinants of multiplicative Hankel composition operators. We now continue to discuss some of their other properties, but within a narrower class of kernels.
\subsection{Multiplicative composition, part $2$}\label{seci25} As discussed earlier on, certain determinants of multiplicative Hankel composition operators appear in a perturbed fashion, especially in random matrix theory when concerned with extreme values in real or symplectic hard edge ensembles, compare \eqref{i:8}. We now show how some of those perturbed determinants can be computed in terms of the solution of RHP \ref{masterb}, thus perfectly mirroring the results of Section \ref{seci22} for additive compositions. Consider the Hankel integral operator $H_t\in\mathcal{L}(L^2(0,1))$ given by
\begin{equation}\label{b:18}
	(H_tf)(x):=\sqrt{t}\int_0^1\phi(xyt)f(y)\,\d y,\ \ \ t\in\mathbb{R}_+,
\end{equation}
where $\phi:\mathbb{R}_+\rightarrow\mathbb{C}$ is continuously differentiable.
\begin{assum}\label{ass2} Assume $H_t\in\mathcal{C}_1(L^2(0,1))$ and $t\frac{\d}{\d t}H_t\in\mathcal{C}_1(L^2(0,1))$ for all $t\in\mathbb{R}_+$, the latter with kernel $\frac{1}{2}\sqrt{t}(\tau_t\phi)(xy)+\sqrt{t}(MD\tau_t\phi)(xy)$, such that $\|H_t\|_1\rightarrow 0$ as $t\downarrow 0$, for the trace norm $\|H_t\|_1$ of $H_t$. Moreover assume $I-\gamma H_t^2$ is invertible for all $(t,\gamma)\in\mathbb{R}_+\times[0,1]$.
\end{assum}
If $K_t$ denotes the composition operator $K_t:=H_t^2$ on $L^2(0,1)$ with symmetric kernel
\begin{equation*}
	K_t(x,y)=t\int_0^1\phi(xzt)\phi(zyt)\,\d z,
\end{equation*}
then the Fredholm determinant $F(t,\gamma):=\prod_{k=1}^{\infty}(1-\gamma\lambda_k(t)),(t,\gamma)\in\mathbb{R}_+\times[0,1]$, where $\lambda_k(t)$ are the non-zero eigenvalues of $K_t$, satisfies by Corollary \ref{corb1}, subject to the therein listed assumptions,
\begin{equation}\label{b:19}
	\ln F(t,\gamma)=-\int_0^t\ln\Big(\frac{t}{s}\Big)\big(q(s,\gamma)\big)^2\,\d s
\end{equation}
with $q(t,\gamma):=\sqrt{\gamma}((I-K_t)^{-1}\tau_t\phi)(1)$. Now define the Fredholm determinant of $\gamma H_t$ on $L^2(0,1)$, i.e.
\begin{equation}\label{b:20}
	G(t,\gamma):=\prod_{k=1}^{\infty}\big(1-\gamma\mu_k(t)\big),\ \ \ (t,\gamma)\in\mathbb{R}_+\times[-1,1],
\end{equation}
where $\mu_k(t)$ are the non-zero eigenvalues of $H_t$, counting multiplicity, and consider the following three functions:
\begin{equation}\label{b:21}
	F_1^{[1]}(t,\gamma):=F(t,\gamma)\left\{1-\gamma\sqrt{t}\int_0^1\big((I-\gamma K_t)^{-1}\tau_t\phi\big)(x)\left[1-\sqrt{t}\int_0^x(\tau_t\phi)(y)\frac{\d y}{\sqrt{y}}\right]\frac{\d x}{\sqrt{x}}\right\},
\end{equation}
followed by, where $\gamma_{\circ}:=\gamma(2-\gamma)$,
\begin{equation}\label{b:22}
	F_1^{[2]}(t,\gamma):=F(t,\gamma_{\circ})\left\{1-\gamma\sqrt{t}\int_0^1\big((I-\gamma_{\circ} K_t)^{-1}\tau_t\phi\big)(x)\left[1-\sqrt{t}\int_0^x(\tau_t\phi)(y)\frac{\d y}{\sqrt{y}}\right]\frac{\d x}{\sqrt{x}}\right\},
\end{equation}
and concluding with
\begin{equation}\label{b:23}
	F_4(t,\gamma):=F(t,\gamma)\left\{1+\gamma\sqrt{t}\int_0^1\big((I-\gamma K_t)^{-1}\tau_t\phi\big)(x)\left[\frac{1}{2}\sqrt{t}\int_0^x(\tau_t\phi)(y)\frac{\d y}{\sqrt{y}}\right]\frac{\d x}{\sqrt{x}}\right\}.
\end{equation}
A concrete example of \eqref{b:21}, resp. \eqref{b:22}, appears in the hard edge analysis of the LOE, compare \cite[$(3.14)$]{F0}, resp. \cite[$(1.31)$]{F0}. Likewise, a special case of \eqref{b:23} occurs in the Laguerre symplectic ensemble LSE, see \cite[$(1.33)$]{F0}, again at the hard edge. Using the multiplicative Hankel structure only, we now evaluate all three functions \eqref{b:21},\eqref{b:22},\eqref{b:23} in terms of Riemann-Hilbert data, i.e. in terms of $q(t,\gamma)$, see \eqref{b:19} and Theorem \ref{theo3}, or, equivalently in terms of $G(t,\gamma)$ in \eqref{b:20}. The details are as follows and they explain in particular why the structure in the right hand side of \eqref{i:8} is common to hard edge ensembles.
\begin{theo}\label{theo4} Suppose $H_t\in\mathcal{C}_1(L^2(0,1))$ defined in \eqref{b:18} satisfies the conditions in Assumption \ref{ass2} with $\phi:\mathbb{R}_+\rightarrow\mathbb{C}$ continuously differentiable on $\mathbb{R}_+$. Assume further $\phi\in L_{\circ}^1(0,1)$, that
\begin{equation*}
	\lim_{x\downarrow 0}\sqrt{x}\,\phi(x)=0
\end{equation*}
and that for every $t\in\mathbb{R}_+$,
\begin{equation*}
		\int_0^1|(\tau_t\phi)|^2\ln\Big(\frac{1}{x}\Big)\,\d x<\infty,\ \ \ \int_0^1|(MD\tau_t\phi)(x)|^2\ln\Big(\frac{1}{x}\Big)\,\d x<\infty,\ \ \ \int_0^1\sqrt{\int_0^1|(\tau_t\phi)(xy)|^2\,\d y}\frac{\d x}{\sqrt{x}}<\infty,
\end{equation*}	
\begin{equation}\label{b:24}
	\int_0^1\sqrt{\int_0^1|(MD\tau_t\phi)(xy)|^2\,\d y}\frac{\d x}{\sqrt{x}}<\infty.
\end{equation}
Then we have for any $(t,\gamma)\in\mathbb{R}_+\times[0,1]$, denoting $\omega(t,\gamma):=\int_0^tq(s,\gamma)\frac{\d s}{\sqrt{s}}$,
\begin{align}
	F_1^{[1]}(t,\gamma)=F(t,\gamma)\Big\{\cosh\omega(t,\gamma)&\,-\sqrt{\gamma}\sinh\omega(t,\gamma)\Big\}\nonumber\\
	&\,=\frac{1}{2}(1-\sqrt{\gamma})\big(G(t,-\sqrt{\gamma})\big)^2+\frac{1}{2}(1+\sqrt{\gamma})\big(G(t,\sqrt{\gamma})\big)^2,\label{b:25}
\end{align}
followed by
\begin{align}
	F_1^{[2]}(t,\gamma)=F(t,\gamma_{\circ})&\,\left\{\frac{1-\gamma+\cosh\omega(t,\gamma_{\circ})-\sqrt{\gamma_{\circ}}\sinh\omega(t,\gamma)}{2-\gamma}\right\}\nonumber\\
	&\hspace{1.65cm}=\left[\sqrt{\frac{1-\sqrt{\gamma_{\circ}}}{2(2-\gamma)}}G(t,-\sqrt{\gamma_{\circ}})+\sqrt{\frac{1+\sqrt{\gamma_{\circ}}}{2(2-\gamma)}}G(t,\sqrt{\gamma_{\circ}})\right]^2,\label{b:26}
\end{align}
and concluding with
\begin{equation}\label{b:27}
	F_4(t,\gamma)=F(t,\gamma)\bigg\{\cosh\bigg(\frac{1}{2}\omega(t,\gamma)\bigg)\bigg\}^2=\left[\frac{1}{2}G(t,-\sqrt{\gamma})+\frac{1}{2}G(t,\sqrt{\gamma})\right]^2,
\end{equation}
provided $\{(\cdot)^{-1/2}(I-\gamma K_t)^{-1}\tau_t\phi\}_{t\in\mathbb{R}_+},\{(\cdot)^{-1/2}(I-\gamma K_t)^{-1}H_tMD\tau_t\phi\}_{t\in\mathbb{R}_+},\{(\cdot)^{-1/2}(I\pm\sqrt{\gamma}H_t)^{-1}\tau_t\phi\}_{t\in\mathbb{R}_+}$, $\{(\cdot)^{-1/2}(I\pm\sqrt{\gamma}H_t)^{-1}MD\tau_t\phi\}_{t\in\mathbb{R}_+}$ are $L^1(0,1)$ dominated and $\{\tau_t\phi\}_{t\in\mathbb{R}_+},\{MD\tau_t\phi\}_{t\in\mathbb{R}_+}$ are $L^2(0,1)$ dominated, and provided there exist $c,t_0>0$ so $|q(t,\gamma)|\leq ct^{-\frac{1}{2}+\epsilon}$ for all $0<t\leq t_0^{-1},\gamma\in[0,1]$ with $\epsilon>0$.
\end{theo}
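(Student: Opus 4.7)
The plan mirrors the strategy of Theorem \ref{theo2}. The cornerstone is the log-identity
\begin{equation*}
2\ln G(t,\pm\sqrt{\gamma})=\ln F(t,\gamma)\mp\omega(t,\gamma),\qquad (t,\gamma)\in\mathbb{R}_+\times[0,1],\qquad(\star)
\end{equation*}
from which all three formulas in \eqref{b:25}, \eqref{b:26}, \eqref{b:27} follow by algebra. Since $K_t=H_t^2$, the operator factorization $I-\gamma K_t=(I-\sqrt{\gamma}H_t)(I+\sqrt{\gamma}H_t)$ combined with multiplicativity of the Fredholm determinant on $\mathcal{C}_1(L^2(0,1))$ yields $F(t,\gamma)=G(t,\sqrt{\gamma})G(t,-\sqrt{\gamma})$, so $(\star)$ reduces to the companion identity
\begin{equation*}
\Lambda(t,\gamma):=\ln G(t,\sqrt{\gamma})-\ln G(t,-\sqrt{\gamma})=-\omega(t,\gamma).
\end{equation*}

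To verify the companion identity, Jacobi's formula together with Assumption \ref{ass2} yields
\begin{equation*}
t\frac{\d}{\d t}\ln G(t,\pm\sqrt{\gamma})=\mp\sqrt{\gamma}\,\tr_{L^2(0,1)}\!\left((I\mp\sqrt{\gamma}H_t)^{-1}t\frac{\d H_t}{\d t}\right),
\end{equation*}
and the partial-fractions identity $(I-\sqrt{\gamma}H_t)^{-1}+(I+\sqrt{\gamma}H_t)^{-1}=2(I-\gamma K_t)^{-1}$ delivers
\begin{equation*}
t\frac{\d\Lambda}{\d t}=-2\sqrt{\gamma}\,\tr_{L^2(0,1)}\!\left((I-\gamma K_t)^{-1}t\frac{\d H_t}{\d t}\right).
\end{equation*}
The kernel of $t\frac{\d}{\d t}H_t$ stated in Assumption \ref{ass2} is itself of multiplicative Hankel type. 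Using the same kernel manipulation as in the proof of Lemma \ref{lem2} --- rewriting the trace as a one-dimensional diagonal integral, integrating by parts on the $(MD\tau_t\phi)$-piece, and discarding the vanishing boundary contributions at $0$ and $1$ controlled by \eqref{b:4} and \eqref{b:24} --- collapses the above trace to $\sqrt{t}\,q(t,\gamma)$. Hence $t\frac{\d\Lambda}{\d t}=-\sqrt{t}\,q(t,\gamma)=t\frac{\d}{\d t}(-\omega(t,\gamma))$. Boundary matching as $t\downarrow 0$ is immediate: $\|H_t\|_1\to 0$ from Assumption \ref{ass2} forces $\Lambda(t,\gamma)\to 0$, while $|q(t,\gamma)|\leq ct^{-1/2+\epsilon}$ makes $s^{-1/2}q(s,\gamma)$ integrable near $0$ and forces $\omega(t,\gamma)\to 0$. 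This proves $\Lambda=-\omega$, hence $(\star)$.

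Once $(\star)$ is available, $G(t,\pm\sqrt{\gamma})=\sqrt{F(t,\gamma)}\,\e^{\mp\omega(t,\gamma)/2}$ reduces each second equality in \eqref{b:25}, \eqref{b:26}, \eqref{b:27} to elementary hyperbolic identities; for \eqref{b:26} one uses $\sqrt{1-\gamma_\circ}=1-\gamma$ since $\gamma_\circ=\gamma(2-\gamma)$ and $\gamma\in[0,1]$. For the first equalities, each $F_j^{[k]}$ is recognised as a rank-one perturbation of the base Fredholm determinant via the identity $\det(I-A-u\otimes v)=\det(I-A)(1-\langle(I-A)^{-1}u,v\rangle)$. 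For instance,
\begin{equation*}
F_1^{[1]}(t,\gamma)=\det\!\big(I-\gamma K_t-\gamma\,\tau_t\phi\otimes\beta_t\big),\qquad \beta_t(x):=\tfrac{\sqrt{t}}{\sqrt{x}}\bigg[1-\sqrt{t}\int_0^x(\tau_t\phi)(y)\tfrac{\d y}{\sqrt{y}}\bigg],
\end{equation*}
with analogous rank-one structures for $F_1^{[2]}$ and $F_4$. Differentiating the ratio $\Psi_j^{[k]}(t,\gamma):=F_j^{[k]}(t,\gamma)/F(t,\gamma)$ (divided by $F(t,\gamma_\circ)$ in the $j=1,k=2$ case) via $t\frac{\d}{\d t}$ and invoking the Zakharov-Shabat relations of Corollary \ref{corb2} produces a first order ODE in $t$ for $\Psi_j^{[k]}$ whose coefficients involve $q(\cdot,\gamma)$ (or $q(\cdot,\gamma_\circ)$) and $\sqrt{t}$ only; matching initial values as $t\downarrow 0$ via the $L^1(0,1)$ domination hypotheses and $\|H_t\|_1\to 0$ identifies each $\Psi_j^{[k]}$ with the claimed $\cosh$/$\sinh$ expression.

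The main obstacle is Step 2, the collapse of $\tr_{L^2(0,1)}((I-\gamma K_t)^{-1}t\frac{\d}{\d t}H_t)$ to the local quantity $\sqrt{t}\,q(t,\gamma)$. Although this is the natural multiplicative analogue of the kernel computation underlying Lemma \ref{lem2}, in the multiplicative setting the integration by parts picks up boundary contributions at both $x=0$ and $x=1$, whose vanishing must be verified from $\sqrt{(\cdot)}\phi\in L^\infty(\mathbb{R}_+)$ together with \eqref{b:4} and the endpoint control built into \eqref{b:24}; once this local reduction is in place, the rest of the argument is algebraic manipulation of hyperbolic functions and rank-one determinant identities.
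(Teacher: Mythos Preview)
Your overall architecture is right: establish the log-identity $(\star)$, deduce the $G$-formul\ae\ by hyperbolic algebra, and then handle the curly-bracket first equalities. The second step is fine. The other two, however, are not adequately argued as written.

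For the trace collapse in $(\star)$, the operator $t\frac{\d}{\d t}H_t=\tfrac{1}{2}H_t+MDH_t$ is \emph{not} rank one, so the manoeuvre from Lemma \ref{lem2} (which reduces a rank-one trace via integration by parts in a single variable) does not apply. If you write the trace as a diagonal integral and integrate by parts in one variable, the resolvent kernel $(I-\gamma K_t)^{-1}(x,y)$ sits inside and does not disappear; the boundary terms do not localise at $x=1$ by themselves. The paper instead splits $(I-\gamma K_t)^{-1}=I+(I-\gamma K_t)^{-1}\gamma K_t$ and uses the operator identity $K_t\big[t\frac{\d}{\d t}H_t\big]=-\tfrac{1}{2}H_tK_t+tH_t(\tau_t\phi\otimes\tau_t\phi)-H_tMDK_t$, together with the commutation $(I-\gamma K_t)^{-1}H_t=H_t(I-\gamma K_t)^{-1}$ and trace cyclicity, to peel off $\sqrt{t}\,q(t,\gamma)$; what remains recombines into $2\sqrt{\gamma}\,\tr\big([t\frac{\d}{\d t}H_t](I-\gamma K_t)^{-1}\big)$ and cancels against the left-hand side. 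This is genuinely more than a single integration by parts.

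For the first equalities, your proposed route via Corollary \ref{corb2} does not connect: the Zakharov--Shabat variables $q_n,p_n$ are evaluations at $x=1$ or traces against $\tau_t\psi$, whereas the brackets in $F_j^{[k]}$ are $x^{-1/2}$-weighted integrals over $(0,1)$ against the auxiliary function $\beta_t$. Moreover, the ratio $\Psi_1^{[1]}$ does \emph{not} satisfy a closed first-order scalar ODE in $t$ with coefficient $q/\sqrt{t}$; one checks directly that $t\frac{\d}{\d t}(\cosh\omega-\sqrt{\gamma}\sinh\omega)=\sqrt{t}\,q\,(\sinh\omega-\sqrt{\gamma}\cosh\omega)$, which is a different function. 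The paper bypasses this by the algebraic identity $\int_0^1(I\pm\sqrt{\gamma}H_t)^{-1}(x,z)\big[x^{-1/2}\pm\sqrt{\gamma}\int_0^1H_t(x,y)y^{-1/2}\d y\big]\d x=z^{-1/2}$, which decomposes each bracket into the two building blocks $\Phi_\pm(t):=1\mp\sqrt{\gamma t}\int_0^1((I\pm\sqrt{\gamma}H_t)^{-1}\tau_t\phi)(z)z^{-1/2}\d z$. These \emph{do} satisfy the closed first-order ODEs $\frac{\d}{\d t}\Phi_\pm=\mp\frac{q}{\sqrt{t}}\Phi_\pm$ (the multiplicative analogue of \eqref{c10}), whence $\Phi_\pm=\e^{\mp\omega}$ after matching at $t\downarrow 0$. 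Plugging this into the decomposition yields the $\cosh/\sinh$ expressions immediately. Your rank-one determinant reformulation is correct but not needed; the missing ingredient is this $(I\pm\sqrt{\gamma}H_t)^{-1}$ partial-fraction splitting together with the ODE for $\Phi_\pm$.
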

\begin{rem} Although our workings on multiplicative Hankel operators focus on composition operators, the proof of Theorem \ref{theo4} uses the identity
\begin{equation*}
	2\ln G(t,\sqrt{\gamma})=\ln F(t,\gamma)-\omega(t,\gamma),\ \ \ \ \ (t,\gamma)\in\mathbb{R}_+\times[0,1],
\end{equation*}
see \eqref{bp41a}, so we could use our results for the study of ordinary multiplicative trace class Hankel operators  \eqref{b:18}, subject to the necessary assumptions placed on their kernels.
\end{rem}
\begin{rem} Just as in \eqref{i:20} and \eqref{i:21}, the functions $F_1^{[2]}$ and $F_4$ in \eqref{b:22} and \eqref{b:23} are again squares of convex combinations of the simpler determinants $G(t,\pm\sqrt{\gamma})$, see \eqref{b:26} and \eqref{b:27}.
\end{rem}
We now turn to the asymptotic analysis of $F(t)$ in \eqref{b:3} as $t\downarrow 0$ and $t\rightarrow+\infty$. First, with \eqref{i:22} in mind, we note that the trace norm of $K_t$ in \eqref{b:2} satisfies
\begin{equation*}
	\|K_t\|_1\leq\sqrt{t\int_0^1|(\tau_t\phi)(x)|^2\ln\Big(\frac{1}{x}\Big)\,\d x}\,\sqrt{t\int_0^1|(\tau_t\psi)(x)|^2\ln\Big(\frac{1}{x}\Big)\,\d x}\rightarrow 0\ \ \ \ \textnormal{as}\ \ t\downarrow 0
\end{equation*}
by \eqref{b:4}, provided the family $\{\sqrt{-\ln(\cdot)}\tau_tf\}_{0<t\leq t_0^{-1}}$ for $f\in\{\phi,\psi\}$ is $L^2(0,1)$ dominated for some $t_0>0$. We thus arrive at the below crude estimate.
\begin{cor} Suppose $\phi,\psi:\mathbb{R}_+\rightarrow\mathbb{C}$, besides satisfying \eqref{b:1}, obey \eqref{b:4} and $\{\sqrt{-\ln(\cdot)}\tau_tf\}_{0<t\leq t_0^{-1}}$ with $f\in\{\phi,\psi\}$ is $L^2(0,1)$ dominated for some $t_0>0$. Then, as $t\downarrow 0$,
\begin{equation}\label{b:28}
	F(t)=1+o(1).
\end{equation}
\end{cor}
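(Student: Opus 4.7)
The proof is the direct multiplicative analog of \eqref{i:23} and is essentially laid out in the paragraph immediately preceding the statement. The plan is to apply Simon's trace ideal estimate (as in \eqref{i:22}),
\begin{equation*}
	|F(t)-1| \leq \|K_t\|_1\exp\big(1+\|K_t\|_1\big),\qquad t\in\mathbb{R}_+,
\end{equation*}
which is valid since $K_t\in\mathcal{C}_1(L^2(0,1))$ by \eqref{b:1}. It then suffices to show that $\|K_t\|_1\to 0$ as $t\downarrow 0$.

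To bound the trace norm I would use the factorization $K_t=M_tN_t$ with $M_t,N_t\in\mathcal{C}_2(L^2(0,1))$, so that by H\"older's inequality for Schatten ideals
\begin{equation*}
	\|K_t\|_1\leq\|M_t\|_2\|N_t\|_2.
\end{equation*}
A direct change of variables $u=xyt$ under Fubini in the defining double integral then yields the explicit identity
\begin{equation*}
	\|M_t\|_2^2 = t\int_0^1|(\tau_t\phi)(x)|^2\ln\!\frac{1}{x}\,\d x,
\end{equation*}
and analogously $\|N_t\|_2^2 = t\int_0^1|(\tau_t\psi)(x)|^2\ln\!\frac{1}{x}\,\d x$; these are precisely the quantities appearing in the displayed Hilbert-Schmidt bound just above the statement of the corollary.

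Finally, the $L^2(0,1)$ dominance hypothesis provides $g_\phi,g_\psi\in L^2(0,1)$ with $\sqrt{-\ln x}\,|(\tau_tf)(x)|\leq g_f(x)$ a.e. for every $0<t\leq t_0^{-1}$ and $f\in\{\phi,\psi\}$. Consequently each of the two weighted integrals above is uniformly bounded by $\|g_f\|_{L^2(0,1)}^2$ on $(0,t_0^{-1}]$, so that the prefactor $t$ forces $\|K_t\|_1=O(t)$ as $t\downarrow 0$, and Simon's estimate immediately delivers \eqref{b:28}. There is no real obstacle here: the argument runs entirely parallel to the additive case treated just before \eqref{i:23}, the only small verification being the substitution that identifies the Hilbert-Schmidt norms of $M_t,N_t$ with the weighted $L^2$ integrals natural to the multiplicative composition \eqref{b:2}.
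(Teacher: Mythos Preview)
Your proof is correct and follows exactly the approach the paper sketches in the paragraph immediately preceding the corollary: Simon's bound \eqref{i:22}, the factorization $\|K_t\|_1\leq\|M_t\|_2\|N_t\|_2$, and the explicit identification of the Hilbert--Schmidt norms with the weighted integrals $t\int_0^1|(\tau_tf)(x)|^2\ln(1/x)\,\d x$. Your observation that the dominance hypothesis alone already yields $\|K_t\|_1=O(t)$ (so that the reference to \eqref{b:4} is not strictly needed for the $o(1)$ conclusion) is a small sharpening, but otherwise the arguments coincide.
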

To improve \eqref{b:28} we could impose further regularity and/or integrability constraints on $\phi,\psi$, but we shall concern ourselves with the more challenging behavior of $F(t)$ as $t\rightarrow+\infty$ instead. Indeed, using RHP \ref{masterb} and the Deift-Zhou nonlinear steepest descent method, we will prove the following Akhiezer-Kac theorem for \eqref{b:3} in a suitable class of kernels.
\begin{theo}\label{theo4a} Let $\epsilon>0$. Suppose $\phi,\psi:\mathbb{R}_+\rightarrow\mathbb{C}$ are continuously differentiable such that
\begin{equation}\label{b:28a}
	|f(x)|\leq\frac{1}{\sqrt{x}}\e^{-a|\ln x|},\ \ x\in\mathbb{R}_+
\end{equation}
for $f\in\{\phi,\psi,MD\phi,MD\psi\}$ with $a\geq 2+\epsilon$ and $I-\gamma K_t$ is invertible for all $(t,\gamma)\in\mathbb{R}_+\times[0,1]$. Then there exist $c=c(\epsilon),t_0=t_0(\epsilon)>0$ so that
\begin{equation}\label{b:29a}
	\ln F(t)=-s(1)\ln t+\int_1^{\infty}s(x)\hat{s}(x)\ln x\,\d x-\frac{1}{4\pi\im}\int_{\frac{1}{2}-\im\infty}^{\frac{1}{2}+\im\infty}\left\{\frac{r_1'(\lambda)}{r_1(\lambda)}-\frac{r_2'(\lambda)}{r_2(\lambda)}\right\}\ln\big(1-r_1(\lambda)r_2(\lambda)\big)\,\d\lambda+r(t)
\end{equation}
for $t\geq t_0$ with the functions, in terms of the principal branch for the logarithm $\ln:\mathbb{C}\setminus(-\infty,0]\rightarrow\mathbb{C}$,
\begin{equation*}
	s(x):=-\frac{1}{2\pi\im}\int_{\frac{1}{2}-\im\infty}^{\frac{1}{2}+\im\infty}\ln\big(1-r_1(\lambda)r_2(\lambda)\big)x^{-\lambda}\,\d\lambda,\ \ \hat{s}(x):=-\frac{1}{2\pi\im}\int_{\frac{1}{2}-\im\infty}^{\frac{1}{2}+\im\infty}\ln\big(1-r_1(\lambda)r_2(\lambda)\big)x^{\lambda-1}\,\d\lambda,
\end{equation*}
that are defined for $x\geq 1$. The error $r(t)$ in \eqref{b:29a} is $t$-differentiable and satisfies
\begin{equation}\label{b:29b}
	|r(t)|\leq ct^{-\frac{1}{2}\epsilon}\ \ \ \forall\,t\geq t_0.
\end{equation}
\end{theo}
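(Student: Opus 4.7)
The approach mirrors the proof of Theorem \ref{theo2a}: apply the Deift-Zhou nonlinear steepest descent method to RHP \ref{masterb} in the limit $t\to+\infty$ and combine with the differential identity
\begin{equation*}
	t\frac{\d}{\d t}\ln F(t)=-X_1^{11}(t,\phi,\psi)
\end{equation*}
from Corollary \ref{impcor2} to reconstruct $\ln F(t)$. The decay hypothesis \eqref{b:28a} with $a\geq 2+\epsilon$ first yields, by a Paley-Wiener estimate on the Mellin transforms \eqref{b:13}, holomorphic extensions of $r_1,r_2$ to the strip $\{\frac{1}{2}-a<\Re z<\frac{1}{2}+a\}$ with $|r_j(z)|\to 0$ as $|\Im z|\to\infty$ inside the strip. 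This is precisely the analyticity needed in Remark \ref{IIKScon2} for deforming the triangular factors in the LU factorization off the critical line.

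Next I would introduce the Szeg\H{o} function
\begin{equation*}
	\delta(z):=\exp\left[\frac{1}{2\pi\im}\int_{\frac{1}{2}-\im\infty}^{\frac{1}{2}+\im\infty}\frac{\ln\big(1-r_1(w)r_2(w)\big)}{w-z}\,\d w\right],
\end{equation*}
solving $\delta_+/\delta_-=1-r_1r_2$ on $\frac{1}{2}+\im\mathbb{R}$ with $\delta(\infty)=1$, and change unknown via ${\bf T}(z):={\bf X}(z)\delta(z)^{-\sigma_3}$. This collapses the diagonal part in the LDU factorization of \eqref{b:12} to the identity, reducing the jump of ${\bf T}$ on the critical line to a pure product of a lower triangular factor (carrying $r_1(z)t^{-z}$) and an upper triangular factor (carrying $r_2(z)t^z$). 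I would then open lenses, pushing the lower factor into the right strip $\frac{1}{2}<\Re z<\frac{1}{2}+\epsilon$ where $|t^{-z}|=t^{-\Re z}$ decays as $t\to+\infty$, and the upper factor into the symmetric left strip. Provided the lens contours remain inside the analyticity strips of $r_1,r_2$, the residual jumps of the new unknown ${\bf S}(z)$ are supported on two vertical lines off the critical line and are pointwise $\mathbb{I}+O(t^{-\epsilon})$ in $L^\infty\cap L^2$.

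Standard small-norm Riemann-Hilbert theory then yields ${\bf S}(z)=\mathbb{I}+O(t^{-\epsilon}/(1+|z|))$ uniformly off the contours, so that unwinding gives
\begin{equation*}
	{\bf X}(z)=\big(\mathbb{I}+O(t^{-\epsilon})\big)\delta(z)^{\sigma_3}
\end{equation*}
away from the lenses. Reading off the $z^{-1}$ coefficient and using the expansion $\delta(z)=1+s(1)z^{-1}+O(z^{-2})$ with $s(1)=-\frac{1}{2\pi\im}\int\ln(1-r_1r_2)\,\d\lambda$ produces $X_1^{11}(t)=s(1)+O(t^{-\epsilon})$. Integrating the differential identity between a fixed reference $t_0$ and $t$ then delivers the leading behaviour $\ln F(t)=-s(1)\ln t+C+O(t^{-\epsilon})$. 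The Szeg\H{o}-type double integral $\int_1^\infty s(x)\hat s(x)\ln x\,\d x$ in \eqref{b:29a} emerges from matching the next Taylor coefficient of $\delta(z)^{\sigma_3}$ at infinity with the subleading term of ${\bf X}(z)-\mathbb{I}$, once rewritten in Mellin coordinates via the definitions of $s$ and $\hat s$.

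The main obstacle is the identification of the anomaly term $-\frac{1}{4\pi\im}\int\{r_1'/r_1-r_2'/r_2\}\ln(1-r_1r_2)\,\d\lambda$ in the constant $C$. This term vanishes in the symmetric reduction $r_2=-r_1$ (thereby recovering the classical shape \eqref{i:27}) and arises only from the \emph{asymmetry} of the lens contributions on the two sides of the critical line. An effective route is to introduce a deformation $r_j\mapsto\lambda r_j,\lambda\in[0,1]$, differentiate the whole steepest descent chain in $\lambda$, and integrate back, thereby expressing the anomaly as a residue-type calculation on $\frac{1}{2}+\im\mathbb{R}$; alternatively, the second identity $t\frac{\d}{\d t}\{t\frac{\d}{\d t}\ln F\}=X_1^{12}X_1^{21}$ of Corollary \ref{impcor2}, combined with the leading $t$-asymptotics of $X_1^{12}$ and $X_1^{21}$ obtained from the steepest descent, provides a route to cross-compute and consistency-check the anomaly. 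The remaining bookkeeping --- verifying that $r(t)$ is $t$-differentiable with $|r(t)|\leq ct^{-\epsilon/2}$ --- follows routinely from the small-norm framework once the lens-deformation bounds have been pinned down.
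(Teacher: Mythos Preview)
Your nonlinear steepest descent chain (Szeg\H{o}/$g$-function conjugation, lens opening into the analyticity strip, small-norm solution) is correct and matches the paper's transformations ${\bf X}\mapsto{\bf Y}\mapsto{\bf S}$ in \eqref{bp65}--\eqref{bp67}. Likewise, reading off $X_1^{11}=s(1)+o(1)$ and integrating the differential identity gives $\ln F(t)=-s(1)\ln t+C+r(t)$ with the stated error control; the paper does exactly this in \eqref{bp80}.

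The gap is the determination of $C$. Your claim that the double integral $\int_1^\infty s(x)\hat s(x)\ln x\,\d x$ ``emerges from matching the next Taylor coefficient of $\delta(z)^{\sigma_3}$ at infinity with the subleading term of ${\bf X}(z)-\mathbb{I}$'' does not work: the $z^{-2}$ coefficient of $\delta(z)^{\sigma_3}$ is $g_2+\tfrac{1}{2}g_1^2$, a single integral in $\lambda$ plus $\tfrac{1}{2}s(1)^2$, not the bilinear Mellin integral in \eqref{b:29a}. More fundamentally, the subleading coefficients $X_2^{jk}$ encode the higher functions $p_1,q_1,q_1^\ast$ of Corollary \ref{deeper2}, i.e.\ information about $t$-derivatives of $\ln F$, and carry no information about the \emph{integration constant} $C=s(1)\ln t_0+\ln F(t_0)$. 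Your fallback, the second differential identity $t\frac{\d}{\d t}\{t\frac{\d}{\d t}\ln F\}=X_1^{12}X_1^{21}$, suffers the same defect: it introduces a second undetermined constant.

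What is actually needed is your ``$\lambda$-deformation'' $r_j\mapsto\lambda r_j$, but applied to compute the \emph{entire} constant, not just the anomaly term. This is precisely the paper's strategy: it $\gamma$-deforms $K_t\mapsto\gamma K_t$, converts RHP \ref{ggam1} to the integrable-operator RHP \ref{ggam2}, and uses the exact representation $\ln F(t)=-\tfrac{1}{2}\int_0^1[\int_{\Sigma_{\bf T}}R_{t,\gamma}(\lambda,\lambda)\,\d\lambda]\frac{\d\gamma}{\gamma}$ in \eqref{bp62}. The resolvent diagonal $R_{t,\gamma}(\lambda,\lambda)=F_1'G_1+F_2'G_2$ is then expressed via the steepest-descent solution, and the inner integral splits into four pieces \eqref{bp72}: one vanishes, one is $O(\gamma/\ln t)$, and the two lens-boundary integrals \eqref{bp75}, \eqref{bp78} produce, after $\gamma$-integration, \emph{both} the double integral (from the $g'$-contributions via $(z-\lambda)^{-2}=\int_1^\infty x^{\pm(\lambda-z)-1}\ln x\,\d x$) and the anomaly term (from the asymmetry between $\int r_1'r_2/(1-\gamma r_1r_2)$ on one side and $\int r_1r_2'/(1-\gamma r_1r_2)$ on the other). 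Only afterwards is the differential identity invoked, solely to upgrade the error from $O(1/\ln t)$ to $O(t^{-\epsilon/2})$.
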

To the best of our knowledge there is no comparable result available in the literature on the $t\rightarrow+\infty$ asymptotic behavior of Fredholm determinants of multiplicative composition operators, i.e. \eqref{b:29a} seemingly constitutes the first general Akhiezer-Kac theorem for determinants with kernels of type \eqref{b:2}.

\begin{rem} Our proof workings for Theorem \ref{theo4a} allow us to compute the $t\rightarrow+\infty$ asymptotics of the specialized determinant $F(t,\gamma),\gamma\in[0,1]$ that appears in \eqref{b:21},\eqref{b:22},\eqref{b:23}, subject to the constraint \eqref{b:28a}. Moreover, using the approach of \cite[Section $7.2.1$]{BB}, we can compute the $t\rightarrow+\infty$ asymptotics of $\omega(t,\gamma)$ appearing in Theorem \ref{theo4} and so the $t\rightarrow+\infty$ asymptotics of \eqref{b:25},\eqref{b:26} and \eqref{b:27} become accessible. We leave all relevant details once more to the dedicated reader.
\end{rem}

\subsection{Multiplicative composition, part 3}\label{seci26} We now apply the results of Section \ref{seci24} and \ref{seci25} to one composition operator occurring in random matrix theory, namely the one that underlies \eqref{i:8}.

\subsubsection{The Bessel kernel} Let $J=\mathbb{R}_+$ and choose
\begin{equation*}
	\phi(x)=\psi(x)=\frac{1}{2}J_{\alpha}(\sqrt{x}),\ \ \ x\in\mathbb{R}_+,\ \ \ \alpha>-1
\end{equation*}
The Bessel function of the first kind with square root argument and index $\alpha>-1$ is of order $\mathcal{O}(x^{\alpha/2})$ at zero and smooth for $x>0$, so all dominance and integrability assumptions at zero in Section \ref{seci24} and \ref{seci25} are satisfied. Moreover, the symmetric integral operator $K_t$ in \eqref{b:2} with kernel
\begin{equation}\label{examp4}
	K_t(x,y)=\frac{t}{4}\int_0^1J_{\alpha}(\sqrt{xzt})J_{\alpha}(\sqrt{zyt})\,\d z,
\end{equation}
corresponding to the Bessel choice, is trace class on $L^2(0,1)$ and $I-\gamma K_t$ is invertible on the same space for all $(t,\gamma)\in\mathbb{R}_+\times[0,1]$, see for instance \cite{TW2}. However, the Bessel function with square root argument is not in $H_{\circ}^{1,1}(\mathbb{R}_+)$, hence we cannot use Theorem \ref{theo3}. But, similar to the earlier Airy kernel example, the below ad hoc arguments show that the outcome of Theorem \ref{theo3} for the Bessel kernel remains valid, see Corollary \ref{BeCor}. On the other hand, Theorem \ref{theo4a} on the asymptotic behavior of $F(t)$ as $t\rightarrow+\infty$ famously fails for the Bessel kernel, compare \cite[$(5)$]{E}.
\begin{cor}\label{BeCor} The Fredholm determinant of the Bessel kernel \eqref{examp4} equals
\begin{equation*}
	F(t)=\exp\left[-\int_0^t\ln\Big(\frac{t}{s}\Big)\big(q_0(s)\big)^2\,\d s\right],\ \ \ t\in\mathbb{R}_+,
\end{equation*}
where
\begin{equation*}
	q_0(t)=\big((I-K_t)^{-1}\tau_t\phi\big)(1)=-\lim_{\substack{z\rightarrow\infty\\ z\notin\frac{1}{2}+\im\mathbb{R}}}\big(zX^{21}(z;t)\big),\ \ \ t\in\mathbb{R}_+
\end{equation*}
is expressed in terms of the solution ${\bf X}(z)={\bf X}(z;t)=[X^{mn}(z;t)]_{m,n=1}^2\in\mathbb{C}^{2\times 2}$ of RHP \ref{masterb} subject to
\begin{equation*}
	r_1(z)=2^{2z-1}\frac{\Gamma(\frac{\alpha}{2}+z)}{\Gamma(\frac{\alpha}{2}-z+1)},\ \ \ \ \ \ r_2(z)=2^{1-2z}\frac{\Gamma(\frac{\alpha}{2}-z+1)}{\Gamma(\frac{\alpha}{2}+z)},\ \ z\in\frac{1}{2}+\im\mathbb{R},
\end{equation*}
in terms of Euler's gamma function $w=\Gamma(z)$, and where \eqref{b:14} is valid only in certain directions, with ${\bf X}(z)$ having at worst polynomial growth at $z=\infty$. Moreover, $q_0(t)$ solves the ODE boundary value problem
\begin{equation}\label{b:29}
	t\left(q_0^2-\frac{1}{4}\right)\frac{\d}{\d t}\left(t\frac{\d q_0}{\d t}\right)=q_0\left(t\frac{\d q_0}{\d t}\right)^2+\frac{1}{16}\big(t-\alpha^2\big)q_0+tq_0^3\left(q_0^2-\frac{1}{2}\right),\ \ \ \ \ q_0(t)\sim\frac{t^{\frac{1}{2}\alpha}}{2^{1+\alpha}\Gamma(1+\alpha)},\ \ t\downarrow 0,
\end{equation}
identity \eqref{b:25} matches exactly \cite[$(3.14)$]{F0}, \eqref{b:26} is \cite[$(1.31)$]{F0} and \eqref{b:27} is \cite[$(1.33)$]{F0} (with $\tilde{q}=2q_0$ in \cite{F0}). In addition, \eqref{i:8} is the special case of \eqref{b:26} when $\gamma=1$ and $r\equiv 2q_0$, after taking a square root.
\end{cor}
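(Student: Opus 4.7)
The plan closely parallels the proof of Corollary \ref{AiCor}, replacing Fourier analysis with Mellin analysis and Painlev\'e-II with the hard-edge Painlev\'e system controlling the Bessel kernel. First I would compute the Mellin transforms \eqref{b:13}. By the classical Hankel--Mellin identity (cf.\ \cite[$10.22.43$]{NIST}) together with the change of variable $y=x^2$, for $\alpha>-1$ and $z\in\tfrac{1}{2}+\im\mathbb{R}$,
\begin{equation*}
	r_1(z)=\tfrac{1}{2}\int_0^\infty y^{z-1}J_\alpha(\sqrt y)\,\d y=\int_0^\infty x^{2z-1}J_\alpha(x)\,\d x=2^{2z-1}\frac{\Gamma(\alpha/2+z)}{\Gamma(\alpha/2-z+1)},
\end{equation*}
and the formula for $r_2(z)$ is obtained from the same computation with $z$ replaced by $1-z$. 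A direct calculation then yields the striking identity $r_1(z)r_2(z)\equiv 1$ on the critical line, reflecting the Bessel reflection relation.

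Next I would address the RHP itself. Uniqueness of the solution to RHP \ref{masterb} with these coefficients is furnished by Lemma \ref{unique3}. Theorem \ref{theo3} cannot be invoked, however, because $\phi(x)=\tfrac{1}{2}J_\alpha(\sqrt x)$ decays only like $x^{-1/4}$ at infinity and thus does not lie in $H_\circ^{1,1}(\mathbb{R}_+)$. I would instead establish existence and the link with $F(t)$ by transformation to a known problem: factor the jump matrix as in Remark \ref{IIKScon2}, then deform the upper and lower triangular factors off the critical line, exploiting the meromorphic extension of $r_1,r_2$ as ratios of Gamma functions and the decay of $t^{\pm z}$ in the appropriate half-planes. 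After the rescaling $z\mapsto 2z$, the transformed RHP is expected to coincide with the hard-edge Bessel Riemann--Hilbert problem of \cite{Gir1,Gir2} (the multiplicative analogue of the AiO-RHP of \cite{BerCa0}), from which the encoding of the Bessel Fredholm determinant and the formula $q_0(t)=-\lim zX^{21}(z;t)$ can be transferred.

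The ODE \eqref{b:29} will then be read off from the Lax pair of the transformed RHP, whose isomonodromic $\sigma$-form is exactly the Tracy--Widom Painlev\'e-V equation for the Bessel kernel derived in \cite{TW2}; alternatively, one verifies \eqref{b:29} by direct algebraic manipulation of that known $\sigma$-form. The boundary condition follows from the trace bound $\|K_t\|_1\to 0$ as $t\downarrow 0$, which gives $q_0(t)=(\tau_t\phi)(1)+o(1)=\tfrac{1}{2}J_\alpha(\sqrt t)+o(1)$, whose claimed leading behaviour comes from the power-series expansion of $J_\alpha$ near zero. Finally, the identities \eqref{b:25}--\eqref{b:27} applied with $\phi=\tfrac{1}{2}J_\alpha(\sqrt{\cdot})$ reproduce \cite[$(3.14)$, $(1.31)$, $(1.33)$]{F0} upon verifying the dominance hypotheses of Theorem \ref{theo4}, which is straightforward from uniform bounds on Bessel functions; \eqref{i:8} is then the $\gamma=1$ specialization of \eqref{b:26} after taking a square root and identifying $r\equiv 2q_0$.

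I expect the main obstacle to lie in the contour deformation of the second step, namely controlling $|r_j(z)t^{\pm z}|$ on the deformed contours via Stirling's asymptotics for Gamma-function ratios, and then rigorously identifying the deformed problem with the Girotti-type hard-edge Bessel RHP. Matching the explicit form of \eqref{b:29} with the classical Painlev\'e-V $\sigma$-form description of the Bessel kernel is a secondary, algebraic difficulty.
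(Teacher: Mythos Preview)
Your approach is essentially the paper's own: compute the Mellin data via \cite[$10.22.43$]{NIST}, note that Theorem \ref{theo3} is unavailable, factor the jump as in Remark \ref{IIKScon2}, deform the triangular factors off the critical line using the meromorphic Gamma-ratio structure, and identify the resulting problem with a known hard-edge RHP in the literature from which both the determinant link and the ODE are inherited.

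Two refinements. First, drop the rescaling $z\mapsto 2z$: that step in Corollary \ref{AiCor} was specific to the Airy normalization and has no counterpart here; the deformed problem is used directly. Second, the target RHP is not the one in \cite{Gir1,Gir2} but the Meijer-$G$/product-matrix RHP of \cite{CGS} (specifically their $j=3$ case, up to a $\sigma_2$-conjugation), and the paper imports the ODE \eqref{b:29} by citing \cite[$(1.22),(2.9)$]{CGS} together with \cite[$(1.5),(2.26)$]{TW2} rather than re-deriving a Lax pair. Your observation $r_1(z)r_2(z)\equiv 1$ is correct and worth recording, though the paper does not exploit it explicitly. Your anticipated difficulty---controlling $|r_j(z)t^{\pm z}|$ on the deformed contours via Stirling---is exactly the point, and it is dispatched simply by noting the pole locations of $\Gamma(\beta+z)$ and $\Gamma(\beta-z+1)$ relative to $\Re z=\tfrac12$.
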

\begin{proof} By \cite[$10.22.43$]{NIST}, for any $z\in\frac{1}{2}+\im\mathbb{R}$ and $\alpha>-1$,
\begin{equation*}
	\int_0^{\infty}y^{z-1}J_{\alpha}(\sqrt{y})\,\d y=4^z\frac{\Gamma(\frac{\alpha}{2}+z)}{\Gamma(\frac{\alpha}{2}-z+1)},\ \ \ \ \ \int_0^{\infty}y^{-z}J_{\alpha}(\sqrt{y})\,\d y=4^{1-z}\frac{\Gamma(\frac{\alpha}{2}-z+1)}{\Gamma(\frac{\alpha}{2}+z)},
\end{equation*}
so the coefficients $r_j(z)$ in \eqref{b:13} are indeed as listed above. Next, moving to RHP \ref{masterb} with the same choices for $r_j(z)$, but \eqref{b:14} valid only in certain directions, otherwise enforcing at most polynomial growth at $z=\infty$, we note that the same problem admits at most one solution by \cite[Proposition $3.1$]{FZ} and the same problem relates to \eqref{b:29} in the following way: factorize the underlying jump matrix \eqref{b:12}, with $\beta:=\frac{\alpha}{2}>-\frac{1}{2}$,
\begin{align*}
	\begin{bmatrix}1-r_1(z)r_2(z) & -r_2(z)t^z\\
	r_1(z)t^{-z} & 1\end{bmatrix}=\begin{bmatrix}1 & -r_2(z)t^z\\
	0 & 1\end{bmatrix}&\begin{bmatrix}1 & 0\\ r_1(z)t^{-z} & 1\end{bmatrix}\\
	&\equiv\begin{bmatrix}1 & -2^{1-2z}\frac{\Gamma(\beta-z+1)}{\Gamma(\beta+z)}t^z\\ 0 & 1\end{bmatrix}\begin{bmatrix}1 & 0\\ 2^{2z-1}\frac{\Gamma(\beta+z)}{\Gamma(\beta-z+1)}t^{-z} & 1\end{bmatrix},
\end{align*}
and define, noting the location of the poles of $\Gamma(\beta+z)$ in $(-\infty,\frac{1}{2})$ and the poles of $\Gamma(\beta-z+1)$ in $(\frac{1}{2},\infty)$,
\begin{equation}\label{b:30}
	{\bf Y}(z;t):={\bf X}(z;t)\begin{cases}\begin{bmatrix}1 & 0\\ -2^{2z-1}\frac{\Gamma(\beta+z)}{\Gamma(\beta-z+1)}t^{-z} & 1\end{bmatrix}, &z\in\Omega_+\smallskip\\
	\begin{bmatrix}1 & -2^{1-2z}\frac{\Gamma(\beta-z+1)}{\Gamma(\beta+z)}t^z\\ 0 & 1\end{bmatrix},&z\in\Omega_-\smallskip\\
	\mathbb{I},&\textnormal{else}
	\end{cases}
\end{equation}
with the domains shown in Figure \ref{fig2}. Observe that we do not explicitly indicate the $\beta$-dependency in our notations in \eqref{b:30}. The analytic and asymptotic properties of ${\bf Y}(z)$ are summarized below.
\begin{figure}[tbh]
\begin{tikzpicture}[xscale=0.7,yscale=0.7]
\draw [->] (-4,0) -- (4,0) node[below]{{\small $\Re z$}};
\draw [->] (0,-4.5) -- (0,4.5) node[left]{{\small $\Im z$}};
\draw [very thin, dashed, color=darkgray,-] (0.5,0) -- (2.5,3.46410161) node[right]{$\frac{\pi}{3}$};
\draw [very thin, dashed, color=darkgray,-] (0.5,0) -- (2.5,-3.46410161) node[right]{$\frac{5\pi}{3}$};
\draw [very thin, dashed, color=darkgray,-] (0.5,0) -- (-1.5,3.46410161) node[left]{$\frac{2\pi}{3}$};
\draw [very thin, dashed, color=darkgray,-] (0.5,0) -- (-1.5,-3.46410161) node[left]{$\frac{4\pi}{3}$};
\draw [fill=red, dashed,opacity=0.4] (0.5,0) -- (-1,2.598076211) arc (120:90:3cm) -- (0.5,0);
\draw [fill=red, dashed,opacity=0.4] (0.5,0) -- (-1,-2.598076211) arc (240:270:3cm) -- (0.5,0);
\draw [fill=blue, dashed,opacity=0.4] (0.5,0) -- (2,2.598076211) arc (60:90:3cm) -- (0.5,0);
\draw [fill=blue, dashed,opacity=0.4] (0.5,0) -- (2,-2.598076211) arc (300:270:3cm) -- (0.5,0);
\node [red,opacity=0.4] at (-0.5,3.5) {{\small$\Omega_+$}};
\node [red,opacity=0.4] at (-0.5,-3.5) {{\small$\Omega_+$}};
\node [blue,opacity=0.4] at (1.5,3.5) {{\small$\Omega_-$}};
\node [blue,opacity=0.4] at (1.5,-3.5) {{\small$\Omega_-$}};
\draw [thick, color=red, decoration={markings, mark=at position 0.5 with {\arrow{>}}}, postaction={decorate}] (0.5,0) -- (2.25,3.03108891);
\draw [thick, color=red, decoration={markings, mark=at position 0.5 with {\arrow{>}}}, postaction={decorate}] (0.5,0) -- (-1.25,3.03108891);
\draw [thick, color=red, decoration={markings, mark=at position 0.5 with {\arrow{>}}}, postaction={decorate}] (0.5,0) -- (2.25,-3.03108891);
\draw [thick, color=red, decoration={markings, mark=at position 0.5 with {\arrow{>}}}, postaction={decorate}] (0.5,0) -- (-1.25,-3.03108891);
\node [red] at (-1.3,2.3) {{\small$\Gamma_1$}};
\node [red] at (-1.4,-2.3) {{\small$\Gamma_2$}};
\node [red] at (2.4,2.3) {{\small$\Gamma_4$}};
\node [red] at (2.4,-2.3) {{\small$\Gamma_3$}};
\end{tikzpicture}
\caption{The oriented jump contour $\Sigma_{\bf Y}:=\Gamma_1\cup\Gamma_2\cup\Gamma_3\cup\Gamma_4$ in RHP \ref{PIII}. The four red lines intersect at $z=\frac{1}{2}$.}
\label{fig2}
\end{figure}
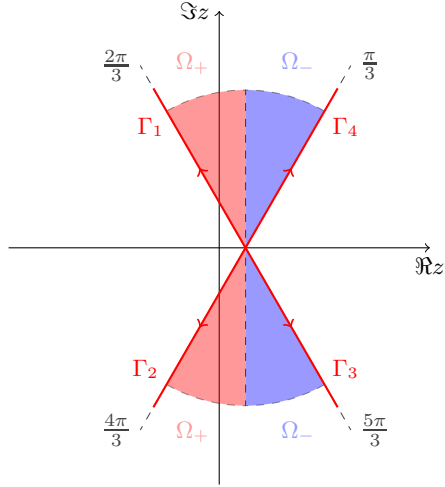

\begin{problem}\label{PIII} For every $t\in\mathbb{R}_+$ and $\alpha>-1$, the matrix-valued function ${\bf Y}(z)={\bf Y}(z;t)\in\mathbb{C}^{2\times 2}$ defined in \eqref{b:30} is such that
\begin{enumerate}
	\item[(1)] ${\bf Y}(z)$ is analytic for $z\in\mathbb{C}\setminus\Sigma_{\bf Y}$ and extends continuously from all sides to the oriented jump contour $\Sigma_{\bf Y}=\Gamma_1\cup\Gamma_2\cup\Gamma_3\cup\Gamma_4$ shown in Figure \ref{fig2}.
	\item[(2)] On $\Sigma_{\bf Y}$ the jump behavior of ${\bf Y}(z)$ reads as
	\begin{equation*}
		{\bf Y}_+(z)={\bf Y}_-(z)\e^{-\theta(z;t)\sigma_3}{\bf S}_k(z)\e^{\theta(z;t)\sigma_3},\ \ \ \ z\in\Gamma_k,\ \ k=1,2,3,4,
	\end{equation*}
	with $\theta(z;t):=(z-\frac{1}{2})\ln 2-\frac{z}{2}\ln t$ and the matrices
	\begin{equation*}
		{\bf S}_k(z):=\begin{bmatrix}1 & 0\\ s_k(z) & 1\end{bmatrix},\ \ k=1,2;\ \ \ \ \ \ \ {\bf S}_k(z):=\begin{bmatrix}1 & s_k(z)\\ 0 & 1\end{bmatrix},\ \ k=3,4,
	\end{equation*}
	defined with the multipliers $s_1(z):=\frac{\Gamma(\beta+z)}{\Gamma(\beta-z+1)},s_2(z):=-\frac{\Gamma(\beta+z)}{\Gamma(\beta-z+1)},s_3(z):=\frac{1}{s_1(z)},s_4(z):=\frac{1}{s_2(z)}$.
	\item[(3)] As $z\rightarrow\infty$ with $z\notin\Sigma_{\bf Y}\cup(\frac{1}{2}+\im\mathbb{R})$, using Stirling's formula for the ratio of gamma functions in \eqref{b:30},
	\begin{equation*}
		{\bf Y}(z)=\mathbb{I}+o(1).%{\bf Y}_1z^{-1}+o\big(z^{-1}\big),\ \ \ {\bf Y}_1={\bf Y}_1(t)={\bf X}_1(t).
	\end{equation*}
\end{enumerate}
\end{problem}
Observe that RHP \ref{PIII} is the RHP of \cite[page $2806,2807$]{CGS} with $j=3$ modulo a conjugation of ${\bf Y}(z)$ in \eqref{b:30} by $\sigma_2=\bigl[\begin{smallmatrix}0 & -\im\\ \im & 0\end{smallmatrix}\bigr]$. Thus the unique solvability of RHP \ref{PIII} for all $t\in\mathbb{R}_+$ and $\alpha>-1$ follows from \cite[Proposition $2.1$]{CGS} and from the aforementioned invertibility of $I-K_t$ on $L^2(0,1)$. Furthermore, using \cite[$(1.22),(2.9)$]{CGS} and \cite[$(1.5),(2.26)$]{TW2} we conclude that
\begin{equation*}
	q_0(t)=\big((I-K_t)^{-1}\tau_t\phi\big)(1)=-\lim_{\substack{z\rightarrow\infty\\ z\notin\Sigma_{\bf Y}\cup\frac{1}{2}+\im\mathbb{R}}}\big(zY^{21}(z;t)\big),\ \ \ t\in\mathbb{R}_+
\end{equation*}
solves \eqref{b:29}. This concludes our proof.
\end{proof}
\subsection{Outline of paper} The remaining sections of this paper are organized as follows. Because of our analysis of both, additive and multiplicative composition operators, it is natural to derive all algebraic, analytic and asymptotic results first for one class and afterwards address the remaining one. To this end, we devote Sections \ref{sec3}, \ref{sec4} and \ref{sec5} to operators with additive composition kernels of type \eqref{n1}. Precisely, in Section \ref{sec3} we prove the algebraic identities summarized in Lemma \ref{lem1}, Corollary \ref{gencoo} and Corollary \ref{highercor}. These follow by carefully exploiting the additive Hankel composition structure when $t$-differentiating first the Fredholm determinant \eqref{i:12} and afterwards the higher functions \eqref{n8},\eqref{n9} which arise quite naturally in this first step. Once the differential system \eqref{r2} has been worked out we then show how it can be encoded in the solution of RHP \ref{master2}. To the point, we address the unique solvability of the same problem in Section \ref{sec4} and afterwards construct its solution in \eqref{j14}. Our solution crucially depends on two distinguished Green's functions of a first order ODE, whose properties are derived with basic tools of Fourier analysis. Notably, our solution \eqref{j14} does not depend on Cauchy-type integrals commonly encountered in Riemann-Hilbert analysis. The simple proofs of Corollary \ref{impcor} and \ref{deeper} are given at the end of Section \ref{sec4}. Afterwards, in Section \ref{sec5}, we specialize the general setup \eqref{n1} to the class of symmetric Hankel composition kernels for which $\phi=\psi$. In this setup our determinant \eqref{i:12} factorizes with the help of \eqref{i:15} and we can subsequently evaluate \eqref{i:16},\eqref{i:17} and \eqref{i:18} in closed form. While proving Theorem \ref{theo2} we rely on the auxiliary identities \eqref{c3} and \eqref{c10} which are implicit in the non-rigorous theoretical physics works \cite{FS,DF,K}. Lastly, we address the asymptotic behavior of $F(t)$ in \eqref{i:12} as $t\rightarrow\pm\infty$ in Section \ref{sec5}. The crude right tail estimate \eqref{i:23} requires no further explanation whereas the left tail expansion in \eqref{i:25} follows from a nonlinear steepest descent analysis of RHP \ref{master2}. We impose the explicit, and easy to work with, constraint \eqref{i:24} which yields \eqref{i:25} in a rather straightforward fashion, working only with a $g$-function transformation and analytic continuation in the asymptotic analysis, see Section \ref{sec5}. We note in particular that \eqref{i:25} remains valid with properly adjusted error estimate, and follows from our analysis in Section \ref{sec5}, as soon as $r_j$ in \eqref{s9} admit analytic extensions to a horizontal strip of the form $\{z\in\mathbb{C}:\,\Im z\in(-\delta,\delta)\}$ with $\delta>0$ such that $|r_1(z)r_2(z)|<1$ and $r_j(z)=o(z^{-1})$ as $|z|\rightarrow\infty$ in the same strip. With the conclusion of Section \ref{sec5} we then address operators with multiplicative composition kernels of the form \eqref{b:2}. Unlike for \eqref{n1}, there are no comparable algebraic, analytic or asymptotic results available, even in theoretical physics, so the results of Sections \ref{seci24} and \ref{seci25} are certainly novel. In Section \ref{sec6} we derive the analogues of Lemma \ref{lem1}, Corollary \ref{gencoo} and Corollary \ref{highercor} in the form of Lemma \ref{lem2}, Corollary \ref{corb1} and Corollary \ref{corb2}. Due to the multiplicative composition structure, our proofs of the same results are somewhat more complicated, in particular defining the higher functions \eqref{b:9},\eqref{b:10} is less canonical than in \eqref{n8} and \eqref{n9}. However, system \eqref{b:11} is seemingly the only differential system that can be characterized through a simple RHP of type \ref{masterb}, other choices in \eqref{b:9},\eqref{b:10} would lead to more complicated Riemann-Hilbert problems. The unique solvability of RHP \ref{masterb} is established in Section \ref{sec7}, again relying on two Green's functions, see \eqref{bp19}. Afterwards, in Section \ref{sec8}, we address the symmetric version of \eqref{b:2} and compute the specialized functions \eqref{b:21},\eqref{b:22} and \eqref{b:23} in terms of Riemann-Hilbert data and, equivalently, in terms of the simpler determinant \eqref{b:20}. The second part of Section \ref{sec8} is devoted to the asymptotic analysis of \eqref{b:3} as $t\downarrow 0$ and $t\rightarrow+\infty$. Note that the nonlinear steepest descent analysis, as $t\rightarrow+\infty$, is similar to the previous $t\rightarrow-\infty$ analysis of \eqref{i:12} in Section \ref{sec5}. And, once more, \eqref{b:29a} holds with properly adjusted error estimate, by our methods, as soon as $r_j$ in \eqref{b:13} admit analytic extensions to a vertical strip of the form $\{z\in\mathbb{C}:\,\Re z\in(\frac{1}{2}-\delta,\frac{1}{2}+\delta)\}$ with $\delta>0$ such that $|r_1(z)r_2(z)|<1$ and $r_j(z)=o(z^{-1})$ as $|z|\rightarrow\infty$ in the same strip. Our paper concludes with Appendix \ref{appA} which summarizes a series of algebraic identities used in the proofs of Theorem \ref{theo1} and \ref{theo3}.

\begin{rem} As mentioned in Section \ref{sec1}, integrable operators have been celebrated in the past decades, but they were not the first integral operators whose Fredholm determinants can be analyzed via Riemann-Hilbert methods! In fact, the first occurrence of a Fredholm determinant accessible via Riemann-Hilbert techniques, see \cite[Chapter $2$, $\S 4$]{FT}, dates back to the works of Gelfand, Levitan, and Marchenko in the early 1950s on the inverse scattering problem, a topic closely related to the Zakharov-Shabat problems in this paper. Namely, in the inverse scattering problem one needs to solve an integral equation with kernel of additive Hankel type. Moreover, by Dyson's formula \cite{Dy}, the solution of the inverse problem constitutes in evaluating the second logarithmic derivative of the Fredholm determinant of the same integral operator. Thus the main focus of our paper relates back to the pioneering works of Gelfand-Levitan-Marchenko in integrable systems and scattering theory.
\end{rem}

\begin{rem}\label{PM} The additive chain \eqref{r2} of differential recurrence relations bears a striking resemblance to the multiplicative one in \eqref{b:11}. Here is one way to think about this feature: if $S\in\mathcal{L}(L^2(0,1),L^2(\mathbb{R}_+))$ is given by $(Sf)(x):=\e^{-\frac{1}{2}x}f(\e^{-x}),x\in\mathbb{R}_+$, then $S$ is invertible and with $K_t$ as in \eqref{b:2},
\begin{equation*}
	\big(SK_{\e^{-t}}S^{-1}f\big)(x)=\int_0^{\infty}\e^{-\frac{1}{2}x}K_{\e^{-t}}(\e^{-x},\e^{-y})\e^{-\frac{1}{2}y}f(y)\,\d y,\ \ \ f\in L^2(\mathbb{R}_+),\ \ \ t\in\mathbb{R},
\end{equation*}
where
\begin{equation*}
	\e^{-\frac{1}{2}x}K_{\e^{-t}}(\e^{-x},\e^{-y})\e^{-\frac{1}{2}y}\stackrel{\eqref{b:2}}{=}\int_0^{\infty}\Phi(x+z+t)\Psi(z+y+t)\,\d z,\ \ \ \ \ (x,y)\in\mathbb{R}_+^2,
\end{equation*}
given in terms of $\Phi(x):=\phi(\e^{-x})\e^{-\frac{1}{2}x}$ and $\Psi(x):=\psi(\e^{-x})\e^{-\frac{1}{2}x}$. Thus, on $L^2(\mathbb{R}_+)$,
\begin{equation}\label{PM1}
	SK_{\e^{-t}}S^{-1}=K_t^a,\ \ \ \ \ t\in\mathbb{R},
\end{equation}
where $K_t^a\in\mathcal{L}(L^2(\mathbb{R}_+))$ stands for $K_t$ in \eqref{n1} but with $\phi\mapsto\Phi$ and $\psi\mapsto\Psi$ therein. Consequently, by Lidskii's theorem, cf. \cite[Corollary $3.8$]{S}, the Fredholm determinant of $K_{\e^{-t}}$ on $L^2(0,1)$, being of multiplicative Hankel composition type, is equal to the Fredholm determinant of $K_t^a$ on $L^2(\mathbb{R}_+)$, the latter being of additive Hankel composition type. Equipped with \eqref{PM1} and \eqref{r2}, say, one could now derive \eqref{b:11} directly from \eqref{r2} - and vice versa. However, we prefer to work with \eqref{b:9} and \eqref{b:10}, rather than with the through \eqref{PM1} transformed quantities \eqref{n8},\eqref{n9}, because \eqref{b:9} and \eqref{b:10} are more natural from an applied viewpoint, for instance in the context of the classical Bessel kernel. Nevertheless, \eqref{PM1} has the potential to derive most, if not all, results for multiplicative Hankel composition operators in this paper from their additive counterparts - and vice versa.

\end{rem}
%%%%%%%%%%%%%%%%%%%%%%%%%%%%%%%%%%%%%%%%%%%%%%%%%%%%%%%%%%%%%%%%%%%%%%%%%%%%

\section{Proof of Lemma \ref{lem1}, Corollary \ref{gencoo} and Corollary \ref{highercor}}\label{sec3}
In this section we prove the above mentioned three algebraic results about Fredholm determinants \eqref{i:12} of \textit{additive} Hankel composition operators with kernels \eqref{n1}.
\begin{proof}[Proof of Lemma \ref{lem1}] $I-K_t,t\in J$ is invertible on $L^2(\mathbb{R}_+)$ and $K_t,t\in J$ trace class on the same space, so $F(t)\neq 0$ on $J$ and thus $J\ni t\mapsto\ln F(t)$ is well-defined, with a branch for the logarithm that is analytically continued along the orbit of $F(t)$. Next, by the $L^2(\mathbb{R}_+)$ dominance of $\tau_t\phi,\tau_t\psi,D\tau_t\phi$ and $D\tau_t\psi$ and the continuous differentiability of $\phi,\psi$, $J\ni t\mapsto K_t$ is differentiable with $\frac{\d}{\d t}K_t=-\tau_t\phi\otimes\tau_t\psi,t\in J$, using \eqref{infbeh}. In fact, $\frac{\d}{\d t}K_t\in\mathcal{C}_0(L^2(\mathbb{R}_+))\subset\mathcal{C}_1(L^2(\mathbb{R}_+))$ and so by Jacobi's formula, cf. \cite[$(3.12)$]{Wid},
\begin{equation}\label{n2}
	\frac{\d}{\d t}\ln F(t)=-\tr_{L^2(\mathbb{R}_+)}\left((I-K_t)^{-1}\frac{\d K_t}{\d t}\right)=\tr_{L^2(\mathbb{R}_+)}\big((I-K_t)^{-1}\tau_t\phi\otimes\tau_t\psi\big),\ t\in J.
\end{equation}
Moving ahead, $\tau_t\phi,\tau_t\psi\in W^{1,2}(\mathbb{R}_+)$ by dominance, so $J\ni t\mapsto\tau_t\phi\otimes\tau_t\psi\in\mathcal{C}_1(L^2(\mathbb{R}_+))$ is differentiable\footnote{By the dominance assumption, differentiation with respect to the dominated variable commutes with integration with respect to the same. We shall use the same fact numerous times in the following.},
\begin{equation}\label{j1}
	\frac{\d}{\d t}(\tau_t\phi\otimes\tau_t\psi)=D\tau_t\phi\otimes\tau_t\psi+\tau_t\phi\otimes D\tau_t\psi\in\mathcal{C}_0(L^2(\mathbb{R}_+))\subset\mathcal{C}_1(L^2(\mathbb{R}_+)).
\end{equation}
 Likewise, $J\ni t\mapsto (I-K_t)^{-1}\in\mathcal{L}(L^2(\mathbb{R}_+))$ is differentiable,
\begin{equation}\label{j2}
	\frac{\d}{\d t}(I-K_t)^{-1}=(I-K_t)^{-1}\frac{\d K_t}{\d t}(I-K_t)^{-1}=-(I-K_t)^{-1}(\tau_t\phi\otimes\tau_t\psi)(I-K_t)^{-1}\in\mathcal{C}_0(L^2(\mathbb{R}_+)),
\end{equation} 
since $I-K_t,t\in J$ is invertible and $J\ni t\mapsto K_t$ differentiable. So, \eqref{j1} and \eqref{j2} together imply that $J\ni t\mapsto(I-K_t)^{-1}\tau_t\phi\otimes\tau_t\psi\in\mathcal{C}_1(L^2(\mathbb{R}_+))$ is differentiable and
\begin{align}
	\frac{\d}{\d t}\big((I-K_t)^{-1}&\,\tau_t\phi\otimes\tau_t\psi\big)=(I-K_t)^{-1}\tau_t\phi\otimes D\tau_t\psi\nonumber\\
	&\hspace{0.5cm}+\Big[(I-K_t)^{-1}D\tau_t\phi-(I-K_t)^{-1}(\tau_t\phi\otimes\tau_t\psi)(I-K_t)^{-1}\tau_t\phi\Big]\otimes\tau_t\psi\in\mathcal{C}_1(L^2(\mathbb{R}_+)).\label{j3}
\end{align}
Consequently, using that by \cite[$(3.4)$]{Bris}
\begin{equation}\label{j4}
	\tr_{L^2(\mathbb{R}_+)}\big((I-K_t)^{-1}\tau_t\phi\otimes\tau_t\psi\big)=\int_0^{\infty}\big((I-K_t)^{-1}\tau_t\phi\big)(x)(\tau_t\psi)(x)\,\d x,
\end{equation}
since $\mathbb{R}_{\geq 0}^2\ni(x,y)\mapsto\big((I-K_t)^{-1}\tau_t\phi\big)(x)(\tau_t\psi)(y)$ is continuous\footnote{The map $\mathbb{R}_{\geq 0}\ni x\mapsto((I-K_t)^{-1}\tau_t\phi)(x)$ is locally $\frac{1}{2}$-H\"older continuous for $t\in J$ by invertibility of $I-K_t,t\in J$ and by \eqref{n00},\eqref{intbeh}. This shows, in particular, that $J\ni t\mapsto q_0(t)$ is well-defined.}, the dominated convergence theorem yields
\begin{align}
	\frac{\d^2}{\d t^2}\ln &\,F(t)\stackrel[\eqref{n2}]{\eqref{j4}}{=}\frac{\d}{\d t}\int_0^{\infty}\big((I-K_t)^{-1}\tau_t\phi\big)(x)(\tau_t\psi)(x)\,\d x=\int_0^{\infty}\frac{\d}{\d t}\Big[\big((I-K_t)^{-1}\tau_t\phi\big)(x)(\tau_t\psi)(x)\Big]\,\d x\nonumber\\
	&\hspace{0.72cm}=\tr_{L^2(\mathbb{R}_+)}\Big(\frac{\d}{\d t}\big((I-K_t)^{-1}\tau_t\phi\otimes\tau_t\psi\big)\Big)\stackrel{\eqref{j3}}{=}\tr_{L^2(\mathbb{R}_+)}\big((I-K_t)^{-1}\tau_t\phi\otimes D\tau_t\psi\big)\nonumber\\
	&\hspace{1.5cm}+\tr_{L^2(\mathbb{R}_+)}\Big(\Big[(I-K_t)^{-1}D\tau_t\phi-(I-K_t)^{-1}(\tau_t\phi\otimes\tau_t\psi)(I-K_t)^{-1}\tau_t\phi\Big]\otimes\tau_t\psi\Big),\ t\in J.\label{j4a}
\end{align}
This shows that $J\ni t\mapsto\ln F(t)$ is twice-differentiable. In order to simplify the last expression, we use that all involved integral operators have continuous kernels, so in particular, integrating by parts,
\begin{align}
	\tr_{L^2(\mathbb{R}_+)}\big((I-K_t)^{-1}\tau_t\phi\otimes D\tau_t\psi\big)=&\,\int_0^{\infty}\big((I-K_t)^{-1}\tau_t\phi)(x)(D\tau_t\psi)(x)\,\d x\nonumber\\
	=&\,-q_0(t)(\tau_t\psi)(0)-\int_0^{\infty}\big(D(I-K_t)^{-1}\tau_t\phi\big)(x)(\tau_t\psi)(x)\,\d x.\label{j5}
\end{align}
Note that $\mathbb{R}_+\ni x\mapsto((I-K_t)^{-1}\tau_t\phi)(x)$ is continuously differentiable by dominance, \eqref{n00} and by invertibility of $I-K_t,t\in J$ on $L^2(\mathbb{R}_+)$. Moreover, by \eqref{n00},\eqref{intbeh} and dominance, $(I-K_t)^{-1}\tau_t\phi\in W^{1,2}(\mathbb{R}_+)$ and $((I-K_t)^{-1}\tau_t\phi)(x)\rightarrow 0$ as $x\rightarrow+\infty$ by \eqref{infbeh}. Thus, \eqref{j5} is justified and we can rewrite it as, cf. \cite[$(3.4)$]{Bris},
\begin{equation*}
	\tr_{L^2(\mathbb{R}_+)}\big((I-K_t)^{-1}\tau_t\phi\otimes D\tau_t\psi\big)=-q_0(t)(\tau_t\psi)(0)-\tr_{L^2(\mathbb{R}_+)}\big(D(I-K_t)^{-1}\tau_t\phi\otimes\tau_t\psi\big),\ \ t\in J.
\end{equation*}
We now return to \eqref{j4a}, insert the last identity,
\begin{align}
	\frac{\d^2}{\d t^2}\ln F(t)=-q_0(t)(\tau_t\psi)(0)-\tr_{L^2(\mathbb{R}_+)}\bigg(&\,\Big[D(I-K_t)^{-1}\tau_t\phi-(I-K_t)^{-1}D\tau_t\phi\nonumber\\
	&\,+(I-K_t)^{-1}(\tau_t\phi\otimes\tau_t\psi)(I-K_t)^{-1}\tau_t\phi\Big]\otimes\tau_t\psi\bigg),\ \ t\in J,\label{j6}
\end{align}
afterwards perform algebra,
\begin{equation}\label{bet5}
	D(I-K_t)^{-1}\tau_t\phi-(I-K_t)^{-1}D\tau_t\phi=(I-K_t)^{-1}(DK_t-K_tD)(I-K_t)^{-1}\tau_t\phi\in L^2(\mathbb{R}_+),
\end{equation}
and then simplify with the help of the identities
\begin{equation}\label{bet5a}
	\frac{\partial}{\partial x}K_t(x,y)+\frac{\partial}{\partial y}K_t(x,y)\stackrel{\eqref{infbeh}}{=}-(\tau_t\phi\otimes\tau_t\psi)(x,y),\ \ \ \ \ \ \lim_{y\rightarrow+\infty}K_t(x,y)=0\ \ \ \ \forall\,x\in\mathbb{R}_+.
\end{equation}
The intermediate result equals
\begin{equation}\label{bet6}
	\big((DK_t-K_tD)(I-K_t)^{-1}\tau_t\phi\big)(x)=K_t(x,0)q_0(t)-(\tau_t\phi)(x)\int_0^{\infty}(\tau_t\psi)(y)\big((I-K_t)^{-1}\tau_t\phi\big)(y)\,\d y,
\end{equation}
where $K_t(x,0)$ is shorthand for $K_t(x,0):=\lim_{y\downarrow 0}K_t(x,y)$, and so with \eqref{bet6} back in \eqref{j6}, using that all involved trace class operators have continuous kernels,
\begin{equation*}
	\frac{\d^2}{\d t^2}\ln F(t)=-q_0(t)\left[(\tau_t\psi)(0)+\int_0^{\infty}\big((I-K_t)^{-1}K_t\big)(x,0)(\tau_t\psi)(x)\,\d x\right],\ \ t\in J.
\end{equation*}
Lastly, it remains to use $(I-K_t)^{-1}=I+(I-K_t)^{-1}K_t$ on $L^2(\mathbb{R}_+)$ and $(I-K_t)^{-1}(x,0)=(I-K_t^{\ast})^{-1}(0,x)$ for $x>0$, with $(I-K_t)^{-1}(x,y)$ denoting the kernel of $(I-K_t)^{-1}$,
\begin{equation*}
	\frac{\d^2}{\d t^2}\ln F(t)=-q_0(t)\int_0^{\infty}(I-K_t)^{-1}(x,0)(\tau_t\psi)(x)\,\d x=-q_0(t)q_0^{\ast}(t),\ \ t\in J,
\end{equation*}
as claimed in \eqref{r1}. Note that $q_0^{\ast}$ is well-defined by \eqref{n00},\eqref{intbeh} and by invertibility of $I-K_t$. This completes our proof.
\end{proof}
\begin{remark} The reader can notice that the additive Hankel composition structure is used precisely in the identity $\frac{\d}{\d t}\tau_tf=D\tau_tf$. As such we use the additive composition structure to get $\frac{\d}{\d t}K_t=-\tau_t\phi\otimes\tau_t\psi$ and then again in the derivation of \eqref{j3} and \eqref{bet5a}.
\end{remark}
\begin{proof}[Proof of Corollary \ref{gencoo}] We have, see Lemma \ref{lem1},
\begin{equation*}
	\frac{\d^2}{\d t^2}\ln F(t)=-q_0(t)q_0^{\ast}(t),\ \ \ t\in J.
\end{equation*}
By our workings, $J\ni t\mapsto(I-K_t)^{-1}\in\mathcal{L}(W^{1,2}(\mathbb{R}_+))$ is differentiable with $\frac{\d}{\d t}(I-K_t)^{-1}\in\mathcal{C}_0(L^2(\mathbb{R}_+))$, compare \eqref{j2}. But $t\mapsto\tau_t\in\mathcal{L}(W^{1,2}(\mathbb{R}_+))$ is also differentiable with 
\begin{equation*}
	\frac{\d}{\d t}\tau_t=D\tau_t=\tau_tD\in\mathcal{L}\big(W^{1,2}(\mathbb{R}_+),L^2(\mathbb{R}_+)\big).
\end{equation*}
Hence, $J\ni t\mapsto(I-K_t)^{-1}\tau_t\in\mathcal{L}(W^{1,2}(\mathbb{R}_+))$ is differentiable and so $J\ni t\mapsto q_0(t)=((I-K_t)^{-1}\tau_t\phi)(0)$ in particular continuous. Repeating the same argument with $\phi\leftrightarrow\psi$ yields the corresponding result for $J\ni t\mapsto q_0^{\ast}(t)$. Now let $J=\mathbb{R}$, then $q_0q_0^{\ast}\in L^1(\mathbb{R}_+)$ by assumption, and so
\begin{equation}\label{j7}
	\lim_{s\rightarrow+\infty}\frac{F'(s)}{F(s)}-\frac{\d}{\d t}\ln F(t)=\int_t^{\infty}\frac{\d^2}{\d s^2}\ln F(s)\,\d s\stackrel{\eqref{r1}}{=}-\int_t^{\infty}q_0(s)q_0^{\ast}(s)\,\d s,\ \ \ t\in\mathbb{R}.
\end{equation}
Here, $\lim_{s\rightarrow+\infty}F(s)=1$ by \eqref{i:23} and the right hand side of \eqref{j7}, viewed as function of $t\in\mathbb{R}$, lies in $L^1(\mathbb{R}_+)$ by assumption. Consequently we must have $\lim_{s\rightarrow+\infty}F'(s)=0$ and so
\begin{equation*}
	\ln F(t)=-\int_t^{\infty}\frac{\d}{\d u}\ln F(u)\,\d u\stackrel{\eqref{j7}}{=}-\int_t^{\infty}\left[\int_u^{\infty}q_0(s)q_0^{\ast}(s)\,\d s\right]\d u=-\int_t^{\infty}(s-t)q_0(s)q_0^{\ast}(s)\,\d s,\ \ t\in\mathbb{R}.
\end{equation*}
This concludes our proof of Corollary \ref{gencoo}.
\end{proof}

\begin{proof}[Proof of Corollary \ref{highercor}] The symmetry $\phi\leftrightarrow\psi$ exchanges $K_t\leftrightarrow K_t^{\ast},q_n\leftrightarrow q_n^{\ast}$ and $p_n\leftrightarrow p_n^{\ast}$, so it is sufficient to derive the stated equations in \eqref{r2} for $\{q_n,p_n\}_{n=0}^{N-1}$. First we note that $J\ni t\mapsto q_n(t)$ is differentiable for $n=0,1,\ldots,N-1$ since $D^n\tau_t\phi\in W^{1,2}(\mathbb{R}_+)$ for the same $n$ and since $t\mapsto(I-K_t)^{-1}\tau_t\in\mathcal{L}(W^{1,2}(\mathbb{R}_+))$ is differentiable with
\begin{equation}\label{j8}
	\frac{\d}{\d t}(I-K_t)^{-1}\tau_t=(I-K_t)^{-1}D\tau_t-(I-K_t)^{-1}(\tau_t\phi\otimes\tau_t\psi)(I-K_t)^{-1}\tau_t\in\mathcal{L}\big(W^{1,2}(\mathbb{R}_+),L^2(\mathbb{R}_+)\big).
\end{equation}
Second, $J\ni t\mapsto q_N(t)$ is well-defined, and by the workings in Lemma \ref{lem1}, $J\ni t\mapsto p_n(t)$ is also differentiable for $n=0,1,\ldots,N-1$. This is because $J\ni t\mapsto (I-K_t)^{-1}D^n\tau_t\phi\otimes\tau_t\psi\in\mathcal{C}_1(L^2(\mathbb{R}_+))$ is differentiable,
\begin{align}
	\frac{\d}{\d t}\big(&(I-K_t)^{-1}D^n\tau_t\phi\otimes\tau_t\psi\big)=(I-K_t)^{-1}D^n\tau_t\phi\otimes D\tau_t\psi\nonumber\\
	&+\Big[(I-K_t)^{-1}D^{n+1}\tau_t\phi-(I-K_t)^{-1}(\tau_t\phi\otimes\tau_t\psi)(I-K_t)^{-1}D^n\tau_t\phi\Big]\otimes\tau_t\psi\in\mathcal{C}_1(L^2(\mathbb{R}_+)),\label{j9}
\end{align}
for $n=0,1,\ldots,N-1$, and since by the dominated convergence theorem and \cite[$(3.4)$]{Bris},
\begin{equation}\label{j10}
	\frac{\d p_n}{\d t}(t)=\frac{\d}{\d t}\int_0^{\infty}\big((I-K_t)^{-1}D^n\tau_t\phi\big)(x)(\tau_t\psi)(x)\,\d x=\tr_{L^2(\mathbb{R}_+)}\Big(\frac{\d}{\d t}\big((I-K_t)^{-1}\tau_t\phi\otimes\tau_t\psi\big)\Big).
\end{equation}
At this point we compute the derivative of $q_n$ and $p_n$. By \eqref{j8} and \cite[$(3.4)$]{Bris}, for $n=0,1,\ldots,N-1$,
\begin{align*}
	\frac{\d q_n}{\d t}(t)=\frac{\d}{\d t}\big((I-K_t)^{-1}D^n\tau_t\phi\big)(0)\stackrel{\eqref{j8}}{=}q_{n+1}(t)-q_0(t)p_n(t),\ \ t\in J.
\end{align*}
And by \eqref{j10} and \eqref{j9}, for $n=0,1,\ldots,N-1$,
\begin{align}
	\frac{\d p_n}{\d t}(t)&\,=\tr_{L^2(\mathbb{R}_+)}\big((I-K_t)^{-1}D^n\tau_t\phi\otimes D\tau_t\psi\big)\nonumber\\
	&+\tr_{L^2(\mathbb{R}_+)}\Big(\Big[(I-K_t)^{-1}D^{n+1}\tau_t\phi-(I-K_t)^{-1}(\tau_t\phi\otimes\tau_t\psi)(I-K_t)^{-1}D^n\tau_t\phi\Big]\otimes\tau_t\psi\Big),\ \ t\in J,\label{j11}
\end{align}
where $\mathbb{R}_+\ni x\mapsto((I-K_t)^{-1}D^n\tau_t\phi)(x)$ is continuously differentiable, $(I-K_t)^{-1}D^n\tau_t\phi\in W^{1,2}(\mathbb{R}_+)$ for all $t\in J$ and $((I-K_t)^{-1}D^n\tau_t\phi)(x)\rightarrow 0$ as $x\rightarrow+\infty$ by \eqref{n7}. These facts allow us to simplify the first trace in \eqref{j11} via integration by parts,
\begin{equation*}
	\tr_{L^2(\mathbb{R}_+)}\big((I-K_t)^{-1}D^n\tau_t\phi\otimes D\tau_t\psi\big)=-q_n(t)(\tau_t\psi)(0)-\tr_{L^2(\mathbb{R}_+)}\big(D(I-K_t)^{-1}D^n\tau_t\phi\otimes\tau_t\psi\big),
\end{equation*}
and so \eqref{j11} becomes
\begin{align}
	\frac{\d p_n}{\d t}(t)=-q_n(t)(\tau_t\psi)(0)-\tr_{L^2(\mathbb{R}_+)}\bigg(&\,\Big[D(I-K_t)^{-1}D^n\tau_t\phi-(I-K_t)^{-1}D^{n+1}\tau_t\phi\nonumber\\
	&\,+(I-K_t)^{-1}(\tau_t\phi\otimes\tau_t\psi)(I-K_t)^{-1}D^n\tau_t\phi\Big]\otimes\tau_t\psi\bigg),\ \ t\in J,\label{j12}
\end{align}
which generalizes \eqref{j6}. We then simplify \eqref{j12} just as we did it for \eqref{j6} in the proof of Lemma \ref{lem1},
\begin{equation*}
	\big((DK_t-K_tD)(I-K_t)^{-1}D^n\tau_t\phi\big)(x)=K_t(x,0)q_n(t)-(\tau_t\phi)(x)\int_0^{\infty}(\tau_t\psi)(y)\big((I-K_t)^{-1}D^n\tau_t\phi)(y)\,\d y
\end{equation*}
and obtain in \eqref{j12},
\begin{equation*}
	\frac{\d p_n}{\d t}=-q_n(t)\left[(\tau_t\psi)(0)+\int_0^{\infty}\big((I-K_t)^{-1}K_t\big)(x,0)(\tau_t\psi)(x)\,\d x\right],\ \ t\in J,\ \ n=0,1,\ldots,N-1.
\end{equation*}
It remains to use $(I-K_t)^{-1}=I+(I-K_t)^{-1}K_t$ and $(I-K_t)^{-1}(x,0)=(I-K_t^{\ast})^{-1}(0,x)$, i.e. we have
\begin{equation*}
	\frac{\d p_n}{\d t}=-q_n(t)\int_0^{\infty}(I-K_t)^{-1}(x,0)(\tau_t\psi)(x)\,\d x=-q_n(t)q_0^{\ast}(t),\ \ t\in J,
\end{equation*}
as claimed in \eqref{r2}.
\end{proof}
The last proof concludes the content of this section, we now move to RHP \ref{master2}.

%%%%%%%%%%%%%%%%%%%%%%%%%%%%%%%%%%%%%%%%%%%%%%%%%%%%%%%%%%%%%%%%%%%%%%%%%%%%

\section{Proof of Theorem \ref{theo1} and Corollaries \ref{impcor} and \ref{deeper}}\label{sec4}

In order to arrive at Theorem \ref{theo1} we begin with the following Lemma that asserts the unique solvability of RHP \ref{master2}, provided the same problem is solvable.

\begin{lem}\label{unique2} The RHP \ref{master2}, if solvable, is uniquely solvable.
\end{lem}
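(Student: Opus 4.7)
The plan is to run the standard ``ratio'' uniqueness argument for matrix Riemann-Hilbert problems with unit-determinant jump, tailored to the regularity available in RHP \ref{master2}.

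First I would check that $\det {\bf X}(z)$ is identically one for any solution ${\bf X}$. The jump matrix in \eqref{s8} has determinant
\begin{equation*}
  \det\begin{bmatrix}1-r_1(z)r_2(z) & -r_2(z)\e^{-\im tz}\\ r_1(z)\e^{\im tz} & 1\end{bmatrix}=(1-r_1r_2)+r_1r_2=1,\ \ z\in\mathbb{R},
\end{equation*}
so $(\det {\bf X})_+ = (\det {\bf X})_-$ on $\mathbb{R}$. Because ${\bf X}$ admits continuous pointwise boundary values from both sides, the scalar function $\det {\bf X}$ extends continuously across $\mathbb{R}$, so by Morera's theorem (or the standard edge-of-the-wedge/Painlev\'e removability lemma) it is entire. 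The normalization \eqref{s10} forces $\det {\bf X}(z)\to 1$ as $z\to\infty$, and Liouville's theorem then gives $\det {\bf X}\equiv 1$. In particular ${\bf X}(z)$ is pointwise invertible on $\mathbb{C}\setminus\mathbb{R}$.

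Given two solutions ${\bf X}^{(1)}$ and ${\bf X}^{(2)}$, I would then consider the ratio ${\bf H}(z):={\bf X}^{(1)}(z)\big({\bf X}^{(2)}(z)\big)^{-1}$, which is analytic on $\mathbb{C}\setminus\mathbb{R}$ by condition (1) together with the invertibility established above. Using the common jump relation \eqref{s8},
\begin{equation*}
  {\bf H}_+(z)={\bf X}^{(1)}_-(z)G(z)G(z)^{-1}\big({\bf X}^{(2)}_-(z)\big)^{-1}={\bf H}_-(z),\ \ z\in\mathbb{R},
\end{equation*}
where $G$ denotes the jump matrix in \eqref{s8}. Thus ${\bf H}$ has matching, continuous pointwise boundary values from above and below on $\mathbb{R}$, and another application of Morera's theorem promotes ${\bf H}$ to an entire matrix-valued function.

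Finally, I would use the common asymptotic normalization \eqref{s10}: both ${\bf X}^{(j)}(z)=\mathbb{I}+o(1)$ as $z\to\infty$ in $\mathbb{C}\setminus\mathbb{R}$, so ${\bf H}(z)\to\mathbb{I}$ as $z\to\infty$, and by Liouville's theorem ${\bf H}(z)\equiv\mathbb{I}$, i.e. ${\bf X}^{(1)}={\bf X}^{(2)}$. The only step that requires any care is the Morera-type removability statement across $\mathbb{R}$: it needs the continuous pointwise boundary values from condition (2), which is exactly what is built into RHP \ref{master2}; no additional hypothesis (e.g. $L^p$ boundary values or Cauchy-integral structure) is required, so the argument goes through cleanly.
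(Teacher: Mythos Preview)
Your argument is correct and follows essentially the same ratio-and-Liouville approach as the paper. The only technical difference is in the removability step: the paper first passes from pointwise to local $L^2$ (hence local $L^1$ for $\det{\bf X}$) convergence of the boundary values and then invokes \cite[Theorem~$7.18$]{D}, rather than appealing directly to Morera via continuous extension as you do.
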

\begin{proof} The pointwise convergence requirement in condition (2) of RHP \ref{master2} implies ${\bf X}(z\pm\im\epsilon)\rightarrow{\bf X}_{\pm}(z)$ for $z\in\mathbb{R}$ locally in $L^2(\mathbb{R})$ as $\epsilon\downarrow 0$. Precisely, with $z\in\mathbb{R}$,
\begin{equation*}
	\lim_{\epsilon\downarrow 0}\int_{z-\delta}^{z+\delta}\|{\bf X}(\lambda\pm\im\epsilon)-{\bf X}_{\pm}(\lambda)\|^2\,\d\lambda=0\ \ \ \textnormal{for some}\ \delta>0.
\end{equation*}
In turn $\det{\bf X}(z\pm\im\epsilon)\rightarrow(\det{\bf X}_{\pm})(z)$ locally in $L^1(\mathbb{R})$ as $\epsilon\downarrow 0$. Since also by \eqref{s8}, for every $z\in\mathbb{R}$,
\begin{equation*}
	(\det{\bf X})_+(z)=\det{\bf X}_+(z)=\det\left({\bf X}_-(z)\begin{bmatrix}
	1-r_1(z)r_2(z) & -r_2(z)\e^{-\im tz}\smallskip\\
		r_1(z)\e^{\im tz} & 1\end{bmatrix}\right)=\det{\bf X}_-(z)=(\det{\bf X})_-(z),
\end{equation*}
we can now copy and paste the proof workings of \cite[Theorem $7.18$]{D} in order to conclude that $\det{\bf X}(z)$ is analytic across $\mathbb{R}$, thus - with condition (1) - an entire function. On the other hand, by condition (3) of RHP \ref{master2} we have $\det{\bf X}(z)=1+\mathcal{O}(z^{-1})$ uniformly as $z\rightarrow\infty$ in $\mathbb{C}\setminus\mathbb{R}$. Hence, by Liouville's theorem,
\begin{equation*}
	\det{\bf X}(z)\equiv 1\ \ \ \forall\,z.
\end{equation*}
Consequently, each solution of RHP \ref{master2} is invertible. Now suppose ${\bf X}_1(z)$ and ${\bf X}_2(z)$ are two solutions of RHP \ref{master2}. Set ${\bf R}(z):={\bf X}_1(z){\bf X}_2(z)^{-1},z\in\mathbb{C}\setminus\mathbb{R}$ and note that ${\bf R}(z)$ has boundary values on $\mathbb{R}$ locally in $L^1(\mathbb{R})$ sense, using again the pointwise convergence requirement (2), in fact ${\bf R}_+(z)={\bf R}_-(z),z\in\mathbb{R}$ by \eqref{s8}. The same argument used to prove that $\det{\bf X}(z)\equiv 1$ now shows that ${\bf R}(z)\equiv1$, i.e. ${\bf X}_1(z)={\bf X}_2(z)$, implying uniqueness.
\end{proof}
At this point we need to introduce two auxiliary functions, cf. \cite[$(5.20)$]{K}, namely specific solutions of 
\begin{equation*}
	-\im \frac{\d y}{\d x}=zy-\im\phi\ \ \ \ \ \textnormal{and}\ \ \ \ -\im\frac{\d y}{\d x}=-zy-\im\psi,\ \ \ \ \ \ \ \ y=y(x,z):\mathbb{R}\times\mathbb{C}\rightarrow\mathbb{C}.
\end{equation*}
These will play an important role in the proof of Theorem \ref{theo1}. %To be precise, and in order to fix notations, we record the following result.
\begin{prop}\label{crucial1} Suppose $\phi,\psi\in L^{\infty}(\mathbb{R})\cap W^{1,1}(\mathbb{R})$ are continuously differentiable on $\mathbb{R}$ with $D\phi,D\psi\in L^{\infty}(\mathbb{R}_+)$. Define, for all $x\in\mathbb{R}$,
\begin{equation}\label{sq11}
	y_1(x,z):=\begin{cases}\displaystyle\int_{-\infty}^x\e^{\im z(x-s)}\phi(s)\,\d s,&\Im z>0\bigskip\\
	\displaystyle-\int_x^{\infty}\e^{\im z(x-s)}\phi(s)\,\d s,&\Im z<0\end{cases}\,,\ \ \ \ \ \ \ y_2(x,z):=\begin{cases}\displaystyle-\int_x^{\infty}\e^{-\im z(x-s)}\psi(s)\,\d s,&\Im z>0\bigskip\\
	\displaystyle\int_{-\infty}^x\e^{-\im z(x-s)}\psi(s)\,\d s,&\Im z<0\end{cases}\,,
\end{equation}
and let $y_k^z$ denote the function $\mathbb{R}\ni x\mapsto y^z_k(x):=y_k(x,z),k=1,2$. Then 
\begin{enumerate}
	\item[(a)] $y_k^z\in L^2(\mathbb{R}),k=1,2$ for every $z\in\mathbb{C}\setminus\mathbb{R}$. 
	\item[(b)] $\mathbb{C}\setminus\mathbb{R}\ni z\mapsto y_k^z(x),k=1,2$ are analytic for all $x\in\mathbb{R}$ and extend continuously to the closed upper and lower half-planes. For $z\in\mathbb{R}$ we record the, pointwise in $x\in\mathbb{R}$, limits%, pointwise in $x\in\mathbb{R}$ and in terms of \eqref{s9},
	\begin{equation}\label{sq12}
		\lim_{\epsilon\downarrow 0}y_1^{z+\im\epsilon}(x)-\lim_{\epsilon\downarrow 0}y_1^{z-\im\epsilon}(x)=r_1(z)\im\e^{\im zx},\ \ \ 
		\lim_{\epsilon\downarrow 0}y_2^{z+\im\epsilon}(x)-\lim_{\epsilon\downarrow 0}y_2^{z-\im\epsilon}(x)=r_2(z)\im\e^{-\im zx},
	\end{equation}
	with $r_k(z),k=1,2$ defined in \eqref{s9}.
	\item[(c)] We have
\begin{equation}\label{j13}
	y_1^z(x)=\frac{\im}{z}\phi(x)+m_1^z(x),\ \ \ \ \ y_2^z(x)=-\frac{\im}{z}\psi(x)+m_2^z(x),\ \ \ \ \ \ \ \ (x,z)\in\mathbb{R}\times(\mathbb{C}\setminus\mathbb{R}),
\end{equation}
where $m_k^z\in L^2(\mathbb{R}_+),k=1,2$ for every $z\in\mathbb{C}\setminus\mathbb{R}$. In fact, there exist $c_k>0$ so that for all $z\in\mathbb{C}\setminus\mathbb{R}$,
\begin{equation*}
	\|m_1^z\|_{L^2(\mathbb{R}_+)}\leq\frac{c_1}{|z|\sqrt{|\Im z|}}\|D\phi\|_{L^1(\mathbb{R})},\ \ \ \ \ \ \ \|m_2^z\|_{L^2(\mathbb{R}_+)}\leq\frac{c_2}{|z|\sqrt{|\Im z|}}\|D\psi\|_{L^1(\mathbb{R})}.
\end{equation*}
\end{enumerate}
\end{prop}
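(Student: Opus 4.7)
The plan is to treat $y_1^z$ and $y_2^z$ symmetrically, since the definitions are interchanged under $\phi\leftrightarrow\psi$ and $z\mapsto-z$; I will spell out the argument for $y_1^z$ and the claims for $y_2^z$ follow identically. The crucial observation is that \eqref{sq11} realizes $y_1^z$ as a one-sided convolution $y_1^z=g_z^\pm\ast\phi$ with the kernel $g_z^+(u):=\e^{\im zu}\chi_{(0,\infty)}(u)$ when $\Im z>0$ and $g_z^-(u):=-\e^{\im zu}\chi_{(-\infty,0)}(u)$ when $\Im z<0$. In either case $|g_z^\pm(u)|\leq\e^{-|\Im z||u|}$, so $g_z^\pm\in L^1(\mathbb{R})\cap L^2(\mathbb{R})$ with explicit norms $\|g_z^\pm\|_{L^1(\mathbb{R})}=|\Im z|^{-1}$ and $\|g_z^\pm\|_{L^2(\mathbb{R})}=(2|\Im z|)^{-1/2}$.

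For part (1), I first note $\phi\in L^1(\mathbb{R})\cap L^\infty(\mathbb{R})\subset L^2(\mathbb{R})$ via $\|\phi\|_{L^2}^2\leq\|\phi\|_{L^\infty}\|\phi\|_{L^1}$, which is part of the standing hypotheses. Then Young's convolution inequality $\|g_z^\pm\ast\phi\|_{L^2(\mathbb{R})}\leq\|g_z^\pm\|_{L^1(\mathbb{R})}\|\phi\|_{L^2(\mathbb{R})}$ gives $y_1^z\in L^2(\mathbb{R})$ for each $z\in\mathbb{C}\setminus\mathbb{R}$.

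For part (2), analyticity of $z\mapsto y_1^z(x)$ on each of the open half-planes follows by differentiating under the integral sign, which is justified by the uniform $L^1$ control provided by $|e^{\im z(x-s)}|\leq\e^{-|\Im z||x-s|}$ on any compact set bounded away from $\mathbb{R}$. Continuity up to the boundary at $z\in\mathbb{R}$ is a direct application of the dominated convergence theorem, using $\phi\in L^1(\mathbb{R})$ as dominating function. The jump identity \eqref{sq12} then follows by simple addition: for $z\in\mathbb{R}$,
\begin{equation*}
	\lim_{\epsilon\downarrow 0}y_1^{z+\im\epsilon}(x)-\lim_{\epsilon\downarrow 0}y_1^{z-\im\epsilon}(x)=\int_{-\infty}^x\e^{\im z(x-s)}\phi(s)\,\d s+\int_x^{\infty}\e^{\im z(x-s)}\phi(s)\,\d s=\e^{\im zx}\int_{-\infty}^{\infty}\e^{-\im zs}\phi(s)\,\d s,
\end{equation*}
which equals $\im r_1(z)\e^{\im zx}$ by the definition \eqref{s9} of $r_1$. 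The analogous jump for $y_2^z$ follows verbatim with \eqref{s9} for $r_2$.

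For part (3), I integrate by parts in \eqref{sq11}. Taking $\Im z>0$ as the representative case, an antiderivative in $s$ of $\e^{\im z(x-s)}$ is $\frac{\im}{z}\e^{\im z(x-s)}$, and since $\phi\in W^{1,1}(\mathbb{R})\cap L^\infty(\mathbb{R})$ is bounded and continuous, the boundary term at $s=-\infty$ vanishes thanks to the factor $\e^{-\Im z(x-s)}\rightarrow 0$. This yields $y_1^z(x)=\frac{\im}{z}\phi(x)+m_1^z(x)$ with $m_1^z(x)=-\frac{\im}{z}\int_{-\infty}^x\e^{\im z(x-s)}(D\phi)(s)\,\d s=-\frac{\im}{z}(g_z^+\ast D\phi)(x)$. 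The $L^2$ bound then comes from Young's inequality in the other regime, $\|g\ast h\|_{L^2}\leq\|g\|_{L^2}\|h\|_{L^1}$, together with $D\phi\in L^1(\mathbb{R})$ (built into $\phi\in W^{1,1}(\mathbb{R})$): explicitly $\|m_1^z\|_{L^2(\mathbb{R})}\leq(|z|\sqrt{2|\Im z|})^{-1}\|D\phi\|_{L^1(\mathbb{R})}$, and restricting to $\mathbb{R}_+$ only makes the norm smaller. The case $\Im z<0$ is treated identically but integrating from $x$ to $+\infty$, and the symmetric arguments give the statement for $y_2^z$. The only genuine subtlety is verifying that the boundary contribution vanishes; this uses $\phi\in L^\infty(\mathbb{R})$ in the limit $s\to\pm\infty$ combined with the exponential decay of $|\e^{\im z(x-s)}|$, which is why the hypothesis $\phi\in L^\infty(\mathbb{R})$ appears alongside $\phi\in W^{1,1}(\mathbb{R})$ in the statement.
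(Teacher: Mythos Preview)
Your proof is correct and shares the same skeleton as the paper's: both recognize the one-sided convolution structure $y_1^z=g_z^\pm\ast\phi$, both integrate by parts for part (3), and both compute the jump \eqref{sq12} identically. The difference lies in the $L^2$ estimates. The paper bounds $|y_1^z(x)|\leq(E\ast|\phi|)(x)$ with $E(x)=\e^{-\delta|x|}$, $\delta=|\Im z|$, and then invokes the Fourier convolution theorem and Plancherel, computing $\widehat{E}(\xi)=2\delta/(\delta^2+\xi^2)$ explicitly; for part (3) it repeats this with $D\phi$ in place of $\phi$ and an ad~hoc pointwise Fourier estimate of $\chi_+(E\ast|D\phi|)$. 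Your direct appeal to Young's inequality ($L^1\times L^2\to L^2$ for part (1), $L^2\times L^1\to L^2$ for part (3)) is more elementary and delivers the same scaling $|z|^{-1}|\Im z|^{-1/2}$ with less work; it even gives the full $L^2(\mathbb{R})$ bound on $m_k^z$, from which the stated $L^2(\mathbb{R}_+)$ bound is trivial by restriction. For analyticity the paper uses Morera together with Fubini rather than your differentiation under the integral sign, but both are routine here.
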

\begin{proof} By construction \eqref{sq11}, for all $(x,z)\in\mathbb{R}\times(\mathbb{C}\setminus\mathbb{R})$,
\begin{equation*}
	|y_1(x,z)|\leq\|\phi\|_{L^1(\mathbb{R})}\ \ \ \ \ \ \textnormal{and}\ \ \ \ \ \ |y_2(x,z)|\leq\|\psi\|_{L^1(\mathbb{R})},
\end{equation*}
so the functions $y_k^z$ are well-defined for $z\in\mathbb{C}\setminus\mathbb{R}$. More is true, for fixed $z\in\mathbb{C}\setminus\mathbb{R}$ with $\delta:=|\Im z|>0$,
\begin{equation*}
	|y_1^z(x)|\leq\int_{-\infty}^{\infty}\e^{-\delta|x-s|}|\phi(s)|\,\d s=(E\ast|\phi|)(x),\ \ \ |y_2^z(x)|\leq\int_{-\infty}^{\infty}\e^{-\delta|x-s|}|\psi(s)|\,\d s=(E\ast|\psi|)(x),\ \ x\in\mathbb{R},
\end{equation*}
with $(f\ast g)(x):=\int_{-\infty}^{\infty}f(x-y)g(y)\,\d y$ and where $E(x):=\e^{-\delta|x|}$. Hence, by the $L^1(\mathbb{R})$ Fourier convolution theorem, cf. \cite[Chapter VI, $1.3$]{Kat}, for any $\xi\in\mathbb{R}$,
\begin{equation*}
	\widehat{E\ast|\phi|}(\xi)=\widehat{E}(\xi)\widehat{|\phi|}(\xi)=\frac{2\delta}{\delta^2+\xi^2}\widehat{|\phi|}(\xi),\ \ \ \widehat{E\ast|\psi|}(\xi)=\widehat{E}(\xi)\widehat{|\psi|}(\xi)=\frac{2\delta}{\delta^2+\xi^2}\widehat{|\psi|}(\xi),
\end{equation*}
where the Fourier transform $\hat{f}$ of $f\in L^1(\mathbb{R})$ is defined by
\begin{equation*}
	\hat{f}(\xi):=\int_{-\infty}^{\infty}f(x)\e^{-\im\xi x}\,\d x,\ \ \ \ \ \xi\in\mathbb{R}.
\end{equation*}
Consequently, by Plancherel's theorem, using that $\phi,\psi\in L^1(\mathbb{R})\cap L^2(\mathbb{R})$ and $E\ast|\phi|,E\ast|\psi|\in L^1(\mathbb{R})\cap L^2(\mathbb{R})$,
\begin{equation*}
	 \|y_1^z\|_{L^2(\mathbb{R})}\leq\|E\ast|\phi|\|_{L^2(\mathbb{R})}\leq c\|\widehat{E\ast|\phi|}\|_{L^2(\mathbb{R})}\leq c\|\widehat{|\phi|}\|_{L^2(\mathbb{R})}\leq c\|\phi\|_{L^2(\mathbb{R})}<\infty,\ \ c=c(\delta)>0.
\end{equation*}
An analogous bound holds for $\|y_2^z\|_{L^2(\mathbb{R})}$, so indeed $y_k^z\in L^2(\mathbb{R})$ for every $z\in\mathbb{C}\setminus\mathbb{R}$, as claimed in (a). Next, $\mathbb{C}\setminus\mathbb{R}\ni z\mapsto y_k^z(x)$ are continuous for any $x\in\mathbb{R}$ by the dominated convergence theorem using $\phi,\psi\in L^1(\mathbb{R})$. Moreover, for any triangle $T$ strictly contained in the upper or lower half-plane, by Fubini's theorem,
\begin{equation*}
	\oint_Ty_k^z(x)\,\d z=0\ \ \ \ \forall\,x\in\mathbb{R},\ \ \ k=1,2,
\end{equation*}
so the same functions $z\mapsto y_k^z(x)$ are analytic on the separate half-planes by Morera's theorem, for all $x\in\mathbb{R}$. Additionally, the existence of their continuous extensions down to the real axis and in turn the relations \eqref{sq12} are a direct consequence of the dominated convergence theorem, of the assumption $\phi,\psi\in L^1(\mathbb{R})$ and formula \eqref{sq11}. This establishes (b). To get to \eqref{j13}, we integrate by parts,
\begin{equation*}
	y_1^z(x)=\frac{\im}{z}\phi(x)+m_1^z(x),\ \ \ \ m_1^z(x):=\frac{1}{\im z}\begin{cases}\displaystyle\int_{-\infty}^x\e^{\im z(x-s)}(D\phi)(s)\,\d s,&\Im z>0\bigskip\\
	\displaystyle-\int_x^{\infty}\e^{\im z(x-s)}(D\phi)(s)\,\d s,&\Im z<0
	\end{cases},
\end{equation*}
and
\begin{equation*}
	y_2^z(x)=-\frac{\im}{z}\psi(x)+m_2^z(x),\ \ \ \ m_2^z(x):=\frac{1}{\im z}\begin{cases}\displaystyle\int_x^{\infty}\e^{-\im z(x-s)}(D\psi)(s)\,\d s,&\Im z>0\bigskip\\
	\displaystyle-\int_{-\infty}^x\e^{-\im z(x-s)}(D\psi)(s)\,\d s,&\Im z<0\end{cases}.
\end{equation*}
Here, $|zm_1^z(x)|\leq\|D\phi\|_{L^1(\mathbb{R})}$ and $|zm^z_2(x)|\leq\| D\psi\|_{L^1(\mathbb{R})}$ for all $(x,z)\in\mathbb{R}\times(\mathbb{C}\setminus\mathbb{R})$, and
\begin{equation*}
	|m_1^z(x)|\leq\frac{1}{|z|}(E\ast|D\phi|)(x),\ \ \ \ \ \ \ |m_2^z(x)|\leq\frac{1}{|z|}(E\ast |D\psi|)(x),\ \ \ \ \ \ (x,z)\in\mathbb{R}\times(\mathbb{C}\setminus\mathbb{R}).
\end{equation*}
Thus, by Plancherel's theorem, using, $D\phi,D\psi,E\ast|D\phi|,E\ast|D\psi|\in L^1(\mathbb{R})\cap L^2(\mathbb{R}_+)$,
\begin{equation*}
	\|m_1^z\|_{L^2(\mathbb{R}_+)}\leq\frac{1}{|z|}\|\chi_+(E\ast|D\phi|)\|_{L^2(\mathbb{R})}\leq\frac{c}{|z|\sqrt{|\Im z|}}\|D\phi\|_{L^1(\mathbb{R})},\ \ \ \ c>0,
\end{equation*}
where $\chi_+$ is the characteristic function on $\mathbb{R}_+$ and where we used the explicit estimate
\begin{equation*}
	\left|\int_{-\infty}^{\infty}\big(\chi_+(E\ast|D\phi|)\big)(x)\e^{-\im\xi x}\,\d x\right|\leq2\left[\frac{1}{\sqrt{\delta^2+\xi^2}}+\frac{\delta}{\delta^2+\xi^2}\right]\|D\phi\|_{L^1(\mathbb{R})},\ \ \ \xi\in\mathbb{R},\ \ \ \ \delta=|\Im z|>0.
\end{equation*}
A similar estimate holds for $\|m_2^z\|_{L^2(\mathbb{R}_+)}$ and concludes our proof of Proposition \ref{crucial1}.
\end{proof}
Equipped with Proposition \ref{crucial1} we can now solve RHP \ref{master2} and thus prove Theorem \ref{theo1}.
\begin{proof}[Proof of Theorem \ref{theo1}] Fix $t\in\mathbb{R}$ and consider the $2\times 2$ matrix-valued function
\begin{equation}\label{j14}
	{\bf X}(z):=\mathbb{I}+\begin{bmatrix}\displaystyle-\int_0^{\infty}((I-K_t^{\ast})^{-1}\tau_t\psi)(x)(\tau_ty_1^z)(x)\,\d x & \im\big((I-K_t^{\ast})^{-1}\tau_ty_2^z\big)(0)\smallskip\\
	\displaystyle-\im\big((I-K_t)^{-1}\tau_ty_1^z\big)(0) & \displaystyle-\int_0^{\infty}\big((I-K_t)^{-1}\tau_t\phi\big)(x)(\tau_ty_2^z)(x)\,\d x\end{bmatrix},
\end{equation}
defined for $z\in\mathbb{C}\setminus\mathbb{R}$ with $y_k^z,k=1,2$ as in \eqref{sq11} and $(\tau_t y_k^z)(x)=y_k^z(x+t)$. We first show that the first column entries of \eqref{j14} are analytic in $\mathbb{C}\setminus\mathbb{R}$, the second column can afterwards be treated in the same way. Since $\psi\in L^1(\mathbb{R})\cap L^2(\mathbb{R})$ and $y_1^z\in L^2(\mathbb{R})$ for $z\in\mathbb{C}\setminus\mathbb{R}$ by Proposition \ref{crucial1},
\begin{equation}\label{j15}
	\mathbb{C}\setminus\mathbb{R}\ni z\mapsto X^{11}(z)\stackrel{\eqref{j14}}{=}1-\int_0^{\infty}\big((I-K_t^{\ast})^{-1}\tau_t\psi\big)(x)(\tau_ty_1^z)(x)\,\d x
\end{equation}
is well-defined. But by \eqref{i:14} and Cauchy-Schwarz inequality, $(I-K_t^{\ast})^{-1}\tau_t\psi\in L^1(\mathbb{R}_+)$ since $I-K_t^{\ast}$ is invertible on $L^2(\mathbb{R}_+)$ and since $\psi\in L^1(\mathbb{R})\cap L^2(\mathbb{R})$. Hence, using the dominated convergence theorem, the estimate $|y_1^z(x)|\leq\|\phi\|_{L^1(\mathbb{R})}$ and Proposition \ref{crucial1}, we conclude that $\mathbb{C}\setminus\mathbb{R}\ni z\mapsto X^{11}(z)$ in \eqref{j15} is continuous. Moreover, for any triangle $T$ strictly contained in the upper or lower half-plane, by Fubini's theorem,
\begin{equation*}
	\oint_TX^{11}(z)\,\d z=0,
\end{equation*}
so $\mathbb{C}\setminus\mathbb{R}\ni z\mapsto X^{11}(z)$ is analytic by Morera's theorem. Next, look at 
\begin{equation}\label{j16}
	\mathbb{C}\setminus\mathbb{R}\ni z\mapsto X^{21}(z)\stackrel{\eqref{j14}}{=}-\im\big((I-K_t)^{-1}\tau_ty_1^z\big)(0).
\end{equation}
Since $\mathbb{R}_{\geq 0}\ni x\mapsto((I-K_t)^{-1}\tau_t y_1^z)(x)$ is locally $\frac{1}{2}$-H\"older continuous for any $z\in\mathbb{C}\setminus\mathbb{R}$ by \eqref{i:14}, by the Cauchy-Schwarz inequality and Proposition \ref{crucial1}, the function in \eqref{j16} is well-defined. In order to establish its analyticity in the separate half-planes we consider the auxiliary function
\begin{equation}\label{j17}
	\mathbb{C}\setminus\mathbb{R}\ni z\mapsto f_n(z):=\int_0^{\infty}\delta_n(x)\big((I-K_t)^{-1}\tau_ty_1^z\big)(x)\,\d x;\ \ \ \ \delta_n(x):=\frac{2n}{\sqrt{\pi}}\e^{-(nx)^2}.
\end{equation}
Here, $\delta_n\in L^1(\mathbb{R})\cap L^2(\mathbb{R}),n\in\mathbb{Z}_{\geq 1}$ with $\int_0^{\infty}\delta_n(x)\,\d x=1$ and $z\mapsto f_n(z)$ is well-defined by invertibility of $I-K_t$ on $L^2(\mathbb{R}_+)$ and by Proposition \ref{crucial1}. More is true, $(f_n)_{n=1}^{\infty}$ is a sequence of analytic functions, analytic in the separate half-planes: indeed we have, for any $n\in\mathbb{Z}_{\geq 1}$,
\begin{equation}\label{j18}
	f_n(z)\stackrel{\eqref{j17}}{=}\int_0^{\infty}\big((I-K_t^{\ast})^{-1}\delta_n\big)(x)(\tau_ty_1^z)(x)\,\d x,\ \ z\in\mathbb{C}\setminus\mathbb{R},
\end{equation}
where $(I-K_t^{\ast})^{-1}\delta_n\in L^1(\mathbb{R}_+)$ by \eqref{i:14} and invertibility of $I-K_t$ on $L^2(\mathbb{R}_+)$, and where $|y_1^z(x)|\leq\|\phi\|_{L^1(\mathbb{R})}$. So, $\mathbb{C}\setminus\mathbb{R}\ni z\mapsto f_n(z)$ is continuous by the dominated convergence theorem and by Proposition \ref{crucial1}. Moreover, by Fubini's theorem
\begin{equation*}
	\oint_Tf_n(z)\,\d z\stackrel{\eqref{j18}}{=}0
\end{equation*}
for any triangle $T$ strictly contained in the upper or lower half-plane. This implies analyticity of $\mathbb{C}\setminus\mathbb{R}\ni z\mapsto f_n(z)$ by Morera's theorem. Next, using $(I-K_t)^{-1}=I+K_t(I-K_t)^{-1}$, we have
\begin{align}
	&f_n(z)-\im X^{21}(z)=\int_0^{\infty}\delta_n(x)\big((\tau_ty_1^z)(x)-(\tau_ty_1^z)(0)\big)\,\d x\nonumber\\
	&+\int_0^{\infty}\delta_n(x)\left[\int_0^{\infty}\big(K_t(x,y)-K_t(0,y)\big)\big((I-K_t)^{-1}\tau_t y_1^z\big)(y)\,\d y\right]\d x\equiv f_{1n}(z)+f_{2n}(z),\ \ z\in\mathbb{C}\setminus\mathbb{R}.\label{j19}
\end{align}
For any $a,b\in\mathbb{R}$,
\begin{equation}\label{j20}
	\big|(\tau_t y_1^z)(a)-(\tau_ty_1^z)(b)\big|\leq|a-b|\underbrace{\big(|z|\,\|\phi\|_{L^1(\mathbb{R})}+\|\phi\|_{L^{\infty}(\mathbb{R})}\big)}_{=:L_{z,\phi}},\ \ z\in\mathbb{C}\setminus\mathbb{R},
\end{equation}
followed by, for any $(a,b,z)\in\mathbb{R}_{\geq 0}^3$, now using \eqref{n00} and \eqref{i:14},
\begin{equation}\label{j21}
	\big|K_t(a,z)-K_t(b,z)\big|\leq c_t\sqrt{|a-b|}\sqrt{\int_0^{\infty}|(\tau_t\psi)(\lambda+z)|^2\,\d\lambda},\ \ c_t>0,\ \ \ \ z\in\mathbb{C}\setminus\mathbb{R},
\end{equation}
so that all together, for any $z\in\mathbb{C}\setminus\mathbb{R}$,
\begin{equation}\label{j22}
	\big|f_{1n}(z)\big|\stackrel{\eqref{j20}}{\leq}\frac{L_{z,\phi}}{n\sqrt{\pi}},\ \ \ \ \ \ \ \ \ \big|f_{2n}(z)\big|\stackrel{\eqref{j21}}{\leq}\frac{d_t}{\sqrt{n}}\|(I-K_t)^{-1}\tau_ty_1^z\|_{L^2(\mathbb{R}_+)},\ \ \ d_t>0.
\end{equation}
Here, $d_t>0$ is $(z,n)$-independent and we have by the workings in Proposition \ref{crucial1}, $\|(I-K_t)^{-1}\tau_ty_1^z\|_{L^2(\mathbb{R}_+)}\leq 2\|(I-K_t)^{-1}\|\|\phi\|_{L^2(\mathbb{R})}/|\Im z|$. Thus, for any compact $K\subset\mathbb{C}\setminus\mathbb{R}$ in the upper half-plane, say, the above \eqref{j19} and \eqref{j22} yield
\begin{equation*}
	\sup_{z\in K}\big|f_n(z)-\im X^{21}(z)\big|\rightarrow 0\ \ \ \textnormal{as}\ n\rightarrow\infty,
\end{equation*}
i.e. $z\mapsto X^{21}(z)$ is analytic in the upper half-plane, say, by \cite[Theorem $5.2$]{SS}. This proves that \eqref{j14} satisfies property $(1)$ of RHP \ref{master2}. Moving to property $(2)$, we compute via \eqref{sq11}, the dominated convergence and Fubini's theorem, for any $z\in\mathbb{R}$,
\begin{align*}
	\lim_{\epsilon\downarrow 0}X^{11}(z+\im\epsilon)=1\,-&\,r_1(z)\im\e^{\im zt}\int_0^{\infty}\big((I-K_t^{\ast})^{-1}\tau_t\psi\big)(x)\e^{\im zx}\,\d x\\
	&\hspace{1.5cm}+\int_0^{\infty}\e^{-\im zu}\left[\int_0^{\infty}(\tau_{t+u}\phi)(x)\big((I-K_t^{\ast})^{-1}\tau_t\psi\big)(x)\,\d x\right]\d u,\\
	\lim_{\epsilon\downarrow 0}X^{11}(z-\im\epsilon)=1\,+&\,\int_0^{\infty}\e^{-\im zu}\left[\int_0^{\infty}(\tau_{t+u}\phi)(x)\big((I-K_t^{\ast})^{-1}\tau_t\psi\big)(x)\,\d x\right]\d u.
\end{align*}
These limiting values are well-defined and continuous in $z\in\mathbb{R}$ given that $(I-K_t^{\ast})^{-1}\tau_t\psi\in L^1(\mathbb{R}_+)$ and $\phi\in L^1(\mathbb{R})$. In fact, using the algebraic Lemma \ref{new4}, we can simplify them further to obtain for any $z\in\mathbb{R}$, with $X^{11}_{\pm}(z)=\lim_{\epsilon\downarrow 0}X^{11}(z\pm\im\epsilon)$,
\begin{align}
	X_+^{11}(z)=&\,1+\big((I-K_t^{\ast})^{-1}g_z\big)(0)-r_1(z)\im\e^{\im zt}\int_0^{\infty}\big((I-K_t^{\ast})^{-1}\tau_t\psi\big)(x)\e^{\im zx}\,\d x\nonumber\\
	X_-^{11}(z)=&\,1+\big((I-K_t^{\ast})^{-1}g_z\big)(0);\ \ \ \ \ \ \ \ \textnormal{with}\ \ \ \ \ g_z(x):=\int_0^{\infty}K_t^{\ast}(x,u)\e^{-\im zu}\,\d u,\ \ x\in\mathbb{R}_{\geq 0}.\label{j23}
\end{align}
Here, $\mathbb{R}_{\geq 0}\ni x\mapsto g_z(x)$ is well-defined and continuous for all $z\in\mathbb{R}$ by \eqref{i:14} and since $\psi\in L^1(\mathbb{R})\cap L^2(\mathbb{R})$. Moreover, $g_z\in L^2(\mathbb{R}_+)$ by \eqref{n00} and \eqref{i:14}, so $\mathbb{R}_{\geq 0}\ni x\mapsto ((I-K_t^{\ast})^{-1}g_z)(x)$ is continuous and thus $((I-K_t^{\ast})^{-1}g_z)(0)$ in \eqref{j23} bona fide. By similar logic, using this time \eqref{app3} instead of \eqref{app4}, we also find
\begin{align}
	X_+^{22}(z)=&\,1+\big((I-K_t)^{-1}h_z\big)(0);\ \ \ \ \ \ \ \ \textnormal{with}\ \ \ \ \ h_z(x):=\int_0^{\infty}K_t(x,u)\e^{\im zu}\,\d u,\ \ x\in\mathbb{R}_{\geq 0},\nonumber\\
	X_-^{22}(z)=&\,1+\big((I-K_t)^{-1}h_z\big)(0)+r_2(z)\im\e^{-\im zt}\int_0^{\infty}\big((I-K_t)^{-1}\tau_t\phi\big)(x)\e^{-\im zx}\,\d x.\label{j24}
\end{align}
Next, for the off-diagonal entries, $X^{21}(z)$, say, we write $(I-K_t)^{-1}=I+(I-K_t)^{-1}K_t$ and note that
\begin{equation*}
	\mathbb{R}_{\geq 0}\ni x\mapsto (K_t\tau_ty_1^z)(x)=\int_0^{\infty}K_t(x,u)(\tau_ty_1^z)(u)\,\d u
\end{equation*}
is in $L^1(\mathbb{R}_+)\cap L^2(\mathbb{R}_+)$, for all $z\in\mathbb{C}\setminus\mathbb{R}$, by \eqref{n00},\eqref{i:14} and since $\phi\in L^1(\mathbb{R})$. Moreover, by the dominated convergence theorem, with $z\in\mathbb{R}$,
\begin{equation*}
	\lim_{\epsilon\downarrow 0}(K_t\tau_ty_1^{z+\im\epsilon})(x)=r_1(z)\im\e^{\im zt}h_z(x)-\int_0^{\infty}\e^{-\im zv}(K_t\tau_{t+v}\phi)(x)\,\d v\ \ \ \forall\,x\in\mathbb{R}_{\geq 0},
\end{equation*}
noting that $\mathbb{R}_+\ni v\mapsto (K_t\tau_{t+v}\phi)(x)=(K_t\tau_{t+x}\phi)(v)$ is integrable almost everywhere $x\in\mathbb{R}_{\geq 0}$ and that 
\begin{equation*}
	\mathbb{R}_+\ni x\mapsto\int_0^{\infty}\e^{-\im zv}(K_t\tau_{t+v}\phi)(x)\,\d v
\end{equation*}
is integrable and bounded, thus square integrable. Hence, by the dominated convergence theorem, for all $z\in\mathbb{R}$, using $h_z\in L^2(\mathbb{R}_+)$ and Lemma \ref{new3},
\begin{equation}\label{j25}
	X_+^{21}(z)=\im\int_0^{\infty}\big((I-K_t)^{-1}\tau_t\phi\big)(x)\e^{-\im zx}\,\d x+r_1(z)\e^{\im zt}+r_1(z)\e^{\im zt}\big((I-K_t)^{-1}h_z\big)(0),
\end{equation}
which is well-defined and continuous in $z\in\mathbb{R}$. By similar logic,
\begin{equation}\label{j26}
	X_-^{21}(z)=\im\int_0^{\infty}\big((I-K_t)^{-1}\tau_t\phi\big)(x)\e^{-\im zx}\,\d x,
\end{equation}
followed by
\begin{align}
	X_+^{12}(z)=&\,-\im\int_0^{\infty}\big((I-K_t^{\ast})^{-1}\tau_t\psi\big)(x)\e^{\im zx}\,\d x,\nonumber\\
	X_-^{12}(z)=&\,-\im\int_0^{\infty}\big((I-K_t^{\ast})^{-1}\tau_t\psi\big)(x)\e^{\im zx}\,\d x+r_2(z)\e^{-\im zt}+r_2(z)\e^{-\im zt}\big((I-K_t^{\ast})^{-1}g_z\big)(0)\label{j27}
\end{align}
which are all well-defined and continuous functions in $z\in\mathbb{R}$. Using now \eqref{j23},\eqref{j24},\eqref{j25},\eqref{j26} and \eqref{j27} we directly verify the identity
\begin{equation*}
	{\bf X}_-(z)\begin{bmatrix}1-r_1(z)r_2(z) & -r_2(z)\e^{-\im tz}\smallskip\\
	r_1(z)\e^{\im tz} & 1\end{bmatrix}={\bf X}_+(z),\ \ \ z\in\mathbb{R},
\end{equation*}
i.e. \eqref{j14} also satisfies property $(2)$ of RHP \ref{master2}. Finally, by Proposition \ref{crucial1} and Cauchy-Schwarz inequality,
\begin{equation}\label{j28}
	\left\|{\bf X}(z)-\mathbb{I}-\frac{1}{z}\begin{bmatrix}-\im p_0 & q_0^{\ast}\\ q_0 & \im p_0^{\ast}\end{bmatrix}\right\|\leq \frac{c_t}{|z|\sqrt{|\Im z|}},\ \ \ c_t>0,
\end{equation}
which shows that ${\bf X}(z)$ in \eqref{j14} satisfies \eqref{s10} and in turn \eqref{i:14a} as $z\rightarrow\infty$ along any non-horizontal direction. 
%Although seemingly weaker, \eqref{j28} is sufficient, cf. \cite[Proposition $3.1$]{FZ}, to conclude that ${\bf X}(z)$ in \eqref{j14} also satisfies condition $(3)$ of RHP \ref{master2}, and thus constitutes the unique solution of the same problem. 
Our proof of Theorem \ref{theo1} is now complete.
\end{proof}
\begin{proof}[Proof of Corollary \ref{impcor}] This is immediate from \eqref{i:14a} since $\im X_1^{11}(t,\phi,\psi)=p_0(t)$ and $X_1^{12}(t,\phi,\psi)X_1^{21}(t,\phi,\psi)=q_0(t)q_0^{\ast}(t)$ for all $t\in J$ under the given assumptions. Now use the last two equations in \eqref{r1} and in \eqref{n2} which says $\frac{\d}{\d t}\ln F(t)=p_0(t),t\in J$, subject to the necessary assumptions.
\end{proof}
\begin{proof}[Proof of Corollary \ref{deeper}] Under the assumptions made, repetitive integration by parts yields for $z\notin\mathbb{R}$,
\begin{align*}
	(\tau_ty_1^z)(x)=&\,-\sum_{k=1}^N(-\im)^k(D^{k-1}\tau_t\phi)(x)z^{-k}+m_{1N}^z(x;t),\\
	(\tau_ty_2^z)(x)=&\,-\sum_{k=1}^N\im^k(D^{k-1}\tau_t\psi)(x)z^{-k}+m_{2N}^z(x;t),
\end{align*}
where, with $c_k=c_k(t)>0$,
\begin{equation*}
	\|m_{1N}^z(\cdot;t)\|_{L^2(\mathbb{R}_+)}\leq\frac{c_1}{|z|^N\sqrt{|\Im z|}}\|D^N\phi\|_{L^1(\mathbb{R})},\ \ \ \ \ \ \ \|m_{2N}^z(\cdot;t)\|_{L^2(\mathbb{R}_+)}\leq\frac{c_2}{|z|^N\sqrt{|\Im z|}}\|D^N\psi\|_{L^1(\mathbb{R})}.
\end{equation*}
As in the proof of Theorem \ref{theo1}, the above is sufficient to conclude the validity of \eqref{s17}, after inserting the stated decompositions for $\tau_ty_1^z$ and $\tau_ty_2^z$ into \eqref{j14} and using the Cauchy-Schwarz inequality as well as \eqref{n8}, \eqref{n9}. This completes our proof of the Corollary.
\end{proof}
We now move on to the more specialized Hankel composition operators of Section \ref{seci22}.

%%%%%%%%%%%%%%%%%%%%%%%%%%%%%%%%%%%%%%%%%%%%%%%%%%%%%%%%%%%%%%%%%%%%%%%%%%%%%%%%%%%%%%%%%%%%%%%%%%%

\section{Proof of Theorem \ref{theo2}}\label{sec5}
In order to derive \eqref{i:19},\eqref{i:20} and \eqref{i:21} we require two auxiliary results of independent interest.
\begin{prop} Suppose $H_t\in\mathcal{L}(L^2(\mathbb{R}_+))$ defined in \eqref{sp:0} satisfies the conditions in Assumption \ref{ass1} with $\phi:\mathbb{R}\rightarrow\mathbb{C}$ continuously differentiable such that
\begin{equation}\label{c1}
	\lim_{x\rightarrow+\infty}\phi(x)=0,
\end{equation}
and for every $t\in\mathbb{R}$,
\begin{equation}\label{c2}
	\int_0^{\infty}x|(\tau_t\phi)(x)|^2\d x <\infty,\ \ \ \int_0^{\infty}x|(D\tau_t\phi)(x)|^2\d x<\infty,\ \ \ \int_0^{\infty}\sqrt{\int_0^{\infty}|(D\tau_t\phi)(x+y)|^2\d x}\,\d y<\infty.
\end{equation}
Then we have, provided the families $\{\tau_t\phi\}_{t\in\mathbb{R}}$ and $\{D\tau_t\phi\}_{t\in\mathbb{R}}$ are $L^2(\mathbb{R}_+)$ dominated and provided there exist $c,t_0>0$ such that $|q(t,\gamma)|\leq ct^{-1-\epsilon}$ for all $t\geq t_0,\gamma\in[0,1]$ with some $\epsilon>0$,
\begin{equation}\label{c3}
	\ln G(t,\sqrt{\gamma})=\ln G(t,-\sqrt{\gamma})-\int_t^{\infty}q(s,\gamma)\,\d s,\ \ \ \ (t,\gamma)\in\mathbb{R}\times[0,1],
\end{equation}
with branches for the logarithms that are analytically continued along the orbits of the corresponding functions. Here, $q(t,\gamma)=\sqrt{\gamma}((I-\gamma K_t)^{-1}\tau_t\phi)(0)$ and $G(t,\pm\sqrt{\gamma})$ as in \eqref{i:15}.
\end{prop}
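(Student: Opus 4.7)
The plan is to reduce \eqref{c3} to a single trace identity and integrate from $t$ to $+\infty$. Since $K_t=H_t^2$, we factor
\[
F(t,\gamma)=\det(I-\sqrt\gamma H_t)\det(I+\sqrt\gamma H_t)=G(t,\sqrt\gamma)G(t,-\sqrt\gamma),
\]
so by Assumption \ref{ass1} both $\ln G(t,\pm\sqrt\gamma)$ vanish as $t\to+\infty$, matching the vanishing of $\omega(t,\gamma):=\int_t^\infty q(s,\gamma)\,\d s$ (finite by the hypothesis $|q(s,\gamma)|\leq cs^{-1-\epsilon}$, with a consistent branch of logarithm fixed throughout). It therefore suffices to prove the derivative identity $\frac{\d}{\d t}[\ln G(t,\sqrt\gamma)-\ln G(t,-\sqrt\gamma)]=q(t,\gamma)$ on $\mathbb{R}\times[0,1]$.

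Jacobi's formula gives $\partial_t\ln G(t,\pm\sqrt\gamma)=\mp\sqrt\gamma\,\mathrm{tr}\bigl((I\mp\sqrt\gamma H_t)^{-1}H_t'\bigr)$, where $H_t'$ is the Hankel operator with kernel $(D\tau_t\phi)(x+y)$; the trace-class claims and differentiability in $t$ follow from Assumption \ref{ass1} and \eqref{c1}--\eqref{c2}. Combining this with the partial-fraction identity $(I-\sqrt\gamma H_t)^{-1}+(I+\sqrt\gamma H_t)^{-1}=2R$, where $R:=(I-\gamma K_t)^{-1}$, collapses the difference of logarithmic derivatives to $-2\sqrt\gamma\,\mathrm{tr}(RH_t')$. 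Since $q(t,\gamma)=\sqrt\gamma(R\tau_t\phi)(0)$, the entire identity reduces to the single trace equation
\[
\mathrm{tr}(RH_t')=-\tfrac{1}{2}(R\tau_t\phi)(0),
\]
which will be the main technical obstacle.

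I intend to prove this via a symmetrized integration by parts on $\iint R(x,y)\partial_x\bigl((\tau_t\phi)(x+y)\bigr)\,\d x\d y$. Writing $R=I+L$ with $L:=\gamma K_tR=\gamma RK_t$ smooth on $\mathbb{R}_+^2$, the diagonal part of $I$ contributes $-\tfrac12\phi(t)$, while the boundary terms from integrating the $L$-piece by parts in both $x$ and $y$ produce $\phi(t)-(R\tau_t\phi)(0)$ after using $\int_0^\infty L(0,y)\phi(y+t)\,\d y=(R\tau_t\phi)(0)-\phi(t)$. What remains is a bulk integral involving $S:=(\partial_x+\partial_y)R$, for which I invoke the additive-Hankel identity $(\partial_x+\partial_z)K_t(x,z)=-\tau_t\phi(x)\tau_t\phi(z)$ from the proof of Lemma \ref{lem1}, together with a further integration by parts in $z$; this yields a resolvent equation $(I-\gamma K_t)S=\gamma K_t(\cdot,0)\otimes L(0,\cdot)-\gamma\tau_t\phi\otimes(R\tau_t\phi)$, with explicit solution
\[
S(x,y)=\bigl(R(x,0)-\delta(x)\bigr)L(0,y)-\gamma(R\tau_t\phi)(x)(R\tau_t\phi)(y).
\]
Pairing $S$ with $(\tau_t\phi)(x+y)$ and invoking the algebraic identity $(R\tau_t\phi)(0)=\phi(t)+\gamma(R\tau_t\phi,H_t\tau_t\phi)_{L^2(\mathbb{R}_+)}$---itself a direct consequence of $L\tau_t\phi=\gamma(RH_t)H_t\tau_t\phi$ and $[R,H_t]=0$---reduces the bulk integral to $-\gamma(R\tau_t\phi,H_t\tau_t\phi)$, and substitution of all pieces collapses the expression to exactly $-\tfrac12(R\tau_t\phi)(0)$. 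Legality of every differentiation, interchange of integration orders, and vanishing of boundary terms at infinity follows from the continuous differentiability of $\phi$, the decay \eqref{c1}, the integrability conditions \eqref{c2}, and the assumed $L^2(\mathbb{R}_+)$-dominance of $\tau_t\phi$ and $D\tau_t\phi$.
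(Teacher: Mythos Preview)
Your reduction is the same as the paper's: both argue that it suffices to prove $\tfrac{\d}{\d t}\big[\ln G(t,\sqrt{\gamma})-\ln G(t,-\sqrt{\gamma})\big]=q(t,\gamma)$, and both arrive at the scalar trace identity $\mathrm{tr}\big(RH_t'\big)=-\tfrac12(R\tau_t\phi)(0)$ with $R=(I-\gamma K_t)^{-1}$. From there, however, the arguments diverge.

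The paper stays at the abstract operator level. It writes $K_tH_t'=-H_t(\tau_t\phi\otimes\tau_t\phi)-H_t\,DK_t$ (one integration by parts in the inner variable), invokes trace cyclicity together with $[R,H_t]=0$ and $DH_t=H_t'$, and finds that $\mathrm{tr}(RH_t')$ reappears on the right-hand side with the opposite sign; solving the resulting linear equation for $\mathrm{tr}(RH_t')$ yields the identity in one stroke. No explicit resolvent kernel is ever computed.

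Your route is kernel-based: you derive the Darboux-type formula
\[
(\partial_x+\partial_y)L(x,y)=L(x,0)L(0,y)-\gamma(R\tau_t\phi)(x)(R\tau_t\phi)(y),
\]
and then pair it against $H_t(x,y)=(\tau_t\phi)(x+y)$, collapsing everything via the scalar identity $\gamma(R\tau_t\phi,H_t\tau_t\phi)=(R\tau_t\phi)(0)-\phi(t)$. This is correct (the computation checks), and it produces as a byproduct the explicit structure of $(\partial_x+\partial_y)L$, which is useful in its own right. The price is more regularity bookkeeping at the kernel level (differentiability of $L$ in both variables, decay of $K_t(x,z)L(z,y)$ as $z\to\infty$ for the boundary terms), whereas the paper's self-consistency trick only needs $DK_t\in\mathcal{C}_1$ and cyclicity. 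A small cosmetic point: writing $R(x,0)-\delta(x)$ is awkward since $R$ has a distributional diagonal; it is cleaner to state the formula directly in terms of $L(x,0)$, as you effectively do.
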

\begin{proof} By assumption, $I-\gamma K_t=I-\gamma H_t^2=(I-\sqrt{\gamma}H_t)(I+\sqrt{\gamma}H_t)$ is invertible on $L^2(\mathbb{R}_+)$, so $I\mp\sqrt{\gamma} H_t$ are also invertible and thus $G(t,\pm\sqrt{\gamma})\neq 0$ for all $(t,\gamma)\in\mathbb{R}\times[0,1]$. But $\mathbb{R}\ni t\mapsto H_t$ is differentiable by assumption with $\frac{\d}{\d t}H_t$ trace class, so by Jacobi's formula, after simplification,
\begin{align}
	\frac{\d}{\d t}\ln G(t,\sqrt{\gamma})-\frac{\d}{\d t}\ln G(t,-\sqrt{\gamma})=&\,-2\sqrt{\gamma}\tr_{L^2(\mathbb{R}_+)}\left((I-\gamma K_t)^{-1}\frac{\d}{\d t}H_t\right)\nonumber\\
	=&-2\sqrt{\gamma}\tr_{L^2(\mathbb{R}_+)}\left(\frac{\d}{\d t}H_t\right)-2\sqrt{\gamma}\tr_{L^2(\mathbb{R}_+)}\left((I-\gamma K_t)^{-1}\gamma K_t\frac{\d}{\d t}H_t\right).\label{c4}
\end{align}
Here, by \eqref{c1}, $K_t\frac{\d}{\d t}H_t=-H_t(\tau_t\phi\otimes\tau_t\phi)-H_tDK_t$ where $DK_t$ is trace class on $L^2(\mathbb{R}_+)$ by \eqref{c2}.
\begin{align}
	\textnormal{LHS}\,\eqref{c4}=-2\sqrt{\gamma}\tr_{L^2(\mathbb{R}_+)}\left(\frac{\d}{\d t}H_t\right)&\,+2\sqrt{\gamma}\tr_{L^2(\mathbb{R}_+)}\big((I-\gamma K_t)^{-1}\gamma H_t(\tau_t\phi\otimes\tau_t\phi)\big)\nonumber\\
	&\,+2\sqrt{\gamma}\tr_{L^2(\mathbb{R}_+)}\big((I-\gamma K_t)^{-1}\gamma H_tDK_t\big).\label{c5}
\end{align}
Observe that the kernels of the trace class operators $\frac{\d}{\d t}H_t$ and $(I-\gamma K_t)^{-1}\gamma H_t(\tau_t\phi\otimes\tau_t\phi)$ are continuous on $\mathbb{R}_+^2$ as $\phi$ is continuously differentiable and bounded, and as \eqref{c2} is in place. Thus by \cite[$(3.4)$]{Bris}, \cite[Theorem V.3.1.1]{Du} and \eqref{c1},
\begin{align}
	-2\sqrt{\gamma}\tr_{L^2(\mathbb{R}_+)}&\,\left(\frac{\d}{\d t}H_t\right)+2\sqrt{\gamma}\tr_{L^2(\mathbb{R}_+)}\big((I-\gamma K_t)^{-1}\gamma H_t(\tau_t\phi\otimes\tau_t\phi)\big)=\sqrt{\gamma}(\tau_t\phi)(0)\nonumber\\
	&\hspace{0.5cm}+2\sqrt{\gamma}\big(\gamma K_t(I-\gamma K_t)^{-1}\tau_t\phi\big)(0)=-\sqrt{\gamma}(\tau_t\phi)(0)+2q(t,\gamma),\ \ (t,\gamma)\in\mathbb{R}\times[0,1],\label{c6}
\end{align}
where we also used $K_t=K_t^{\ast}$ and
\begin{equation*}
	\big((I-\gamma K_t)^{-1}\gamma H_t\tau_t\phi\big)(x)=\big(\gamma K_t(I-\gamma K_t)^{-1}\big)(0,x),\ \ x\in\mathbb{R}_+.
\end{equation*}
Next, $DK_t\in\mathcal{C}_1(L^2(\mathbb{R}_+))$ by \eqref{c2} and $(I-\gamma K_t)^{-1}\gamma H_t\in\mathcal{L}(L^2(\mathbb{R}_+))$. Hence by \cite[Corollary $3.8$]{S},
\begin{align}
	&\,\tr_{L^2(\mathbb{R}_+)}\big((I-\gamma K_t)^{-1}\gamma H_tDK_t\big)=\tr_{L^2(\mathbb{R}_+)}\big(D\gamma K_t(I-\gamma K_t)^{-1}H_t\big)\nonumber\\
	=&\,\tr_{L^2(\mathbb{R}_+)}\big(D(I-\gamma K_t)^{-1}H_t\big)-\tr_{L^2(\mathbb{R}_+)}(DH_t)
	=\tr_{L^2(\mathbb{R}_+)}\big(DH_t(I-\gamma K_t)^{-1}\big)-\tr_{L^2(\mathbb{R}_+)}\left(\frac{\d}{\d t}H_t\right),\label{c7}
\end{align}
since $(I-\gamma K_t)^{-1}H_t=(I-\gamma H_t^2)^{-1}H_t=H_t(I-\gamma K_t)^{-1}$ and $DH_t=\frac{\d}{\d t}H_t$. Combining \eqref{c5},\eqref{c6},\eqref{c7} we thus find
\begin{equation*}
	\textnormal{LHS}\,\eqref{c4}=2q(t,\gamma)+2\sqrt{\gamma}\tr_{L^2(\mathbb{R}_+)}\left(\left[\frac{\d}{\d t}H_t\right](I-\gamma K_t)^{-1}\right),
\end{equation*}
and so all together, using \cite[Corollary $3.8$]{S} one more time,
\begin{equation*}
	\frac{\d}{\d t}\ln G(t,\sqrt{\gamma})-\frac{\d}{\d t}\ln G(t,-\sqrt{\gamma})=q(t,\gamma),\ \ \ (t,\gamma)\in\mathbb{R}\times[0,1].
\end{equation*}
Now simply integrate
\begin{equation*}
	-\ln G(t,\sqrt{\gamma})+\ln G(t,-\sqrt{\gamma})\stackrel{\eqref{i:22}}{=}\int_t^{\infty}\left\{\frac{\d}{\d s}\ln G(s,\sqrt{\gamma})-\frac{\d}{\d s}\ln G(s,-\sqrt{\gamma})\right\}\,\d s=\int_t^{\infty}q(s,\gamma)\,\d s
\end{equation*}
where we use the assumption $\|H_t\|_1\rightarrow 0$ as $t\rightarrow+\infty$ in the first equality and the integrability of $t\mapsto q(t,\gamma)$ on $\mathbb{R}_+$ in the second. Recall that $q(t,\gamma)$ is at least continuous in $t\in\mathbb{R}$, see the workings in Corollary \ref{gencoo}. This completes our proof of \eqref{c3}.
\end{proof}
Observe that \eqref{c3} relates $q(t,\gamma)$ to a ratio of two Fredholm determinants. Below we state an alternative representation for the same function.
\begin{prop} Suppose $H_t\in\mathcal{L}(L^2(\mathbb{R}_+))$ defined in \eqref{sp:0} satisfies the conditions in Assumption \ref{ass1} with $\phi:\mathbb{R}\rightarrow\mathbb{C}$ continuously differentiable and $\phi\in L^1(\mathbb{R}_+)$. Assume further that
\begin{equation}\label{c8}
	\lim_{x\rightarrow+\infty}\phi(x)=0,
\end{equation}
and that for every $t\in\mathbb{R}$,
\begin{equation}\label{c9}
	\int_0^{\infty}x|(\tau_t\phi)(x)|^2\,\d x<\infty,\ \ \ \int_0^{\infty}x|(D\tau_t\phi)(x)|^2\,\d x<\infty,\ \ \ \int_0^{\infty}\sqrt{\int_0^{\infty}|(\tau_t\phi)(x+y)|^2\,\d y}\,\d x<\infty.
\end{equation}
Then for all $(t,\gamma)\in\mathbb{R}\times[0,1]$, 
\begin{equation}\label{c10}
	\frac{\d}{\d t}\left[1\mp\sqrt{\gamma}\int_0^{\infty}\big((I\pm\sqrt{\gamma}H_t)^{-1}\tau_t\phi\big)(x)\,\d x\right]=\pm q(t,\gamma)\left[1\mp\sqrt{\gamma}\int_0^{\infty}\big((I\pm\sqrt{\gamma}H_t)^{-1}\tau_t\phi\big)(x)\,\d x\right],
\end{equation}
provided $\{(I-\gamma K_t)^{-1}\tau_t\phi\}_{t\in\mathbb{R}}$,$\{(I\pm\sqrt{\gamma}H_t)^{-1}\tau_tf\}_{t\in\mathbb{R}}$ with $f\in\{\phi,D\phi\}$ and $\{(I-\gamma K_t)^{-1}H_tD\tau_t\phi\}_{t\in\mathbb{R}}$ are $L^1(\mathbb{R}_+)$ dominated and $\{\tau_t\phi\}_{t\in\mathbb{R}},\{D\tau_t\phi\}_{t\in\mathbb{R}}$ are $L^2(\mathbb{R}_+)$ dominated.
\end{prop}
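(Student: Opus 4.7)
The plan is to reduce \eqref{c10} to an explicit first-order linear ODE for the two scalars $A_t^\pm := 1\mp\sqrt{\gamma}\int_0^\infty u_t^\pm(x)\,\d x$, where $u_t^\pm := (I\pm\sqrt{\gamma}H_t)^{-1}\tau_t\phi$, and then to identify the coefficient as $\pm q(t,\gamma)$ via an auxiliary ``resolvent of the constant function''. The only place where the additive Hankel composition structure enters is the derivative-shift identity
\begin{equation*}
\frac{\d}{\d t}u_t^\pm \,=\, (I\pm\sqrt{\gamma}H_t)^{-1}Du_t^\pm,
\end{equation*}
obtained by differentiating $(I\pm\sqrt{\gamma}H_t)u_t^\pm = \tau_t\phi$ both in $t$ and in $x$ and subtracting: because $\frac{\d}{\d t}H_t$ and $DH_t$ share the common kernel $D\phi(x+y+t)$, the $H_t$-pieces cancel, and invertibility of $I\pm\sqrt{\gamma}H_t$ closes the argument.

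To evaluate $\int_0^\infty (I\pm\sqrt{\gamma}H_t)^{-1}Du_t^\pm\,\d x$, introduce $\Phi(x):=\int_x^\infty\phi(y)\,\d y$ and set $\alpha_t^\pm := 1\mp\sqrt{\gamma}(I\pm\sqrt{\gamma}H_t)^{-1}\tau_t\Phi$. Since $(H_t1)(x)=\Phi(x+t)=\tau_t\Phi(x)$, a direct check gives $(I\pm\sqrt{\gamma}H_t)\alpha_t^\pm = 1$, so $\alpha_t^\pm$ serves as the (non-$L^2$) function $(I\pm\sqrt{\gamma}H_t)^{-1}1$. Differentiating this equation in $x$ and integrating by parts inside the Hankel integral (using $\phi(x)\to 0$ as $x\to+\infty$) yields $D\alpha_t^\pm = \pm\sqrt{\gamma}\,\alpha_t^\pm(0)\,u_t^\mp$; integrating this first-order ODE from $0$ to $\infty$ with the boundary value $\alpha_t^\pm(\infty)=1$ produces the scalar identity $\alpha_t^\pm(0)\,A_t^\mp = 1$. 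The same relation $H_t1=\tau_t\Phi$ also delivers the functional identity $\int_0^\infty(I\pm\sqrt{\gamma}H_t)^{-1}g\,\d x = \int_0^\infty g\,\alpha_t^\pm\,\d x$ for admissible $g$, just by swapping the order of integration in the Neumann series and reassembling.

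Applying this formula with $g = Du_t^\pm$ and integrating by parts on $\mathbb{R}_+$ once more (the boundary term at $+\infty$ vanishes by the $L^1$ domination hypothesis) yields
\begin{equation*}
\frac{\d}{\d t}A_t^\pm \,=\, \pm\alpha_t^\pm(0)\left[\sqrt{\gamma}\,u_t^\pm(0)\pm\gamma\int_0^\infty u_t^+(x)u_t^-(x)\,\d x\right].
\end{equation*}
The bracket equals $q(t,\gamma)$ via the two algebraic identities $2q(t,\gamma) = \sqrt{\gamma}\bigl(u_t^+(0)+u_t^-(0)\bigr)$ and $u_t^+(0)-u_t^-(0) = -2\sqrt{\gamma}\int_0^\infty u_t^+u_t^-\,\d x$, which both stem from the resolvent factorizations $(I-\gamma K_t)^{-1} = \frac{1}{2}\bigl[(I+\sqrt{\gamma}H_t)^{-1}+(I-\sqrt{\gamma}H_t)^{-1}\bigr]$ and $(I+\sqrt{\gamma}H_t)^{-1}-(I-\sqrt{\gamma}H_t)^{-1} = -2\sqrt{\gamma}\,H_t(I-\gamma K_t)^{-1}$. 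Hence $\frac{\d}{\d t}A_t^\pm = \pm q(t,\gamma)\,\alpha_t^\pm(0)$. To upgrade $\alpha_t^\pm(0)$ to $A_t^\pm$ in the right-hand side, differentiate the product: $\frac{\d}{\d t}(A_t^+A_t^-) = q\alpha_t^+(0)A_t^- - qA_t^+\alpha_t^-(0) = q-q = 0$, so $A_t^+A_t^-$ is constant in $t$. Assumption \ref{ass1} together with the $L^1$ dominance of $\{(I\pm\sqrt{\gamma}H_t)^{-1}\tau_t\phi\}_{t\in\mathbb{R}}$ force $A_t^\pm\to 1$ as $t\to+\infty$, so this constant equals $1$, and $\alpha_t^\pm(0) = 1/A_t^\mp = A_t^\pm$ completes \eqref{c10}.

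The main analytic obstacle is justifying every integration-by-parts and Fubini-type exchange under the stated hypotheses, and confirming that all boundary terms at $+\infty$ vanish (for $u_t^\pm$, $\alpha_t^\pm$, and $u_t^\pm\alpha_t^\pm$). The $L^1(\mathbb{R}_+)$ dominance of $(I\pm\sqrt{\gamma}H_t)^{-1}\tau_t\phi$, $(I\pm\sqrt{\gamma}H_t)^{-1}\tau_tD\phi$ and $(I-\gamma K_t)^{-1}H_tD\tau_t\phi$, the $L^2(\mathbb{R}_+)$ dominance of $\tau_t\phi,D\tau_t\phi$, together with $\phi\in L^1(\mathbb{R}_+)$, $\phi(x)\to 0$ at $+\infty$ and the integrability requirements \eqref{c9}, are tailored precisely so that the $t$-differentiation of $(I\pm\sqrt{\gamma}H_t)^{-1}$, the derivation of $D\alpha_t^\pm$, the IBP formula for $\int(I\pm\sqrt{\gamma}H_t)^{-1}g$, and the $t\to+\infty$ limit all go through by dominated convergence.
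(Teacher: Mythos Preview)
Your approach is correct and genuinely different from the paper's. The paper never introduces the auxiliary function $\alpha_t^\pm=(I\pm\sqrt{\gamma}H_t)^{-1}1$; instead it works directly at the kernel level, writing $\mp\sqrt{\gamma}\,u_t^\pm(x)=(I\pm\sqrt{\gamma}H_t)^{-1}(0,x)-\delta(x)$ so that $\frac{\d}{\d t}A_t^\pm=\mp\sqrt{\gamma}\int_0^\infty\big((I\pm\sqrt{\gamma}H_t)^{-1}DH_t(I\pm\sqrt{\gamma}H_t)^{-1}\big)(0,x)\,\d x$, and then reduces this integral in one stroke via the operator identity $-(I\mp\sqrt{\gamma}H_t)DH_t\,u_t^\pm=H_tD\tau_t\phi+\tau_t\phi\cdot u_t^\pm(0)$ to $\pm q(t,\gamma)\int_0^\infty(I\pm\sqrt{\gamma}H_t)^{-1}(0,x)\,\d x=\pm q(t,\gamma)A_t^\pm$. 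In particular the paper needs neither the conservation law $A_t^+A_t^-\equiv 1$ nor the limit $t\to+\infty$. Your route, by contrast, produces the pleasant byproducts $D\alpha_t^\pm=\pm\sqrt{\gamma}\,\alpha_t^\pm(0)\,u_t^\mp$ and $\alpha_t^\pm(0)A_t^\mp=1$, which are of independent interest.

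Two remarks. First, your justification of the pairing $\int(I\pm\sqrt{\gamma}H_t)^{-1}g=\int g\,\alpha_t^\pm$ via the Neumann series is shaky: Assumption~\ref{ass1} only gives invertibility of $I-\gamma H_t^2$, not $\|\sqrt{\gamma}H_t\|<1$. The identity is still valid, but the clean justification is simply symmetry of the resolvent kernel. Second, your final detour through $\frac{\d}{\d t}(A_t^+A_t^-)=0$ and $t\to+\infty$ is unnecessary: the same duality you already use gives $\int_0^\infty u_t^\pm=\int_0^\infty\tau_t\phi\cdot\alpha_t^\pm=(H_t\alpha_t^\pm)(0)=\big((I\pm\sqrt{\gamma}H_t)^{-1}\tau_t\Phi\big)(0)$ (since $H_t$ commutes with its resolvent and $H_t1=\tau_t\Phi$), so $\alpha_t^\pm(0)=A_t^\pm$ holds outright. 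This also spares you the analytic work of making sense of $\alpha_t^\pm$ outside $L^2$ and of justifying $A_t^\pm\to 1$.
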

\begin{proof} Since $\phi\in L^1(\mathbb{R}_+)\cap L^2(\mathbb{R}_+)$ and since $I-\gamma K_t$ is invertible, the map $\mathbb{R}_{\geq 0}\ni x\mapsto ((I\pm\sqrt{\gamma}H_t)^{-1}\tau_t\phi)(x)$ is well-defined and by \eqref{c9} in $L^1(\mathbb{R}_+)\cap L^2(\mathbb{R}_+)$. Thus
\begin{equation}\label{c11}
	\mathbb{R}\ni t\mapsto \int_0^{\infty}\big((I\pm\sqrt{\gamma}H_t)^{-1}\tau_t\phi\big)(x)\,\d x
\end{equation}
exists and is moreover differentiable: indeed, by \eqref{c9} and $L^2(\mathbb{R}_+)$ dominance, $DH_t\in\mathcal{L}(L^2(\mathbb{R}_+))$ as well as  $(I\pm\sqrt{\gamma}H_t)^{-1}\tau_t\phi\in W^{1,2}(\mathbb{R}_+)$. Then, using \eqref{c8},
\begin{equation*}
	\big(DH_t(I\pm\sqrt{\gamma}H_t)^{-1}\tau_t\phi\big)(x)=-(\tau_t\phi)(x)\big((I\pm\sqrt{\gamma}H_t)^{-1}\tau_t\phi\big)(0)-\big(H_tD(I\pm\sqrt{\gamma}H_t)^{-1}\tau_t\phi\big)(x),\ \ x\in\mathbb{R}_+,
\end{equation*}
or equivalently,
\begin{align}\label{c12}
	-\big((I\mp\sqrt{\gamma}H_t)DH_t(I\pm\sqrt{\gamma}H_t)^{-1}\tau_t\phi\big)(x)=\big(H_tD\tau_t\phi\big)(x)+(\tau_t\phi)(x)\big((I\pm\sqrt{\gamma}H_t)^{-1}\tau_t\phi\big)(0).
\end{align}
This yields that
\begin{equation*}
	\frac{\partial}{\partial t}\big((I\pm\sqrt{\gamma}H_t)^{-1}\tau_t\phi\big)(x)=\mp\sqrt{\gamma}\big((I\pm\sqrt{\gamma}H_t)^{-1}DH_t(I\pm\sqrt{\gamma}H_t)^{-1}\tau_t\phi\big)(x)+\big((I\pm\sqrt{\gamma}H_t)^{-1}D\tau_t\phi\big)(x)
\end{equation*}
is continuous in $x\in\mathbb{R}_+$ and $L^1(\mathbb{R}_+)$ dominated by the imposed $L^1(\mathbb{R}_+)$ dominance assumptions. Thus, \eqref{c11} is differentiable by the dominated convergence theorem and we have
\begin{equation*}
	\frac{\d}{\d t}\int_0^{\infty}\big((I\pm\sqrt{\gamma}H_t)^{-1}\tau_t\phi\big)(x)\,\d x=\int_0^{\infty}\frac{\partial}{\partial t}\big((I\pm\sqrt{\gamma}H_t)^{-1}\tau_t\phi\big)(x)\,\d x.
\end{equation*}
To evaluate the derivative in the last right hand side we note that by symmetry of $H_t$, $((I\pm\sqrt{\gamma}H_t)^{-1}\tau_t\phi)(x)=(H_t(I\pm\sqrt{\gamma}H_t)^{-1})(0,x)$ for any $x\in\mathbb{R}_+$, and so
\begin{equation*}
	\sqrt{\gamma}\int_0^{\infty}\frac{\partial}{\partial t}\big((I\pm\sqrt{\gamma}H_t)^{-1}\tau_t\phi\big)(x)\,\d x=\mp\int_0^{\infty}\frac{\partial}{\partial t}(I\pm\sqrt{\gamma}H_t)^{-1}(0,x)\,\d x.
\end{equation*}
Consequently
\begin{align*}
	\textnormal{LHS}\,\eqref{c10}=\mp\sqrt{\gamma}\int_0^{\infty}\big((I\pm\sqrt{\gamma}H_t)^{-1}DH_t(I\pm\sqrt{\gamma}H_t)^{-1}\big)(0,x)\,\d x\stackrel{\eqref{c12}}{=}\pm q(t,\gamma)\int_0^{\infty}(I\pm\sqrt{\gamma}H_t)^{-1}(0,x)\,\d x,
\end{align*}
and the remaining integral equals $1\mp\sqrt{\gamma}\int_0^{\infty}((I\pm\sqrt{\gamma}H_t)^{-1}\tau_t\phi)(x)\,\d x$ since $(I
\pm\sqrt{\gamma}H_t)^{-1}=I\mp\sqrt{\gamma}H_t(I\pm\sqrt{\gamma}H_t)^{-1}$ and since $H_t$ is symmetric. The proof of \eqref{c10} is complete.
\end{proof}
Equipped with \eqref{c3} and \eqref{c10} we now prove Theorem \ref{theo2}.
\begin{proof}[Proof of Theorem \ref{theo2}] With $\phi\in L^1(\mathbb{R}_+)$, the map $\mathbb{R}_{\geq 0}\ni x\mapsto 1-\int_x^{\infty}(\tau_t\phi)(y)\,\d y$ is bounded. Moreover, by \eqref{i:18a}, $((I-\gamma K_t)^{-1}\tau_t\phi\in L^1(\mathbb{R}_+)$, so both \eqref{i:16} and \eqref{i:17} are well-defined (note $\gamma_{\circ}\in[0,1]$ for $\gamma\in[0,1]$). To evaluate them, we use \eqref{c3} and \eqref{c10}: first by algebra and definition \eqref{sp:0},
\begin{align}
	\gamma\int_0^{\infty}\big(&(I-\gamma K_t)^{-1}\tau_t\phi\big)(x)\left[1-\int_x^{\infty}(\tau_t\phi)(y)\,\d y\right]\d x=\frac{1}{2}(\sqrt{\gamma}-1)\sqrt{\gamma}\int_0^{\infty}\big((I-\sqrt{\gamma} H_t)^{-1}\tau_t\phi\big)(z)\,\d z\nonumber\\
	&\hspace{1cm}+\frac{1}{2}(\sqrt{\gamma}+1)\sqrt{\gamma}\int_0^{\infty}\big((I+\sqrt{\gamma}H_t)^{-1}\tau_t\phi\big)(z)\,\d z\label{c13}
\end{align}
using also the identity
\begin{equation}\label{c14}
	\int_0^{\infty}(I\pm\sqrt{\gamma}H_t)^{-1}(x,z)\left[1\pm\sqrt{\gamma}\int_0^{\infty}H_t(x,y)\,\d y\right]\d x=1,\ \ z\in\mathbb{R}_+,
\end{equation}
a consequence of the symmetry of $H_t$ and $(I\pm\sqrt{\gamma}H_t)^{-1}=I\mp(I\pm\sqrt{\gamma}H_t)^{-1}\sqrt{\gamma}H_t$. Thus,
\begin{align*}
	1-\gamma&\,\int_0^{\infty}\big((I-\gamma K_t)^{-1}\tau_t\phi\big)(x)\left[1-\int_x^{\infty}(\tau_t\phi)(y)\,\d y\right]\d x\\
	&\stackrel{\eqref{c13}}{=}\frac{1}{2}(1-\sqrt{\gamma})\left[1+\sqrt{\gamma}\int_0^{\infty}\big((I-\sqrt{\gamma}H_t)^{-1}\tau_t\phi\big)(z)\,\d z\right]\\
	&\hspace{2cm}+\frac{1}{2}(1+\sqrt{\gamma})\left[1-\sqrt{\gamma}\int_0^{\infty}\big((I+\sqrt{\gamma}H_t)^{-1}\tau_t\phi\big)(z)\,\d z\right],
\end{align*}
and which yields \eqref{i:19} through \eqref{c3} and \eqref{c10} since $F(t,\gamma)=G(t,\sqrt{\gamma})G(t,-\sqrt{\gamma})$ and since $\|H_t\|\rightarrow 0$ as $t\rightarrow\infty$ in operator norm which allows one to integrate \eqref{c10}. The derivation of \eqref{i:20} is quite similar, instead of \eqref{c13} we have
\begin{align}
	\gamma\int_0^{\infty}\big(&(I-\gamma_{\circ}K_t)^{-1}\tau_t\phi\big)(x)\left[1-\int_x^{\infty}(\tau_t\phi)(y)\,\d y\right]\d x=\frac{1+\sqrt{\gamma_{\circ}}}{2(2-\gamma)}\sqrt{\gamma_{\circ}}\int_0^{\infty}\big((I+\sqrt{\gamma_{\circ}}H_t)^{-1}\tau_t\phi\big)(z)\,\d z\nonumber\\
	&\hspace{1cm}-\frac{1-\sqrt{\gamma_{\circ}}}{2(2-\gamma)}\sqrt{\gamma_{\circ}}\int_0^{\infty}\big((I-\sqrt{\gamma_{\circ}}H_t)^{-1}\tau_t\phi\big)(z)\,\d z\label{c15}
\end{align}
as \eqref{c14} is valid with the replacement $\gamma\mapsto\gamma_{\circ}$. Hence,
\begin{align*}
	1-\gamma\int_0^{\infty}\big(&(I-\gamma_{\circ}K_t)^{-1}\tau_t\phi\big)(x)\left[1-\int_x^{\infty}(\tau_t\phi)(y)\,\d y\right]\d x\\
	&\stackrel{\eqref{c15}}{=}\frac{1-\gamma}{2-\gamma}+\frac{1+\sqrt{\gamma_{\circ}}}{2(2-\gamma)}\left[1-\sqrt{\gamma_{\circ}}\int_0^{\infty}\big((I+\sqrt{\gamma_{\circ}}H_t)^{-1}\tau_t\phi\big)(z)\,\d z\right]\\
	&\hspace{3cm}+\frac{1-\sqrt{\gamma_{\circ}}}{2(2-\gamma)}\left[1+\sqrt{\gamma_{\circ}}\int_0^{\infty}\big((I-\sqrt{\gamma_{\circ}}H_t)^{-1}\tau_t\phi\big)(z)\,\d z\right],
\end{align*}
which yields \eqref{i:20} through \eqref{c3} and \eqref{c10} (simply replace $\gamma\mapsto\gamma_{\circ}$ in the last two identities). Finally, $\mathbb{R}_{\geq 0}\ni x\mapsto\int_x^{\infty}(\tau_t\phi)(y)\,\d y$ is also bounded since $\phi\in L^1(\mathbb{R}_+)$, so \eqref{i:18} is well-defined given $(I-\gamma K_t)^{-1}\tau_t\phi\in L^1(\mathbb{R}_+)$. Next,
\begin{align}
	\gamma\int_0^{\infty}\big(&(I-\gamma K_t)^{-1}\tau_t\phi\big)(x)\left[\frac{1}{2}\int_x^{\infty}(\tau_t\phi)(y)\,\d y\right]\d x\stackrel{\eqref{c14}}{=}\frac{1}{4}\sqrt{\gamma}\int_0^{\infty}\big((I-\sqrt{\gamma}H_t)^{-1}\tau_t\phi\big)(z)\,\d z\nonumber\\
	&\hspace{1cm}-\frac{1}{4}\sqrt{\gamma}\int_0^{\infty}\big((I+\sqrt{\gamma}H_t)^{-1}\tau_t\phi\big)(z)\,\d z\label{c16}
\end{align}
and so
\begin{align*}
	1+\gamma&\int_0^{\infty}\big((I-\gamma K_t)^{-1}\tau_t\phi\big)(x)\left[\frac{1}{2}\int_x^{\infty}(\tau_t\phi)(y)\,\d y\right]\d x\\
	&\stackrel{\eqref{c16}}{=}\frac{1}{2}+\frac{1}{4}\left[1+\sqrt{\gamma}\int_0^{\infty}\big((I-\sqrt{\gamma}H_t)^{-1}\tau_t\phi\big)(x)\,\d x\right]+\frac{1}{4}\left[1-\sqrt{\gamma}\int_0^{\infty}\big((I+\sqrt{\gamma}H_t)^{-1}\tau_t\phi\big)(x)\,\d x\right],
\end{align*}
which equals \eqref{i:21} by \eqref{c3} and \eqref{c10}. The proof of Theorem \ref{theo2} is complete.
\end{proof}

We are now left to prove our last result about Fredholm determinants of additive Hankel composition operators, namely Theorem \ref{theo2a}.
\begin{proof}[Proof of Theorem \ref{theo2a}] By assumption \eqref{i:24}, the coefficients $r_j(z)$ in \eqref{s9} admit analytic extensions to the closed horizontal strip $|\Im z|\leq\epsilon$, by Morera's and Fubini's theorem. Moreover, for $0\leq|\Im z|\leq\epsilon$ we have
\begin{equation}\label{c17}
	|r_1(z)|\leq\int_{-\infty}^{\infty}|\phi(y)|\e^{(\Im z)y}\,\d y\stackrel{\eqref{i:24}}{\leq}\frac{2a}{(a-\epsilon)(a+\epsilon)}<1,
\end{equation}
since $a\geq 2+\epsilon$. A bound analogous to \eqref{c17} holds for $|r_2(z)|$ as well. Next, integrating by parts, for any $z\neq 0$ with $0\leq|\Im z|\leq\epsilon$,
\begin{equation*}
	r_1(z)\stackrel{\eqref{i:24}}{=}-\frac{1}{z}\int_{-\infty}^{\infty}(D\phi)(y)\e^{-\im zy}\,\d y,\ \ \ \ \ r_2(z)\stackrel{\eqref{i:24}}{=}-\frac{1}{z}\int_{-\infty}^{\infty}(D\psi)(y)\e^{\im zy}\,\d y,
\end{equation*}
and thus in the same closed horizontal strip, by \eqref{i:24}, $|zr_j(z)|<1$. Based on these initial observations we now commence the asymptotic analysis of $F(t)$ as $t\rightarrow-\infty$. First, we $\gamma$-modify our initial setup \eqref{n1} and consider the trace class family $K_{t,\gamma}\in\mathcal{L}(L^2(\mathbb{R}_+))$ with kernel
\begin{equation*}
	K_{t,\gamma}(x,y):=\gamma K_t(x,y)=\gamma\int_0^{\infty}\phi(x+z+t)\psi(z+y+t)\,\d z,\ \ \ \gamma\in[0,1],
\end{equation*}
and associated Fredholm determinant $D(t,\gamma)$. Clearly, $D(t,1)=F(t), D(t,0)=1$, and $D(t,\gamma)$ is encoded in the following $\gamma$-modification of RHP \ref{master2}.
\begin{problem}\label{gam1} Fix $(t,\gamma)\in\mathbb{R}\times[0,1]$ and $\phi,\psi$ as in \eqref{i:24}. Find ${\bf X}(z)={\bf X}(z;t,\gamma,\phi,\psi)\in\mathbb{C}^{2\times 2}$ such that
\begin{enumerate}
	\item[(1)] ${\bf X}(z)$ is analytic for $z\in\mathbb{C}\setminus\mathbb{R}$.
	\item[(2)] ${\bf X}(z)$ admits continuous pointwise limits ${\bf X}_{\pm}(z):=\lim_{\epsilon'\downarrow 0}{\bf X}(z\pm\im\epsilon'),z\in\mathbb{R}$ which obey
	\begin{equation*}
		{\bf X}_+(z)={\bf X}_-(z)\begin{bmatrix}1-\gamma r_1(z)r_2(z) & -\sqrt{\gamma}\,r_2(z)\e^{-\im tz}\smallskip\\
		\sqrt{\gamma}\,r_1(z)\e^{\im tz} & 1\end{bmatrix},\ \ z\in\mathbb{R},
	\end{equation*}
	with $r_j(z)$ as in \eqref{s9}.
	\item[(3)] As $z\rightarrow\infty$,
	\begin{equation*}
		{\bf X}(z)=\mathbb{I}+o(1);\ \ \ \ \ \ {\bf X}_1(t,\gamma,\phi,\psi)=\big[X_1^{mn}(t,\gamma,\phi,\psi)\big]_{m,n=1}^2:=\lim_{\substack{z\rightarrow\infty\\ \Im z\not\equiv\textnormal{const.}}}z\big({\bf X}(z)-\mathbb{I}\big).
	\end{equation*}
\end{enumerate}
\end{problem}
Indeed, by assumption and Theorem \ref{theo1}, RHP \ref{gam1} is uniquely solvable for all $(t,\gamma)\in\mathbb{R}\times[0,1]$ and
\begin{equation}\label{c18}
	\frac{\partial}{\partial t}\ln D(t,\gamma)=\im X_1^{11}(t,\gamma,\phi,\psi).
\end{equation}
Second, as outlined in Remark \ref{IIKScon1}, the transformation
\begin{equation}\label{c19}
	{\bf T}(z;t,\gamma,\phi,\psi):={\bf X}(z;t,\gamma,\phi,\psi)\begin{cases}\begin{bmatrix}1 & 0\\ -\sqrt{\gamma}\,r_1(z)\e^{\im tz} & 1\end{bmatrix},&\Im z\in(0,\epsilon)\smallskip\\
	%\begin{bmatrix}1 & -\sqrt{\gamma}\,r_2(z)\e^{-\im tz}\\ 0 & 1\end{bmatrix},&\Im z\in(-\epsilon,0)\smallskip\\
	\mathbb{I},&\textnormal{else}
	\end{cases}
\end{equation}
maps RHP \ref{gam1} to an integrable operator RHP of the following type.
\begin{problem}\label{gam2} Let $(t,\gamma)\in\mathbb{R}\times[0,1]$ and $\phi,\psi$ as in the statement of Theorem \ref{theo2a}. The function ${\bf T}(z)={\bf T}(z;t,\gamma,\phi,\psi)\in\mathbb{C}^{2\times 2}$ defined in \eqref{c19} is uniquely determined by the following properties:
\begin{enumerate}
	\item[(1)] ${\bf T}(z)$ is analytic for $z\in\mathbb{C}\setminus\Sigma_{\bf T}$ with $\Sigma_{\bf T}:=\mathbb{R}\cup(\mathbb{R}+\im\epsilon)$ and extends continuously on the closure of $\mathbb{C}\setminus\Sigma_{\bf T}$.
	\item[(2)] The limiting values ${\bf T}_{\pm}(z)=\lim_{\epsilon'\downarrow 0}{\bf T}(z\pm\im\epsilon')$ on $\Sigma_{\bf T}\ni z$ satisfy the jump condition ${\bf T}_+(z)={\bf T}_-(z){\bf G}_{\bf T}(z;t,\gamma,\phi,\psi)$ where the jump matrix ${\bf G}_{\bf T}(z;t,\gamma,\phi,\psi)$ is given by
	\begin{align*}
		{\bf G}_{\bf T}(z;t,\gamma,\phi,\psi)=&\,\begin{bmatrix}1 & 0\\ \sqrt{\gamma}\,r_1(z)\e^{\im tz} & 1\end{bmatrix},\ \ \Im z=\epsilon,\\
		{\bf G}_{\bf T}(z;t,\gamma,\phi,\psi)=&\,\begin{bmatrix}1 & -\sqrt{\gamma}\,r_2(z)\e^{-\im tz}\\ 0 & 1\end{bmatrix},\ \ \Im z=0.
	\end{align*}
	\item[(3)] Since $|zr_j(z)|\rightarrow 0$ as $|z|\rightarrow\infty$ in $|\Im z|\leq\epsilon$ by the Riemann-Lebesgue lemma and since $|\e^{\pm\im tz}|\leq\e^{\epsilon|t|}$ in $|\Im z|\leq\epsilon$, we have uniformly as $z\rightarrow\infty$ in $\mathbb{C}\setminus\Sigma_{\bf T}$,
	\begin{equation}\label{c20}
		{\bf T}(z)=\mathbb{I}+o(1);\ \ \ \ \ {\bf T}_1(t,\gamma,\phi,\psi):=\lim_{\substack{z\rightarrow\infty\\ \Im z\not\equiv\textnormal{const.}}}z\big({\bf T}(z)-\mathbb{I}\big)={\bf X}_1(t,\gamma,\phi,\psi).
	\end{equation}
\end{enumerate}
\end{problem}
Observe that RHP \ref{gam2} is the RHP associated with the trace class integrable operator $M_{t,\gamma}:L^2(\Sigma_{\bf T})\rightarrow L^2(\Sigma_{\bf T})$ with kernel
\begin{equation}\label{c20a}
	M_{t,\gamma}(z,w)=\frac{f_1(z)g_1(w)+f_2(z)g_2(w)}{z-w},\ \ \ \ (z,w)\in\Sigma_{\bf T}\times\Sigma_{\bf T},
\end{equation}
cf. \cite{IIKS}, where the bounded functions
\begin{align}
	f_1(z):=&\,\frac{1}{2\pi\im}\chi_{\mathbb{R}}(z),\hspace{1.8cm} f_2(z):=-\frac{1}{2\pi\im}\e^{\im tz}\chi_{\mathbb{R}+\im\epsilon}(z),\nonumber\\
	g_1(w):=&\,\sqrt{\gamma}r_1(w)\chi_{\mathbb{R}+\im\epsilon}(w),\ \ \ \
	 g_2(w):=\sqrt{\gamma}\,\e^{-\im tw}r_2(w)\chi_{\mathbb{R}}(w),\label{rev1}
\end{align}
are expressed in terms of the characteristic functions $\chi_{\mathbb{R}+\im\epsilon},\chi_{\mathbb{R}}$ on $\Im z=\epsilon,\Im z=0$. Given that RHP \ref{gam1} is uniquely solvable for all $(t,\gamma)\in\mathbb{R}\times[0,1]$, the same applies to RHP \ref{gam2} because of \eqref{c19} and so we can compute ${\bf T}(z)$ by the general formula, cf. \cite{IIKS},
\begin{equation}\label{c21}
	{\bf T}(z)=\mathbb{I}-\int_{\Sigma_{\bf T}}\begin{bmatrix}F_1(\lambda)g_1(\lambda) & F_1(\lambda)g_2(\lambda)\\
	F_2(\lambda)g_1(\lambda) & F_2(\lambda)g_2(\lambda)\end{bmatrix}\frac{\d\lambda}{\lambda-z},\ \ \ z\notin\Sigma_{\bf T},
\end{equation}
where $g_j$ are as above and $F_j$ given on $\Sigma_{\bf T}$ by, independent of the choice of limiting values,
\begin{equation}\label{c22}
	\begin{bmatrix}F_1\\ F_2\end{bmatrix}=(I-M_{t,\gamma})^{-1}\begin{bmatrix}f_1\\ f_2\end{bmatrix}={\bf T}_{\pm}\begin{bmatrix}f_1\\ f_2\end{bmatrix}.
\end{equation}
Next, $\frac{\partial}{\partial t}M_{t,\gamma}=\im f_2\otimes g_2\in\mathcal{C}_1(L^2(\Sigma_{\bf T}))$ and so, using the differentiability of $t\mapsto M_{t,\gamma}\in\mathcal{L}(L^2(\Sigma_{\bf T}))$ and the trace identity $\tr{\bf T}_1=\tr{\bf X}_1=0$, we obtain for the Fredholm determinant $H(t,\gamma)$ of $M_{t,\gamma}$ on $L^2(\Sigma_{\bf T})$,
\begin{equation*}
	\frac{\partial}{\partial t}\ln H(t,\gamma)=-\im\tr_{L^2(\Sigma_{\bf T})}\big((I-M_{t,\gamma})^{-1}f_2\otimes g_2\big)\stackrel{\eqref{c22}}{=}-\im\tr_{L^2(\Sigma_{\bf T})}(F_2\otimes g_2)\stackrel[\eqref{c21}]{\eqref{c20}}{=}\im X_1^{11}(t,\gamma,\phi,\psi).
\end{equation*}
Consequently, comparing the last equality to \eqref{c18}, we have found
\begin{equation*}
	\ln D(t,\gamma)=\ln H(t,\gamma)+\eta(\gamma),\ \ \ \ \ \ (t,\gamma)\in\mathbb{R}\times[0,1],
\end{equation*}
with a $t$-independent term $\eta(\gamma)$. However, $D(t,\gamma)\rightarrow1$ as $t\rightarrow+\infty$ by \eqref{i:23} and since RHP \ref{gam2} is directly related to a small norm problem as $t\rightarrow+\infty$ (simply move ${\bf G}_{\bf T}$ from $\Im z=0$ to $\Im z=-\epsilon$ and keep it on $\Im z=\epsilon$ as is), we have, by general theory \cite{DZ}, $T_1^{22}(t,\gamma,\phi,\psi)\rightarrow 0$ exponentially fast as $t\rightarrow+\infty$ for all $\gamma\in[0,1]$ and so $H(t,\gamma)\rightarrow 1$ as well for $t\rightarrow+\infty$. In short, 
\begin{equation}\label{c23}
	\ln D(t,\gamma)=\ln H(t,\gamma),\ \ \ \ \ \ (t,\gamma)\in\mathbb{R}\times[0,1],
\end{equation}
which identifies our $\gamma$-modified determinant $D(t,\gamma)$ as the Fredholm determinant of the integrable operator \eqref{c20a}. In particular we can now compute $F(t)$ in terms of the solution of RHP \ref{gam2},
\begin{align}
	\ln F(t)=\int_0^1\frac{\partial}{\partial\gamma}&\ln D(t,\gamma)\,\d\gamma\stackrel{\eqref{c23}}{=}\int_0^1\frac{\partial}{\partial\gamma}\ln H(t,\gamma)\,\d\gamma=-\frac{1}{2}\int_0^1\tr_{L^2(\Sigma_{\bf T})}\big((I-M_{t,\gamma})^{-1}M_{t,\gamma}\big)\frac{\d\gamma}{\gamma}\nonumber\\
	=&\,-\frac{1}{2}\int_0^1\left[\int_{\Sigma_{\bf T}}R_{t,\gamma}(\lambda,\lambda)\,\d\lambda\right]\frac{\d\gamma}{\gamma},\ \ \ \ R_{t,\gamma}:=(I-M_{t,\gamma})^{-1}-I\in\mathcal{L}(L^2(\Sigma_{\bf T})),\label{c24}
\end{align}
where, by general theory \cite[$(5.5)$]{IIKS},
\begin{equation*}
	R_{t,\gamma}(\lambda,\lambda)=F_1'(\lambda)G_1(\lambda)+F_2'(\lambda)G_2(\lambda),\ \ \ \lambda\in\Sigma_{\bf T},
\end{equation*}
with $F_j'=DF_j$ as above in \eqref{c22} and $G_j$ given on $\Sigma_{\bf T}$ by, independent of the choice of limiting values,
\begin{equation}\label{c24a}
	\begin{bmatrix}G_1\\ G_2\end{bmatrix}=\big({\bf T}_{\pm}^{-1}\big)^{\top}\begin{bmatrix}g_1\\ g_2\end{bmatrix}.
\end{equation}
The double integral representation \eqref{c24} will serve as starting point for our asymptotic analysis of $F(t)$, and we now commence the nonlinear steepest descent analysis of RHP \ref{gam2}, equivalently of RHP \ref{gam1}, see \eqref{c19}. First define
\begin{equation}\label{c25}
	g(z;\gamma):=-\frac{1}{2\pi\im}\int_{-\infty}^{\infty}\ln\big(1-\gamma r_1(\lambda)r_2(\lambda)\big)\frac{\d\lambda}{\lambda-z},\ \ \ z\in\mathbb{C}\setminus\mathbb{R},
\end{equation}
seeing that the improper integral is absolutely convergent since $|r_j(\lambda)|<1$ and $|\lambda r_j(\lambda)|<1$ on $\mathbb{R}$ with $\gamma\in[0,1]$. Moreover, $\mathbb{R}\ni\lambda\mapsto \ln(1-\gamma r_1(\lambda)r_2(\lambda))$ is locally H\"older continuous by the continuous differentiability of $\mathbb{R}\ni \lambda\mapsto r_j(\lambda)$ and since $|r_j(\lambda)|<1$ on $\mathbb{R}$. Thus,
\begin{align*}
	g_+(z;\gamma)-g_-(z;\gamma)=&\,-\ln\big(1-\gamma r_1(z)r_2(z)\big),\ \ z\in\mathbb{R};\ \ \ \ \ \ \ \ \ \ \ \ g_{\pm}(z;\gamma):=\lim_{\epsilon'\downarrow 0}g(z\pm\im\epsilon';\gamma), \\
	g(z;\gamma)=&\,\frac{1}{2\pi\im z}\int_{-\infty}^{\infty}\ln\big(1-\gamma r_1(\lambda)r_2(\lambda)\big)\,\d\lambda+o\big(z^{-1}\big),\ \ z\rightarrow\infty,\ z\notin\mathbb{R}.
\end{align*}
Here, the asymptotics at infinity are proven by writing $\frac{1}{\lambda-z}=-\frac{1}{z}+\frac{\lambda}{z(\lambda-z)},\lambda\in\mathbb{R},z\notin\mathbb{R}$ and by exploiting the bounds $|r_j(\lambda)|,|\lambda r_j(\lambda)|<1$ on $\mathbb{R}$ in combination with the inequalities \cite[$4.5.2,4.5.6$]{NIST}. Moving ahead we use the $g$-function \eqref{c25} and transform RHP \ref{gam1} as follows: introduce
\begin{equation}\label{c26}
	{\bf Y}(z;t,\gamma,\phi,\psi):={\bf X}(z;t,\gamma,\phi,\psi)\e^{g(z;\gamma)\sigma_3},\ \ \ z\in\mathbb{C}\setminus\mathbb{R},
\end{equation}
and obtain the following problem.
\begin{problem}\label{traf1} Let $(t,\gamma)\in\mathbb{R}\times[0,1]$ and $\phi,\psi$ as in the statement of Theorem \ref{theo2a}. The function ${\bf Y}(z)={\bf Y}(z;t,\gamma,\phi,\psi)\in\mathbb{C}^{2\times 2}$ defined in \eqref{c26} is uniquely determined by the following properties:
\begin{enumerate}
	\item[(1)] ${\bf Y}(z)$ is analytic for $z\in\mathbb{C}\setminus\mathbb{R}$ and extends continuously on the closure of $\mathbb{C}\setminus\mathbb{R}$.
	\item[(2)] The continuous limiting values ${\bf Y}_{\pm}(z)=\lim_{\epsilon'\downarrow 0}{\bf Y}(z\pm\im\epsilon')$ on $\mathbb{R}\ni z$ satisfy the jump condition ${\bf Y}_+(z)={\bf Y}_-(z){\bf G}_{\bf Y}(z;t,\gamma,\phi,\psi)$ where the jump matrix ${\bf G}_{\bf Y}(z;t,\gamma,\phi,\psi)$ is given by
	\begin{equation*}
		{\bf G}_{\bf Y}(z;t,\gamma,\phi,\psi)=\begin{bmatrix}1 & -\eta_2(z;\gamma)\e^{-\im tz-2g_+(z;\gamma)}\smallskip\\ 
		\eta_1(z;\gamma)\e^{\im tz+2g_-(z;\gamma)} & 1-\gamma r_1(z)r_2(z)\end{bmatrix},\ z\in\mathbb{R}, 
	\end{equation*}
	using
	\begin{equation*}
		\eta_k(z;\gamma):=\frac{\sqrt{\gamma}\,r_k(z)}{1-\gamma r_1(z)r_2(z)},\ \ \ |\Im z|\leq\epsilon.
	\end{equation*}
	\item[(3)] Uniformly as $z\rightarrow\infty$,
	\begin{equation}\label{c27}
		{\bf Y}(z)=\mathbb{I}+o(1);%{\bf Y}_1z^{-1}+o\big(z^{-1}\big);
	\end{equation}
	where
	\begin{equation*}
		 {\bf Y}_1(t,\gamma,\phi,\psi):=\lim_{\substack{z\rightarrow\infty\\ \Im z\not\equiv\textnormal{const.}}}z\big({\bf Y}(z)-\mathbb{I}\big)={\bf X}_1(t,\gamma,\phi,\psi)+\frac{\sigma_3}{2\pi\im}\int_{-\infty}^{\infty}\ln\big(1-\gamma r_1(\lambda)r_2(\lambda)\big)\,\d\lambda.
	\end{equation*}
\end{enumerate}
\end{problem}
Next, we use that $\eta_k(z;\gamma)$ extends analytically to the closed horizontal strip $|\Im z|\leq\epsilon$ and that $|z\eta_k(z;\gamma)|$ tends to zero as $|z|\rightarrow\infty$ in the same strip. Moreover, ${\bf G}_{\bf Y}$ admits the factorization
\begin{equation*}
	{\bf G}_{\bf Y}(z;t,\gamma,\phi,\psi)=\begin{bmatrix}1 & 0\\ \eta_1(z;\gamma)\e^{\im tz+2g_-(z;\gamma)} & 1\end{bmatrix}\begin{bmatrix}1 & -\eta_2(z;\gamma)\e^{-\im tz-2g_+(z;\gamma)}\\ 0 & 1\end{bmatrix},\ \ z\in\mathbb{R},
\end{equation*}
so we can open lenses and transform RHP \ref{traf1} as follows. Introduce
\begin{equation}\label{c28}
	{\bf S}(z;t,\gamma,\phi,\psi):={\bf Y}(z;t,\gamma,\phi,\psi)\begin{cases}\begin{bmatrix}1 & \eta_2(z;\gamma)\e^{-\im tz-2g(z;\gamma)}\\ 0 & 1\end{bmatrix},&\Im z\in(0,\epsilon)\smallskip\\
	\begin{bmatrix}1 & 0\\ \eta_1(z;\gamma)\e^{\im tz+2g(z;\gamma)} & 1\end{bmatrix},&\Im z\in(-\epsilon,0)\smallskip\\
	\mathbb{I},&\textnormal{else}
	\end{cases}
\end{equation}
and note that RHP \ref{traf1} is transformed to the problem below
\begin{problem}\label{traf2} Let $t<0,\gamma\in[0,1]$ and $\phi,\psi$ as in the statement of Theorem \ref{theo2a}. The function ${\bf S}(z)={\bf S}(z;t,\gamma,\phi,\psi)\in\mathbb{C}^{2\times 2}$ defined in \eqref{c28} is uniquely determined by the following properties:
\begin{enumerate}
	\item[(1)] ${\bf S}(z)$ is analytic for $z\in\mathbb{C}\setminus\{\Im z=\pm\epsilon\}$ and extends continuously on the closure of $\mathbb{C}\setminus\{\Im z=\pm\epsilon\}$.
	\item[(2)] The continuous limiting values ${\bf S}_{\pm}(z):=\lim_{\epsilon'\downarrow 0}{\bf S}(z\pm\im\epsilon')$ on $\Sigma_{\bf S}=\{\Im z=\pm\epsilon\}\ni z$ satisfy the jump condition ${\bf S}_+(z)={\bf S}_-(z){\bf G}_{\bf S}(z;t,\gamma,\phi,\psi)$ where the jump matrix ${\bf G}_{\bf S}(z;t,\gamma,\phi,\psi)$ is given by
	\begin{align*}
		{\bf G}_{\bf S}(z;t,\gamma,\phi,\psi)=&\,\begin{bmatrix}1 & -\eta_2(z;\gamma)\e^{-\im tz-2g(z;\gamma)}\\ 0 & 1\end{bmatrix},\ \ \Im z=\epsilon,\\
		{\bf G}_{\bf S}(z;t,\gamma,\phi,\psi)=&\,\begin{bmatrix}1 & 0\\ \eta_1(z;\gamma)\e^{\im tz+2g(z;\gamma)} & 1\end{bmatrix},\ \ \Im z=-\epsilon.
	\end{align*}
	\item[(3)] Uniformly as $z\rightarrow\infty$ in $\mathbb{C}\setminus\Sigma_{\bf S}$, using the Riemann-Lebesgue lemma for $\eta_k$ in $|\Im z|\leq\epsilon$,
	\begin{equation}\label{c29}
		{\bf S}(z)=\mathbb{I}+o(1);\ \ \ \ \ {\bf S}_1(t,\gamma,\phi,\psi):=\lim_{\substack{z\rightarrow\infty\\ \Im z\not\equiv\textnormal{const.}}}z\big({\bf S}(z)-\mathbb{I}\big)={\bf Y}_1(t,\gamma,\phi,\psi).
	\end{equation}
\end{enumerate}
\end{problem}
Since $|\e^{\mp\im tz}|=\e^{t\epsilon}$ for $\Im z=\pm\epsilon$, and since the $t$-independent functions $\eta_k(z;\gamma)\e^{\mp 2g(z)}$ are bounded on $\Sigma_{\bf S}$ for all $\gamma\in[0,1]$, we immediately arrive at the below small norm estimate for ${\bf G}_{\bf S}(z;t,\gamma,\phi,\psi)$: subject to \eqref{i:24}, for any $\epsilon>0$ there exist $t_0=t_0(\epsilon),c=c(\epsilon)>0$ such that for all $(-t)\geq t_0$ and all $\gamma\in[0,1]$,
\begin{equation}\label{c30}
	\|{\bf G}_{\bf S}(\cdot;t,\gamma,\phi,\psi)-\mathbb{I}\|_{L^{\infty}(\Sigma_{\bf S})}\leq c\sqrt{\gamma}\,\e^{-\epsilon|t|},\ \ \|{\bf G}_{\bf S}(\cdot;t,\gamma,\phi,\psi)-\mathbb{I}\|_{L^2(\Sigma_{\bf S})}\leq c\sqrt{\gamma}\,\e^{-\epsilon|t|}.
\end{equation}
In turn, by general theory \cite{DZ}, RHP \ref{traf2} is asymptotically solvable, namely there exist $t_0=t_0(\epsilon)>0$ such that the RHP for ${\bf S}(z)$ is uniquely solvable in $L^2(\Sigma_{\bf S})$ for all $(-t)\geq t_0$ and all $\gamma\in[0,1]$. We can compute the solution of the same problem iteratively through the integral representation
\begin{equation}\label{c31}
	{\bf S}(z;t,\gamma,\phi,\psi)=\mathbb{I}+\frac{1}{2\pi\im}\int_{\Sigma_{\bf S}}{\bf S}_-(\lambda;t,\gamma,\phi,\psi)\big({\bf G}_{\bf S}(\lambda;t,\gamma,\phi,\psi)-\mathbb{I}\big)\frac{\d\lambda}{\lambda-z},\ \ \ \ z\in\mathbb{C}\setminus\Sigma_{\bf S},
\end{equation}
using that, for all $(-t)\geq t_0$ and $\gamma\in[0,1]$, with $c=c(\epsilon)>0$,
\begin{equation}\label{c32}
	\|{\bf S}_-(\cdot;t,\gamma,\phi,\psi)-\mathbb{I}\|_{L^2(\Sigma_{\bf S})}\leq c\sqrt{\gamma}\,\e^{-\epsilon|t|}.
\end{equation}
The above completes the nonlinear steepest descent analysis of RHP \ref{gam2} and we are now left to extract all relevant information. First, choosing right-sided limits for definiteness, we have for $\lambda\in\Sigma_{\bf T}$ by \eqref{c22}, \eqref{c24a} and \eqref{c19},\eqref{c26},\eqref{c28},
\begin{equation*}
	\begin{bmatrix}F_1(\lambda)\\ F_2(\lambda)\end{bmatrix}={\bf T}_-(\lambda)\begin{bmatrix}f_1(\lambda)\\ f_2(\lambda)\end{bmatrix}={\bf S}_-(\lambda)\big\{{\bf A}(\lambda)\chi_{\mathbb{R}}(\lambda)+{\bf B}(\lambda)\chi_{\mathbb{R}+\im\epsilon}(\lambda)\big\}\begin{bmatrix}f_1(\lambda)\\ f_2(\lambda)\end{bmatrix},
\end{equation*}
followed by
\begin{equation*}
	\begin{bmatrix}G_1(\lambda)\\ G_2(\lambda)\end{bmatrix}=({\bf T}_-^{-1})^{\top}(\lambda)\begin{bmatrix}g_1(\lambda)\\ g_2(\lambda)\end{bmatrix}=({\bf S}_-^{-1})^{\top}(\lambda)\big\{({\bf A}^{-1}(\lambda))^{\top}\chi_{\mathbb{R}}(\lambda)+({\bf B}^{-1}(\lambda))^{\top}\chi_{\mathbb{R}+\im\epsilon}(\lambda)\big\}\begin{bmatrix}g_1(\lambda)\\ g_2(\lambda)\end{bmatrix},
\end{equation*}
where we suppress all $(t,\gamma,\phi,\psi)$-dependencies from our notation. Here,
\begin{equation*}
	{\bf A}(\lambda):=\e^{-g_-(\lambda)\sigma_3}\begin{bmatrix}1 & 0\\ -\eta_1(\lambda)\e^{\im t\lambda} & 1\end{bmatrix},\ \ {\bf B}(\lambda):=\e^{-g(\lambda)\sigma_3}\begin{bmatrix}(1-\gamma r_1(\lambda)r_2(\lambda))^{-1} & -\eta_2(\lambda)\e^{-\im t\lambda}\smallskip\\
	-\sqrt{\gamma}\,r_1(\lambda)\e^{\im t\lambda} & 1\end{bmatrix},\ \ \lambda\in\Sigma_{\bf T}.
\end{equation*}
Hence, the innermost integral in \eqref{c24} becomes
\begin{align}
	\int_{\Sigma_{\bf T}}&R_{t,\gamma}(\lambda,\lambda)\,\d\lambda=\int_{\Sigma_{\bf T}}\begin{bmatrix}F_1'(\lambda)\\ F_2'(\lambda)\end{bmatrix}^{\top}\begin{bmatrix}G_1(\lambda)\\ G_2(\lambda)\end{bmatrix}\d\lambda=\int_{\Sigma_{\bf T}}\begin{bmatrix}f_1'(\lambda)\\ f_2'(\lambda)\end{bmatrix}^{\top}\begin{bmatrix}g_1(\lambda)\\ g_2(\lambda)\end{bmatrix}\d\lambda+\int_{\Sigma_{\bf T}}\begin{bmatrix}f_1(\lambda)\\ f_2(\lambda)\end{bmatrix}^{\top}{\bf E}(\lambda)\begin{bmatrix}g_1(\lambda)\\ g_2(\lambda)\end{bmatrix}\d\lambda\nonumber\\
	&\,\,\,+\int_{\mathbb{R}}\begin{bmatrix}f_1(\lambda)\\ f_2(\lambda)\end{bmatrix}^{\top}\big({\bf A}^{\top}(\lambda)\big)'\big({\bf A}^{-1}(\lambda)\big)^{\top}\begin{bmatrix}g_1(\lambda)\\ g_2(\lambda)\end{bmatrix}\d\lambda+\int_{\mathbb{R}+\im\epsilon}\begin{bmatrix}f_1(\lambda)\\ f_2(\lambda)\end{bmatrix}^{\top}\big({\bf B}^{\top}(\lambda)\big)'\big({\bf B}^{-1}(\lambda)\big)^{\top}\begin{bmatrix}g_1(\lambda)\\ g_2(\lambda)\end{bmatrix}\d\lambda,\label{c33}
\end{align}
with ${\bf E}(\lambda)$ as shorthand for
\begin{equation*}
	{\bf E}(\lambda):=\big\{{\bf A}^{\top}(\lambda)\chi_{\mathbb{R}}(\lambda)+{\bf B}^{\top}(\lambda)\chi_{\mathbb{R}+\im\epsilon}(\lambda)\big\}\big({\bf S}_-^{\top}(\lambda)\big)'\big({\bf S}_-^{-1}(\lambda)\big)^{\top}\big\{\big({\bf A}^{-1}(\lambda)\big)^{\top}\chi_{\mathbb{R}}(\lambda)+\big({\bf B}^{-1}(\lambda)\big)^{\top}\chi_{\mathbb{R}+\im\epsilon}(\lambda)\big\}.
\end{equation*}
Out of the four integrals in \eqref{c33} the first one equates to zero by the aforementioned choices for $f_j$ and $g_j$, compare \eqref{rev1}. For the second integral in \eqref{c33} we instead record the following rough estimate, a consequence of \eqref{c31},\eqref{c32} and the particular triangular shape of ${\bf G}_{\bf S}$: for any $\epsilon>0$ there exist $t_0=t_0(\epsilon)>0$ and $c=c(\epsilon)>0$ so that
\begin{equation}\label{c34}
	\left|\int_{\Sigma_{\bf T}}\begin{bmatrix}f_1(\lambda)\\ f_2(\lambda)\end{bmatrix}^{\top}{\bf E}(\lambda)\begin{bmatrix}g_1(\lambda)\\ g_2(\lambda)\end{bmatrix}\d\lambda\right|\leq c\gamma|t|^{-1}\ \ \ \forall\,(-t)\geq t_0,\ \ \gamma\in[0,1].
\end{equation}
The last two integrals in \eqref{c33} yield non-trivial contributions: by definition of all underlying quantities, for the fourth integral in \eqref{c33},
\begin{align}
	\int_{\mathbb{R}+\im\epsilon}&\begin{bmatrix}f_1(\lambda)\\ f_2(\lambda)\end{bmatrix}^{\top}\big({\bf B}^t(\lambda)\big)'\big({\bf B}^{-1}(\lambda)\big)^{\top}\begin{bmatrix}g_1(\lambda)\\ g_2(\lambda)\end{bmatrix}\d\lambda=\frac{\gamma}{2\pi\im}\int_{\mathbb{R}+\im\epsilon}r_1(\lambda)\e^{\im t\lambda}\frac{\d}{\d\lambda}\left[\frac{r_2(\lambda)\e^{-\im t\lambda}}{1-\gamma r_1(\lambda)r_2(\lambda)}\right]\d\lambda\nonumber\\
	&-\frac{\gamma}{2\pi^2}\int_{\mathbb{R}+\im\epsilon}\frac{r_1(\lambda)r_2(\lambda)}{1-\gamma r_1(\lambda)r_2(\lambda)}\left[\int_{-\infty}^{\infty}\ln\big(1-\gamma r_1(z)r_2(z)\big)\frac{\d z}{(z-\lambda)^2}\right]\d\lambda,\label{c35}
\end{align}
where we integrate by parts in the first term, using $|zr_j(z)|<1$ for $|\Im z|\leq\epsilon$, 
\begin{align*}
	\frac{\gamma}{2\pi\im}&\int_{\mathbb{R}+\im\epsilon}r_1(\lambda)\e^{\im t\lambda}\frac{\d}{\d\lambda}\left[\frac{r_2(\lambda)\e^{-\im t\lambda}}{1-\gamma r_1(\lambda)r_2(\lambda)}\right]\d\lambda=-\frac{\gamma}{2\pi\im}\int_{\mathbb{R}+\im\epsilon}\frac{r_1'(\lambda)r_2(\lambda)}{1-\gamma r_1(\lambda)r_2(\lambda)}\,\d\lambda\\
	&+\frac{\gamma t}{2\pi}\frac{\d}{\d\gamma}\left[\int_{-\infty}^{\infty}\ln\big(1-\gamma r_1(\lambda)r_2(\lambda)\big)\d\lambda\right]=-\gamma t\frac{\d}{\d\gamma}s(0,\gamma)+\frac{\gamma}{2\pi\im}\frac{\d}{\d\gamma}\int_{-\infty}^{\infty}\frac{r_1'(\lambda)}{r_1(\lambda)}\ln\big(1-\gamma r_1(\lambda)r_2(\lambda)\big)\,\d\lambda,
\end{align*}
with $s(x,\gamma):=-\frac{1}{2\pi}\int_{-\infty}^{\infty}\ln(1-\gamma r_1(\lambda)r_2(\lambda))\e^{\im\lambda x}\d\lambda,x\in\mathbb{R}$, after collapsing $\mathbb{R}+\im\epsilon$ to $\mathbb{R}$. The second term in \eqref{c35} we rewrite with the help of the integral identity $(z-\lambda)^{-2}=-\int_0^{\infty}\e^{\im(\lambda-z)x}x\,\d x$, for $z\in\mathbb{R},\lambda\in\mathbb{R}+\im\epsilon$,
\begin{equation*}
	-\frac{\gamma}{2\pi^2}\int_{\mathbb{R}+\im\epsilon}\frac{r_1(\lambda)r_2(\lambda)}{1-\gamma r_1(\lambda)r_2(\lambda)}\left[\int_{-\infty}^{\infty}\ln\big(1-\gamma r_1(z)r_2(z)\big)\frac{\d z}{(z-\lambda)^2}\right]\d\lambda=-2\gamma\int_0^{\infty}x\left[\frac{\partial}{\partial\gamma}s(x,\gamma)\right]s(-x,\gamma)\,\d x,
\end{equation*}
so that all together for \eqref{c35},
\begin{align}
	\int_{\mathbb{R}+\im\epsilon}\begin{bmatrix}f_1(\lambda)\\ f_2(\lambda)\end{bmatrix}^{\top}\big({\bf B}^t(\lambda)\big)'\big({\bf B}^{-1}(\lambda)\big)^{\top}\begin{bmatrix}g_1(\lambda)\\ g_2(\lambda)\end{bmatrix}\d\lambda=-\gamma t&\frac{\d}{\d\gamma}s(0,\gamma)-2\gamma\int_0^{\infty}z\left[\frac{\partial}{\partial\gamma}s(x,\gamma)\right]s(-x,\gamma)\,\d x\nonumber\\
	&+\frac{\gamma}{2\pi\im}\frac{\d}{\d\gamma}\int_{-\infty}^{\infty}\frac{r_1'(\lambda)}{r_1(\lambda)}\ln\big(1-\gamma r_1(\lambda)r_2(\lambda)\big)\,\d\lambda\label{c36}.
\end{align}
Finally, for the third integral in \eqref{c33}, by definition of all underlying quantities,
\begin{align}
	\int_{\mathbb{R}}\begin{bmatrix}f_1(\lambda)\\ f_2(\lambda)\end{bmatrix}^{\top}&\big({\bf A}^{\top}(\lambda)\big)'\big({\bf A}^{-1}(\lambda)\big)^{\top}\begin{bmatrix}g_1(\lambda)\\ g_2(\lambda)\end{bmatrix}\d\lambda=
	-\frac{\gamma}{2\pi\im}\int_{\mathbb{R}}\frac{r_2(\lambda)\e^{-\im t\lambda}}{1-\gamma r_1(\lambda)r_2(\lambda)}\frac{\d}{\d\lambda}\Big[r_1(\lambda)\e^{\im t\lambda}\Big]\d\lambda\nonumber\\
	&\hspace{1.75cm}-\frac{\gamma}{2\pi^2}\int_{\mathbb{R}}\frac{r_1(\lambda)r_2(\lambda)}{1-\gamma r_1(\lambda)r_2(\lambda)}\frac{\d}{\d\lambda}\left[\textnormal{pv}\int_{\mathbb{R}}\ln\big(1-\gamma r_1(z)r_2(z)\big)\frac{\d z}{z-\lambda}\right]\d\lambda,\label{c37}
\end{align}
where we evaluate the derivative in the first term and symmetrize,
\begin{align*}
	-\frac{\gamma}{2\pi\im}\int_{\mathbb{R}}\frac{r_2(\lambda)\e^{-\im t\lambda}}{1-\gamma r_1(\lambda)r_2(\lambda)}\frac{\d}{\d\lambda}\Big[r_1(\lambda)\e^{\im t\lambda}\Big]\d\lambda=-\gamma t\frac{\d}{\d\gamma}s(0,\gamma)-\frac{\gamma}{2\pi\im}\frac{\d}{\d\gamma}\int_{-\infty}^{\infty}\frac{r_2'(\lambda)}{r_2(\lambda)}\ln\big(1-\gamma r_1(\lambda)r_2(\lambda)\big)\,\d\lambda.
\end{align*}
For the second term in \eqref{c37} we recall Plemelj's formula,
\begin{equation*}
	\lim_{\substack{\mu\rightarrow\lambda\in\mathbb{R}\\ \Im\mu<0}}\int_{-\infty}^{\infty}\ln\big(1-\gamma r_1(z)r_2(z)\big)\frac{\d z}{z-\mu}=-\im\pi\ln\big(1-\gamma r_1(\lambda)r_2(\lambda)\big)+\textnormal{pv}\int_{-\infty}^{\infty}\ln\big(1-\gamma r_1(z)r_2(z)\big)\frac{\d z}{z-\lambda},
\end{equation*}
and notice that by the properties of $r_j$ (differentiability on $\mathbb{R}$ together with $|r_j(z)|,|zr_j(z)|<1$ on $\mathbb{R}$) and integration by parts,
\begin{equation*}
	\int_{\mathbb{R}}\frac{r_1(\lambda)r_2(\lambda)}{1-\gamma r_1(\lambda)r_2(\lambda)}\frac{\d}{\d\lambda}\Big[\ln\big(1-\gamma r_1(\lambda)r_2(\lambda)\big)\Big]\,\d\lambda=0,\ \ \ \gamma\in[0,1].
\end{equation*}
Hence,
\begin{align}
	-\frac{\gamma}{2\pi^2}&\int_{\mathbb{R}}\frac{r_1(\lambda)r_2(\lambda)}{1-\gamma r_1(\lambda)r_2(\lambda)}\frac{\d}{\d\lambda}\left[\textnormal{pv}\int_{\mathbb{R}}\ln\big(1-\gamma r_1(z)r_2(z)\big)\frac{\d z}{z-\lambda}\right]\d\lambda\nonumber\\
	=&\,-\frac{\gamma}{2\pi^2}\int_{\mathbb{R}}\frac{r_1(\lambda)r_2(\lambda)}{1-\gamma r_1(\lambda)r_2(\lambda)}\frac{\d}{\d\lambda}\left[-\im\pi\ln\big(1-\gamma r_1(\lambda)r_2(\lambda)\big)+\textnormal{pv}\int_{-\infty}^{\infty}\ln\big(1-\gamma r_1(z)r_2(z)\big)\frac{\d z}{z-\lambda}\right]\d\lambda\nonumber\\
	=&\,-\frac{\gamma}{2\pi^2}\int_{\mathbb{R}-\im\epsilon}\frac{r_1(\lambda)r_2(\lambda)}{1-\gamma r_1(\lambda)r_2(\lambda)}\frac{\d}{\d\lambda}\left[\int_{\mathbb{R}}\ln\big(1-\gamma r_1(z)r_2(z)\big)\frac{\d z}{z-\lambda}\right]\d\lambda,\label{c38}
\end{align}
where we deformed the outer integration contour to $\mathbb{R}-\im\epsilon$ in the second equality while using Plemelj's formula and analytic continuation in the integrand. Consequently, with $(z-\lambda)^{-2}=-\int_0^{\infty}\e^{\im(z-\lambda)x}x\,\d x$ for $z\in\mathbb{R},\lambda\in\mathbb{R}-\im\epsilon$, after collapsing $\mathbb{R}-\im\epsilon$ to $\mathbb{R}$,
\begin{equation*}
	-\frac{\gamma}{2\pi^2}\int_{\mathbb{R}}\frac{r_1(\lambda)r_2(\lambda)}{1-\gamma r_1(\lambda)r_2(\lambda)}\frac{\d}{\d\lambda}\left[\textnormal{pv}\int_{\mathbb{R}}\ln\big(1-\gamma r_1(z)r_2(z)\big)\frac{\d z}{z-\lambda}\right]\d\lambda\stackrel{\eqref{c38}}{=}-2\gamma\int_0^{\infty}xs(x,\gamma)\left[\frac{\partial}{\partial\gamma}s(-x,\gamma)\right]\d x,
\end{equation*}
so that all together for \eqref{c37},
\begin{align}
	\int_{\mathbb{R}}\begin{bmatrix}f_1(\lambda)\\ f_2(\lambda)\end{bmatrix}\big({\bf A}^{\top}(\lambda)\big)'\big({\bf A}^{-1}(\lambda)\big)^{\top}\begin{bmatrix}g_1(\lambda)\\ g_2(\lambda)\end{bmatrix}\d\lambda=-\gamma t&\frac{\d}{\d\gamma}s(0,\gamma)-2\gamma\int_0^{\infty}xs(x,\gamma)\left[\frac{\partial}{\partial\gamma}s(-x,\gamma)\right]\d x\nonumber\\
	&-\frac{\gamma}{2\pi\im}\frac{\d}{\d\gamma}\int_{-\infty}^{\infty}\frac{r_2'(\lambda)}{r_2(\lambda)}\ln\big(1-\gamma r_1(\lambda)r_2(\lambda)\big)\,\d\lambda.\label{c39}
\end{align}
Returning now to \eqref{c24}, identity \eqref{c33} combined with \eqref{c34},\eqref{c36} and \eqref{c39} yields
\begin{align}
	-\frac{1}{2\gamma}\int_{\Sigma_{\bf T}}R_{t,\gamma}(\lambda,\lambda)\,\d\lambda=\frac{\partial}{\partial\gamma}\bigg[t&s(0,\gamma)+\int_0^{\infty}xs(x,\gamma)s(-x,\gamma)\,\d x\nonumber\\
	&\,-\frac{1}{4\pi\im}\int_{-\infty}^{\infty}\left\{\frac{r_1'(\lambda)}{r_1(\lambda)}-\frac{r_2'(\lambda)}{r_2(\lambda)}\right\}\ln\big(1-\gamma r_1(\lambda)r_2(\lambda)\big)\,\d\lambda\bigg]+\tilde{r}(t,\gamma),\label{c40}
\end{align}
for all $(-t)\geq t_0$ and $\gamma\in[0,1]$ with $|\tilde{r}(t,\gamma)|\leq c|t|^{-1}$. In turn, performing the definite $\gamma$-integration in \eqref{c24} we finally arrive at, as $t\rightarrow-\infty$,
\begin{eqnarray}
	\ln F(t)\!\!\!\!\!&\stackrel{\eqref{c24}}{=}&\!\!\!\!\!-\frac{1}{2}\int_0^1\left[\int_{\Sigma_{\bf T}}R_{t,\gamma}(\lambda,\lambda)\,\d\lambda\right]\frac{\d\gamma}{\gamma}\label{c41}\\
	&=&\!\!\!\!\!ts(0)+\int_0^{\infty}xs(x)s(-x)\,\d x-\frac{1}{4\pi\im}\int_{-\infty}^{\infty}\left\{\frac{r_1'(\lambda)}{r_1(\lambda)}-\frac{r_2'(\lambda)}{r_2(\lambda)}\right\}\ln\big(1-r_1(\lambda)r_2(\lambda)\big)\,\d\lambda+\mathcal{O}\big(t^{-1}\big)\nonumber
\end{eqnarray}
since $s(x,0)=0$ for all $x\in\mathbb{R}$. This proves \eqref{i:25}, except for the error term. However we can easily improve the $\mathcal{O}(t^{-1})$ in \eqref{c41} by using \eqref{c18}. Indeed, by \eqref{c27},\eqref{c29} and \eqref{c31},
\begin{align*}
	\frac{\d}{\d t}\ln F(t)=&\,\im X_1^{11}(t,1,\phi,\psi)\stackrel{\eqref{c27}}{=}\im Y_1^{11}(t,1,\phi,\psi)+s(0)\stackrel{\eqref{c29}}{=}\im S_1^{11}(t,1,\phi,\psi)+s(0)\\
	&\hspace{1.175cm}\stackrel{\eqref{c31}}{=}-\frac{1}{2\pi}\int_{\Sigma_{\bf S}}\Big({\bf S}_-(\lambda;t,1,\phi,\psi)\big({\bf G}_{\bf S}(\lambda;t,1,\phi,\psi)-\mathbb{I}\big)\Big)_{11}\,\d\lambda+s(0)\ \ \ \forall\,(-t)\geq t_0
\end{align*}
and so with \eqref{c30} and \eqref{c32}, after indefinite $t$-integration, for all $(-t)\geq t_0$ with $c=c(\epsilon)>0$,
\begin{equation}\label{c42}
	\ln F(t)=ts(0)+\eta+r(t),\ \ \ \ \ |r(t)|\leq c\e^{-\epsilon|t|},
\end{equation}
with a $t$-independent term $\eta$, the integration constant. But the same term was already computed in \eqref{c40}, hence consistency between \eqref{c41} and \eqref{c42} imposes the error term to be exponentially small, as claimed in \eqref{i:26}. This completes our proof of Theorem \ref{theo2a}.
\end{proof}

%%%%%%%%%%%%%%%%%%%%%%%%%%%%%%%%%%%%%%%%%%%%%%%%%%%%%%%%%%%%%%%%%%%%%%%%%%%%%%%%%%%%%%%%%%%%%%%%%%%
\section{Proof of Lemma \ref{lem2}, Corollary \ref{corb1} and Corollary \ref{corb2}}\label{sec6}
We now prove our three algebraic results about Fredholm determinants \eqref{b:3} of \textit{multiplicative} Hankel composition operators with kernels \eqref{b:2}.

\begin{proof}[Proof of Lemma \ref{lem2}] $I-K_t,t\in J$ is invertible on $L^2(0,1)$ and $K_t,t\in J$ trace class on the same space, so $F(t)\neq 0$ on $J$ and thus $J\ni t\mapsto\ln F(t)$ is well-defined, with a branch for the logarithm that is analytically continued along the orbit of $F(t)$. Next, by the $L^2(0,1)$ dominance of $\tau_t\phi,\tau_t\psi,MD\tau_t\phi$ and $MD\tau_t\psi$ and the continuous differentiability of $\phi,\psi$, $J\ni t\mapsto K_t$ is differentiable with $\frac{\d}{\d t}K_t=\tau_t\phi\otimes\tau_t\psi,t\in J$, using \eqref{b:4}. In fact, $\frac{\d}{\d t}K_t\in\mathcal{C}_0(L^2(0,1))\subset\mathcal{C}_1(L^2(0,1))$ and so by Jacobi's formula
\begin{equation}\label{bp1}
	\frac{\d}{\d t}\ln F(t)=-\tr_{L^2(0,1)}\left((I-K_t)^{-1}\frac{\d K_t}{\d t}\right)=-\tr_{L^2(0,1)}\big((I-K_t)^{-1}\tau_t\phi\otimes\tau_t\psi\big),\ \ t\in J.
\end{equation}
Moving ahead, $\tau_t\phi,\tau_t\psi\in H^{1,2}(0,1)$ by dominance, so $J\ni t\mapsto t(\tau_t\phi\otimes\tau_t\psi)$ is differentiable,
\begin{equation}\label{bp2}
	t\frac{\d}{\d t}(\tau_t\phi\otimes\tau_t\psi)=MD\tau_t\phi\otimes\tau_t\psi+\tau_t\phi\otimes MD\tau_t\psi\in\mathcal{C}_0(L^2(0,1))\subset\mathcal{C}_1(L^2(0,1)).
\end{equation}
Likewise, $J\ni t\mapsto(I-K_t)^{-1}\in\mathcal{L}(L^2(0,1))$ is differentiable,
\begin{equation}\label{bp3}
	\frac{\d}{\d t}(I-K_t)^{-1}=(I-K_t)^{-1}\frac{\d K_t}{\d t}(I-K_t)^{-1}=(I-K_t)^{-1}(\tau_t\phi\otimes\tau_t\psi)(I-K_t)^{-1}\in\mathcal{C}_0(L^2(0,1)),
\end{equation}
since $I-K_t,t\in J$ is invertible and $J\ni t\mapsto K_t$ differentiable. So, \eqref{bp2} and \eqref{bp3} together imply that $J\ni t\mapsto(I-K_t)^{-1}\tau_t\phi\otimes\tau_t\psi\in\mathcal{C}_1(L^2(0,1))$ is differentiable and
\begin{align}
	t\frac{\d}{\d t}\big(&(I-K_t)^{-1}\tau_t\phi\otimes\tau_t\psi\big)=(I-K_t)^{-1}\tau_t\phi\otimes MD\tau_t\psi\nonumber\\
	&+\Big[(I-K_t)^{-1}MD\tau_t\phi+t(I-K_t)^{-1}(\tau_t\phi\otimes\tau_t\psi)(I-K_t)^{-1}\tau_t\phi\Big]\otimes\tau_t\psi\in\mathcal{C}_1(L^2(0,1)).\label{bp4}
\end{align}
Consequently, using that by \cite[$(3.4)$]{Bris}
\begin{equation}\label{bp5}
	\tr_{L^2(0,1)}\big((I-K_t)^{-1}\tau_t\phi\otimes\tau_t\psi\big)=\int_0^1\big((I-K_t)^{-1}\tau_t\phi\big)(x)(\tau_t\psi)(x)\,\d x,
\end{equation}
since $(0,1]^2\ni (x,y)\mapsto\big((I-K_t)^{-1}\tau_t\phi\big)(x)(\tau_t\psi)(y)$ is continuous\footnote{The map $(0,1]\ni x\mapsto((I-K_t)^{-1}\tau_t\phi)(x)$ is locally $\frac{1}{2}$-H\"older continuous for $t\in J$ by invertibility of $I-K_t,t\in J$ and by \eqref{b:1},\eqref{b:5}. This shows, in particular, that $J\ni t\mapsto q_0(t)$ is well-defined.}, the dominated convergence theorem yields
\begin{align}
	\frac{\d}{\d t}&\left\{t\frac{\d}{\d t}\ln F(t)\right\}\stackrel[\eqref{bp5}]{\eqref{bp1}}{=}-\tr_{L^2(0,1)}\big((I-K_t)^{-1}\tau_t\phi\otimes\tau_t\psi\big)-t\frac{\d}{\d t}\int_0^1\big((I-K_t)^{-1}\tau_t\phi\big)(x)(\tau_t\psi)(x)\,\d x\nonumber\\
	%&\,\,\,\,=-\tr_{L^2(0,1)}\big((I-K_t)^{-1}\tau_t\phi\otimes\tau_t\psi\big)-\int_0^1t\frac{\d}{\d t}\Big[\big((I-K_t)^{-1}\tau_t\phi\big)(x)(\tau_t\psi)(x)\Big]\,\d x\\
	&\,\,\,\,=-\tr_{L^2(0,1)}\big((I-K_t)^{-1}\tau_t\phi\otimes\tau_t\psi\big)-\tr_{L^2(0,1)}\Big(t\frac{\d}{\d t}\big((I-K_t)^{-1}\tau_t\phi\otimes\tau_t\psi\big)\Big)\nonumber\\
	&\,\stackrel{\eqref{bp4}}{=}-\tr_{L^2(0,1)}\big((I-K_t)^{-1}\tau_t\phi\otimes\tau_t\psi\big)-\tr_{L^2(0,1)}\big((I-K_t)^{-1}\tau_t\phi\otimes MD\tau_t\psi\big)\nonumber\\
	&\hspace{1cm}-\tr_{L^2(0,1)}\Big(\Big[(I-K_t)^{-1}MD\tau_t\phi+t(I-K_t)^{-1}(\tau_t\phi\otimes\tau_t\psi)(I-K_t)^{-1}\tau_t\phi\Big]\otimes\tau_t\psi\Big),\ \ t\in J.\label{bp6}
\end{align}
This shows that $J\ni t\mapsto\ln F(t)$ is twice-differentiable. In order to simplify \eqref{bp6}, that all involved integral operators have continuous kernels, so in particular, integrating by parts
\begin{align}
	\tr_{L^2(0,1)}\big(&(I-K_t)^{-1}\tau_t\phi\otimes MD\tau_t\psi\big)=\int_0^1\big((I-K_t)^{-1}\tau_t\phi\big)(x)(MD\tau_t\psi)(x)\,\d x\nonumber\\
	&\,=q_0(t)(\tau_t\psi)(1)-\tr_{L^2(0,1)}\big((I-K_t)^{-1}\tau_t\phi\otimes\tau_t\psi\big)-\int_0^1\big(MD(I-K_t)^{-1}\tau_t\phi\big)(x)(\tau_t\psi)(x)\,\d x.\label{bp7}
\end{align}
Note that $(0,1]\ni x\mapsto(MD(I-K_t)^{-1}\tau_t\phi)(x)$ is continuous by dominance, \eqref{b:1} and by invertibility of $I-K_t,t\in J$ on $L^2(0,1)$. Moreover, by \eqref{b:1},\eqref{b:5} and dominance, $(I-K_t)^{-1}\tau_t\phi\in H^{1,2}(0,1)$ and $\sqrt{x}((I-K_t)^{-1}\tau_t\phi)(x)\rightarrow 0$ as $x\downarrow 0$ by \eqref{b:4}. Thus, \eqref{bp7} is justified and we can rewrite it as, cf. \cite[$(3.4)$]{Bris},
\begin{align*}
	\tr_{L^2(0,1)}\big((I-K_t)^{-1}\tau_t\phi\otimes MD\tau_t\psi\big)=q_0(t)(\tau_t\psi)&(1)-\tr_{L^2(0,1)}\big((I-K_t)^{-1}\tau_t\phi\otimes\tau_t\psi\big)\\
	&-\tr_{L^2(0,1)}\big(MD(I-K_t)^{-1}\tau_t\phi\otimes\tau_t\psi\big),\ \ t\in J.
\end{align*}
We now return to \eqref{bp6}, insert the last identity,
\begin{align}
	\frac{\d}{\d t}\left\{t\frac{\d}{\d t}\ln F(t)\right\}=-q_0(t)(\tau_t\psi)&(1)+\tr_{L^2(0,1)}\bigg(\Big[MD(I-K_t)^{-1}\tau_t\phi-(I-K_t)^{-1}MD\tau_t\phi\nonumber\\
	&-t(I-K_t)^{-1}(\tau_t\phi\otimes\tau_t\psi)(I-K_t)^{-1}\tau_t\phi\Big]\otimes\tau_t\psi\bigg),\ t\in J,\label{bp8}
\end{align}
afterwards perform algebra,
\begin{equation*}
	MD(I-K_t)^{-1}\tau_t\phi-(I-K_t)^{-1}MD\tau_t\phi=(I-K_t)^{-1}(MDK_t-K_tMD)(I-K_t)^{-1}\tau_t\phi\in L^2(0,1),
\end{equation*}
and then simplify with the help of the identities
\begin{equation}\label{bp9}
	x\frac{\partial}{\partial x}K_t(x,y)+y\frac{\partial}{\partial y}K_t(x,y)+K_t(x,y)=t\big(\tau_t\phi\otimes\tau_t\psi\big)(x,y),\ \ \ \ \ \ \lim_{y\downarrow 0}K_t(x,y)\sqrt{y}=0\ \ \forall\, x\in(0,1).
\end{equation}
The intermediate result equals
\begin{equation*}
	\big((MDK_t-K_tMD)(I-K_t)^{-1}\tau_t\phi\big)(x)=-K_t(x,1)q_0(t)+t(\tau_t\phi)(x)\int_0^1(\tau_t\psi)(y)\big((I-K_t)^{-1}\tau_t\phi\big)(y)\,\d y,
\end{equation*}
where $K_t(x,1)$ is shorthand for $K_t(x,1):=\lim_{y\uparrow 1}K_t(x,y)$, and so with the last identity back in \eqref{bp8}, using that all involved trace class operators have continuous kernels,
\begin{equation*}
	\frac{\d}{\d t}\left\{t\frac{\d}{\d t}\ln F(t)\right\}=-q_0(t)\left[(\tau_t\psi)(1)+\int_0^1\big((I-K_t)^{-1}K_t\big)(x,1)(\tau_t\psi)(x)\,\d x\right],\ \ \ t\in J.
\end{equation*}
Lastly, it remains to use $(I-K_t)^{-1}=I+(I-K_t)^{-1}K_t$ on $L^2(0,1)$ and $(I-K_t)^{-1}(x,1)=(I-K_t^{\ast})^{-1}(1,x)$ for $x\in(0,1)$,
\begin{equation*}
	\frac{\d}{\d t}\left\{t\frac{\d}{\d t}\ln F(t)\right\}=-q_0(t)\int_0^1(I-K_t)^{-1}(x,1)(\tau_t\psi)(x)\,\d x=-q_0(t)q_0^{\ast}(t)\frac{1}{t},\ \ \ t\in J,
\end{equation*}
as claimed in \eqref{b:6}. Note that $q_0^{\ast}$ is well-defined by \eqref{b:1},\eqref{b:5} and by invertibility of $I-K_t$. This completes our proof.
\end{proof}
\begin{rem} The reader will notice that the multiplicative  composition structure is used precisely in the identity $t\frac{\d}{\d t}\tau_tf=MD\tau_tf$. As such we use the multiplicative composition structure to get $\frac{\d}{\d t}K_t=\tau_t\phi\otimes\tau_t\psi$ and then again in the derivation of \eqref{bp4} and \eqref{bp9}.
\end{rem}
\begin{proof}[Proof of Corollary \ref{corb1}] We have, see Lemma \ref{lem2},
\begin{equation*}
	t\frac{\d}{\d t}\left\{t\frac{\d}{\d t}\ln F(t)\right\}=-q_0(t)q_0^{\ast}(t),\ \ t\in J.
\end{equation*}
By our workings, $J\ni t\mapsto (I-K_t)^{-1}\in\mathcal{L}(H^{1,2}(0,1))$ is differentiable with $\frac{\d}{\d t}(I-K_t)^{-1}\in\mathcal{C}_0(L^2(0,1))$, compare \eqref{bp3}. But $t\mapsto\tau_t\in\mathcal{L}(H^{1,2}(0,1))$ is also differentiable with
\begin{equation*}
	t\frac{\d}{\d t}\tau_t=MD\tau_t\in\mathcal{L}(H^{1,2}(0,1),L^2(0,1)).
\end{equation*}
Hence, $J\ni t\mapsto (I-K_t)^{-1}\tau_t\in\mathcal{L}(H^{1,2}(0,1))$ is differentiable and so $J\ni t\mapsto q_0(t)=((I-K_t)^{-1}\tau_t\phi)(1)$ in particular continuous. Repeating the same argument with $\phi\leftrightarrow\psi$ and an additional factor of $t$ yields the corresponding result for $J\ni t\mapsto q_0^{\ast}(t)$. Now let $J=\mathbb{R}_+$, then
\begin{equation}\label{bp10}
	\frac{tF'(t)}{F(t)}-\lim_{s\downarrow 0}\frac{sF'(s)}{F(s)}=\int_0^t\frac{\d}{\d s}\left\{s\frac{\d}{\d s}\ln F(s)\right\}\,\d s\stackrel{\eqref{b:6}}{=}-\int_0^tq_0(s)q_0^{\ast}(s)\frac{\d s}{s},\ \ \ t\in\mathbb{R}_+.
\end{equation}
Here, $\lim_{s\downarrow 0}F(s)=1$ by \eqref{b:28} and the right hand side of \eqref{bp10} after division by $t$, viewed as function of $t\in\mathbb{R}_+$, is integrable near zero by assumption. Consequently we must have $\lim_{s\downarrow 0}sF'(s)=0$ and so
\begin{equation*}
	\ln F(t)=\int_0^t\left[u\frac{\d}{\d u}\ln F(u)\right]\frac{\d u}{u}\stackrel{\eqref{bp10}}{=}\int_0^t\left[-\int_0^uq_0(s)q_0^{\ast}(s)\frac{\d s}{s}\right]\frac{\d u}{u}=-\int_0^t\ln\Big(\frac{t}{s}\Big)q_0(s)q_0^{\ast}(s)\frac{\d s}{s},\ \ t\in\mathbb{R}_+.
\end{equation*}
This concludes our proof of Corollary \ref{corb1}.
\end{proof}
\begin{proof}[Proof of Corollary \ref{corb2}] Although system \eqref{b:11} is highly symmetric, the functions $q_n,q_n^{\ast},p_n,p_n^{\ast}$ are not, so we need to proceed step by step and cannot rely on a $\phi\leftrightarrow\psi$ swap argument. First we note that $J\ni t\mapsto q_n(t)$ and $J\ni t\mapsto q_n^{\ast}(t)$ are differentiable for $n=0,1,\ldots,N-1$ since $(MD)^n\tau_t\phi,(DM)^n\tau_t\psi\in H^{1,2}(0,1)$ for the same $n$ and since $t\mapsto(I-K_t)^{-1}\tau_t\in\mathcal{L}(H^{1,2}(0,1))$ is differentiable with
\begin{equation}\label{bp11}
	t\frac{\d}{\d t}(I-K_t)^{-1}\tau_t=(I-K_t)^{-1}MD\tau_t+t(I-K_t)^{-1}(\tau_t\phi\otimes\tau_t\psi)(I-K_t)^{-1}\in\mathcal{L}\big(H^{1,2}(0,1),L^2(0,1)\big).
\end{equation}
Second, $J\ni t\mapsto q_N(t)$ and $J\ni t\mapsto q_N^{\ast}(t)$ are also well-defined, and by the workings in Lemma \ref{lem2}, $J\ni t\mapsto p_n(t)$ is differentiable for $n=0,1,\ldots,N-1$. This is because $J\ni t\mapsto (I-K_t)^{-1}(MD)^n\tau_t\phi\otimes\tau_t\psi\in\mathcal{C}_1(L^2(0,1))$ is differentiable,
\begin{align}
	t&\,\frac{\d}{\d t}\big((I-K_t)^{-1}(MD)^n\tau_t\phi\otimes\tau_t\psi\big)=(I-K_t)^{-1}(MD)^n\tau_t\phi\otimes MD\tau_t\psi\nonumber\\
	&\,+\Big[(I-K_t)^{-1}(MD)^{n+1}\tau_t\phi+t(I-K_t)^{-1}(\tau_t\phi\otimes\tau_t\psi)(I-K_t)^{-1}(MD)^n\tau_t\phi\Big]\otimes\tau_t\psi\in\mathcal{C}_1(L^2(0,1)),\label{bp12}
\end{align}
for $n=0,1,\ldots,N-1$, and since by the dominated convergence theorem\footnote{If $\{(MD)^n\tau_tf\}_{t\in\mathbb{R}_+}$ is $L^2(0,1)$ dominated for all $n=0,1,\ldots,N$, then so is $\{(DM)^n\tau_tf\}_{t\in\mathbb{R}_+}$, and vice versa.} and \cite[$(3.4)$]{Bris},
\begin{align}
	t\frac{\d p_n}{\d t}(t)=&\,\,p_n(t)+t^2\frac{\d}{\d t}\int_0^1\big((I-K_t)^{-1}(MD)^n\tau_t\phi\big)(x)(\tau_t\psi)(x)\,\d x\nonumber\\
	=&\,\,p_n(t)+t\tr_{L^2(0,1)}\Big(t\frac{\d}{\d t}\big((I-K_t)^{-1}(MD)^n\tau_t\phi\otimes\tau_t\psi\big)\Big).\label{bp13}
\end{align}
Likewise, $J\ni t\mapsto p_n^{\ast}(t)$ is differentiable for $n=0,1,\ldots,N-1$ since $J\ni t\mapsto (I-K_t^{\ast})^{-1}(DM)^n\tau_t\psi\otimes\tau_t\phi\in\mathcal{C}_1(L^2(0,1))$ is with
\begin{align}
	&t\frac{\d}{\d t}\big((I-K_t^{\ast})^{-1}(DM)^n\tau_t\psi\otimes\tau_t\phi\big)=(I-K_t^{\ast})^{-1}(DM)^n\tau_t\psi\otimes MD\tau_t\phi-(I-K_t^{\ast})^{-1}(DM)^n\tau_t\psi\otimes\tau_t\phi\nonumber\\
	&\,+\Big[(I-K_t^{\ast})^{-1}(DM)^{n+1}\tau_t\psi+t(I-K_t^{\ast})^{-1}(\tau_t\psi\otimes\tau_t\phi)(I-K_t^{\ast})^{-1}(DM)^n\tau_t\psi\Big]\otimes\tau_t\phi\in\mathcal{C}_1(L^2(0,1)),\label{bp14}
\end{align}
for $n=0,1,\ldots,N$, and since by the dominated convergence theorem and \cite[$(3.4)$]{Bris},
\begin{align}
	t\frac{\d p_n^{\ast}}{\d t}(t)=&\,\,p_n^{\ast}(t)+t^2\frac{\d}{\d t}\int_0^1\big((I-K_t^{\ast})^{-1}(DM)^n\tau_t\psi\big)(x)(\tau_t\phi)(x)\,\d x\nonumber\\
	=&\,\,p_n^{\ast}(t)+t\tr_{L^2(0,1)}\Big(t\frac{\d}{\d t}\big((I-K_t^{\ast})^{-1}(DM)^n\tau_t\psi\otimes\tau_t\phi\big)\Big).\label{bp14a}
\end{align}
At this point we compute the derivatives of $q_n$ and $p_n$. By \eqref{bp11} and \cite[$(3.4)$]{Bris}, for $n=0,1,\ldots,N-1$,
\begin{equation*}
	t\frac{\d q_n}{\d t}(t)=t\frac{\d}{\d t}\big((I-K_t)^{-1}(MD)^n\tau_t\phi\big)(1)\stackrel{\eqref{bp11}}{=}q_{n+1}(t)+q_0(t)p_n(t),\ \ t\in J,
\end{equation*}
since $t\frac{\d}{\d t}(MD)^n\tau_tf=(MD)^{n+1}\tau_tf$. Next, by \eqref{bp13} and \eqref{bp12}, for $n=0,1,\ldots,N-1$,
\begin{align}
	t&\frac{\d p_n}{\d t}(t)=p_n(t)+t\tr_{L^2(0,1)}\big((I-K_t)^{-1}(MD)^n\tau_t\phi\otimes MD\tau_t\psi\big)\label{bp15}\\
	&\,+t\tr_{L^2(0,1)}\Big(\Big[(I-K_t)^{-1}(MD)^{n+1}\tau_t\phi+t(I-K_t)^{-1}(\tau_t\phi\otimes\tau_t\psi)(I-K_t)^{-1}(MD)^n\tau_t\phi\Big]\otimes\tau_t\psi\Big),\ \ t\in J,\nonumber
\end{align}
where we can simplify the first operator trace as in \eqref{bp7}, relying on \eqref{b:8} and dominance,
\begin{align*}
	&\tr_{L^2(0,1)}\big((I-K_t)^{-1}(MD)^n\tau_t\phi\otimes MD\tau_t\psi\big)=\int_0^1\big((I-K_t)^{-1}(MD)^n\tau_t\phi\big)(x)(MD\tau_t\psi)(x)\,\d x\\
	&\,=q_n(t)(\tau_t\psi)(1)-\tr_{L^2(0,1)}\big((I-K_t)^{-1}(MD)^n\tau_t\phi\otimes\tau_t\psi\big)-\tr_{L^2(0,1)}\big(MD(I-K_t)^{-1}(MD)^n\tau_t\phi\otimes\tau_t\psi\big).
\end{align*}
Hence back in \eqref{bp15},
\begin{align}
	t\frac{\d p_n}{\d t}(t)=tq_n(t)(\tau_t\psi)&(1)-t\tr_{L^2(0,1)}\bigg(\Big[MD(I-K_t)^{-1}(MD)^n\tau_t\phi-(I-K_t)^{-1}(MD)^{n+1}\tau_t\phi\label{bp16}\\
	&\hspace{2.5cm}-t(I-K_t)^{-1}(\tau_t\phi\otimes\tau_t\psi)(I-K_t)^{-1}(MD)^n\tau_t\phi\Big]\otimes\tau_t\psi\bigg),\ \ t\in J,\nonumber
\end{align}
which generalizes \eqref{bp8}. We now simplify \eqref{bp16} just as we did it for \eqref{bp8} in the proof of Lemma \ref{lem2},	
\begin{align*}
	\big((MDK_t-K_tMD)(I-K_t)^{-1}(MD)^n\tau_t\phi\big)&(x)=-K_t(x,1)q_n(t)\\
	&\,+t(\tau_t\phi)(x)\int_0^1(\tau_t\psi)(y)\big((I-K_t)^{-1}(MD)^n\tau_t\phi\big)(y)\,\d y
\end{align*}
and obtain in \eqref{bp16},
\begin{equation*}
	t\frac{\d p_n}{\d t}(t)=tq_n(t)\left[(\tau_t\psi)(1)+\int_0^1\big((I-K_t)^{-1}K_t\big)(x,1)(\tau_t\psi)(x)\,\d x\right],\ \ t\in J,\ \ n=0,1,\ldots,N-1.
\end{equation*}
It remains to use $(I-K_t)^{-1}=I+(I-K_t)^{-1}K_t$ and $(I-K_t)^{-1}(x,1)=(I-K_t^{\ast})^{-1}(1,x)$, i.e. we have
\begin{equation*}
	t\frac{\d p_n}{\d t}(t)=tq_n(t)\int_0^1(I-K_t^{\ast})^{-1}(1,x)(\tau_t\psi)(x)\,\d x=q_n(t)q_0^{\ast}(t),\ \ t\in J,
\end{equation*}
as claimed in \eqref{b:11}. We now compute the derivatives of $q_n^{\ast}$ and $p_n^{\ast}$. By \eqref{bp11}, \cite[$(3.4)$]{Bris} and $t\frac{\d}{\d t}(DM)^n\tau_tf=(DM)^{n+1}\tau_tf-(DM)^n\tau_tf$, for any $n\in\{0,1,\ldots,N-1\}$,
\begin{equation*}
	t\frac{\d q_n^{\ast}}{\d t}(t)=q_n^{\ast}(t)+t^2\frac{\d}{\d t}\big((I-K_t^{\ast})^{-1}(DM)^n\tau_t\psi\big)(1)=q_{n+1}^{\ast}(t)+q_0^{\ast}(t)p_n^{\ast}(t),\ \ t\in J.
\end{equation*}
Next, by \eqref{bp14} and \eqref{bp14a}, for $n=0,1,\ldots,N-1$,
\begin{align}
	t&\frac{\d p_n^{\ast}}{\d t}(t)=t\tr_{L^2(0,1)}\big((I-K_t^{\ast})^{-1}(DM)^n\tau_t\psi\otimes MD\tau_t\phi\big)\label{bp17}\\
	&\,+t\tr_{L^2(0,1)}\bigg(\Big[(I-K_t^{\ast})^{-1}(DM)^{n+1}\tau_t\psi+t(I-K_t^{\ast})^{-1}(\tau_t\psi\otimes\tau_t\phi)(I-K_t^{\ast})^{-1}(DM)^n\tau_t\psi\Big]\otimes\tau_t\phi\bigg),\ t\in J,\nonumber
\end{align}
where by \eqref{b:8} and dominance,
\begin{equation*}
	\tr_{L^2(0,1)}\big((I-K_t^{\ast})^{-1}(DM)^n\tau_t\psi\otimes MD\tau_t\phi\big)=q_n^{\ast}(t)t^{-1}(\tau_t\phi)(1)-\tr_{L^2(0,1)}\big(DM(I-K_t^{\ast})^{-1}(DM)^n\tau_t\psi\otimes\tau_t\phi\big).
\end{equation*}
Hence back in \eqref{bp17},
\begin{align}
	t\frac{\d p_n^{\ast}}{\d t}(t)=q_n^{\ast}(t)(\tau_t\phi)(1)-t\tr_{L^2(0,1)}&\bigg(\Big[DM(I-K_t^{\ast})^{-1}(DM)^n\tau_t\psi-(I-K_t^{\ast})^{-1}(DM)^{n+1}\tau_t\psi\label{bp18}\\
	&\,-t(I-K_t^{\ast})^{-1}(\tau_t\psi\otimes\tau_t\phi)(I-K_t^{\ast})^{-1}(DM)^n\tau_t\psi\Big]\otimes\tau_t\phi\bigg),\ t\in J.\nonumber
\end{align}
But by algebra, $DM(I-K_t^{\ast})^{-1}(DM)^n\tau_t\psi-(I-K_t^{\ast})^{-1}(DM)^{n+1}\tau_t\psi=(I-K_t^{\ast})^{-1}(DMK_t^{\ast}-K_t^{\ast}DM)(I-K_t^{\ast})^{-1}(DM)^n\tau_t\psi$ and
\begin{align*}
	\big((DMK_t^{\ast}-K_t^{\ast}DM)(I-K_t^{\ast})^{-1}(DM)^n\tau_t\psi\big)&(x)=-K_t^{\ast}(x,1)q_n^{\ast}(t)t^{-1}\\
	&\,+t(\tau_t\psi)(x)\int_0^1(\tau_t\phi)(y)\big((I-K_t^{\ast})^{-1}(DM)^n\tau_t\psi\big)(y)\,\d y.
\end{align*}
Consequently \eqref{bp18} yields, for any $n\in\{0,1,\ldots,N-1\}$,
\begin{equation*}
	t\frac{\d p_n^{\ast}}{\d t}(t)=q_n^{\ast}(t)\left[(\tau_t\phi)(1)+\int_0^1\big((I-K_t^{\ast})^{-1}K_t^{\ast}\big)(x,1)(\tau_t\phi)(x)\,\d x\right]=q_n^{\ast}(t)q_0(t),\ \ t\in J,
\end{equation*}
which concludes our proof of \eqref{b:11} and thus Corollary \ref{corb2}.
\end{proof}
The last proof concludes the content of this section, we now move to the RHP \ref{masterb}.

\section{Proof of Theorem \ref{theo3} and Corollaries \ref{impcor2} and \ref{deeper2}}\label{sec7}

As before, we shall first argue that RHP \ref{masterb} admits at most one solution.
\begin{lem}\label{unique3} The RHP \ref{masterb}, if solvable, is uniquely solvable.
\end{lem}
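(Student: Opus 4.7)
The strategy mirrors that of Lemma \ref{unique2}, with the jump contour $\mathbb{R}$ replaced by the critical line $\frac{1}{2}+\im\mathbb{R}$. The plan is first to upgrade the pointwise continuous limits in condition $(2)$ of RHP \ref{masterb} to local $L^2$ convergence, then use that the jump matrix \eqref{b:12} has determinant one to extend $\det{\bf X}(z)$ analytically across $\frac{1}{2}+\im\mathbb{R}$, and finally to compare two solutions via their matrix ratio.

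More concretely, I would argue as follows. The continuous pointwise convergence ${\bf X}(z\mp\epsilon)\to{\bf X}_{\pm}(z)$ on $\frac{1}{2}+\im\mathbb{R}$ implies, by a standard application of the dominated convergence theorem on a compact segment $S\subset\frac{1}{2}+\im\mathbb{R}$, that ${\bf X}(\cdot\mp\epsilon)\to{\bf X}_{\pm}$ locally in $L^2$ as $\epsilon\downarrow 0$. Consequently $\det{\bf X}(\cdot\mp\epsilon)\to\det{\bf X}_{\pm}$ locally in $L^1$. Next, from the explicit form of the jump matrix in \eqref{b:12},
\begin{equation*}
(\det{\bf X})_+(z)=\det{\bf X}_+(z)=\det{\bf X}_-(z)\cdot\det\begin{bmatrix}1-r_1(z)r_2(z) & -r_2(z)t^z\\ r_1(z)t^{-z} & 1\end{bmatrix}=\det{\bf X}_-(z)=(\det{\bf X})_-(z)
\end{equation*}
for every $z\in\frac{1}{2}+\im\mathbb{R}$. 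Combined with analyticity in $\mathbb{C}\setminus(\frac{1}{2}+\im\mathbb{R})$ from $(1)$, one can then invoke the argument of \cite[Theorem $7.18$]{D} to conclude that $\det{\bf X}(z)$ extends to an entire function. Condition $(3)$ gives $\det{\bf X}(z)=1+\mathcal{O}(z^{-1})$ as $z\to\infty$, whence Liouville's theorem forces $\det{\bf X}(z)\equiv 1$; in particular every solution is invertible on its domain.

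With invertibility in hand, suppose ${\bf X}_1(z)$ and ${\bf X}_2(z)$ are two solutions of RHP \ref{masterb} and set ${\bf R}(z):={\bf X}_1(z){\bf X}_2(z)^{-1}$ for $z\in\mathbb{C}\setminus(\frac{1}{2}+\im\mathbb{R})$. Then ${\bf R}(z)$ inherits analyticity in the two open half-planes separated by $\frac{1}{2}+\im\mathbb{R}$, and its continuous boundary values satisfy, for $z\in\frac{1}{2}+\im\mathbb{R}$,
\begin{equation*}
{\bf R}_+(z)={\bf X}_{1,+}(z){\bf X}_{2,+}(z)^{-1}={\bf X}_{1,-}(z)\cdot G(z)\cdot G(z)^{-1}{\bf X}_{2,-}(z)^{-1}={\bf R}_-(z),
\end{equation*}
where $G(z)$ denotes the common jump matrix in \eqref{b:12}. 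Applying the same local $L^1$/Morera argument used for $\det{\bf X}$, we obtain that ${\bf R}(z)$ is entire. From condition $(3)$ applied to both ${\bf X}_1$ and ${\bf X}_2$, one gets ${\bf R}(z)=\mathbb{I}+o(1)$ uniformly as $z\to\infty$ in $\mathbb{C}\setminus(\frac{1}{2}+\im\mathbb{R})$, and then a second application of Liouville's theorem forces ${\bf R}(z)\equiv\mathbb{I}$, i.e.\ ${\bf X}_1\equiv{\bf X}_2$.

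The only mildly subtle point—and really the only obstacle—is the passage from pointwise continuous boundary convergence to local $L^p$ convergence, needed to justify the removability of the critical line as a singularity set of the entire function $\det{\bf X}(z)$ and of ${\bf R}(z)$. This is handled exactly as in Lemma \ref{unique2}, using that the limits ${\bf X}_\pm$ are continuous on $\frac{1}{2}+\im\mathbb{R}$ and hence bounded on compacts, so the dominated convergence theorem applies on any bounded subarc.
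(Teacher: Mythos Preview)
Your proof is correct and follows essentially the same route as the paper: upgrade pointwise boundary convergence to local $L^2$/$L^1$ convergence, use the unimodularity of the jump matrix together with \cite[Theorem $7.18$]{D} and Liouville to get $\det{\bf X}\equiv 1$, and then run the same argument on the ratio ${\bf R}={\bf X}_1{\bf X}_2^{-1}$. The only difference is cosmetic---you spell out the dominated convergence step and the cancellation of the jump in ${\bf R}$ a bit more explicitly than the paper does.
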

\begin{proof} As in the proof of Lemma \ref{unique2}, the pointwise convergence requirement in condition (2) of RHP \ref{masterb} implies ${\bf X}(\frac{1}{2}\mp\epsilon+\im z)\rightarrow {\bf X}_{\pm}(\frac{1}{2}+\im z)$ locally in $L^2(\mathbb{R})$ as $\epsilon\downarrow 0$, so $\det{\bf X}(\frac{1}{2}\mp\epsilon+\im z)\rightarrow(\det{\bf X})_{\pm}(\frac{1}{2}+\im z)$ locally in $L^1(\mathbb{R})$ as $\epsilon\downarrow 0$. But from \eqref{b:12}, for every $z\in\frac{1}{2}+\im\mathbb{R}$,
\begin{equation*}
	(\det{\bf X})_+(z)=\det{\bf X}_+(z)=\det\left({\bf X}_-(z)\begin{bmatrix}1-r_1(z)r_2(z) & -r_2(z)t^z\,\smallskip\\
	r_1(z)t^{-z} & 1\end{bmatrix}\right)=\det{\bf X}_-(z)=(\det{\bf X})_-(z),
\end{equation*}
thus the proof working in \cite[Theorem $7.18$]{D} yields analyticity of $\det{\bf X}(z)$ across $\frac{1}{2}+\im\mathbb{R}$ and therefore analyticity in all of $\mathbb{C}$ by requirement $(1)$. On the other hand, by condition $(3)$, $\det{\bf X}(z)=1+\mathcal{O}(z^{-1})$ uniformly as $z\rightarrow\infty$ in $\mathbb{C}\setminus(\frac{1}{2}+\im\mathbb{R})$. Hence, by Liouville's theorem,
\begin{equation*}
	\det{\bf X}(z)\equiv 1\ \ \ \ \forall\,z,
\end{equation*}
which yields invertibility of every solution of RHP \ref{masterb}. Next, if ${\bf X}_1(z)$ and ${\bf X}_2(z)$ are two such solutions, we set ${\bf R}(z)={\bf X}_1(z){\bf X}_2(z)^{-1},z\in\mathbb{C}\setminus(\frac{1}{2}+\im\mathbb{R})$ and deduce as in the proof of Lemma \ref{unique2} that ${\bf R}(z)\equiv 1$, i.e. ${\bf X}_1(z)={\bf X}_2(z)$, implying uniqueness.
\end{proof}
Moving ahead, we require two auxiliary functions, namely specific solutions of
\begin{equation*}
		x\frac{\d y}{\d x}=-zy+\phi\ \ \ \ \ \ \textnormal{and}\ \ \ \ \ \ x\frac{\d y}{\d x}=(z-1)y+\psi,\ \ \ \ \ \ \ y=y(x,z):\mathbb{R}_+\times\mathbb{C}\rightarrow\mathbb{C}.
\end{equation*}
These will play an important role in the proof of Theorem \ref{theo3} and so we summarize their key properties below.
\begin{prop}\label{crucial2} Suppose $\phi,\psi\in H_{\circ}^{1,1}(\mathbb{R}_+)$ are continuously differentiable on $\mathbb{R}_+$ with $\sqrt{(\cdot)}\phi,\sqrt{(\cdot)}\psi\in L^{\infty}(\mathbb{R}_+)$ and $\sqrt{(\cdot)}MD\phi,\sqrt{(\cdot)}MD\psi\in L^{\infty}(0,1)$. Define, for all $x\in\mathbb{R}_+$,
\begin{equation}\label{bp19}
	y_1(x,z):=\begin{cases}\displaystyle-\int_x^{\infty}\Big(\frac{x}{s}\Big)^{-z}\phi(s)\frac{\d s}{s},&\Re z<\frac{1}{2}\bigskip\\
	\displaystyle \int_0^x\Big(\frac{x}{s}\Big)^{-z}\phi(s)\frac{\d s}{s},&\Re z>\frac{1}{2}
	\end{cases},\ \ \ \ \ y_2(x,z):=\begin{cases}\displaystyle\int_0^x\Big(\frac{x}{s}\Big)^{z-1}\psi(s)\frac{\d s}{s},&\Re z<\frac{1}{2}\bigskip\\
	\displaystyle-\int_x^{\infty}\Big(\frac{x}{s}\Big)^{z-1}\psi(s)\frac{\d s}{s},&\Re z>\frac{1}{2}
	\end{cases},
\end{equation}
and let $y_k^z$ denote the function $\mathbb{R}_+\ni x\mapsto y_k^z(x):=y_k(x,z),k=1,2$. Then with $\Sigma:=\frac{1}{2}+\im\mathbb{R}$,
\begin{enumerate}
	\item[(a)] $y_k^z\in L^2(\mathbb{R}_+),k=1,2$ for every $z\in\mathbb{C}\setminus\Sigma$. 
	\item[(b)] $\mathbb{C}\setminus\Sigma\ni z\mapsto y_k^z(x),k=1,2$ are analytic for all $x\in\mathbb{R}_+$ and extend continuously to the closed left and right, with respect to $\Sigma$, half-planes. For $z\in\Sigma$ we record the, pointwise in $x\in\mathbb{R}_+$, limits
	\begin{equation}\label{bp20}
		\lim_{\epsilon\downarrow 0}y_1^{z-\epsilon}(x)-\lim_{\epsilon\downarrow 0}y_1^{z+\epsilon}(x)=-x^{-z}r_1(z),\ \ \ \lim_{\epsilon\downarrow 0}y_2^{z-\epsilon}(x)-\lim_{\epsilon\downarrow 0}y_2^{z+\epsilon}(x)=x^{z-1}r_2(z),
	\end{equation}
	with $r_k(z),k=1,2,$ defined in \eqref{b:13}.
	\item[(c)] We have
	\begin{equation}\label{bp21}
		y_1^z(x)=\frac{1}{z}\phi(x)+m_1^z(x),\ \ \ \ \ y_2^z(x)=\frac{1}{1-z}\psi(x)+m_2^z(x),\ \ \ \ \ \ \ \ \ (x,z)\in\mathbb{R}_+\times(\mathbb{C}\setminus\Sigma),
	\end{equation}
	where $m_k^z\in L^2(0,1),k=1,2$ for every $z\in\mathbb{C}\setminus\Sigma$. In fact, there exist $c_k>0$ so that for all $z\in\mathbb{C}\setminus\Sigma$
	\begin{equation*}
		\|m_1^z\|_{L^2(0,1)}\leq\frac{c_1}{|z|\sqrt{|\Re z-\frac{1}{2}|}}\|MD\phi\|_{L_{\circ}^1(\mathbb{R}_+)},\ \ \ \ \ \|m_2^z\|_{L^2(0,1)}\leq\frac{c_2}{|1-z|\sqrt{|\Re z-\frac{1}{2}|}}\|MD\psi\|_{L_{\circ}^1(\mathbb{R}_+)}.
	\end{equation*}
\end{enumerate}

\end{prop}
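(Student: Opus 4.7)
The plan is to mirror the proof of Proposition \ref{crucial1}, with Mellin analysis along the critical line $\Sigma := \tfrac{1}{2} + \im\mathbb{R}$ replacing the Fourier analysis of the additive case. The starting observation is that the hypotheses $\sqrt{(\cdot)}\phi,\sqrt{(\cdot)}\psi \in L^{\infty}(\mathbb{R}_+)$ combined with $\phi,\psi \in L_\circ^1(\mathbb{R}_+)$ automatically yield $\phi,\psi \in L^2(\mathbb{R}_+)$ via the pointwise bound $|\phi(x)|^2 \leq \|\sqrt{(\cdot)}\phi\|_{L^{\infty}(\mathbb{R}_+)}|\phi(x)|/\sqrt{x}$, while the Mellin transforms $\tilde{\phi}(s):=\int_0^\infty \phi(x)x^{s-1}\,\d x$, $\tilde{\psi}(s)$, $\widetilde{MD\phi}(s)$, $\widetilde{MD\psi}(s)$ are bounded continuous functions on $\Sigma$ whose $L^{\infty}(\Sigma)$-norms are controlled by the respective $L_\circ^1(\mathbb{R}_+)$ norms.

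To prove $(1)$, I would compute the Mellin transforms of $y_k^z$ on $\Sigma$ directly from \eqref{bp19} by Fubini's theorem, arriving at
\begin{equation*}
	\tilde{y_1^z}(s) = -\frac{\tilde{\phi}(s)}{s-z},\qquad \tilde{y_2^z}(s) = -\frac{\tilde{\psi}(s)}{s-(1-z)},\qquad s\in\Sigma,
\end{equation*}
uniformly in either of the two cases $\Re z \lessgtr\tfrac{1}{2}$ (absolute integrability of the underlying double integrals is immediate on the respective open half-planes). Since $|s-z|, |s-(1-z)| \geq |\Re z - \tfrac{1}{2}|$ on $\Sigma$ whenever $z\notin\Sigma$, the right-hand sides above lie in $L^2(\Sigma)$, and the Mellin-Plancherel identity $\|f\|_{L^2(\mathbb{R}_+)}^2 = (2\pi)^{-1}\|\tilde{f}\|_{L^2(\Sigma)}^2$ then gives $y_k^z \in L^2(\mathbb{R}_+)$.

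For $(2)$, analyticity of $z \mapsto y_k^z(x)$ on each open half-plane $\{\Re z\lessgtr\tfrac{1}{2}\}$ follows from Morera's theorem combined with Fubini, exactly as in Proposition \ref{crucial1}, using the dominating functions $|\phi(s)|/\sqrt{s}$ or $|\psi(s)|/\sqrt{s}$. The continuous extension to the closed half-planes and the jump identities \eqref{bp20} then come from the dominated convergence theorem together with the observation that on $\Sigma$ the two branches of \eqref{bp19} complement each other to cover $(0,\infty)$:
\begin{equation*}
	\lim_{\epsilon\downarrow 0}\big(y_1^{z-\epsilon}-y_1^{z+\epsilon}\big)(x) = -\int_0^\infty (x/s)^{-z}\phi(s)\,\frac{\d s}{s} = -x^{-z}r_1(z),
\end{equation*}
and analogously for $y_2^z$, giving $+x^{z-1}r_2(z)$.

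For $(3)$, I integrate by parts in \eqref{bp19} once to extract the explicit $\phi/z$ and $\psi/(1-z)$ contributions. On $\Re z<\tfrac{1}{2}$, writing $y_1^z(x) = -x^{-z}\int_x^\infty s^{z-1}\phi(s)\,\d s$ and using the antiderivative $s^z/z$ of $s^{z-1}$, one sees that the boundary term at $s=\infty$ vanishes because $|s^z\phi(s)| = s^{\Re z - \frac{1}{2}}\cdot\sqrt{s}|\phi(s)|\to 0$ (using $\Re z<\tfrac{1}{2}$ and $\sqrt{(\cdot)}\phi \in L^{\infty}$), leaving
\begin{equation*}
	y_1^z(x) = \frac{\phi(x)}{z} + m_1^z(x),\qquad m_1^z(x) = \frac{x^{-z}}{z}\int_x^\infty s^{z-1}(MD\phi)(s)\,\d s;
\end{equation*}
the analogous formula on $\Re z>\tfrac{1}{2}$ has $\int_0^x$ in place of $\int_x^\infty$ and an overall sign flip, while the parallel computation for $y_2^z$ using the antiderivative $s^{1-z}/(1-z)$ produces the corresponding expressions for $m_2^z$ involving $MD\psi$. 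A second Fubini swap then yields the uniform Mellin representations
\begin{equation*}
	\tilde{m_1^z}(s) = \frac{\widetilde{MD\phi}(s)}{z(s-z)},\qquad \tilde{m_2^z}(s) = \frac{\widetilde{MD\psi}(s)}{(1-z)(s-(1-z))},\qquad s\in\Sigma,
\end{equation*}
valid on either side of $\Sigma$. Bounding the Mellin transforms of $MD\phi, MD\psi$ by the respective $L_\circ^1(\mathbb{R}_+)$ norms on $\Sigma$, using the elementary integral $\int_{-\infty}^\infty\d\tau/[(\tfrac{1}{2}-\Re z)^2+(\tau-\Im z)^2] = \pi/|\Re z - \tfrac{1}{2}|$ together with Plancherel, and noting $\|\cdot\|_{L^2(0,1)}\leq\|\cdot\|_{L^2(\mathbb{R}_+)}$, one arrives at the stated bounds on $\|m_k^z\|_{L^2(0,1)}$. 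I expect the main obstacle to be purely organizational---keeping the two cases $\Re z\lessgtr\tfrac{1}{2}$ straight and verifying Fubini's theorem in each Mellin computation---but these checks are routine given the control supplied by $\phi,MD\phi,\psi,MD\psi\in L_\circ^1(\mathbb{R}_+)$.
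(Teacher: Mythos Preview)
Your proposal is correct and follows the same overall strategy as the paper --- mirroring Proposition~\ref{crucial1} with Mellin analysis on $\Sigma=\tfrac12+\im\mathbb{R}$ in place of Fourier analysis --- but the technical execution differs in a way worth noting. The paper never computes the Mellin transforms of $y_k^z$ or $m_k^z$; instead it derives the pointwise majorant $|y_1^z(x)|\leq (E\ast|\phi|)(x)$ with $E(x)=\e^{-\delta|\ln x|}/\sqrt{x}$ and $\ast$ the multiplicative (Mellin) convolution, and then applies the Mellin convolution theorem plus Plancherel to the majorant. The same device is used for $m_k^z$, with an explicit estimate of the Mellin transform of $\chi_{(0,1)}(E\ast|MD\phi|)$ producing the $L^2(0,1)$ bound. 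Your route is more direct: the identities $\tilde{y_1^z}(s)=-\tilde\phi(s)/(s-z)$ and $\tilde{m_1^z}(s)=\widetilde{MD\phi}(s)/\big(z(s-z)\big)$ together with the elementary integral $\int_\Sigma|s-z|^{-2}\,|\d s|=\pi/|\Re z-\tfrac12|$ yield the required $L^2$ bounds immediately, and in fact give the slightly stronger conclusion $m_k^z\in L^2(\mathbb{R}_+)$. One small point to make explicit in your write-up: after computing $\tilde{y_1^z}$ pointwise on $\Sigma$ from $y_1^z\in L_\circ^1(\mathbb{R}_+)$ (your Fubini check shows this), the step ``Plancherel then gives $y_1^z\in L^2(\mathbb{R}_+)$'' uses the standard fact that an $L^1$-type function whose transform lies in $L^2$ is itself in $L^2$; this is routine but worth a sentence. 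The paper's convolution bound sidesteps this by producing an $L^2$ majorant directly, at the cost of a less transparent derivation.
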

\begin{proof} By construction \eqref{bp19}, for all $(x,z)\in\mathbb{R}_+\times(\mathbb{C}\setminus\Sigma)$,
\begin{equation*}
	|y_1(x,z)|\leq\frac{\|\phi\|_{L_{\circ}^1(\mathbb{R}_+)}}{\sqrt{x}}\ \ \ \ \ \ \textnormal{and}\ \ \ \ \ \ |y_2(x,z)|\leq\frac{\|\psi\|_{L_{\circ}^1(\mathbb{R}_+)}}{\sqrt{x}},
\end{equation*}
so the functions $y_k^z$ are well-defined for $z\in\mathbb{C}\setminus\Sigma$. More is true, for fixed $z\in\mathbb{C}\setminus\Sigma$ with $\delta:=|\Re z-\frac{1}{2}|>0$,
\begin{equation*}
	|y_1^z(x)|\leq\int_0^{\infty}\e^{-\delta|\ln s|}|\phi(sx)|\frac{\d s}{\sqrt{s}}=(E\ast|\phi|)(x),\ \ |y_2^z(x)|\leq\int_0^{\infty}\e^{-\delta|\ln s|}|\psi(sx)|\frac{\d s}{\sqrt{s}}=(E\ast|\psi|)(x),\ \ x\in\mathbb{R}_+,
\end{equation*}
with $(f\ast g)(x):=\int_0^{\infty}f(\frac{x}{y})g(y)\frac{\d y}{y}$ and where $E(x):=\e^{-\delta|\ln x|}\frac{1}{\sqrt{x}}$. Hence, by the $L_{\circ}^1(\mathbb{R}_+)$ Mellin convolution theorem, cf. \cite[Theorem $3$]{BJ}, for any $\xi\in\Sigma$,
\begin{equation*}
	\widehat{E\ast|\phi|}(\xi)=\widehat{E}(\xi)\widehat{|\phi|}(\xi)=\frac{2\delta}{\delta^2-(\xi-\frac{1}{2})^2}\widehat{|\phi|}(\xi),\ \ \ \ \ \ \widehat{E\ast|\psi|}(\xi)=\widehat{E}(\xi)\widehat{|\psi|}(\xi)=\frac{2\delta}{\delta^2-(\xi-\frac{1}{2})^2}\widehat{|\psi|}(\xi),
\end{equation*}
where the Mellin transform $\hat{f}$ of $f\in L_{\circ}^1(\mathbb{R}_+)$ is defined by
\begin{equation*}
	\hat{f}(\xi):=\int_0^{\infty}f(x)x^{\xi-1}\,\d x,\ \ \ \ \xi\in\Sigma.
\end{equation*}
Consequently, by Plancherel's theorem for the Mellin transform, cf. \cite[Theorem $71$]{Tit}, using that $\phi,\psi\in L_{\circ}^1(\mathbb{R}_+)\cap L^2(\mathbb{R}_+)$ and that $E\ast|\phi|,E\ast|\psi|\in L_{\circ}^1(\mathbb{R}_+)\cap L^2(\mathbb{R}_+)$,
\begin{equation*}
	\|y_1^z\|_{L^2(\mathbb{R}_+)}\leq\|E\ast|\phi|\|_{L^2(\mathbb{R}_+)}\leq c\|\widehat{E\ast|\phi|}\|_{L^2(\Sigma)}\leq c\|\widehat{|\phi|}\|_{L^2(\Sigma)}\leq c\|\phi\|_{L^2(\mathbb{R}_+)}<\infty,\ \ \ c=c(\delta)>0.
\end{equation*}
An analogous bound holds for $\|y_2^z\|_{L^2(\mathbb{R}_+)}$, so indeed $y_k^z\in L^2(\mathbb{R}_+)$ for every $z\in\mathbb{C}\setminus\Sigma$, as claimed in (a). Next, $\mathbb{C}\setminus\Sigma\ni z\mapsto y_k^z(x)$ is continuous for any $x\in\mathbb{R}_+$ by the dominated convergence theorem using $\phi,\psi\in L_{\circ}^1(\mathbb{R}_+)$. Moreover, for any triangle $T$ strictly contained in the left or right half-plane with respect to $\Sigma$, by Fubini's theorem,
\begin{equation*}
	\oint_Ty_k^z(x)\,\d z=0\ \ \ \ \forall\,x\in\mathbb{R}_+,\ \ k=1,2,
\end{equation*}
so the same functions $z\mapsto y_k^z(x)$ are analytic on the separate half-planes by Morera's theorem, for all $x\in\mathbb{R}_+$. Additionally, the existence of their continuous extensions up to the critical line $\Sigma$ and in turn the relations \eqref{bp20} are a direct consequence of the dominated convergence theorem, of the assumption $\phi,\psi\in L_{\circ}^1(\mathbb{R}_+)$ and formula \eqref{bp19}. This establishes (b). To get to \eqref{bp21}, we integrate by parts,
\begin{equation*}
	y_1^z(x)=\frac{1}{z}\phi(x)+m_1^z(x),\ \ \ \ \ m_1^z(x):=\frac{1}{z}\begin{cases}\displaystyle\int_x^{\infty}\Big(\frac{x}{s}\Big)^{-z}(MD\phi)(s)\frac{\d s}{s},&\Re z<\frac{1}{2}\bigskip\\
	\displaystyle -\int_0^x\Big(\frac{x}{s}\Big)^{-z}(MD\phi)(s)\frac{\d s}{s},&\Re z>\frac{1}{2}
	\end{cases},
\end{equation*}
and
\begin{equation*}
	y_2^z(x)=\frac{1}{1-z}\psi(x)+m_2^z(x),\ \ \ \ \ m_2^z(x):=\frac{1}{1-z}\begin{cases}\displaystyle-\int_0^x\Big(\frac{x}{s}\Big)^{z-1}(MD\psi)(s)\frac{\d s}{s},&\Re z<\frac{1}{2}\bigskip\\
	\displaystyle\int_x^{\infty}\Big(\frac{x}{s}\Big)^{z-1}(MD\psi)(s)\frac{\d s}{s},&\Re z>\frac{1}{2}
	\end{cases}.
\end{equation*}
Here, $|zm_1^z(x)|\leq\|MD\phi\|_{L_{\circ}^1(\mathbb{R}_+)}/\sqrt{x},|(1-z)m_2^z(x)|\leq\|MD\psi\|_{L_{\circ}^1(\mathbb{R}_+)}/\sqrt{x}$ for all $(x,z)\in\mathbb{R}_+\times(\mathbb{C}\setminus\Sigma)$ and
\begin{equation*}
	|m_1^z(x)|\leq\frac{1}{|z|}\big(E\ast|MD\phi|\big)(x),\ \ \ \ \ |m_2^z(x)|\leq\frac{1}{|1-z|}\big(E\ast|MD\psi|\big)(x),\ \ \ \ (x,z)\in\mathbb{R}_+\times(\mathbb{C}\setminus\Sigma).
\end{equation*}
Thus, by Plancherel's theorem for the Mellin transform, using $MD\phi,MD\psi,E\ast|MD\phi|,E\ast|MD\psi|\in L_{\circ}^1(\mathbb{R}_+)\cap L^2(0,1)$,
\begin{equation*}
	\|m_1^z\|_{L^2(0,1)}\leq\frac{1}{|z|}\|\chi_{(0,1)}(E\ast|MD\phi|)\|_{L^2(\mathbb{R}_+)}\leq\frac{c}{|z|\sqrt{|\Re z-\frac{1}{2}|}}\|MD\phi\|_{L_{\circ}^1(\mathbb{R}_+)},\ \ c>0,
\end{equation*}
where $\chi_{(0,1)}$ is the characteristic function on $(0,1)\subset\mathbb{R}_+$ and where we use the explicit estimate
\begin{equation*}
	\left|\int_0^{\infty}\big(\chi_{(0,1)}(E\ast|MD\phi|)\big)(x)x^{\xi-1}\,\d x\right|\leq 2\left[\frac{1}{\sqrt{\delta^2+|\xi-\frac{1}{2}|^2}}+\frac{\delta}{\delta^2+|\xi-\frac{1}{2}|^2}\right]\|MD\phi\|_{L_{\circ}^1(\mathbb{R}_+)},\ \ \ \xi\in\Sigma,
\end{equation*}
with $\delta=|\Re z-\frac{1}{2}|>0$. A similar estimate holds for $\|m_2^z\|_{L^2(0,1)}$ and this concludes our proof.
\end{proof}
Equipped with Proposition \ref{crucial2} we can now solve RHP \ref{masterb} and thus prove Theorem \ref{theo3}.
\begin{proof}[Proof of Theorem \ref{theo3}] Fix $t\in\mathbb{R}_+$, abbreviate $\Sigma:=\frac{1}{2}+\im\mathbb{R}$ and consider the $2\times 2$ matrix-valued function
\begin{equation}\label{bp22}
	{\bf X}(z):=\mathbb{I}+\begin{bmatrix}\displaystyle t\int_0^1\big((I-K_t^{\ast})^{-1}\tau_t\psi\big)(x)(\tau_ty_1^z)(x)\,\d x & -t\big((I-K_t^{\ast})^{-1}\tau_ty_2^z\big)(1)\smallskip\\
	\displaystyle -\big((I-K_t)^{-1}\tau_ty_1^z\big)(1) & \displaystyle t\int_0^1\big((I-K_t)^{-1}\tau_t\phi\big)(x)(\tau_ty_2^z)(x)\,\d x\end{bmatrix},
\end{equation}
defined for $z\in\mathbb{C}\setminus\Sigma$ with $y_k^z,k=1,2$ as in \eqref{bp19} and $(\tau_ty_k^z)(x)=y_k^z(xt)$. We first show that the first column entries of \eqref{bp22} are analytic in $\mathbb{C}\setminus\Sigma$, the second column can afterwards be treated in the same way. Since $\psi\in L_{\circ}^1(\mathbb{R}_+)\cap L^2(\mathbb{R}_+)$ and $y_1^z\in L^2(\mathbb{R}_+)$ for $z\in\mathbb{C}\setminus\Sigma$ by Proposition \ref{crucial2},
\begin{equation}\label{bp23}
	\mathbb{C}\setminus\Sigma\ni z\mapsto X^{11}(z)\stackrel{\eqref{bp22}}{=}1+t\int_0^1\big((I-K_t^{\ast})^{-1}\tau_t\psi\big)(x)(\tau_ty_1^z)(x)\,\d x
\end{equation}
is well-defined. But by \eqref{b:1},\eqref{b:15}, $(I-K_t^{\ast})^{-1}\tau_t\psi\in L^1_{\circ}(0,1)$ since $I-K_t^{\ast}$ is invertible on $L^2(0,1)$ and since $\psi\in L_{\circ}^1(\mathbb{R}_+)\cap L^2(\mathbb{R}_+)$. Hence, using the dominated convergence theorem, the estimate $|\sqrt{x}y_1^z(x)|\leq\|\phi\|_{L_{\circ}^1(\mathbb{R}_+)}$ and Proposition \ref{crucial2}, we conclude that $\mathbb{C}\setminus\Sigma\ni z\mapsto X^{11}(z)$ in \eqref{bp23} is continuous. Moreover, for any triangle $T$ strictly contained in the left or right half-plane with respect to $\Sigma$, by Fubini's theorem,
\begin{equation*}
	\oint_TX^{11}(z)\,\d z=0,
\end{equation*}
so $\mathbb{C}\setminus\Sigma\ni z\mapsto X^{11}(z)$ is analytic by Morera's theorem. Next, look at
\begin{equation}\label{bp24}
	\mathbb{C}\setminus\Sigma\ni z\mapsto X^{21}(z)\stackrel{\eqref{bp22}}{=}-\big((I-K_t)^{-1}\tau_ty_1^z\big)(1).
\end{equation}
Since $(0,1]\ni x\mapsto ((I-K_t)^{-1}\tau_ty_1^z)(x)$ is locally $\frac{1}{2}$-H\"older continuous for any $z\in\mathbb{C}\setminus\Sigma$ by \eqref{b:15} and Proposition \ref{crucial2}, the function \eqref{bp24} is well-defined. In order to establish its analyticity in the separate half-planes with respect to $\Sigma$ we consider the auxiliary function
\begin{equation}\label{bp25}
	\mathbb{C}\setminus\Sigma\ni z\mapsto f_n(z):=\int_0^1\delta_n(x)\big((I-K_t)^{-1}\tau_ty_1^z\big)(x)\,\d x;\ \ \ \ \delta_n(x):=\frac{2n}{\sqrt{\pi}}\e^{-n^2(1-x)^2}.
\end{equation}
Here, $\delta_n\in L_{\circ}^1(\mathbb{R}_+)\cap L^2(\mathbb{R}_+),n\in\mathbb{Z}_{\geq 1}$ with $\int_0^1\delta_n(x)\,\d x=1-2\int_n^{\infty}\e^{-x^2}\,\d x/\sqrt{\pi}$ and $z\mapsto f_n(z)$ is well-defined by invertibility of $I-K_t$ on $L^2(0,1)$ and by Proposition \ref{crucial2}. More is true, $(f_n)_{n=1}^{\infty}$ is a sequence of analytic functions, analytic in the separate half-planes with respect to $\Sigma$: indeed we have, for any $n\in\mathbb{Z}_{\geq 1}$,
\begin{equation}\label{bp26}
	f_n(z)\stackrel{\eqref{bp25}}{=}\int_0^1\big((I-K_t^{\ast})^{-1}\delta_n\big)(x)(\tau_t y_1^z)(x)\,\d x,\ \ \ \ z\in\mathbb{C}\setminus\Sigma,
\end{equation}
where $(I-K_t^{\ast})^{-1}\delta_n\in L_{\circ}^1(\mathbb{R}_+)$, by \eqref{b:15} and invertibility of $I-K_t$ on $L^2(0,1)$, and where $|\sqrt{x}y_1^z(x)|\leq\|\phi\|_{L_{\circ}^1(\mathbb{R}_+)}$. So, $\mathbb{C}\setminus\Sigma$ is continuous by the dominated convergence theorem and by Proposition \ref{crucial2}. Moreover, by Fubini's theorem
\begin{equation*}
	\oint_Tf_n(z)\,\d z\stackrel{\eqref{bp26}}{=}0
\end{equation*}
for any triangle $T$ strictly contained in the left or right half-plane. This implies analyticity of $\mathbb{C}\setminus\Sigma\ni z\mapsto f_n(z)$ by Morera's theorem. Next, using $(I-K_t)^{-1}=I+K_t(I-K_t)^{-1}$, we have for any $z\in\mathbb{C}\setminus\Sigma$,
\begin{align}
	&f_n(z)+X^{21}(z)=\int_0^1\delta_n(x)\big((\tau_ty_1^z)(x)-(\tau_ty_1^z)(1)\big)\,\d x-\big((I-K_t)^{-1}\tau_ty_1^z\big)(1)\left[1-\int_0^1\delta_n(x)\,\d x\right]\nonumber\\
	&+\int_0^1\delta_n(x)\left[\int_0^1\big(K_t(x,y)-K_t(1,y)\big)\big((I-K_t)^{-1}\tau_ty_1^z\big)(y)\,\d y\right]\d x\equiv f_{1n}(z)+f_{2n}(z)+f_{3n}(z).\label{bp27}
\end{align}
But for any $a,b\in\mathbb{R}_+$,
\begin{equation}\label{bp28}
	\big|(\tau_ty_1^z)(a)-(\tau_ty_1^z)(b)\big|\leq\frac{|a-b|}{\sqrt{ab(a+b)}}\underbrace{\frac{2}{\sqrt{t}}\big(|z|\,\|\phi\|_{L_{\circ}^1(\mathbb{R}_+)}+\|\sqrt{(\cdot)}\phi\|_{L^{\infty}(\mathbb{R}_+)}\big)}_{=:L_{z,\phi}},\ \ \ z\in\mathbb{C}\setminus\Sigma,
\end{equation}
followed by, for any $(a,b,z)\in(0,1]^3$, now using \eqref{b:15},
\begin{equation}\label{bp29}
	\big|K_t(a,z)-K_t(b,z)\big|\leq c_t\sqrt{\frac{|a-b|}{ab}}\sqrt{\int_0^1|(\tau_t\psi)(\lambda z)|^2\,\d\lambda},\ \ \ c_t>0,\ \ \ \ \ \ z\in\mathbb{C}\setminus\Sigma,
\end{equation}
so that all together, for any $z\in\mathbb{C}\setminus\Sigma$ and $n\geq n_0$ sufficiently large,
\begin{equation*}
	|f_{1n}(z)|\stackrel{\eqref{bp28}}{\leq}\frac{4L_{z,\phi}}{n\sqrt{\pi t}},\ \ \  |f_{2n}(z)|\leq\frac{2}{n\sqrt{\pi}}\e^{-n^2}\big|\big((I-K_t)^{-1}\tau_ty_1^z\big)(1)\big|,\ \ \ |f_{3n}(z)|\leq\frac{d_t}{\sqrt{n}}\|(I-K_t)^{-1}\tau_ty_1^z\|_{L^2(0,1)}.
\end{equation*}
Here, $d_t>0$ is $(z,n)$-independent and we have by the workings in Proposition \ref{crucial2}, $\|(I-K_t)^{-1}\tau_ty_1^z\|_{L^2(0,1)}\leq 2\|(I-K_t)^{-1}\|\|\phi\|_{L^2(\mathbb{R}_+)}/|\Re z-\frac{1}{2}|$. Thus, for any compact $K\subset\mathbb{C}\setminus\Sigma$ in the left half-plane, say, the above \eqref{bp27} together with the estimates for $f_{kn}$ yield
\begin{equation*}
	\sup_{z\in K}\big|f_n(z)+X^{21}(z)\big|\rightarrow 0\ \ \textnormal{as}\ n\rightarrow\infty,
\end{equation*}
i.e. $z\mapsto X^{21}(z)$ is analytic in the left half-plane, say, by \cite[Theorem $5.2$]{SS}. This proves that \eqref{bp22} satisfies property $(1)$ of RHP \ref{masterb}. Moving to property $(2)$, we compute via \eqref{bp19}, the dominated convergence theorem and Fubini's theorem, for any $z\in\Sigma$,
\begin{align*}
	\lim_{\epsilon\downarrow 0}X^{11}(z-\epsilon)=&\,1-r_1(z)t^{-z}\int_0^1\big((I-K_t^{\ast})^{-1}\tau_t\psi\big)(x)x^{-z}\,\d x\\
	&\hspace{1.5cm}+t\int_0^1u^{z-1}\left[\int_0^1(\tau_{tu}\phi)(x)\big((I-K_t^{\ast})^{-1}\tau_t\psi\big)(x)\,\d x\right]\d u,\\
	\lim_{\epsilon\downarrow 0}X^{11}(z+\epsilon)=&\,1+t\int_0^1u^{z-1}\left[\int_0^1(\tau_{tu}\phi)(x)\big((I-K_t^{\ast})^{-1}\tau_t\psi\big)(x)\,\d x\right]\d u.
\end{align*}
These limiting values are well-defined and continuous in $z\in\Sigma$ given that $(I-K_t^{\ast})^{-1}\tau_t\psi\in L_{\circ}^1(0,1)$ and $\phi\in L_{\circ}^1(\mathbb{R}_+)$. In fact, using the algebraic Lemma \ref{new6}, we can simplify them further to obtain that for any $z\in\Sigma$,
\begin{align}
	X_+^{11}(z)=&\,1+\big((I-K_t^{\ast})^{-1}g_z\big)(1)-r_1(z)t^{1-z}\int_0^1\big((I-K_t^{\ast})^{-1}\tau_t\psi\big)(x)x^{-z}\,\d x\nonumber\\
	X_-^{11}(z)=&\,1+\big((I-K_t^{\ast})^{-1}g_z\big)(1);\ \ \ \ \ \ \textnormal{with}\ \ \ \ \ g_z(x):=\int_0^1K_t^{\ast}(x,u)u^{z-1}\,\d u,\ \ \ x\in(0,1].\label{bp30}
\end{align}
Here, $(0,1]\ni x\mapsto g_z(x)$ is well-defined and continuous for all $z\in\Sigma$ by \eqref{b:15} and since $\psi\in L_{\circ}^1(\mathbb{R}_+)\cap L^2(\mathbb{R}_+)$. Also, $g_z\in L^2(0,1)$ by \eqref{b:1} and \eqref{b:15}, so $(0,1]\ni x\mapsto((I-K_t^{\ast})^{-1}g_z)(x)$ is continuous and thus $((I-K_t^{\ast})^{-1}g_z)(1)$ in \eqref{bp30} bona fide. By similar logic, using this time \eqref{app8} instead of \eqref{app9}, we also find
\begin{align}
	X_+^{22}(z)=&\,1+\big((I-K_t)^{-1}h_z\big)(1);\ \ \ \ \ \ \textnormal{with}\ \ \ \ \ h_z(x):=\int_0^1K_t(x,u)u^{-z}\,\d u,\ \ \ x\in(0,1],\nonumber\\
	X_-^{22}(z)=&\,1+\big((I-K_t)^{-1}h_z\big)(1)-r_2(z)t^z\int_0^1\big((I-K_t)^{-1}\tau_t\phi\big)(x)x^{z-1}\,\d x.\label{bp31}
\end{align}
Next, for the off-diagonal entries, $X^{21}(z)$, say, we write $(I-K_t)^{-1}=I+(I-K_t)^{-1}K_t$ and note that
\begin{equation*}
	(0,1]\ni x\mapsto (K_t\tau_ty_1^z)(x)=\int_0^1K_t(x,u)(\tau_ty_1^z)(u)\,\d u
\end{equation*}
is in $L_{\circ}^1(0,1)\cap L^2(0,1)$, for all $z\in\mathbb{C}\setminus\Sigma$, by \eqref{b:1},\eqref{b:15} and since $\phi\in L_{\circ}^1(\mathbb{R}_+)$. Moreover, by the dominated convergence theorem, with $z\in\Sigma$,
\begin{equation*}
	\lim_{\epsilon\downarrow 0}(K_t\tau_ty_1^{z-\epsilon})(x)=-t^{-z}r_1(z)h_z(x)+\int_0^1v^{z-1}(K_t\tau_{tv}\phi)(x)\,\d v\ \ \ \ \forall\,x\in(0,1],
\end{equation*}
noting that $(0,1]\ni v\mapsto (K_t\tau_{tv}\phi)(x)=(K_t\tau_{tx}\phi)(v)$ is in $L_{\circ}^1(0,1)$ almost everywhere $x\in(0,1]$ and that
\begin{equation*}
	(0,1]\ni x\mapsto \int_0^1v^{z-1}(K_t\tau_{tv}\phi)(x)\,\d v=:f(x)
\end{equation*}
is in $L_{\circ}^1(0,1)$ and $\sqrt{(\cdot)}f\in L^{\infty}(\mathbb{R}_+)$, thus $f\in L^2(0,1)$. Hence, by the dominated convergence theorem, for all $z\in\Sigma$, using $h_z\in L^2(0,1)$ and Lemma \ref{new5},
\begin{equation}\label{bp32}
	X_+^{21}(z)=-\int_0^1\big((I-K_t)^{-1}\tau_t\phi\big)(x)x^{z-1}\,\d x+r_1(z)t^{-z}+r_1(z)t^{-z}\big((I-K_t)^{-1}h_z\big)(1),
\end{equation}
which is well-defined and continuous in $z\in\Sigma$. By similar logic,
\begin{equation}\label{bp33}
	X_-^{21}(z)=-\int_0^1\big((I-K_t)^{-1}\tau_t\phi\big)(x)x^{z-1}\,\d x,
\end{equation}
followed by
\begin{align}
	X_+^{12}(z)=&\,-t\int_0^1\big((I-K_t^{\ast})^{-1}\tau_t\psi\big)(x)x^{-z}\,\d x\nonumber\\
	X_-^{12}(z)=&\,-t\int_0^1\big((I-K_t^{\ast})^{-1}\tau_t\psi\big)(x)x^{-z}\,\d x+r_2(z)t^z+r_2(z)t^z\big((I-K_t^{\ast})^{-1}g_z\big)(1),\label{bp34}
\end{align}
which are all well-defined and continuous functions in $z\in\Sigma$. Using now \eqref{bp30},\eqref{bp31},\eqref{bp32},\eqref{bp33} and \eqref{bp34} we directly verify the identity
\begin{equation*}
	{\bf X}_-(z)\begin{bmatrix}1-r_1(z)r_2(z) & -r_2(z)t^z\smallskip\\ r_1(z)t^{-z} & 1\end{bmatrix}={\bf X}_+(z),\ \ \ \ z\in\Sigma,
\end{equation*}
i.e. \eqref{bp22} also satisfies property $(2)$ of RHP \ref{masterb}. Finally, by Proposition \ref{crucial2} and Cauchy-Schwarz inequality,
\begin{equation}\label{bp35}
	\left\|{\bf X}(z)-\mathbb{I}-\frac{1}{z}\begin{bmatrix}p_0 & q_0^{\ast}\\ - q_0 & -p_0^{\ast}\end{bmatrix}\right\|\leq\frac{c_t}{|z|\sqrt{|\Re z-\frac{1}{2}|}},\ \ c_t>0,\ \ \ z\neq 1,
\end{equation}
which shows that ${\bf X}(z)$ in \eqref{bp22} satisfies \eqref{b:14} and in turn \eqref{b:16} as $z\rightarrow\infty$ along any non-vertical direction. 
%Although seemingly weaker, \eqref{bp35} is sufficient, cf. \cite[Proposition $3.1$]{FZ}, to conclude that ${\bf X}(z)$ in \eqref{bp22} also satisfies condition $(3)$ of RHP \ref{masterb}, and thus constitutes the unique solution of the same problem. 
Our proof of Theorem \ref{theo3} is now complete.
\end{proof}
\begin{proof}[Proof of Corollary \ref{impcor2}] This is immediate from \eqref{b:16} since $X_1^{11}(t,\phi,\psi)=p_0(t)$ and $X_1^{12}(t,\phi,\psi)X_1^{21}(t,\phi,\psi)=q_0(t)q_0^{\ast}(t)$ for all $t\in J$ under the given assumptions. Now use the last two equations in \eqref{b:6} and \eqref{bp1} which says $t\frac{\d}{\d t}\ln F(t)=-p_0(t),t\in J$, subject to the necessary assumptions.
\end{proof}
\begin{proof}[Proof of Corollary \ref{deeper2}] Under the assumptions made, repetitive integration by parts yields for $z\notin(\frac{1}{2}+\im\mathbb{R})$,
\begin{align}
	(\tau_ty_1^z)(x)=&\,\sum_{k=1}^N(-1)^{k-1}\big((MD)^{k-1}\tau_t\phi\big)(x)z^{-k}+m_{1N}^z(x;t),\nonumber\\
	(\tau_ty_2^z)(x)=&\,\sum_{k=1}^N(-1)^{k-1}\big((MD)^{k-1}\tau_t\psi\big)(x)(1-z)^{-k}+m_{2N}^z(x;t),\label{bp36}
\end{align}
where, with $c_k=c_k(t)>0$,
\begin{align}
	\|m_{1N}^z(\cdot;t)\|_{L^2(0,1)}\leq&\,\frac{c_1}{|z|^N\sqrt{|\Re z-\frac{1}{2}|}}\|(MD)^N\phi\|_{L_{\circ}^1(\mathbb{R}_+)},\label{bp37}\\
	\|m_{2N}^z(\cdot;t)\|_{L^2(0,1)}\leq&\,\frac{c_2}{|z|^N\sqrt{|\Re z-\frac{1}{2}|}}\|(MD)^N\psi\|_{L_{\circ}^1(\mathbb{R}_+)}.\label{bp38}
\end{align}
But Taylor expanding $(1-z)^{-k}$ at $z=\infty$ we can convert $(MD)^{k-1}$ to $(DM)^{k-1}$ and so the leading orders in the expansion of $\tau_ty_2^z$ in \eqref{bp36} become
\begin{equation*}
	-\sum_{k=1}^N\big((DM)^{k-1}\tau_t\psi\big)(x)z^{-k}.
\end{equation*}
As in the proof of Theorem \ref{theo3}, the above is sufficient to conclude the validity of \eqref{b:17}, after inserting the stated decompositions for $\tau_ty_1^z$ and $\tau_ty_2^z$ into \eqref{bp22} and estimating the remainder with the Cauchy-Schwarz inequality using \eqref{bp37} and \eqref{bp38}. This completes our proof.
\end{proof}
We now move on to the more specialized Hankel composition operators of Section \ref{seci25}.
\section{Proof of Theorem \ref{theo4}}\label{sec8}
Once more, in order to derive \eqref{b:25},\eqref{b:26} and \eqref{b:27}, we require two auxiliary results of independent interest.
\begin{prop} Suppose $H_t\in\mathcal{L}(L^2(0,1))$ defined in \eqref{b:18} satisfies the conditions in Assumption \ref{ass2} with $\phi:\mathbb{R}_+\rightarrow\mathbb{C}$ continuously differentiable on $\mathbb{R}_+$ such that 
\begin{equation}\label{bp39}
	\lim_{x\downarrow 0}\sqrt{x}\,\phi(x)=0,
\end{equation}
and for every $t\in\mathbb{R}_+$, 
\begin{equation*}
		\int_0^1|(\tau_t\phi)|^2\ln\Big(\frac{1}{x}\Big)\,\d x<\infty,\ \ \ \int_0^1|(MD\tau_t\phi)(x)|^2\ln\Big(\frac{1}{x}\Big)\,\d x<\infty,\ \ \ \int_0^1\sqrt{\int_0^1|(\tau_t\phi)(xy)|^2\,\d y}\frac{\d x}{\sqrt{x}}<\infty,
\end{equation*}	
\begin{equation}\label{bp40}
	\int_0^1\sqrt{\int_0^1|(MD\tau_t\phi)(xy)|^2\,\d y}\frac{\d x}{\sqrt{x}}<\infty.
\end{equation}
Then we have, provided the families $\{\tau_t\phi\}_{t\in\mathbb{R}_+}$ and $\{MD\tau_t\phi\}_{t\in\mathbb{R}_+}$ are $L^2(0,1)$ dominated and provided there exist $c,t_0>0$ such that $|q(t,\gamma)|\leq ct^{-\frac{1}{2}+\epsilon}$ for all $0<t\leq t_0^{-1},\gamma\in[0,1]$ with some $\epsilon>0$,
\begin{equation}\label{bp41a}
	\ln G(t,\sqrt{\gamma})=\ln G(t,-\sqrt{\gamma})-\int_0^tq(s,\gamma)\frac{\d s}{\sqrt{s}},\ \ \ \ \ (t,\gamma)\in\mathbb{R}_+\times[0,1],
\end{equation}
with branches for the logarithms that are analytically continued along the orbits of the corresponding functions.\end{prop}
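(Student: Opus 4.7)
The plan is to mirror the proof of the additive-case identity \eqref{c3}, with the ordinary derivative $\frac{\d}{\d t}$ replaced by the scaling derivative $t\frac{\d}{\d t}$ that respects the multiplicative Hankel structure. Under Assumption \ref{ass2}, the map $t\mapsto H_t\in\mathcal{C}_1(L^2(0,1))$ is differentiable with $t\frac{\d}{\d t}H_t$ having the stated trace class kernel $\tfrac{1}{2}\sqrt{t}(\tau_t\phi)(xy)+\sqrt{t}(MD\tau_t\phi)(xy)$. Invertibility of $I-\gamma K_t=(I-\sqrt{\gamma}H_t)(I+\sqrt{\gamma}H_t)$ forces invertibility of $I\mp\sqrt{\gamma}H_t$ separately, so $G(t,\pm\sqrt{\gamma})\neq 0$ and the branch of logarithm in \eqref{bp41a} is meaningful. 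Applying Jacobi's formula, subtracting, and combining $(I-\sqrt{\gamma}H_t)^{-1}+(I+\sqrt{\gamma}H_t)^{-1}=2(I-\gamma K_t)^{-1}$, I reduce matters to the single trace
\begin{equation*}
t\frac{\d}{\d t}\ln G(t,\sqrt{\gamma})-t\frac{\d}{\d t}\ln G(t,-\sqrt{\gamma})=-2\sqrt{\gamma}\,\tr_{L^2(0,1)}\!\left((I-\gamma K_t)^{-1}\cdot t\frac{\d}{\d t}H_t\right).
\end{equation*}

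The core algebraic input, replacing the additive identity \eqref{bet5a}, will be the pair of operator identities
\begin{equation*}
t\frac{\d}{\d t}H_t=\tfrac{1}{2}H_t+MDH_t,\qquad MDH_t+H_tMD=-H_t+\sqrt{t}\,\tau_t\phi\otimes\delta_1,
\end{equation*}
where $\tau_t\phi\otimes\delta_1$ abbreviates the rank-one operator $f\mapsto\sqrt{t}\,(\tau_t\phi)\cdot f(1)$ and $MDH_t$ has kernel $\sqrt{t}(MD\tau_t\phi)(xy)$. The first identity comes from direct differentiation of $H_t(x,y)=\sqrt{t}(\tau_t\phi)(xy)$ in $t$; the second from integration by parts in $y$, using the kernel symmetry $H_t(x,y)=H_t(y,x)$ together with $\lim_{y\downarrow 0}\sqrt{y}(\tau_t\phi)(xy)=0$ from the hypothesis \eqref{bp39}. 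Substituting $MDH_t=\sqrt{t}\,\tau_t\phi\otimes\delta_1-H_t-H_tMD$ into the above trace, commuting $H_t$ past $(I-\gamma K_t)^{-1}$ (both are functions of $H_t$), and invoking cyclicity in the form $\tr((I-\gamma K_t)^{-1}H_tMD)=\tr((I-\gamma K_t)^{-1}MDH_t)$, I solve the resulting linear equation to obtain
\begin{equation*}
\tr\!\left((I-\gamma K_t)^{-1}MDH_t\right)=-\tfrac{1}{2}\tr\!\left((I-\gamma K_t)^{-1}H_t\right)+\tfrac{\sqrt{t}}{2\sqrt{\gamma}}\,q(t,\gamma),
\end{equation*}
using $\tr\!\left((I-\gamma K_t)^{-1}\tau_t\phi\otimes\delta_1\right)=\big((I-\gamma K_t)^{-1}\tau_t\phi\big)(1)=q(t,\gamma)/\sqrt{\gamma}$. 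Feeding this back cancels the two $H_t$-traces and produces
\begin{equation*}
\frac{\d}{\d t}\ln G(t,\sqrt{\gamma})-\frac{\d}{\d t}\ln G(t,-\sqrt{\gamma})=-\frac{q(t,\gamma)}{\sqrt{t}},\qquad t\in\mathbb{R}_+,
\end{equation*}
and integrating from $0$ to $t$, using $\|H_t\|_1\to 0$ as $t\downarrow 0$ to kill the boundary contribution $\lim_{s\downarrow 0}[\ln G(s,\sqrt{\gamma})-\ln G(s,-\sqrt{\gamma})]=0$ and the bound $|q(s,\gamma)|\le cs^{-1/2+\epsilon}$ for integrability of $q(s,\gamma)/\sqrt{s}$ at $s=0$, yields \eqref{bp41a}.

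The main obstacle will be the rigorous justification of the cyclicity step $\tr((I-\gamma K_t)^{-1}H_tMD)=\tr((I-\gamma K_t)^{-1}MDH_t)$ despite the unboundedness of $MD$. This is where the integrability hypothesis \eqref{bp40} and the $L^2(0,1)$-dominance of $\{\tau_t\phi\}_{t\in\mathbb{R}_+}$ and $\{MD\tau_t\phi\}_{t\in\mathbb{R}_+}$ enter: they guarantee that $MDH_t$ and $H_tMD$ extend to bounded operators on $L^2(0,1)$ with $L^2((0,1)^2)$ kernels and that their compositions with $(I-\gamma K_t)^{-1}$ are genuinely trace class with jointly continuous kernels, so that cyclicity applies through \cite[$(3.4)$]{Bris}. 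A secondary technical point is the pointwise interpretation of $\big((I-\gamma K_t)^{-1}\tau_t\phi\big)(1)$, which is ensured by the local $\tfrac{1}{2}$-H\"older continuity of $(0,1]\ni x\mapsto\big((I-\gamma K_t)^{-1}\tau_t\phi\big)(x)$, arguing as in the proof of Lemma \ref{lem2}.
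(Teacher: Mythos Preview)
Your overall strategy matches the paper's: Jacobi's formula, the scaling identity $t\frac{\d}{\d t}H_t=\tfrac12 H_t+MDH_t$, an integration-by-parts commutator, cyclicity, and integration from $0$. The endpoint $\frac{\d}{\d t}\big[\ln G(t,\sqrt{\gamma})-\ln G(t,-\sqrt{\gamma})\big]=-q(t,\gamma)/\sqrt{t}$ is exactly what the paper derives.

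There is, however, a genuine gap in your handling of $H_tMD$. Your claim that ``$H_tMD$ extends to a bounded operator on $L^2(0,1)$ with $L^2((0,1)^2)$ kernel'' is false: your own anti-commutator identity $MDH_t+H_tMD=-H_t+\sqrt{t}\,\tau_t\phi\otimes\delta_1$ shows this, since $MDH_t$ and $H_t$ are bounded while the boundary functional $\delta_1:f\mapsto f(1)$ is \emph{not} bounded on $L^2(0,1)$. Consequently $\tr\big((I-\gamma K_t)^{-1}H_tMD\big)$ is not a priori defined, and the cyclicity step $\tr(RH_t\cdot MD)=\tr(MD\cdot RH_t)$ cannot be justified by \cite[Corollary~3.8]{S}, which requires one factor bounded and the other trace class. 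The rank-one ``trace'' $\tr(R\,\tau_t\phi\otimes\delta_1)=(R\tau_t\phi)(1)$ is likewise purely formal.

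The paper circumvents this by first expanding $(I-\gamma K_t)^{-1}=I+(I-\gamma K_t)^{-1}\gamma K_t$ and computing $K_t\big[t\frac{\d}{\d t}H_t\big]$. The extra factor of $H_t$ (from $K_t=H_t^2$) turns your identity into the \emph{bounded} anti-commutator $H_t(MDH_t)+(MDH_t)H_t=-K_t+t\,\tau_t\phi\otimes\tau_t\phi$, so the boundary term becomes the honest rank-one operator $\tau_t\phi\otimes\tau_t\phi$ on $L^2(0,1)$, and cyclicity is applied to the trace-class operator $MDK_t=(MDH_t)H_t$ against the bounded $(I-\gamma K_t)^{-1}\gamma H_t$, which is legitimate. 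Your argument becomes correct once you insert this extra $H_t$ in the commutator step.
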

\begin{proof} By assumption, $I-\gamma K_t=I-\gamma H_t^2=(I-\sqrt{\gamma}H_t)(I+\sqrt{\gamma}H_t)$ is invertible on $L^2(0,1)$, so $I\mp\sqrt{\gamma}H_t$ are likewise invertible and thus $G(t,\pm\sqrt{\gamma})\neq 0$ for all $(t,\gamma)\in\mathbb{R}_+\times[0,1]$. But $\mathbb{R}_+\ni t\mapsto H_t$ is differentiable by assumption with $\frac{\d}{\d t}H_t$ trace class, so by Jacobi's formula, after simplification
\begin{align}
	t\frac{\d}{\d t}\ln G(t,\sqrt{\gamma})-t&\,\frac{\d}{\d t}\ln G(t,-\sqrt{\gamma})=-2\sqrt{\gamma}\tr_{L^2(0,1)}\bigg((I-\gamma K_t)^{-1}\bigg[t\frac{\d}{\d t}H_t\bigg]\bigg)\nonumber\\
	=&\,-2\sqrt{\gamma}\tr_{L^2(0,1)}\bigg(t\frac{\d}{\d t}H_t\bigg)-2\sqrt{\gamma}\tr_{L^2(0,1)}\bigg((I-\gamma K_t)^{-1}\gamma K_t\bigg[t\frac{\d}{\d t}H_t\bigg]\bigg).\label{bp41}
\end{align}
Here, by \eqref{bp39}, $K_t\big[t\frac{\d}{\d t}H_t\big]=-\frac{1}{2}H_tK_t+tH_t(\tau_t\phi\otimes\tau_t\phi)-H_tMDK_t$ where $MDK_t$ is trace class on $L^2(0,1)$ by \eqref{bp40}. Thus,
\begin{align}	
	\textnormal{LHS}\ &\eqref{bp41}=-2\sqrt{\gamma}\tr_{L^2(0,1)}\bigg(t\frac{\d}{\d t}H_t\bigg)+\sqrt{\gamma}\tr_{L^2(0,1)}\big((I-\gamma K_t)^{-1}\gamma H_tK_t\big)\nonumber\\
	&\,-2\sqrt{\gamma }\tr_{L^2(0,1)}\big((I-\gamma K_t)^{-1}\gamma tH_t(\tau_t\phi\otimes\tau_t\phi)\big)+2\sqrt{\gamma}\tr_{L^2(0,1)}\big((I-\gamma K_t)^{-1}\gamma  H_tMDK_t\big).\label{bp42}
\end{align}
Observe that the kernels of the trace class operators $t\frac{\d}{\d t}H_t$ and $(I-\gamma K_t)^{-1}\gamma tH_t(\tau_t\phi\otimes\tau_t\phi)$ are continuous on $(0,1)\times(0,1)$ as $\phi$ is continuously differentiable and integrable on $(0,1)$, and as \eqref{bp40} is in place. Thus by \cite[$(3.4)$]{Bris}, \cite[Theorem V.3.1.1]{Du} and \eqref{bp39},
\begin{align}
	-2\sqrt{\gamma}\tr_{L^2(0,1)}&\bigg(t\frac{\d}{\d t}H_t\bigg)-2\sqrt{\gamma t}\tr_{L^2(0,1)}\big((I-\gamma K_t)^{-1}\gamma tH_t(\tau_t\phi\otimes\tau_t\phi)\big)=\sqrt{\gamma t}(\tau_t\phi)(1)\nonumber\\
	&\,\,\,-2\sqrt{\gamma t}\big((I-\gamma K_t)^{-1}\tau_t\phi\big)(1)=\sqrt{\gamma t}\,(\tau_t\phi)(1)-2\sqrt{t}\,q(t,\gamma),\ \ \ \ (t,\gamma)\in\mathbb{R}_+\times[0,1],\label{bp43}
\end{align}
where we also used $K_t=K_t^{\ast}$ and 
\begin{equation*}
	\big((I-\gamma K_t)^{-1}\gamma tH_t\tau_t\phi\big)(x)=\sqrt{t}\,\big(\gamma K_t(I-\gamma K_t)^{-1}\big)(1,x),\ \ \ x\in(0,1).
\end{equation*}
Next, $MDK_t\in\mathcal{C}_1(L^2(0,1))$ by \eqref{bp40} and $(I-\gamma K_t)^{-1}\gamma H_t\in\mathcal{L}(L^2(0,1))$. Hence by \cite[Corollary $3.8$]{S},
\begin{align}
	&\tr_{L^2(0,1)}\big((I-\gamma K_t)^{-1}\gamma H_tMDK_t\big)=\tr_{L^2(0,1)}\big(MD\gamma K_t(I-\gamma K_t)^{-1}H_t\big)\nonumber\\
	=&\,\tr_{L^2(0,1)}\big(MD(I-\gamma K_t)^{-1}H_t\big)-\tr_{L^2(0,1)}(MDH_t)=\tr_{L^2(0,1)}\big(MDH_t(I-\gamma K_t)^{-1}\big)-\tr_{L^2(0,1)}(MDH_t),\label{bp44}
\end{align}
since $(I-\gamma K_t)^{-1}H_t=H_t(I-\gamma K_t)^{-1}$ and where $MDH_t=t\frac{\d}{\d t}H_t-\frac{1}{2}H_t$. Also, by \eqref{bp39},
\begin{equation}\label{bp45}
	\tr_{L^2(0,1)}(MDH_t)=\frac{1}{2}\sqrt{t}(\tau_t\phi)(1)-\frac{1}{2}\tr_{L^2(0,1)}H_t,
\end{equation}
and since $(I-\gamma K_t)^{-1}\gamma H_tK_t-H_t(I-\gamma K_t)^{-1}+H_t$ is traceless on $L^2(0,1)$ we can combine \eqref{bp42},\eqref{bp44} and \eqref{bp45} to obtain
\begin{equation*}
	\textnormal{LHS}\,\eqref{bp41}=-2\sqrt{t}\,q(t,\gamma)+2\sqrt{\gamma}\tr_{L^2(0,1)}\bigg(\bigg[t\frac{\d}{\d t} H_t\bigg](I-\gamma K_t)^{-1}\bigg),
\end{equation*}
and so all together, using \cite[Corollary $3.8$]{S} one more time,
\begin{equation*}
	t\frac{\d}{\d t}\ln G(t,\sqrt{\gamma})-t\frac{\d}{\d t}\ln G(t,-\sqrt{\gamma})=-\sqrt{t}\,q(t,\gamma),\ \ \ (t,\gamma)\in\mathbb{R}_+\times[0,1].
\end{equation*}
Now simply integrate
\begin{equation*}
	\ln G(t,\sqrt{\gamma})-\ln G(t,-\sqrt{\gamma})\stackrel{\eqref{i:22}}{=}\int_0^t\left\{s\frac{\d}{\d s}\ln G(s,\sqrt{\gamma})-s\frac{\d}{\d s}\ln G(s,-\sqrt{\gamma})\right\}\frac{\d s}{s}=-\int_0^tq(s,\gamma)\frac{\d s}{\sqrt{s}},
\end{equation*}
where we use the assumption $\|H_t\|_1\rightarrow 0$ as $t\downarrow 0$ in the first equality and the integrability of $t\mapsto q(t,\gamma)/\sqrt{t}$ near zero in the second. Recall that $q(t,\gamma)$ is at least continuous in $t\in\mathbb{R}_+$, see the workings in Corollary \ref{corb2}. This completes our proof of \eqref{bp41}.
\end{proof}
Observe that \eqref{bp41} relates $q(t,\gamma)/\sqrt{t}$ to a ratio of two Fredholm determinants. Below we state an alternative representation for the same function.
\begin{prop} Suppose $H_t\in\mathcal{L}(L^2(0,1))$ defined in \eqref{b:18} satisfies the conditions in Assumption \ref{ass2} with $\phi:\mathbb{R}_+\rightarrow\mathbb{C}$ continuously differentiable and $\phi\in L_{\circ}^1(0,1)$. Assume further that
\begin{equation}\label{bp46}
	\lim_{x\downarrow 0}\sqrt{x}\phi(x)=0,
\end{equation}
and that for every $t\in\mathbb{R}_+$,
\begin{equation*}
		\int_0^1|(\tau_t\phi)|^2\ln\Big(\frac{1}{x}\Big)\,\d x<\infty,\ \ \ \int_0^1|(MD\tau_t\phi)(x)|^2\ln\Big(\frac{1}{x}\Big)\,\d x<\infty,\ \ \ \int_0^1\sqrt{\int_0^1|(\tau_t\phi)(xy)|^2\,\d y}\frac{\d x}{\sqrt{x}}<\infty,
\end{equation*}	
\begin{equation}\label{bp47}
	\int_0^1\sqrt{\int_0^1|(MD\tau_t\phi)(xy)|^2\,\d y}\frac{\d x}{\sqrt{x}}<\infty.
\end{equation}
Then for all $(t,\gamma)\in\mathbb{R}_+\times[0,1]$,
\begin{equation}\label{bp48}
	\frac{\d}{\d t}\left[1\mp\sqrt{\gamma t}\int_0^1\big((I\pm\sqrt{\gamma}H_t)^{-1}\tau_t\phi\big)(x)\frac{\d x}{\sqrt{x}}\right]=\mp\frac{q(t,\gamma)}{\sqrt{t}}\left[1\mp\sqrt{\gamma t}\int_0^1\big((I\pm\sqrt{\gamma}H_t)^{-1}\tau_t\phi\big)(x)\frac{\d x}{\sqrt{x}}\right]
\end{equation}
provided $\{(\cdot)^{-1/2}(I-\gamma K_t)^{-1}\tau_t\phi\}_{t\in\mathbb{R}_+},\{(\cdot)^{-1/2}(I-\gamma K_t)^{-1}H_tMD\tau_t\phi\}_{t\in\mathbb{R}_+},\{(\cdot)^{-1/2}(I\pm\sqrt{\gamma}H_t)^{-1}\tau_t\phi\}_{t\in\mathbb{R}_+}$, $\{(\cdot)^{-1/2}(I\pm\sqrt{\gamma}H_t)^{-1}MD\tau_t\phi\}_{t\in\mathbb{R}_+}$ are $L^1(0,1)$ dominated and $\{\tau_t\phi\}_{t\in\mathbb{R}_+},\{MD\tau_t\phi\}_{t\in\mathbb{R}_+}$ are $L^2(0,1)$ dominated.
\end{prop}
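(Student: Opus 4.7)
My plan is to mirror the additive proof of \eqref{c10}, adapting each step to the multiplicative composition structure. The first step establishes well-definedness and $t$-differentiability of $A_\pm(t,\gamma):=\int_0^1\psi_t^\pm(x)\,dx/\sqrt{x}$ with $\psi_t^\pm:=(I\pm\sqrt\gamma H_t)^{-1}\tau_t\phi$; differentiating $(I\pm\sqrt\gamma H_t)\psi_t^\pm=\tau_t\phi$ in $t$, invoking Assumption \ref{ass2} together with the given $L^1(0,1)$-dominance assumptions, produces a continuous, dominated expression for $\partial_t\psi_t^\pm(x)$ that allows me to interchange differentiation and integration, in complete analogy with the argument used for \eqref{c10}.

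The algebraic core is a reformulation of the quantity of interest in terms of the auxiliary function $\tilde h_\pm:=(I\pm\sqrt\gamma H_t)^{-1}h$ with $h(x):=x^{-1/2}$. Since $h\notin L^2(0,1)$, this object has to be defined through the Neumann series $\tilde h_\pm=\sum_{n\geq 0}(\mp\sqrt\gamma H_t)^n h$, whose convergence follows from the given integrability hypotheses on $\phi$. Fubini's theorem, justified under \eqref{b:24}, yields the duality $\langle H_t\psi_t^\pm,h\rangle=\langle\psi_t^\pm,H_t h\rangle$ despite $h\not\in L^2$, and combined with the resolvent identity $\pm\sqrt\gamma H_t(I\pm\sqrt\gamma H_t)^{-1}=I-(I\pm\sqrt\gamma H_t)^{-1}$ this produces the clean identification
\begin{equation*}
	I_\pm(t,\gamma):=1\mp\sqrt{\gamma t}\,A_\pm(t,\gamma)=\tilde h_\pm(1).
\end{equation*}

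Next, differentiating $(I\pm\sqrt\gamma H_t)\tilde h_\pm=h$ in $t$, inserting the identity $t\,\partial_tH_t=\tfrac12H_t+MDH_t$ from Assumption \ref{ass2}, and using the elementary observation $MDh=-\tfrac12 h$, I will derive the compact equation $t\,\partial_t\tilde h_\pm=(I\pm\sqrt\gamma H_t)^{-1}P\tilde h_\pm$ with $P:=\tfrac12 I+MD$. The decisive input is the intertwining identity $PH_t+H_tP=\sqrt{t}\,\tau_t\phi\otimes\delta_1$ (with $\delta_1 f:=f(1)$), obtained by integration by parts in $y$ in the kernel definition of $H_t$ and using \eqref{bp46}; this is precisely the operator form of the calculation \eqref{bp9} already used in the proof of Lemma \ref{lem2}. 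Translating to resolvents and applying to $h$, the vanishing $Ph=0$ collapses the identity to the striking relation $P\tilde h_\pm=\mp\sqrt{\gamma t}\,I_\pm\,\psi_t^\mp$, where the \emph{opposite}-sign function $\psi_t^\mp:=(I\mp\sqrt\gamma H_t)^{-1}\tau_t\phi$ appears.

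Inserting this into $t\,\partial_tI_\pm=\bigl((I\pm\sqrt\gamma H_t)^{-1}P\tilde h_\pm\bigr)(1)$ and recognising the factorization $(I\pm\sqrt\gamma H_t)^{-1}(I\mp\sqrt\gamma H_t)^{-1}=(I-\gamma K_t)^{-1}$, the evaluation $\bigl((I\pm\sqrt\gamma H_t)^{-1}\psi_t^\mp\bigr)(1)=\bigl((I-\gamma K_t)^{-1}\tau_t\phi\bigr)(1)=q(t,\gamma)/\sqrt\gamma$ produces $t\,\partial_tI_\pm=\mp\sqrt{t}\,q(t,\gamma)\,I_\pm$, which is exactly \eqref{bp48}. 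The main technical obstacle will be the careful handling of the non-$L^2$ function $h(x)=x^{-1/2}$: both the pointwise evaluation $\tilde h_\pm(1)$ and the application of the intertwining identity to unbounded functions require justification through the Neumann series and through the boundary vanishing $\sqrt{x}\,\psi_t^\pm(x)\to 0$ and $\sqrt{x}\,\tilde h_\pm(x)\to 1$ as $x\downarrow 0$, ensured by \eqref{bp46}, \eqref{bp47} and the specified $L^2(0,1)$-dominance of $\{\tau_t\phi\}_{t\in\mathbb{R}_+}$ and $\{MD\tau_t\phi\}_{t\in\mathbb{R}_+}$.
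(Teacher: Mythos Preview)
Your plan is correct and, at the algebraic level, tighter than the paper's. The paper never introduces $\tilde h_\pm$ or $P=\tfrac12 I+MD$; instead it works throughout with the resolvent kernel evaluated at the endpoint, using the symmetry $((I\pm\sqrt\gamma H_t)^{-1}\tau_t\phi)(x)=\tfrac{1}{\sqrt t}(H_t(I\pm\sqrt\gamma H_t)^{-1})(1,x)$ to rewrite $\sqrt{\gamma t}\,A_\pm$ as $\pm\bigl[1-\int_0^1(I\pm\sqrt\gamma H_t)^{-1}(1,x)\,x^{-1/2}\,\d x\bigr]$, then differentiates that integral directly. Its key step is the pointwise identity \eqref{bp50}, which is your intertwining $PH_t+H_tP=\sqrt t\,\tau_t\phi\otimes\delta_1$ written out against the specific vector $\psi_t^\pm$. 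Your packaging via $P$ makes the mechanism transparent: the relation $t\,\partial_tH_t=PH_t$ together with $Ph=0$ collapses the derivative of $(I\pm\sqrt\gamma H_t)^{-1}h$ to a resolvent applied to $P\tilde h_\pm$, and the anticommutator identity then produces the opposite-sign resolvent, yielding the factorisation through $(I-\gamma K_t)^{-1}$.

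One caution: your Neumann-series definition of $\tilde h_\pm$ is fragile, since $h\notin L^2(0,1)$ and the hypotheses do not give you a weighted space on which $\sqrt\gamma\,H_t$ is a contraction. The clean fix is to set $\tilde h_\pm:=h\mp\sqrt\gamma\,(I\pm\sqrt\gamma H_t)^{-1}H_th$ in one step, noting that $(H_th)(x)=x^{-1/2}\int_0^{xt}\phi(u)\,u^{-1/2}\,\d u$ and using the stated dominance to ensure the resolvent acts on something in $L^2(0,1)$; equivalently, simply \emph{define} $\tilde h_\pm(1)$ as $\int_0^1(I\pm\sqrt\gamma H_t)^{-1}(1,x)\,x^{-1/2}\,\d x$ via the symmetric kernel, which is exactly how the paper interprets the same object. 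With that adjustment your argument goes through unchanged and gives a cleaner route to \eqref{bp48} than the paper's explicit computation.
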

\begin{proof} Since $\phi\in L_{\circ}^1(0,1)\cap L^2(0,1)$, and since $I-\gamma K_t$ is invertible on $L^2(0,1)$, the map $(0,1]\ni x\mapsto((I\pm\sqrt{\gamma}H_t)^{-1}\tau_t\phi)(x)$ is well-defined and by \eqref{bp47} in $L_{\circ}^1(0,1)\cap L^2(0,1)$. Thus
\begin{equation}\label{bp49}
	\mathbb{R}_+\ni t\mapsto\int_0^1\big((I\pm\sqrt{\gamma}H_t)^{-1}\tau_t\phi\big)(x)\frac{\d x}{\sqrt{x}}
\end{equation} 
exists and is moreover differentiable: indeed, by \eqref{bp47} and $L^2(0,1)$ dominance, $MDH_t\in\mathcal{L}(L^2(0,1))$ as well as $(I\pm\sqrt{\gamma}H_t)^{-1}\tau_t\phi\in H^{1,2}(0,1)$. Then, using \eqref{bp46},
\begin{align*}
	\big(MDH_t(I\pm\sqrt{\gamma}H_t)^{-1}\tau_t\phi\big)(x)=\sqrt{t}\,&(\tau_t\phi)(x)\big((I\pm\sqrt{\gamma}H_t)^{-1}\tau_t\phi\big)(1)-\big(H_t(I\pm\sqrt{\gamma}H_t)^{-1}\tau_t\phi\big)(x)\\
	&\,-\big(H_tMD(I\pm\sqrt{\gamma}H_t)^{-1}\tau_t\phi\big)(x),\ \ \ x\in(0,1],
\end{align*}
or equivalently, 
\begin{align}
	\big((I\mp\sqrt{\gamma}H_t)MDH_t(I\pm\sqrt{\gamma}H_t)^{-1}\tau_t\phi\big)(x)=-(H_tMD\tau_t\phi)(x)+&\sqrt{t}\,(\tau_t\phi)(x)\big((I\pm\sqrt{\gamma}H_t)^{-1}\tau_t\phi\big)(1)\nonumber\\
	&\,-\big(H_t(I\pm\sqrt{\gamma}H_t)^{-1}\tau_t\phi\big)(x).\label{bp50}
\end{align}
This yields that
\begin{align*}
	t\frac{\partial}{\partial t}\big((I\pm\sqrt{\gamma}H_t)^{-1}\tau_t\phi\big)(x)\frac{1}{\sqrt{x}}=\mp\sqrt{\gamma}\big((I\pm\sqrt{\gamma}H_t)^{-1}\left[t\frac{\d}{\d t}H_t\right](I\pm\sqrt{\gamma}&H_t)^{-1}\tau_t\phi\big)(x)\frac{1}{\sqrt{x}}\\
	&\,+\big((I\pm\sqrt{\gamma}H_t)^{-1}MD\tau_t\phi\big)(x)\frac{1}{\sqrt{x}}
\end{align*}
is continuous in $x\in(0,1]$ and $L^1(0,1)$ dominated by the imposed $L^1(0,1)$ dominance assumptions. Thus, \eqref{bp49} is differentiable by the dominated convergence theorem and we have
\begin{equation}\label{bp51}
	t\frac{\d}{\d t}\int_0^1\big((I\pm\sqrt{\gamma}H_t)^{-1}\tau_t\phi\big)(x)\frac{\d x}{\sqrt{x}}=\int_0^1t\frac{\partial}{\partial t}\big((I\pm\sqrt{\gamma}H_t)^{-1}\tau_t\phi\big)(x)\frac{\d x}{\sqrt{x}}.
\end{equation}
To evaluate the derivative in the left hand side of \eqref{bp48} we use \eqref{bp51} and note that by symmetry of $H_t$, $((I\pm\sqrt{\gamma}H_t)^{-1}\tau_t\phi)(x)=\frac{1}{\sqrt{t}}(H_t(I\pm\sqrt{\gamma}H_t)^{-1})(1,x)$ for any $x\in(0,1]$, so
\begin{equation*}
	t\frac{\d}{\d t}\sqrt{\gamma t}\int_0^1\big((I\pm\sqrt{\gamma}H_t)^{-1}\tau_t\phi\big)(x)\frac{\d x}{\sqrt{x}}=\mp\int_0^1t\frac{\partial}{\partial t}(I\pm\sqrt{\gamma}H_t)^{-1}(1,x)\frac{\d x}{\sqrt{x}}.
\end{equation*}
Consequently
\begin{align*}
	\textnormal{LHS}\ \eqref{bp48}=\mp\frac{\sqrt{\gamma}}{t}&\int_0^1\big((I\pm\sqrt{\gamma}H_t)^{-1}\left[\frac{1}{2}H_t+MDH_t\right](I\pm\sqrt{\gamma}H_t)^{-1}\big)(1,x)\frac{\d x}{\sqrt{x}}\\
	&\stackrel{\eqref{bp50}}{=}\mp\frac{q(t,\gamma)}{\sqrt{t}}\int_0^1(I\pm\sqrt{\gamma}H_t)^{-1}(1,x)\frac{\d x}{\sqrt{x}},
\end{align*}
and the remaining integral equals $1\mp\sqrt{\gamma t}\int_0^1((I\pm\sqrt{\gamma}H_t)^{-1}\tau_t\phi)(x)\frac{\d x}{\sqrt{x}}$ since $(I\pm\sqrt{\gamma}H_t)^{-1}=I\mp\sqrt{\gamma}H_t(I\pm\sqrt{\gamma}H_t)^{-1}$ and since $H_t$ is symmetric. The proof of \eqref{bp48} is complete.
\end{proof}
Equipped with \eqref{bp41a} and \eqref{bp48} we now prove Theorem \ref{theo4}.
\begin{proof}[Proof of Theorem \ref{theo4}] With $\phi\in L_{\circ}^1(0,1)$, the map $[0,1]\ni x\mapsto 1-\sqrt{t}\int_0^x(\tau_t\phi)(y)\frac{\d y}{\sqrt{y}}$ is bounded. Moreover, by \eqref{b:24}, $(I-\gamma K_t)^{-1}\tau_t\phi\in L_{\circ}^1(0,1)$, so both \eqref{b:21} and \eqref{b:22} are well-defined (note $\gamma_{\circ}\in[0,1]$ for $\gamma\in[0,1]$). To evaluate them, we use \eqref{bp41} and \eqref{bp48}: first by algebra and definition \eqref{b:18},
\begin{align*}
	&\gamma\sqrt{t}\int_0^1\big((I-\gamma K_t)^{-1}\tau_t\phi\big)(x)\left[1-\sqrt{t}\int_0^x(\tau_t\phi)(y)\frac{\d y}{\sqrt{y}}\right]\frac{\d x}{\sqrt{x}}=\frac{1}{2}(\sqrt{\gamma}-1)\sqrt{\gamma t}\int_0^1\big((I-\gamma K_t)^{-1}\tau_t\phi\big)(x)\\
	&\times\left[\frac{1}{\sqrt{x}}+\sqrt{\gamma}\int_0^1H_t(x,y)\frac{\d y}{\sqrt{y}}\right]\d x+\frac{1}{2}(\sqrt{\gamma}+1)\sqrt{\gamma t}\int_0^1\big((I-\gamma K_t)^{-1}\tau_t\phi\big)(x)\left[\frac{1}{\sqrt{x}}-\sqrt{\gamma}\int_0^1H_t(x,y)\frac{\d y}{\sqrt{y}}\right]\d x,
\end{align*}
and thus
\begin{align}
	\gamma\sqrt{t}\int_0^1\big((I-\gamma &K_t)^{-1}\tau_t\phi\big)(x)\left[1-\sqrt{t}\int_0^x(\tau_t\phi)(y)\frac{\d y}{\sqrt{y}}\right]\frac{\d x}{\sqrt{x}}=\frac{1}{2}(\sqrt{\gamma}-1)\sqrt{\gamma t}\int_0^1\big((I-\sqrt{\gamma}H_t)^{-1}\tau_t\phi\big)(z)\frac{\d z}{\sqrt{z}}\nonumber\\
	&+\frac{1}{2}(\sqrt{\gamma}+1)\sqrt{\gamma t}\int_0^1\big((I+\sqrt{\gamma}H_t)^{-1}\tau_t\phi\big)(z)\frac{\d z}{\sqrt{z}},\label{bp52}
\end{align}
using also the identity
\begin{equation}\label{bp53}
	\int_0^1(I\pm\sqrt{\gamma}H_t)^{-1}(x,z)\left[\frac{1}{\sqrt{x}}\pm\sqrt{\gamma}\int_0^1H_t(x,y)\frac{\d y}{\sqrt{y}}\right]\d x=\frac{1}{\sqrt{z}},\ \ \ z\in(0,1],
\end{equation}
which is a consequence of the symmetry of $H_t$ and $(I\pm\sqrt{\gamma}H_t)^{-1}=I\mp(I\pm\sqrt{\gamma}H_t)^{-1}\sqrt{\gamma}H_t$. Thus,
\begin{align*}
	1-\gamma\sqrt{t}&\int_0^1\big((I-\gamma K_t)^{-1}\tau_t\phi\big)(x)\left[1-\sqrt{t}\int_0^x(\tau_t\phi)(y)\frac{\d y}{\sqrt{y}}\right]\frac{\d x}{\sqrt{x}}\\
	&\,\stackrel{\eqref{bp52}}{=}\frac{1}{2}(1-\sqrt{\gamma})\left[1+\sqrt{\gamma t}\int_0^1\big((I-\sqrt{\gamma}H_t)^{-1}\tau_t\phi\big)(z)\frac{\d z}{\sqrt{z}}\right]\\
	&\hspace{2cm}+\frac{1}{2}(1+\sqrt{\gamma})\left[1-\sqrt{\gamma t}\int_0^1\big((I+\sqrt{\gamma}H_t)^{-1}\tau_t\phi\big)(z)\frac{\d z}{\sqrt{z}}\right],
\end{align*}
and which yields \eqref{b:25} through \eqref{bp41a} and \eqref{bp48} since $F(t,\gamma)=G(t,\sqrt{\gamma})G(t,-\sqrt{\gamma})$ and since $\|H_t\|\rightarrow 0$ as $t\downarrow 0$ in operator norm which allows one to integrate \eqref{bp48}. The derivation of \eqref{b:26} is quite similar, instead of \eqref{bp52} we have
\begin{align}
	\gamma\sqrt{t}\int_0^1\big((I-\gamma_{\circ}&K_t)^{-1}\tau_t\phi\big)(x)\left[1-\sqrt{t}\int_0^x(\tau_t\phi)(y)\frac{\d y}{\sqrt{y}}\right]\frac{\d x}{\sqrt{x}}=\frac{1+\sqrt{\gamma_{\circ}}}{2(2-\gamma)}\sqrt{\gamma_{\circ}t}\int_0^1\big((I+\sqrt{\gamma_{\circ}}H_t)^{-1}\tau_t\phi\big)(z)\frac{\d z}{\sqrt{z}}\nonumber\\
	&-\frac{1-\sqrt{\gamma_{\circ}}}{2(2-\gamma)}\sqrt{\gamma_{\circ}t}\int_0^1\big((I-\sqrt{\gamma_{\circ}}H_t)^{-1}\tau_t\phi\big)(z)\frac{\d z}{\sqrt{z}}\label{bp54}
\end{align}
as \eqref{bp53} is valid with the replacement $\gamma\mapsto\gamma_{\circ}$. Hence,
\begin{align*}
	1-\gamma\sqrt{t}&\int_0^1\big((I-\gamma_{\circ}K_t)^{-1}\tau_t\phi\big)(x)\left[1-\sqrt{t}\int_0^x(\tau_t\phi)(y)\frac{\d y}{\sqrt{y}}\right]\frac{\d x}{\sqrt{x}}\\
	&\,=\frac{1-\gamma}{2-\gamma}+\frac{1+\sqrt{\gamma_{\circ}}}{2(2-\gamma)}\left[1-\sqrt{\gamma_{\circ}t}\int_0^1\big((I+\sqrt{\gamma_{\circ}}H_t)^{-1}\tau_t\phi\big)(z)\frac{\d z}{\sqrt{z}}\right]\\
	&\hspace{3cm}+\frac{1-\sqrt{\gamma_{\circ}}}{2(2-\gamma)}\left[1+\sqrt{\gamma_{\circ}t}\int_0^1\big((I-\sqrt{\gamma_{\circ}}H_t)^{-1}\tau_t\phi\big)(z)\frac{\d z}{\sqrt{z}}\right],
\end{align*}
and which yields \eqref{b:26} through \eqref{bp41a} and \eqref{bp48} (simply replace $\gamma\mapsto\gamma_{\circ}$ in the last two identities). Lastly, $[0,1]\ni x\mapsto\int_0^x(\tau_t\phi)(y)\frac{\d y}{\sqrt{y}}$ is bounded since $\phi\in L_{\circ}^1(0,1)$, so \eqref{b:23} is well-defined given $(I-\gamma K_t)^{-1}\tau_t\phi\in L_{\circ}^1(0,1)$. Next,
\begin{align*}
	\gamma\sqrt{t}\int_0^1\big((I-\gamma K_t)^{-1}&\tau_t\phi\big)(x)\left[\frac{1}{2}\sqrt{t}\int_0^x(\tau_t\phi)(y)\frac{\d y}{\sqrt{y}}\right]\frac{\d x}{\sqrt{x}}\stackrel{\eqref{bp53}}{=}\frac{1}{4}\sqrt{\gamma t}\int_0^1\big((I-\sqrt{\gamma}H_t)^{-1}\tau_t\phi\big)(z)\frac{\d z}{\sqrt{z}}\\
	&-\frac{1}{4}\sqrt{\gamma t}\int_0^1\big((I+\sqrt{\gamma}H_t)^{-1}\tau_t\phi\big)(z)\frac{\d z}{\sqrt{z}},
\end{align*}
and so
\begin{align*}
	1+&\,\gamma\sqrt{t}\int_0^1\big((I-\gamma K_t)^{-1}\tau_t\phi\big)(x)\left[\frac{1}{2}\sqrt{t}\int_0^x(\tau_t\phi)(y)\frac{\d y}{\sqrt{y}}\right]\frac{\d x}{\sqrt{x}}\\
	&\,=\frac{1}{2}+\frac{1}{4}\left[1+\sqrt{\gamma t}\int_0^1\big((I-\sqrt{\gamma}H_t)^{-1}\tau_t\phi\big)(z)\frac{\d z}{\sqrt{z}}\right]+\frac{1}{4}\left[1-\sqrt{\gamma t}\int_0^1\big((I+\sqrt{\gamma}H_t)^{-1}\tau_t\phi\big)(z)\frac{\d z}{\sqrt{z}}\right],
\end{align*}
which equals \eqref{b:27} by \eqref{bp41a} and \eqref{bp48}. The proof of Theorem \ref{theo4} is now complete.
\end{proof}
The last result to be proven concerns the $t\rightarrow+\infty$ asymptotic behavior of $F(t)$ in \eqref{b:3}, i.e. we now prove Theorem \ref{theo4a}.
\begin{proof}[Proof of Theorem \ref{theo4a}] By assumption \eqref{b:28a}, the coefficients $r_j(z)$ in \eqref{b:13} admit analytic extensions to the closed vertical strip $|\Re z-\frac{1}{2}|\leq\frac{\epsilon}{2}$, by Morera's and Fubini's theorem. Moreover, for $0\leq|\Re z-\frac{1}{2}|\leq\frac{\epsilon}{2}$ we have
\begin{equation}\label{bp55}
	|r_1(z)|\leq\int_0^1y^{\Re z-\frac{1}{2}}|\phi(y)|\frac{\d y}{\sqrt{y}}+\int_1^{\infty}y^{\Re z-\frac{1}{2}}|\phi(y)|\frac{\d y}{\sqrt{y}}\stackrel{\eqref{b:28a}}{\leq}\frac{2}{a-\frac{\epsilon}{2}}<1
\end{equation}
since $a\geq 2+\epsilon$ and a similar bound holds for $|r_2(z)|$, too. Also, integrating by parts, for any $z\notin\{0,1\}$ with $|\Re z-\frac{1}{2}|\leq\frac{\epsilon}{2}$,
\begin{equation*}
	r_1(z)\stackrel{\eqref{b:13}}{=}-\frac{1}{z}\int_0^{\infty}y^{z-1}(MD\phi)(y)\,\d y,\ \ \ \ r_2(z)\stackrel{\eqref{b:13}}{=}-\frac{1}{1-z}\int_0^{\infty}y^{-z}(MD\psi)(y)\,\d y
\end{equation*}
and so in the same closed vertical strip, by \eqref{b:28a}, $|zr_1(z)|<1$ and $|(1-z)r_2(z)|<1$. We now commence the asymptotic analysis of $F(t)$ as $t\rightarrow+\infty$, keeping in mind these initial observations. First, as in the proof of Theorem \ref{theo2a}, we $\gamma$-modify RHP \ref{masterb} and consider the trace class family $K_{t,\gamma}\in\mathcal{L}(L^2(0,1))$ with kernel
\begin{equation*}
	K_{t,\gamma}(x,y):=\gamma K_t(x,y)\stackrel{\eqref{b:2}}{=}\gamma t\int_0^1\phi(xzt)\psi(zyt)\,\d z,\ \ \ \ \gamma\in[0,1],
\end{equation*}
and associated Fredholm determinant $D(t,\gamma)$ which is well-defined for all $t\in\mathbb{R}_+$ by \eqref{b:28a}. Note that $D(t,1)=F(t)$ and $D(t,0)=1$ for all $t\in\mathbb{R}_+$, moreover $D(t,\gamma)$ is encoded in the following $\gamma$-modification of RHP \ref{masterb}.
\begin{problem}\label{ggam1} Fix $(t,\gamma)\in\mathbb{R}_+\times[0,1]$ and $\phi,\psi$ as in \eqref{b:28a}. Find ${\bf X}(z)={\bf X}(z;t,\gamma,\phi,\psi)\in\mathbb{C}^{2\times 2}$ such that
\begin{enumerate}
	\item[(1)] ${\bf X}(z)$ is analytic for $z\in\mathbb{C}\setminus\Sigma$ with $\Sigma:=\frac{1}{2}+\im\mathbb{R}$.
	\item[(2)] ${\bf X}(z)$ admits continuous pointwise limits ${\bf X}_{\pm}(z):=\lim_{\epsilon'\downarrow 0}{\bf X}(z)\mp\epsilon'),z\in\Sigma$ which obey
	\begin{equation*}
		{\bf X}_+(z)={\bf X}_-(z)\begin{bmatrix}1-\gamma r_1(z)r_2(z) & -\sqrt{\gamma}\,r_2(z)t^z\\ \sqrt{\gamma}\,r_1(z)t^{-z} & 1\end{bmatrix},\ \ z\in\Sigma,
	\end{equation*}
	with $r_j(z)$ as in \eqref{b:13}.
	\item[(3)] As $z\rightarrow\infty$,
	\begin{equation*}
		{\bf X}(z)=\mathbb{I}+o(1);\ \ \ \ \ {\bf X}_1={\bf X}_1(t,\gamma,\phi,\psi)=\big[X_1^{mn}(t,\gamma,\phi,\psi)\big]_{m,n=1}^2:=\lim_{\substack{z\rightarrow\infty\\ \Re z\not\equiv\textnormal{const.}}}z\big({\bf X}(z)-\mathbb{I}\big)
	\end{equation*}
\end{enumerate}
\end{problem}
Indeed, by assumption and Theorem \ref{theo3}, RHP \ref{ggam1} is uniquely solvable for all $(t,\gamma)\in\mathbb{R}_+\times[0,1]$ and
\begin{equation}\label{bp56}
	t\frac{\partial}{\partial t}\ln D(t,\gamma)=-X_1^{11}(t,\gamma,\phi,\psi).
\end{equation}
Second, as outlined in Remark \ref{IIKScon2}, the transformation
\begin{equation}\label{bp57}
	{\bf T}(z;t,\gamma,\phi,\psi):={\bf X}(z;t,\gamma,\phi,\psi)\begin{cases}\begin{bmatrix}1 & 0\\ -\sqrt{\gamma}\,r_1(z)t^{-z} & 1\end{bmatrix},&\Re z\in(\frac{1}{2}-\frac{\epsilon}{2},\frac{1}{2})\\
	\mathbb{I},&\textnormal{else}
	\end{cases}
\end{equation}
maps RHP \ref{ggam1} to an integrable operator RHP of the following type.
\begin{problem}\label{ggam2} Let $(t,\gamma)\in\mathbb{R}_+\times[0,1]$ and $\phi,\psi$ as in the statement of Theorem \ref{theo4a}. The function ${\bf T}(z)={\bf T}(z;t,\gamma,\phi,\psi)\in\mathbb{C}^{2\times 2}$ defined in \eqref{bp57} is uniquely determined by the following properties:
\begin{enumerate}
	\item[(1)] ${\bf T}(z)$ is analytic for $z\in\mathbb{C}\setminus\Sigma_{\bf T}$ with $\Sigma_{\bf T}:=\Sigma\cup(\frac{1}{2}-\frac{\epsilon}{2}+\im\mathbb{R})$ and extends continuously on the closure of $\mathbb{C}\setminus\Sigma_{\bf T}$.
	\item[(2)] The limiting values ${\bf T}_{\pm}(z)=\lim_{\epsilon'\downarrow 0}{\bf T}(z\mp\epsilon')$ on $\Sigma_{\bf T}\ni z$ satisfy the jump condition ${\bf T}_+(z)={\bf T}_-(z){\bf G}_{\bf T}(z;t,\gamma,\phi,\psi)$ where the jump matrix ${\bf G}_{\bf T}(z;t,\gamma,\phi,\psi)$ is given by
	\begin{align*}
		{\bf G}_{\bf T}(z;t,\gamma,\phi,\psi)=&\,\begin{bmatrix}1 & 0\\ \sqrt{\gamma}\,r_1(z)t^{-z} & 1\end{bmatrix},\ \ \ \Re z=\frac{1}{2}-\frac{\epsilon}{2},\\
		{\bf G}_{\bf T}(z;t,\gamma,\phi,\psi)=&\,\begin{bmatrix}1&-\sqrt{\gamma}\,r_2(z)t^z\\ 0 & 1\end{bmatrix},\ \ \ \Re z=\frac{1}{2}.
	\end{align*}
	\item[(3)] Since $|zr_1(z)|\rightarrow 0$ as $|z|\rightarrow\infty$ in $|\Re z-\frac{1}{2}|\leq\frac{\epsilon}{2}$ by the Riemann-Lebesgue lemma and since $|t^{\pm z}|\leq t^{\pm\Re z}$ in $|\Re z-\frac{1}{2}|\leq\epsilon$, we have uniformly as $z\rightarrow\infty$ in $\mathbb{C}\setminus\Sigma_{\bf T}$,
	\begin{equation}\label{bp58}
		{\bf T}(z)=\mathbb{I}+o(1);\ \ \ \ \ \ {\bf T}_1(t,\gamma,\phi,\psi):=\lim_{\substack{z\rightarrow\infty\\ \Re z\not\equiv\textnormal{const.}}}z\big({\bf T}(z)-\mathbb{I}\big)={\bf X}_1(t,\gamma,\phi,\psi).
	\end{equation}
\end{enumerate}
\end{problem}
Observe that RHP \ref{ggam2} is the RHP associated with the trace class integrable operator $M_{t,\gamma}:L^2(\Sigma_{\bf T})\rightarrow L^2(\Sigma_{\bf T})$ with kernel
\begin{equation}\label{bp58a}
	M_{t,\gamma}(z,w)=\frac{f_1(z)g_1(w)+f_2(z)g_2(w)}{z-w},\ \ \ (z,w)\in\Sigma_{\bf T}\times\Sigma_{\bf T},
\end{equation}
cf. \cite{IIKS}, where the bounded functions
\begin{equation*}
	f_1(z):=\frac{1}{2\pi\im}\chi_{\Sigma}(z),\ f_2(z):=-\frac{1}{2\pi\im}t^{-z}\chi_{\Sigma-\frac{\epsilon}{2}}(z),\ g_1(w):=\sqrt{\gamma}\,r_1(w)\chi_{\Sigma-\frac{\epsilon}{2}}(w),\ g_2(w):=\sqrt{\gamma}\,t^wr_2(w)\chi_{\Sigma}(w),
\end{equation*}
are expressed in terms of the characteristic functions $\chi_{\Sigma},\chi_{\Sigma-\frac{\epsilon}{2}}$ on $\Re z=\frac{1}{2},\Re z=\frac{1}{2}-\frac{\epsilon}{2}$. Given that RHP \ref{ggam1} is uniquely solvable for all $(t,\gamma)\in\mathbb{R}_+\times[0,1]$, the same applies to RHP \ref{ggam2} because of \eqref{bp57} and so we can compute ${\bf T}(z)$ by the general formula, cf. \cite{IIKS},
\begin{equation}\label{bp59}
	{\bf T}(z)=\mathbb{I}-\int_{\Sigma_{\bf T}}\begin{bmatrix}F_1(\lambda)g_1(\lambda) & F_1(\lambda)g_2(\lambda)\\ F_2(\lambda)g_1(\lambda) & F_2(\lambda)g_2(\lambda)\end{bmatrix}\frac{\d\lambda}{\lambda-z},\ \ \ z\notin\Sigma_{\bf T},
\end{equation}
where $g_j$ are as above and $F_j$ given on $\Sigma_{\bf T}$ by, independent of the choice of limiting values,
\begin{equation}\label{bp60}
	\begin{bmatrix}F_1\\ F_2\end{bmatrix}=(I-M_{t,\gamma})^{-1}\begin{bmatrix}f_1\\ f_2\end{bmatrix}={\bf T}_{\pm}\begin{bmatrix}f_1\\ f_2\end{bmatrix}.
\end{equation}
Next, $t\frac{\partial}{\partial t}M_{t,\gamma}=-f_2\otimes g_2\in\mathcal{C}_1(L^2(\Sigma_{\bf T}))$ and so, using the differentiability of $t\mapsto M_{t,\gamma}\in\mathcal{L}(L^2(\Sigma_{\bf T}))$ and the trace identity $\tr{\bf T}_1=\tr{\bf X}_1=0$, we obtain for the Fredholm determinant $H(t,\gamma)$ of $M_{t,\gamma}$ on $L^2(\Sigma_{\bf T})$,
\begin{equation*}
	t\frac{\partial}{\partial t}\ln H(t,\gamma)=\tr_{L^2(\Sigma_{\bf T})}\big((I-M_{t,\gamma})^{-1}f_2\otimes g_2\big)\stackrel{\eqref{bp60}}{=}\tr_{L^2(\Sigma_{\bf T})}(F_2\otimes g_2)\stackrel[\eqref{bp59}]{\eqref{bp58}}{=}-X_1^{11}(t,\gamma,\phi,\psi).
\end{equation*}
Consequently, comparing the latter to \eqref{bp56}, we have found
\begin{equation*}
	\ln D(t,\gamma)=\ln H(t,\gamma)+\eta(\gamma),\ \ \ \ \ (t,\gamma)\in\mathbb{R}_+\times[0,1],
\end{equation*}
with a $t$-independent term $\eta(\gamma)$. However, $D(t,\gamma)\rightarrow 1$ as $t\downarrow 0$ by \eqref{b:28} and since RHP \ref{ggam2} is directly related to a small norm problem as $t\downarrow 0$, we have, by general theory \cite{DZ}, $T_1^{22}(t,\gamma,\phi,\psi)\rightarrow 0$ power-like as $t\downarrow 0$ for all $\gamma\in[0,1]$ and so $H(t,\gamma)\rightarrow 1$ as well for $t\downarrow 0$. In short,
\begin{equation}\label{bp61}
	\ln D(t,\gamma)=\ln H(t,\gamma),\ \ \ \ \ (t,\gamma)\in\mathbb{R}_+\times[0,1],
\end{equation}
which identifies our $\gamma$-modified determinant $D(t,\gamma)$ as the Fredholm determinant of the integrable operator \eqref{bp58a}. In particular we can now compute $F(t)$ in terms of the solution of RHP \ref{ggam2},
\begin{align}
	\ln F(t)=\int_0^1&\frac{\partial}{\partial\gamma}\ln D(t,\gamma)\,\d\gamma\stackrel{\eqref{bp61}}{=}\int_0^1\frac{\partial}{\partial\gamma}\ln H(t,\gamma)\,\d\gamma=-\frac{1}{2}\int_0^1\tr_{L^2(\Sigma_{\bf T})}\big((I-M_{t,\gamma})^{-1}M_{t,\gamma}\big)\frac{\d\gamma}{\gamma}\nonumber\\
	&\,=-\frac{1}{2}\int_0^1\left[\int_{\Sigma_{\bf T}}R_{t,\gamma}(\lambda,\lambda)\,\d\lambda\right]\frac{\d\gamma}{\gamma},\ \ \ \ \ \ R_{t,\gamma}:=(I-M_{t,\gamma})^{-1}-I\in\mathcal{L}(L^2(\Sigma_{\bf T})),\label{bp62}
\end{align}
where, by general theory \cite[$(5.5)$]{IIKS},
\begin{equation*}
	R_{t,\gamma}(\lambda,\lambda)=F_1'(\lambda)G_1(\lambda)+F_2'(\lambda)G_2(\lambda),\ \ \ \lambda\in\Sigma_{\bf T},
\end{equation*}
with $F_j'=DF_j$ as in \eqref{bp60} and $G_j$ given on $\Sigma_{\bf T}$ by, independent of the choice of limiting values,
\begin{equation}\label{bp63}
	\begin{bmatrix}G_1\\ G_2\end{bmatrix}=\big({\bf T}_{\pm}^{-1}\big)^{\top}\begin{bmatrix}g_1\\ g_2\end{bmatrix}.
\end{equation}
As in the proof of Theorem \ref{theo2a} we shall use the double integral representation \eqref{bp62} as starting point for our asymptotic analysis of $F(t)$. We now commence the necessary nonlinear steepest descent analysis of RHP \ref{ggam2}, equivalently of RHP \ref{ggam1}, see \eqref{bp57}. First define,
\begin{equation}\label{bp64}
	g(z;\gamma):=-\frac{1}{2\pi\im}\int_{\Sigma}\ln\big(1-\gamma r_1(\lambda)r_2(\lambda)\big)\frac{\d\lambda}{\lambda-z},\ \ \ \ z\in\mathbb{C}\setminus\Sigma,
\end{equation}
and note that the improper integral is absolutely convergent since $|r_j(\lambda)|<1$ as well as $|\lambda r_1(\lambda)|<1$ and $|(1-\lambda)r_2(\lambda)|<1$ on $\Sigma$, in addition to $\gamma\in[0,1]$. Moreover, $\Sigma\ni\lambda\mapsto\ln(1-\gamma r_1(\lambda)r_2(\lambda))$ is locally H\"older continuous by the continuous differentiability of $\Sigma\ni\lambda\mapsto r_j(\lambda)$ and since $|r_j(\lambda)|<1$ on $\Sigma$. Thus,
\begin{align*}
	g_+(z;\gamma)-g_-(z;\gamma)=&\,-\ln\big(1-\gamma r_1(z)r_2(z)\big),\ \ \ z\in\Sigma;\ \ \ \ \ \ \ \ g_{\pm}(z;\gamma)=\lim_{\epsilon\downarrow 0}g(z\mp\epsilon;\gamma),\\
	g(z;\gamma)=&\,\frac{1}{2\pi\im z}\int_{\Sigma}\ln\big(1-\gamma r_1(\lambda)r_2(\lambda)\big)\,\d\lambda+o\big(z^{-1}\big),\ \ \ z\rightarrow\infty,\ z\notin\Sigma.
\end{align*}
Now, using the $g$-function \eqref{bp64}, we transform RHP \ref{ggam1} as follows: define
\begin{equation}\label{bp65}
	{\bf Y}(z;t,\gamma,\phi,\psi):=t^{-\frac{1}{4}\sigma_3}{\bf X}(z;t,\gamma,\phi,\psi)\e^{g(z;\gamma)\sigma_3}t^{\frac{1}{4}\sigma_3},\ \ z\in\mathbb{C}\setminus\Sigma,
\end{equation}
and obtain the following problem.
\begin{problem}\label{ggam3} Let $(t,\gamma)\in\mathbb{R}_+\times[0,1]$ and $\phi,\psi$ as in the statement of Theorem  \ref{theo4a}. The function ${\bf Y}(z)={\bf Y}(z;t,\gamma,\phi,\psi)\in\mathbb{C}^{2\times 2}$ defined in \eqref{bp65} is uniquely determined by the following properties:
\begin{enumerate}
	\item[(1)] ${\bf Y}(z)$ is analytic for $z\in\mathbb{C}\setminus\Sigma$ and extends continuously on the closure of $\mathbb{C}\setminus\Sigma$.
	\item[(2)] The continuous limiting values ${\bf Y}_{\pm}(z)=\lim_{\epsilon'\downarrow 0}{\bf Y}(z\mp\epsilon')$ on $\Sigma\ni z$ satisfy the jump condition ${\bf Y}_+(z)={\bf Y}_-(z){\bf G}_{\bf Y}(z;t,\gamma,\phi,\psi)$ where the jump matrix ${\bf G}_{\bf Y}(z;t,\gamma,\phi,\psi)$ is given by
	\begin{equation*}
		{\bf G}_{\bf Y}(z;t,\gamma,\phi,\psi)=\begin{bmatrix}1 & -\eta_2(z;\gamma)t^{z-\frac{1}{2}}\e^{-2g_+(z;\gamma)}\\ \eta_1(z;\gamma)t^{\frac{1}{2}-z}\e^{2g_-(z;\gamma)} & 1-\gamma r_1(z)r_2(z)\end{bmatrix},\ \ z\in\Sigma,
	\end{equation*}
	using
	\begin{equation*}
		\eta_k(z;\gamma):=\frac{\sqrt{\gamma}\,r_k(z)}{1-\gamma r_1(z)r_2(z)},\ \ \ \ \left|\Re z-\frac{1}{2}\right|\leq\frac{\epsilon}{2}.
	\end{equation*}
	\item[(3)] As $z\rightarrow\infty$,
	%Uniformly as $z\rightarrow\infty$ in $\mathbb{C}\setminus\Sigma$,
	\begin{equation}\label{bp66}
		{\bf Y}(z)=\mathbb{I}+o(1);
		%{\bf Y}_1z^{-1}+o\big(z^{-1}\big);
	\end{equation}
	where
	\begin{equation*}
		{\bf Y}_1(t,\gamma,\phi,\psi):=\lim_{\substack{z\rightarrow\infty\\ \Re z\not\equiv\textnormal{const.}}}z\big({\bf Y}(z)-\mathbb{I}\big)=t^{-\frac{1}{4}\sigma_3}\left[{\bf X}_1(t,\gamma,\phi,\psi)+\frac{\sigma_3}{2\pi\im}\int_{\Sigma}\ln\big(1-\gamma r_1(\lambda)r_2(\lambda)\big)\,\d\lambda\right]t^{\frac{1}{4}\sigma_3}.
	\end{equation*}
\end{enumerate}
\end{problem}
Next, we use that $\eta_k(z;\gamma)$ extends analytically to the closed vertical strip $|\Re z-\frac{1}{2}|\leq\frac{\epsilon}{2}$ and that $|z\eta_1(z;\gamma)|$ and $|(1-z)\eta_2(z;\gamma)|$ tend to zero as $|z|\rightarrow\infty$ in the same strip. Moreover, ${\bf G}_{\bf Y}$ admits the factorization
\begin{equation*}
	{\bf G}_{\bf Y}(z;t,\gamma,\phi,\psi)=\begin{bmatrix}1 & 0\\ \eta_1(z;\gamma)t^{\frac{1}{2}-z}\e^{2g_-(z;\gamma)} & 1\end{bmatrix}\begin{bmatrix}1 & -\eta_2(z;\gamma)t^{z-\frac{1}{2}}\e^{-2g_+(z;\gamma)}\\ 0 & 1\end{bmatrix},\ \ z\in\Sigma,
\end{equation*}
so we can open lenses and transform RHP \ref{ggam3} as follows. Introduce
\begin{equation}\label{bp67}
	{\bf S}(z;t,\gamma,\phi,\psi):={\bf Y}(z;t,\gamma,\phi,\psi)\begin{cases}\begin{bmatrix}1 & \eta_2(z;\gamma)t^{z-\frac{1}{2}}\e^{-2g(z;\gamma)}\\ 0 & 1\end{bmatrix},&\Re z\in(\frac{1}{2}-\frac{\epsilon}{2},\frac{1}{2})\smallskip\\
	\begin{bmatrix}1 & 0\\ \eta_1(z;\gamma)t^{\frac{1}{2}-z}\e^{2g(z;\gamma)} & 1\end{bmatrix},&\Re z\in(\frac{1}{2},\frac{1}{2}+\frac{\epsilon}{2})\\
	\mathbb{I},&\textnormal{else}
	\end{cases}
\end{equation}
and note that RHP \ref{ggam3} is transformed to the following problem.
\begin{problem}\label{ggam4} Let $t>1,\gamma\in[0,1]$ and $\phi,\psi$ as in the statement of Theorem \ref{theo4a}. The function ${\bf S}(z)={\bf S}(z;t,\gamma,\phi,\psi)\in\mathbb{C}^{2\times 2}$ defined in \eqref{bp67} is uniquely determined by the following properties:
\begin{enumerate}
	\item[(1)] ${\bf S}(z)$ is analytic for $z\in\mathbb{C}\setminus\{\Re z-\frac{1}{2}=\pm\frac{\epsilon}{2}\}$ and extends continuously on the closure of $\mathbb{C}\setminus\{\Re z-\frac{1}{2}=\pm\frac{\epsilon}{2}\}$.
	\item[(2)] The continuous limiting values ${\bf S}_{\pm}(z):=\lim_{\epsilon'\downarrow 0}{\bf S}(z\mp\epsilon')$ on $\Sigma_{\bf S}:=\{\Re z-\frac{1}{2}=\pm\frac{\epsilon}{2}\}\ni z$ satisfy the jump condition ${\bf S}_+(z)={\bf S}_-(z){\bf G}_{\bf S}(z;t,\gamma,\phi,\psi)$ where the jump matrix ${\bf G}_{\bf S}(z;t,\gamma,\phi,\psi)$ is given by
	\begin{align*}
		{\bf G}_{\bf S}(z;t,\gamma,\phi,\psi)=&\,\begin{bmatrix}1 & -\eta_2(z;\gamma)t^{z-\frac{1}{2}}\e^{-2g(z;\gamma)}\\ 0 & 1\end{bmatrix},\ \ \Re z-\frac{1}{2}=-\frac{\epsilon}{2},\\
		{\bf G}_{\bf S}(z;t,\gamma,\phi,\psi)=&\,\begin{bmatrix}1&0\\ \eta_1(z;\gamma)t^{\frac{1}{2}-z}\e^{2g(z;\gamma)} & 1\end{bmatrix},\ \ \Re z-\frac{1}{2}=\frac{\epsilon}{2}.
	\end{align*}
	\item[(3)] Uniformly as $z\rightarrow\infty$ in $\mathbb{C}\setminus\Sigma_{\bf S}$, using the Riemann-Lebesgue lemma for $\eta_k$ in $|\Re z-\frac{1}{2}|\leq\frac{\epsilon}{2}$,
	\begin{equation}\label{bp68}
		{\bf S}(z)=\mathbb{I}+o(1);\ \ \ \ \ {\bf S}_1(t,\gamma,\phi,\psi):=\lim_{\substack{z\rightarrow\infty\\ \Re z\not\equiv\textnormal{const.}}}z\big({\bf S}(z)-\mathbb{I}\big)={\bf Y}_1(t,\gamma,\phi,\psi).
	\end{equation}
\end{enumerate}
\end{problem}
Since $|t^{\pm(z-\frac{1}{2})}|=t^{-\frac{1}{2}\epsilon}$ for $\Re z-\frac{1}{2}=\mp\frac{\epsilon}{2}$, and since the $t$-independent functions $\eta_k(z;\gamma)\e^{\mp 2g(z;\gamma)}$ are bounded on $\Sigma_{\bf S}$ for all $\gamma\in[0,1]$, we arrive at the below small norm estimate for ${\bf G}_{\bf S}(z;t,\gamma,\phi,\psi)$: subject to \eqref{b:28a}, for any $\epsilon>0$ there exist $t_0=t_0(\epsilon),c=c(\epsilon)>0$ such that for all $t\geq t_0$ and all $\gamma\in[0,1]$,
\begin{equation}\label{bp69}
	\|{\bf G}_{\bf S}(\cdot;t,\gamma,\phi,\psi)-\mathbb{I}\|_{L^{\infty}(\Sigma_{\bf S})}\leq c\sqrt{\gamma}\,t^{-\frac{1}{2}\epsilon},\ \ \|{\bf G}_{\bf S}(\cdot;t,\gamma,\phi,\psi)-\mathbb{I}\|_{L^2(\Sigma_{\bf S})}\leq c\sqrt{\gamma}\,t^{-\frac{1}{2}\epsilon}.
\end{equation}
Thus, by general theory \cite{DZ}, RHP \ref{ggam4} is asymptotically solvable, namely there exist $t_0=t_0(\epsilon)>0$ such that the RHP for ${\bf S}(z)$ is uniquely solvable in $L^2(\Sigma_{\bf S})$ for all $t\geq t_0$ and all $\gamma\in[0,1]$. We can compute the solution of the same problem iteratively through the integral representation
\begin{equation}\label{bp70}
	{\bf S}(z;t,\gamma,\phi,\psi)=\mathbb{I}+\frac{1}{2\pi\im}\int_{\Sigma_{\bf S}}{\bf S}_-(\lambda;t,\gamma,\phi,\psi)\big({\bf G}_{\bf S}(\lambda;t,\gamma,\phi,\psi)-\mathbb{I}\big)\frac{\d\lambda}{\lambda-z},\ \ \ \ z\in\mathbb{C}\setminus\Sigma_{\bf S},
\end{equation}
using that, for all $t\geq t_0$ and $\gamma\in[0,1]$, with $c=c(\epsilon)>0$,
\begin{equation}\label{bp71}
	\|{\bf S}_-(\cdot;t,\gamma,\phi,\psi)-\mathbb{I}\|_{L^2(\Sigma_{\bf S})}\leq c\sqrt{\gamma}\,t^{-\frac{1}{2}\epsilon}.
\end{equation}
The last two equations complete the nonlinear steepest descent analysis of RHP \ref{ggam2} and we are now left to extract all relevant information. First, choosing right-sided limits for definiteness, we have for $\lambda\in\Sigma_{\bf T}$ by \eqref{bp60},\eqref{bp63} and \eqref{bp57},\eqref{bp65},\eqref{bp67},
\begin{equation*}
	\begin{bmatrix}F_1(\lambda)\\ F_2(\lambda)\end{bmatrix}={\bf T}_-(\lambda)\begin{bmatrix}f_1(\lambda)\\ f_2(\lambda)\end{bmatrix}=t^{\frac{1}{4}\sigma_3}{\bf S}_-(\lambda)t^{-\frac{1}{4}\sigma_3}\big\{{\bf A}(\lambda)\chi_{\Sigma}(\lambda)+{\bf B}(\lambda)\chi_{\Sigma-\frac{\epsilon}{2}}(\lambda)\big\}\begin{bmatrix}f_1(\lambda)\\ f_2(\lambda)\end{bmatrix},
\end{equation*}
followed by
\begin{equation*}
	\begin{bmatrix}G_1(\lambda)\\ G_2(\lambda)\end{bmatrix}=\big({\bf T}_-^{-1}(\lambda)\big)^{\top}\begin{bmatrix}g_1(\lambda)\\ g_2(\lambda)\end{bmatrix}=t^{-\frac{1}{4}\sigma_3}\big({\bf S}_-^{-1}(\lambda)\big)^{\top}t^{\frac{1}{4}\sigma_3}\big\{\big({\bf A}^{-1}(\lambda)\big)^{\top}\chi_{\Sigma}(\lambda)+\big({\bf B}^{-1}(\lambda)\big)^{\top}\chi_{\Sigma-\frac{\epsilon}{2}}(\lambda)\big\}\begin{bmatrix}g_1(\lambda)\\ g_2(\lambda)\end{bmatrix},
\end{equation*}
where we suppress all $(t,\gamma,\phi,\psi)$-dependencies from our notation. Here,
\begin{equation*}
	{\bf A}(\lambda):=\e^{-g_-(\lambda)\sigma_3}\begin{bmatrix}1 & 0\\ -\eta_1(\lambda)t^{-\lambda}&1\end{bmatrix},\ \ \ {\bf B}(\lambda):=\e^{-g(\lambda)\sigma_3}\begin{bmatrix}(1-\gamma r_1(\lambda)r_2(\lambda))^{-1} & -\eta_2(\lambda)t^{\lambda}\smallskip\\ -\sqrt{\gamma}\,r_1(\lambda)t^{-\lambda} & 1\end{bmatrix},\ \ \lambda\in\Sigma_{\bf T}.
\end{equation*}
Consequently the innermost integral in \eqref{bp62} becomes
\begin{align}
	\int_{\Sigma_{\bf T}}&R_{t,\gamma}(\lambda,\lambda)\,\d\lambda=\int_{\Sigma_{\bf T}}\begin{bmatrix}f_1'(\lambda)\\ f_2'(\lambda)\end{bmatrix}^{\top}\begin{bmatrix}g_1(\lambda)\\ g_2(\lambda)\end{bmatrix}\d\lambda+\int_{\Sigma_{\bf T}}\begin{bmatrix}f_1(\lambda)\\ f_2(\lambda)\end{bmatrix}^{\top}{\bf E}(\lambda)\begin{bmatrix}g_1(\lambda)\\ g_2(\lambda)\end{bmatrix}\d\lambda\nonumber\\
	&\,+\int_{\Sigma}\begin{bmatrix}f_1(\lambda)\\ f_2(\lambda)\end{bmatrix}^{\top}\big({\bf A}^{\top}(\lambda)\big)'\big({\bf A}^{-1}(\lambda)\big)^{\top}\begin{bmatrix}g_1(\lambda)\\ g_2(\lambda)\end{bmatrix}\d\lambda+\int_{\Sigma-\frac{\epsilon}{2}}\begin{bmatrix}f_1(\lambda)\\ f_2(\lambda)\end{bmatrix}^{\top}\big({\bf B}^{\top}(\lambda)\big)'\big({\bf B}^{-1}(\lambda)\big)^{\top}\begin{bmatrix}g_1(\lambda)\\ g_2(\lambda)\end{bmatrix}\d\lambda,\label{bp72}
\end{align}
with ${\bf E}(\lambda)$ being shorthand for
\begin{align*}
	{\bf E}(\lambda):=\big\{{\bf A}^{\top}(\lambda)\chi_{\Sigma}(\lambda)+&\,\,{\bf B}^{\top}(\lambda)\chi_{\Sigma-\frac{\epsilon}{2}}(\lambda)\big\}t^{-\frac{1}{4}\sigma_3}\big({\bf S}_-^{\top}(\lambda)\big)'\\
	&\times\,\big({\bf S}_-^{-1}(\lambda)\big)^{\top}t^{\frac{1}{4}\sigma_3}\big\{\big({\bf A}^{-1}(\lambda)\big)^{\top}\chi_{\Sigma}(\lambda)+\big({\bf B}^{-1}(\lambda)\big)^{\top}\chi_{\Sigma-\frac{\epsilon}{2}}(\lambda)\big\}.
\end{align*}
Out of the four integrals in \eqref{bp72} the first one equates to zero by the choices for $f_j$ and $g_j$, see \eqref{bp58a}. The second integral in \eqref{bp72} can be estimated as follows, compare \eqref{bp70},\eqref{bp71} and the particular structure of ${\bf G}_{\bf S}$: for any $\epsilon>0$ there exist $t_0=t_0(\epsilon)>0$ and $c=c(\epsilon)>0$ so that
\begin{equation}\label{bp73}
	\left|\int_{\Sigma_{\bf T}}\begin{bmatrix}f_1(\lambda)\\ f_2(\lambda)\end{bmatrix}^{\top}{\bf E}(\lambda)\begin{bmatrix}g_1(\lambda)\\ g_2(\lambda)\end{bmatrix}\d\lambda\right|\leq\frac{c\gamma}{\ln t}\ \ \ \ \forall\,t\geq t_0,\ \ \gamma\in[0,1].
\end{equation}
Lastly, integral three and four in \eqref{bp72} yield leading order contributions: by definition of all underlying quantities, for the fourth integral in \eqref{bp72},
\begin{align}
	\int_{\Sigma-\frac{\epsilon}{2}}&\begin{bmatrix}f_1(\lambda)\\ f_2(\lambda)\end{bmatrix}^{\top}\big({\bf B}^{\top}(\lambda)\big)'\big({\bf B}^{-1}(\lambda)\big)^{\top}\begin{bmatrix}g_1(\lambda)\\ g_2(\lambda)\end{bmatrix}\d\lambda=\frac{\gamma}{2\pi\im}\int_{\Sigma-\frac{\epsilon}{2}}r_1(\lambda)t^{-\lambda}\frac{\d}{\d\lambda}\left[\frac{r_2(\lambda)t^{\lambda}}{1-\gamma r_1(\lambda)r_2(\lambda)}\right]\d\lambda\nonumber\\
	&\,-\frac{\gamma}{2\pi^2}\int_{\Sigma-\frac{\epsilon}{2}}\frac{r_1(\lambda)r_2(\lambda)}{1-\gamma r_1(\lambda)r_2(\lambda)}\left[\int_{\Sigma}\ln\big(1-\gamma r_1(z)r_2(z)\big)\frac{\d z}{(z-\lambda)^2}\right]\d\lambda,\label{bp74}
\end{align}
where we integrate by parts in the first term, using $|zr_1(z)|<1$ and $|(1-z)r_2(z)|<1$ for $|\Re z-\frac{1}{2}|\leq\frac{\epsilon}{2}$,
\begin{align*}
\frac{\gamma}{2\pi\im}&\int_{\Sigma-\frac{\epsilon}{2}}r_1(\lambda)t^{-\lambda}\frac{\d}{\d\lambda}\left[\frac{r_2(\lambda)t^{\lambda}}{1-\gamma r_1(\lambda)r_2(\lambda)}\right]\d\lambda=-\frac{\gamma}{2\pi\im}\int_{\Sigma-\frac{\epsilon}{2}}\frac{r_1'(\lambda)r_2(\lambda)}{1-\gamma r_1(\lambda)r_2(\lambda)}\,\d\lambda\\
&\,-\frac{\gamma\ln t}{2\pi\im}\frac{\d}{\d\gamma}\left[\int_{\Sigma}\ln\big(1-\gamma r_1(\lambda)r_2(\lambda)\big)\d\lambda\right]=\gamma\ln t\left[\frac{\d}{\d\gamma}s(1,\gamma)\right]+\frac{\gamma}{2\pi\im}\frac{\d}{\d\gamma}\int_{\Sigma}\frac{r_1'(\lambda)}{r_1(\lambda)}\ln\big(1-\gamma r_1(\lambda)r_2(\lambda)\big)\d\lambda,
\end{align*}
with $s(x,\gamma):=-\frac{1}{2\pi\im}\int_{\Sigma}\ln(1-\gamma r_1(\lambda)r_2(\lambda))x^{-\lambda}\d\lambda,x\in\mathbb{R}_{\geq 1}$, after collapsing $\Sigma-\frac{\epsilon}{2}$ to $\Sigma$. The second term in \eqref{bp74} we rewrite with the help of the integral identity $(z-\lambda)^{-2}=\int_1^{\infty}x^{\lambda-z-1}\ln x\,\d x$, for $z\in\Sigma,\lambda\in\Sigma-\frac{\epsilon}{2}$,
\begin{equation*}
	-\frac{\gamma}{2\pi^2}\int_{\Sigma-\frac{\epsilon}{2}}\frac{r_1(\lambda)r_2(\lambda)}{1-\gamma r_1(\lambda)r_2(\lambda)}\left[\int_{\Sigma}\ln\big(1-\gamma r_1(z)r_2(z)\big)\frac{\d z}{(z-\lambda)^2}\right]\d\lambda=-2\gamma\int_1^{\infty}\ln x\left[\frac{\partial}{\partial\gamma}\hat{s}(x,\gamma)\right]s(x,\gamma)\,\d x,
\end{equation*}
with $\hat{s}(x,\gamma):=-\frac{1}{2\pi\im}\int_{\Sigma}\ln(1-\gamma r_1(\lambda)r_2(\lambda))x^{\lambda-1}\d\lambda$. All together, for \eqref{bp74},
\begin{align}
	\int_{\Sigma-\frac{\epsilon}{2}}\begin{bmatrix}f_1(\lambda)\\ f_2(\lambda)\end{bmatrix}^{\top}\big({\bf B}^{\top}(\lambda)\big)'\big({\bf B}^{-1}(\lambda)\big)^{\top}\begin{bmatrix}g_1(\lambda)\\ g_2(\lambda)\end{bmatrix}\d\lambda=&\,\gamma\ln t\left[\frac{\d}{\d\gamma}s(1,\gamma)\right]-2\gamma\int_1^{\infty}\ln x\left[\frac{\partial}{\partial\gamma}\hat{s}(x,\gamma)\right]s(x,\gamma)\,\d x\nonumber\\
	&\,+\frac{\gamma}{2\pi\im}\frac{\d}{\d\gamma}\int_{\Sigma}\frac{r_1'(\lambda)}{r_1(\lambda)}\ln\big(1-\gamma r_1(\lambda)r_2(\lambda)\big)\d\lambda.\label{bp75}
\end{align}
Finally, for the third integral in \eqref{bp72}, by definition of all underlying quantities,
\begin{align}
	\int_{\Sigma}\begin{bmatrix}f_1(\lambda)\\ f_2(\lambda)\end{bmatrix}^t\big({\bf A}^{\top}(\lambda)\big)'&\big({\bf A}^{-1}(\lambda)\big)^{\top}\begin{bmatrix}g_1(\lambda)\\ g_2(\lambda)\end{bmatrix}\d\lambda=-\frac{\gamma}{2\pi\im}\int_{\Sigma}\frac{r_2(\lambda)t^{\lambda}}{1-\gamma r_1(\lambda)r_2(\lambda)}\frac{\d}{\d\lambda}\Big[r_1(\lambda)t^{-\lambda}\Big]\d\lambda\nonumber\\
	&\,-\frac{\gamma}{2\pi^2}\int_{\Sigma}\frac{r_1(\lambda)r_2(\lambda)}{1-\gamma r_1(\lambda)r_2(\lambda)}\frac{\d}{\d\lambda}\left[\textnormal{pv}\int_{\Sigma}\ln\big(1-\gamma r_1(z)r_2(z)\big)\frac{\d z}{z-\lambda}\right]\d\lambda,\label{bp76}
\end{align}
where we evaluate the derivative in the first term and symmetrize,
\begin{equation*}
	-\frac{\gamma}{2\pi\im}\int_{\Sigma}\frac{r_2(\lambda)t^{\lambda}}{1-\gamma r_1(\lambda)r_2(\lambda)}\frac{\d}{\d\lambda}\Big[r_1(\lambda)t^{-\lambda}\Big]\d\lambda=\gamma\ln t\left[\frac{\d}{\d\gamma}s(1,\gamma)\right]-\frac{\gamma}{2\pi\im}\frac{\d}{\d\gamma}\int_{\Sigma}\frac{r_2'(\lambda)}{r_2(\lambda)}\ln\big(1-\gamma r_1(\lambda)r_2(\lambda)\big)\d\lambda.
\end{equation*}
For the second term in \eqref{bp76} we recall Plemelj's formula
\begin{equation*}
	\lim_{\substack{\mu\rightarrow\lambda\in\Sigma\\ \Re\mu>\frac{1}{2}}}\int_{\Sigma}\ln\big(1-\gamma r_1(z)r_2(z)\big)\frac{\d z}{z-\mu}=-\im\pi\ln\big(1-\gamma r_1(\lambda)r_2(\lambda)\big)+\textnormal{pv}\int_{\Sigma}\ln\big(1-\gamma r_1(z)r_2(z)\big)\frac{\d z}{z-\lambda},
\end{equation*}
and notice that by the properties of $r_j$,
\begin{equation*}
	\int_{\Sigma}\frac{r_1(\lambda)r_2(\lambda)}{1-\gamma r_1(\lambda)r_2(\lambda)}\frac{\d}{\d\lambda}\Big[\ln\big(1-\gamma r_1(\lambda)r_2(\lambda)\big)\Big]\,\d\lambda=0,\ \ \ \gamma\in[0,1].
\end{equation*}
Hence,
\begin{align}
	-&\frac{\gamma}{2\pi^2}\int_{\Sigma}\frac{r_1(\lambda)r_2(\lambda)}{1-\gamma r_1(\lambda)r_2(\lambda)}\frac{\d}{\d\lambda}\left[\textnormal{pv}\int_{\Sigma}\ln\big(1-\gamma r_1(z)r_2(z)\big)\frac{\d z}{z-\lambda}\right]\d\lambda\nonumber\\
	&\,\,=-\frac{\gamma}{2\pi^2}\int_{\Sigma}\frac{r_1(\lambda)r_2(\lambda)}{1-\gamma r_1(\lambda)r_2(\lambda)}\frac{\d}{\d\lambda}\left[-\im\pi\ln\big(1-\gamma r_1(\lambda)r_2(\lambda)\big)+\textnormal{pv}\int_{\Sigma}\ln\big(1-\gamma r_1(z)r_2(z)\big)\frac{\d z}{z-\lambda}\right]\d\lambda\nonumber\\
	&\,\,=-\frac{\gamma}{2\pi^2}\int_{\Sigma+\frac{\epsilon}{2}}\frac{r_1(\lambda)r_2(\lambda)}{1-\gamma r_1(\lambda)r_2(\lambda)}\frac{\d}{\d\lambda}\left[\int_{\Sigma}\ln\big(1-\gamma r_1(z)r_2(z)\big)\frac{\d z}{z-\lambda}\right]\d\lambda,\label{bp77}
\end{align}
where we deformed the outer integration contour $\Sigma$ to $\Sigma+\frac{\epsilon}{2}$ in the second equality while using Plemelj's formula and analytic continuation in the integrand. Consequently, with $(z-\lambda)^{-2}=\int_1^{\infty}x^{z-\lambda-1}\ln x\,\d x$, for $z\in\Sigma,\lambda\in\Sigma+\frac{\epsilon}{2}$, after collapsing $\Sigma+\frac{\epsilon}{2}$ to $\Sigma$,
\begin{equation*}
	-\frac{\gamma}{2\pi^2}\int_{\Sigma}\frac{r_1(\lambda)r_2(\lambda)}{1-\gamma r_1(\lambda)r_2(\lambda)}\frac{\d}{\d\lambda}\left[\textnormal{pv}\int_{\Sigma}\ln\big(1-\gamma r_1(z)r_2(z)\big)\frac{\d z}{z-\lambda}\right]\d\lambda\stackrel{\eqref{bp77}}{=}-2\gamma\int_1^{\infty}\ln x\left[\frac{\partial}{\partial\gamma}s(x,\gamma)\right]\hat{s}(x,\gamma)\,\d x,
\end{equation*}
so that all together for \eqref{bp76},
\begin{align}
	\int_{\Sigma}\begin{bmatrix}f_1(\lambda)\\ f_2(\lambda)\end{bmatrix}^{\top}\big({\bf A}^{\top}(\lambda)\big)'\big({\bf A}^{-1}(\lambda)\big)^{\top}\begin{bmatrix}g_1(\lambda)\\ g_2(\lambda)\end{bmatrix}\d\lambda=\gamma\ln t&\left[\frac{\d}{\d\gamma}s(1,\gamma)\right]-2\gamma\int_1^{\infty}\ln x\left[\frac{\partial}{\partial\gamma}s(x,\gamma)\right]\hat{s}(x,\gamma)\,\d x\nonumber\\
	&\,-\frac{\gamma}{2\pi\im}\frac{\d}{\d\gamma}\int_{\Sigma}\frac{r_2'(\lambda)}{r_2(\lambda)}\ln\big(1-\gamma r_1(\lambda)r_2(\lambda)\big)\d\lambda.\label{bp78}
\end{align}
Returning now to \eqref{bp62}, identity \eqref{bp72} combined with \eqref{bp73},\eqref{bp75} and \eqref{bp78} yields
\begin{align*}
	-\frac{1}{2\gamma}\int_{\Sigma_{\bf T}}R_{t,\gamma}(\lambda,\lambda)\,\d\lambda=\frac{\partial}{\partial\gamma}\bigg[-s(1,\gamma)&\ln t+\int_1^{\infty}s(x,\gamma)\hat{s}(x,\gamma)\ln x\,\d x\nonumber\\
	&\,-\frac{1}{4\pi\im}\int_{\Sigma}\left\{\frac{r_1'(\lambda)}{r_1(\lambda)}-\frac{r_2'(\lambda)}{r_2(\lambda)}\right\}\ln\big(1-\gamma r_1(\lambda)r_2(\lambda)\big)\,\d\lambda\bigg]+\tilde{r}(t,\gamma),
\end{align*}
for all $t\geq t_0$ and $\gamma\in[0,1]$ with $|\tilde{r}(t,\gamma)|\leq\frac{c}{\ln t}$. In turn, performing the definite $\gamma$-integration in \eqref{bp62} we finally arrive at, as $t\rightarrow+\infty$,
\begin{eqnarray}
	\ln F(t)\!\!\!\!\!&\stackrel{\eqref{bp62}}{=}&\!\!\!\!\!-\frac{1}{2}\int_0^1\left[\int_{\Sigma_{\bf T}}R_{t,\gamma}(\lambda,\lambda)\d\lambda\right]\frac{\d\gamma}{\gamma}\label{bp79}\\
	&=&\!\!\!\!\!-s(1)\ln t+\int_1^{\infty}s(x)\hat{s}(x)\ln x\,\d x-\frac{1}{4\pi\im}\int_{\Sigma}\left\{\frac{r_1'(\lambda)}{r_1(\lambda)}-\frac{r_2'(\lambda)}{r_2(\lambda)}\right\}\ln\big(1-r_1(\lambda)r_2(\lambda)\big)\,\d\lambda+\mathcal{O}\Big(\frac{1}{\ln t}\Big),\nonumber
\end{eqnarray}
since $s(x,0)=\hat{s}(x,0)=0$ for all $x\geq 1$. This proves \eqref{b:29a} except for the error term. However we can easily improve the $\mathcal{O}(\frac{1}{\ln t})$ in \eqref{bp79} by using \eqref{bp56}. Indeed, by \eqref{bp66},\eqref{bp68} and \eqref{bp70},
\begin{align*}
	t\frac{\d}{\d t}\ln F(t)=-X_1^{11}&(t,1,\phi,\psi)\stackrel{\eqref{bp66}}{=}-Y_1^{11}(t,1,\phi,\psi)-s(1)\\
	&\,\stackrel{\eqref{bp70}}{=}\frac{1}{2\pi\im}\int_{\Sigma_{\bf S}}\Big({\bf S}_-(\lambda;t,1,\phi,\psi)\big({\bf G}_{\bf S}(\lambda;t,1,\phi,\psi)-\mathbb{I}\big)\Big)_{11}\d\lambda-s(1)\ \ \ \ \forall\,t\geq t_0
\end{align*}
and so with \eqref{bp69} and \eqref{bp71}, after indefinite $t$-integration, for all $t\geq t_0$ with $c=c(\epsilon)>0$,
\begin{equation}\label{bp80}
	\ln F(t)=-s(1)\ln t+\eta+r(t),\ \ \ |r(t)|\leq ct^{-\frac{1}{2}\epsilon},
\end{equation}
with a $t$-independent term $\eta$, the integration constant. But the same term was already computed in \eqref{bp79}, hence consistency between \eqref{bp79} and \eqref{bp80} imposes the error term to be power-like small, as claimed in \eqref{b:29b}. This completes our proof of Theorem \ref{theo4a}.
\end{proof}

%%%%%%%%%%%%%%%%%%%%%%%%%%%%%%%%%%%%%%%%%%%%%%%%%%%%%%%%%%%%%%%%%%%%%%%%%%%%%%%%%%%%%%%%%%%%%%%%%%%

\begin{appendix}
\section{Useful identities}\label{appA} 
Below we summarize a few algebraic identities used in the proofs of Theorem \ref{theo1} and \ref{theo3}.

\begin{lem}\label{new3} Let $\phi,\psi:\mathbb{R}\rightarrow\mathbb{C}$ satisfy all assumptions of Theorem \ref{theo1}. Then, for any $a,x\in\mathbb{R}_{\geq 0},t\in\mathbb{R}$, with $\tau_t$ as translation,
\begin{equation*}
	\big((I-K_t)^{-1}\tau_{t+a}\phi\big)(x)=\big((I-K_t)^{-1}\tau_{t+x}\phi\big)(a),\ \ \ \ \ \big((I-K_t^{\ast})^{-1}\tau_{t+a}\psi\big)(x)=\big((I-K_t^{\ast})^{-1}\tau_{t+x}\psi\big)(a).
\end{equation*}
\end{lem}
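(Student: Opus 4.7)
The plan is to recognize the quantity $\bigl((I-K_t)^{-1}\tau_{t+a}\phi\bigr)(x)$ as the $(x,a)$-entry of the kernel of the operator $(I-K_t)^{-1}M_t$, and then to deduce the claimed symmetry from a simple intertwining identity. Indeed, $(\tau_{t+a}\phi)(y)=\phi(y+a+t)=M_t(y,a)$, so $\tau_{t+a}\phi$ is nothing but the ``$a$-th column'' of the symmetric kernel of $M_t$. Writing $A:=(I-K_t)^{-1}M_t$ as a kernel operator, its kernel $A(x,a)$ therefore equals $((I-K_t)^{-1}\tau_{t+a}\phi)(x)$, and the first identity of the lemma is equivalent to the claim $A(x,a)=A(a,x)$, i.e., to the symmetry of $A$ with respect to the real adjoint used throughout the paper.

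The key algebraic fact I would establish is the intertwining relation $K_tM_t=M_tN_tM_t=M_tK_t^{\ast}$, which is immediate from $K_t=M_tN_t$ and $K_t^{\ast}=N_tM_t$; equivalently, $(I-K_t)M_t=M_t(I-K_t^{\ast})$. Since $I-K_t$ is invertible on $L^2(\mathbb{R}_+)$ by assumption and $I-K_t^{\ast}$ is invertible as its transpose, I can rewrite this as $(I-K_t)^{-1}M_t=M_t(I-K_t^{\ast})^{-1}$. Taking the real adjoint of both sides and using $M_t^{\ast}=M_t$ together with $[(I-K_t^{\ast})^{-1}]^{\ast}=(I-K_t)^{-1}$, the right hand side transforms into itself, so $A^{\ast}=A$. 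Reading this out at the level of continuous kernels gives $A(x,a)=A(a,x)$, which is the first identity. The second identity follows in complete analogy from $K_t^{\ast}N_t=N_tM_tN_t=N_tK_t$, which implies $(I-K_t^{\ast})^{-1}N_t=N_t(I-K_t)^{-1}$ and, by symmetry of $N_t$, the self-adjointness of $(I-K_t^{\ast})^{-1}N_t$, whose kernel is $((I-K_t^{\ast})^{-1}\tau_{t+a}\psi)(x)$.

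The main obstacle is justifying the passage from the operator identity $A=A^{\ast}$ to the pointwise equality of kernels at arbitrary $a,x\in\mathbb{R}_{\geq 0}$, rather than almost everywhere. This requires the joint continuity of $(a,x)\mapsto((I-K_t)^{-1}\tau_{t+a}\phi)(x)$ on $\mathbb{R}_{\geq 0}^2$. Under the hypotheses of Theorem \ref{theo1}, this is already in place: the maps $a\mapsto\tau_{t+a}\phi\in L^2(\mathbb{R}_+)$ and $a\mapsto\tau_{t+a}\psi\in L^2(\mathbb{R}_+)$ are continuous by the $L^2$ dominance assumptions, the operator $(I-K_t)^{-1}$ is bounded on $L^2(\mathbb{R}_+)$, and the local $\tfrac12$-H\"older continuity of $x\mapsto((I-K_t)^{-1}f)(x)$ for $f\in L^2(\mathbb{R}_+)$ (noted in the footnote following Lemma \ref{lem1} and used throughout the proof of Theorem \ref{theo1}) yields joint continuity. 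With this in hand, $A=A^{\ast}$ as operators translates without loss into pointwise symmetry of the continuous kernel, completing the argument.
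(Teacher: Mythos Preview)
Your argument is correct and offers a more conceptual route than the paper's. The paper proceeds by direct computation: fixing $a$, it sets $j_a(x):=\big((I-K_t)^{-1}\tau_{t+x}\phi\big)(a)$, verifies $j_a\in L^2(\mathbb{R}_+)$ for almost every $a$, and then computes $\big((I-K_t)j_a\big)(x)$ by expanding the integral, using the pointwise symmetry $(\tau_{t+y}\phi)(z)=(\tau_{t+z}\phi)(y)$ together with the Hankel structure of $K_t$ to obtain $\big((I-K_t)\tau_{t+z}\phi\big)(x)=\big((I-K_t)\tau_{t+x}\phi\big)(z)$, whence $(I-K_t)j_a=\tau_{t+a}\phi$ and the conclusion follows after inverting and upgrading from a.e.\ to everywhere via continuity. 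Your approach instead recognises $\big((I-K_t)^{-1}\tau_{t+a}\phi\big)(x)$ as the kernel of $(I-K_t)^{-1}M_t$ and derives its symmetry from the one-line intertwining $(I-K_t)M_t=M_t(I-K_t^{\ast})$, i.e.\ $(I-K_t)^{-1}M_t=M_t(I-K_t^{\ast})^{-1}$, which is self-adjoint with respect to the real adjoint since $M_t$ is symmetric. This is the same underlying mechanism---the symmetry of $M_tN_tM_t$---but packaged at the operator level rather than unpacked into explicit Fubini manipulations; it is shorter and makes the origin of the symmetry transparent. Both arguments face the identical final step of passing from kernel equality a.e.\ to pointwise equality via joint continuity in $(a,x)$, and both resolve it in the same way using the standing regularity assumptions on $\phi,\psi$.
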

\begin{proof} By symmetry $\phi\leftrightarrow\psi$, which swaps $K_t\leftrightarrow K_t^{\ast}$, it suffices to derive one of the two stated identities. So, fix $a,t$ and note that $\mathbb{R}_{\geq 0}\ni x\mapsto ((I-K_t)^{-1}\tau_{t+x}\phi)(a)$ is in $L^2(\mathbb{R}_+)$ for almost all $a$. Indeed, we need to look at $\mathbb{R}_{\geq 0}\ni x\mapsto(K_t(I-K_t)^{-1}\tau_{t+x}\phi)(a)$ only, given that $\phi\in L^1(\mathbb{R})\cap L^2(\mathbb{R})$ by assumption and since $(I-K_t)^{-1}=I+K_t(I-K_t)^{-1}$. However, by Cauchy-Schwarz inequality,
\begin{equation*}
	\int_0^{\infty}\big|\big(K_t(I-K_t)^{-1}\tau_{t+x}\phi\big)(a)\big|^2\,\d x\leq\int_0^{\infty}\big|(K_t(I-K_t)^{-1}\big)(a,y)\big|^2\,\d y\int_0^{\infty}\int_0^{\infty}\big|(\tau_t\phi)(x+y)\big|^2\,\d x\,\d y,
\end{equation*}
where the second factor is finite by \eqref{n00} and the first is finite almost everywhere in $a$ by Fubini's theorem since $K_t(I-K_t)^{-1}\in\mathcal{C}_2(L^2(\mathbb{R}_+))$. Next, put $j_a(x):=((I-K_t)^{-1}\tau_{t+x}\phi)(a)$, so that by the previous $(I-K_t)j_a\in L^2(\mathbb{R}_+)$ almost everywhere in $a$ and, almost everywhere in $a,x\in\mathbb{R}_{\geq 0}$,
\begin{align}
	&\hspace{1cm}\big((I-K_t)j_a\big)(x)=\int_0^{\infty}(I-K_t)(x,y)\big((I-K_t)^{-1}\tau_{t+y}\phi\big)(a)\,\d y\label{app1}\\
	=&\,\int_0^{\infty}(I-K_t)(x,y)\left[\int_0^{\infty}(I-K_t)^{-1}(a,z)(\tau_{t+y}\phi)(z)\,\d z\right]\d y
	=\int_0^{\infty}(I-K_t)^{-1}(a,z)\big((I-K_t)\tau_{t+z}\phi\big)(x)\,\d z\nonumber
\end{align}
by Fubini's theorem where we used $(\tau_{t+y}\phi)(z)=(\tau_{t+z}\phi)(y)$. However, by \eqref{n1} and again Fubini's theorem,
\begin{equation*}
	\big((I-K_t)\tau_{t+z}\phi\big)(x)=\big((I-K_t)\tau_{t+x}\phi\big)(z)\ \ \ \forall\,x,z\in\mathbb{R}_{\geq 0},\ t\in\mathbb{R},
\end{equation*}
and thus back in \eqref{app1},
\begin{equation}\label{app2}
	\big((I-K_t)j_a\big)(x)=\int_0^{\infty}(I-K_t)^{-1}(a,z)\big((I-K_t)\tau_{t+x}\phi\big)(z)\,\d z=(\tau_{t+x}\phi)(a)=(\tau_{t+a}\phi)(x).
\end{equation}
Consequently, for almost all $a,x\in\mathbb{R}_{\geq 0}$,
\begin{equation*}
	\big((I-K_t)^{-1}\tau_{t+x}\phi\big)(a)=j_a(x)=\big((I-K_t)^{-1}(I-K_t)j_a\big)(x)\stackrel{\eqref{app2}}{=}\big((I-K_t)^{-1}\tau_{t+a}\phi\big)(x).
\end{equation*}
However, by the properties of $\phi,\psi$, both sides in the last equality are continuous functions in $a,x\in\mathbb{R}_{\geq 0}$, thus the same identity holds for all $a,x\in\mathbb{R}_{\geq 0}$. This concludes our proof.
\end{proof}

\begin{lem}\label{new4} Let $\phi,\psi:\mathbb{R}\rightarrow\mathbb{C}$ satisfy all assumptions of Theorem \ref{theo1}. Then for all $a\in\mathbb{R}_{\geq 0},t\in\mathbb{R},z\in\mathbb{R}$, with $\tau_t$ as translation,
\begin{equation}\label{app3}
	\,\,\int_0^{\infty}\e^{\im zu}\left[\int_0^{\infty}(\tau_{t+u}\psi)(x)\big((I-K_t)^{-1}\tau_{t+a}\phi\big)(x)\,\d x\right]\d u=\big((I-K_t)^{-1}h_z\big)(a),
\end{equation}
\begin{equation}\label{app4}
	\int_0^{\infty}\e^{-\im zu}\left[\int_0^{\infty}(\tau_{t+u}\phi)(x)\big((I-K_t^{\ast})^{-1}\tau_{t+a}\psi\big)(x)\,\d x\right]\d u=\big((I-K_t^{\ast})^{-1}g_z\big)(a),
\end{equation}
with 
\begin{equation*}
	g_z(x):=\int_0^{\infty}K_t^{\ast}(x,u)\e^{-\im zu}\,\d u,\ \ \ \ \ h_z(x):=\int_0^{\infty}K_t(x,u)\e^{\im zu}\,\d u,\ \ x\in\mathbb{R}_{\geq 0}.
\end{equation*}
\end{lem}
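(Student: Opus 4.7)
The plan is to prove \eqref{app3}; the identity \eqref{app4} will then follow from the symmetry $\phi\leftrightarrow\psi$, which interchanges $K_t$ and $K_t^\ast$ as well as $g_z$ and $h_z$. The strategy is to reduce the left-hand side of \eqref{app3} to the same quantity as the right-hand side by applying the invertible operator $I-K_t$ to both sides.

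First I would introduce the auxiliary function $\Psi_z(w):=\int_0^\infty e^{\im zu}\psi(w+u)\,\d u$, which is bounded since $\psi\in L^1(\mathbb{R})$. Using Fubini's theorem—justified by $\psi\in L^1(\mathbb{R})\cap L^2(\mathbb{R})$, by \eqref{n00}, \eqref{i:14}, and by the fact that $(I-K_t)^{-1}\tau_{t+a}\phi\in L^2(\mathbb{R}_+)$ lies in $L^1(\mathbb{R}_+)$ for every fixed $a\in\mathbb{R}_{\geq 0}$—the left-hand side of \eqref{app3} can be rewritten as
\begin{equation*}
L(a)=\int_0^\infty\Psi_z(x+t)\big((I-K_t)^{-1}\tau_{t+a}\phi\big)(x)\,\d x.
\end{equation*}
By a second application of Fubini, starting from the definition $h_z(x)=\int_0^\infty K_t(x,u)e^{\im zu}\,\d u$ and substituting \eqref{n1}, one obtains the analogous representation $h_z(x)=\int_0^\infty\phi(x+v+t)\Psi_z(v+t)\,\d v$.

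Next I would invoke Lemma \ref{new3} to swap arguments:
\begin{equation*}
\big((I-K_t)^{-1}\tau_{t+a}\phi\big)(x)=\big((I-K_t)^{-1}\tau_{t+x}\phi\big)(a)=:w_x(a),
\end{equation*}
where by construction $(I-K_t)w_x=\tau_{t+x}\phi$. Thus $L(a)=\int_0^\infty\Psi_z(x+t)w_x(a)\,\d x$. Setting $J(a):=((I-K_t)^{-1}h_z)(a)$, the claimed right-hand side of \eqref{app3}, we have $(I-K_t)J=h_z$. Applying $I-K_t$ to $L$ and interchanging the two integrals by Fubini once more, I get
\begin{equation*}
\big((I-K_t)L\big)(a)=\int_0^\infty\Psi_z(x+t)\big((I-K_t)w_x\big)(a)\,\d x=\int_0^\infty\Psi_z(x+t)\phi(a+t+x)\,\d x=h_z(a),
\end{equation*}
the last step being precisely the representation of $h_z$ derived above. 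Hence $(I-K_t)L=(I-K_t)J$ on $L^2(\mathbb{R}_+)$, and invertibility of $I-K_t$ yields $L=J$, which is \eqref{app3}. For \eqref{app4} one repeats the argument verbatim with $(\phi,\psi,K_t,h_z)$ replaced by $(\psi,\phi,K_t^\ast,g_z)$ and $z$ replaced by $-z$, using the $K_t^\ast$-version of Lemma \ref{new3}.

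The only genuine technical point is the repeated use of Fubini's theorem and the verification that the kernels of the auxiliary operators appearing above define bona fide $L^2(\mathbb{R}_+)$ functions so that the formal manipulations are valid. Boundedness of $\Psi_z$, the membership $h_z\in L^2(\mathbb{R}_+)$, and the continuity/integrability of $L$ and $J$ all follow from the hypotheses $\phi,\psi\in W^{1,1}(\mathbb{R})\cap L^\infty(\mathbb{R})$ together with \eqref{n00} and \eqref{i:14}; this bookkeeping is the main, though entirely routine, obstacle.
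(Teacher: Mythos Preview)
Your proof is correct and follows essentially the same approach as the paper: both reduce to one identity by the $\phi\leftrightarrow\psi$ symmetry, invoke Lemma~\ref{new3} to swap the roles of $a$ and $x$ in the resolvent, and then rearrange via Fubini. The only cosmetic difference is that the paper expands the resolvent kernel $(I-K_t^\ast)^{-1}(a,y)$ directly to recognize the inner integral as $K_t^\ast(y,u)$, whereas you apply $I-K_t$ to both sides and use invertibility; these are two sides of the same algebraic identity.
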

\begin{proof} As argued in the proof of Theorem \ref{theo1}, the left and right hand sides in \eqref{app4} are well-defined for all $a\geq 0$ and $z\in\mathbb{R}$. To establish the identity, we use Lemma \ref{new3} in the left hand side of \eqref{app4} and then simplify by Fubini's theorem,
\begin{align}
	\int_0^{\infty}&\,(\tau_{t+u}\phi)(x)\big((I-K_t^{\ast})^{-1}\tau_{t+a}\psi\big)(x)\,\d x=\int_0^{\infty}(\tau_{t+u}\phi)(x)\big((I-K_t^{\ast})^{-1}\tau_{t+x}\psi\big)(a)\,\d x\label{app5}\\
	=&\,\int_0^{\infty}(I-K_t^{\ast})^{-1}(a,y)\left[\int_0^{\infty}(\tau_t\psi)(y+x)(\tau_t\phi)(x+u)\,\d x\right]\d y=\int_0^{\infty}(I-K_t^{\ast})^{-1}(a,y)K_t^{\ast}(y,u)\,\d y,\nonumber
\end{align}
so one more time with Fubini,
\begin{align*}
	\int_0^{\infty}\e^{-\im zu}&\,\left[\int_0^{\infty}(\tau_{t+u}\phi)(x)\big((I-K_t^{\ast})^{-1}\tau_{t+a}\psi\big)(x)\,\d x\right]\d u\stackrel{\eqref{app5}}{=}\int_0^{\infty}(I-K_t^{\ast})^{-1}(a,y)\left[\int_0^{\infty}K_t^{\ast}(y,u)\e^{-\im zu}\,\d u\right]\d y\\
	&\hspace{1cm}=\big((I-K_t^{\ast})^{-1}g_z\big)(a),\ \ \ a\geq 0,\ \ z\in\mathbb{R}.
\end{align*}
This concludes our proof of Lemma \ref{new4}, using once more the symmetry $\phi\leftrightarrow\psi$ which swaps \eqref{app4} with \eqref{app3}, modulo the replacement $z\mapsto -z$.
\end{proof}
\begin{lem}\label{new5} Let $\phi,\psi:\mathbb{R}_+\rightarrow\mathbb{C}$ satisfy all assumptions of Theorem \ref{theo3}. Then, for any $a,x\in(0,1],t\in\mathbb{R}_+$, with $\tau_t$ as dilation,
\begin{equation*}
	\big((I-K_t)^{-1}\tau_{ta}\phi\big)(x)=\big((I-K_t)^{-1}\tau_{tx}\phi\big)(a),\ \ \ \ \big((I-K_t^{\ast})^{-1}\tau_{ta}\psi\big)(x)=\big((I-K_t^{\ast})^{-1}\tau_{tx}\psi\big)(a).
\end{equation*}
\end{lem}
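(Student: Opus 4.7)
The plan is to mimic the proof strategy of Lemma~\ref{new3} step by step, adapting the additive shift identities to their multiplicative dilation counterparts. By the $\phi\leftrightarrow\psi$ symmetry (which swaps $K_t\leftrightarrow K_t^\ast$), it suffices to establish the first identity. Fix $t\in\mathbb{R}_+$ and consider $j_a(x):=\big((I-K_t)^{-1}\tau_{tx}\phi\big)(a)$. First I would verify that $(0,1]\ni x\mapsto j_a(x)$ lies in $L^2(0,1)$ for a.e.\ $a\in(0,1]$: writing $(I-K_t)^{-1}=I+K_t(I-K_t)^{-1}$, the term $\tau_{tx}\phi\in L^2$-valued in $x$ is handled by \eqref{b:1}, while for $K_t(I-K_t)^{-1}\tau_{tx}\phi$ one applies Cauchy--Schwarz and Fubini against the $L^2$-dominance conditions \eqref{b:15}, using that $K_t(I-K_t)^{-1}\in\mathcal{C}_2(L^2(0,1))$.

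The core step is the commutation identity
\[
\big((I-K_t)\tau_{tz}\phi\big)(x)=\big((I-K_t)\tau_{tx}\phi\big)(z),\qquad x,z\in(0,1].
\]
Since $\phi(xtz)=\phi(ztx)$ the constant terms match automatically, and the integral terms reduce to the symmetry in $(x,z)$ of the double integral
\[
t\int_0^1\!\int_0^1\phi(xut)\psi(uyt)\phi(ytz)\,\d u\,\d y.
\]
Interchanging the dummy variables $u\leftrightarrow y$ and invoking the symmetry $\psi(uyt)=\psi(yut)$ together with commutativity of multiplication inside the arguments of $\phi$, this integral transforms to the corresponding one with $x\leftrightarrow z$ swapped. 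All these manipulations are justified by Fubini's theorem via the integrability assumptions \eqref{b:1} and \eqref{b:15}.

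Having established the commutation identity, I would then argue as in Lemma~\ref{new3}: the function $(I-K_t)j_a$ is, for a.e.\ $a$,
\[
\big((I-K_t)j_a\big)(x)=\int_0^1 (I-K_t)^{-1}(a,z)\big((I-K_t)\tau_{tz}\phi\big)(x)\,\d z=\int_0^1(I-K_t)^{-1}(a,z)\big((I-K_t)\tau_{tx}\phi\big)(z)\,\d z,
\]
after exchange of order of integration, and the right-hand side collapses to $(\tau_{tx}\phi)(a)=(\tau_{ta}\phi)(x)$. Applying $(I-K_t)^{-1}$ to both sides then yields the claimed identity almost everywhere, and upgrades to everywhere on $(0,1]^2$ by continuity of both sides, which follows from the regularity assumptions on $\phi$ and the invertibility of $I-K_t$.

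The principal technical obstacle will be the same as in Lemma~\ref{new3}, namely justifying the exchange of order of integration when writing $(I-K_t)j_a$ via the resolvent kernel; the multiplicative structure forces one to work against the $L^1_\circ$ and weighted $L^2$ norms of \eqref{b:1} and \eqref{b:15}, and one must check that $\mathbb{R}_+\ni x\mapsto\tau_{tx}\phi$ and $(I-K_t)\tau_{tx}\phi$ are $L^2(0,1)$-dominated near both endpoints $x=0$ and $x=1$. Once these domination estimates are in place, the Fubini and dominated convergence arguments proceed verbatim.
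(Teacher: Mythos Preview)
Your proposal is correct and follows essentially the same approach as the paper's proof: define $j_a(x):=\big((I-K_t)^{-1}\tau_{tx}\phi\big)(a)$, verify $j_a\in L^2(0,1)$ for a.e.\ $a$ via the decomposition $(I-K_t)^{-1}=I+K_t(I-K_t)^{-1}$ and Cauchy--Schwarz, establish the commutation identity $\big((I-K_t)\tau_{tz}\phi\big)(x)=\big((I-K_t)\tau_{tx}\phi\big)(z)$ via Fubini, and then conclude by applying $(I-K_t)^{-1}$ and invoking continuity. Your explicit verification of the commutation identity through the $u\leftrightarrow y$ dummy-variable swap is slightly more detailed than the paper's, which simply cites \eqref{b:2} and Fubini, but the logic is identical.
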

\begin{proof} By symmetry $\phi\leftrightarrow\psi$, which swaps $K_t\leftrightarrow K_t^{\ast}$, it suffices to derive one of the two stated identities. So, fix $a,t$ and note that $(0,1]\ni x\mapsto ((I-K_t)^{-1}\tau_{tx}\phi)(a)$ is in $L^2(0,1)$ for almost all $a$: Indeed, we need to look at $(0,1]\ni x\mapsto (K_t(I-K_t)^{-1}\tau_{tx}\phi)(a)$ only, given that $\phi\in L_{\circ}^1(\mathbb{R}_+)\cap L^2(\mathbb{R}_+)$ by assumption and since $(I-K_t)^{-1}=I+K_t(I-K_t)^{-1}$. However, by Cauchy-Schwarz inequality,
\begin{equation*}
	\int_0^1\big|\big(K_t(I-K_t)^{-1}\tau_{tx}\phi\big)(a)\big|^2\,\d x\leq\int_0^1\big|K_t(I-K_t)^{-1}(a,y)\big|^2\,\d y\int_0^1\int_0^1\big|(\tau_t\phi)(xy)\big|^2\,\d x\,\d y,
\end{equation*}
where the second factor is finite by \eqref{b:1} and the first is finite almost everywhere in $a$ by Fubini's theorem since $K_t(I-K_t)^{-1}\in\mathcal{C}_2(L^2(0,1))$. Next, put $j_a(x):=((I-K_t)^{-1}\tau_{tx}\phi)(a)$, so that by the previous $(I-K_t)j_a\in L^2(0,1)$ almost everywhere in $a$ and, almost everywhere in $a,x\in(0,1]$,
\begin{align}
	&\hspace{1.5cm}\big((I-K_t)j_a\big)(x)=\int_0^1(I-K_t)(x,y)\big((I-K_t)^{-1}\tau_{ty}\phi\big)(a)\,\d y\label{app6}\\
	&=\int_0^1(I-K_t)(x,y)\left[\int_0^1(I-K_t)^{-1}(a,z)(\tau_{ty}\phi)(z)\,\d z\right]\d y=\int_0^1(I-K_t)^{-1}(a,z)\big((I-K_t)\tau_{tz}\phi\big)(x)\,\d z\nonumber
\end{align}
by Fubini's theorem where we used $(\tau_{ty}\phi)(z)=(\tau_{tz}\phi)(y)$. However, by \eqref{b:2} and again Fubini's theorem,
\begin{equation*}
	\big((I-K_t)\tau_{tz}\phi\big)(x)=\big((I-K_t)\tau_{tx}\phi\big)(z)\ \ \ \forall\,x,z\in(0,1],\ t\in\mathbb{R}_+,
\end{equation*}
and thus back in \eqref{app6},
\begin{equation}\label{app7}
	\big((I-K_t)j_a\big)(x)=\int_0^1(I-K_t)^{-1}(a,z)\big((I-K_t)\tau_{tx}\phi\big)(z)\,\d z=(\tau_{tx}\phi)(a)=(\tau_{ta}\phi)(x).
\end{equation}
Consequently, for almost all $a,x\in(0,1]$,
\begin{equation*}
	\big((I-K_t)^{-1}\tau_{tx}\phi\big)(a)=j_a(x)=\big((I-K_t)^{-1}(I-K_t)j_a\big)(x)\stackrel{\eqref{app7}}{=}\big((I-K_t)^{-1}\tau_{ta}\phi\big)(x).
\end{equation*}
However, by the properties of $\phi,\psi$, both sides in the last equality are continuous functions in $a,x\in(0,1]$, thus the same identity holds for all $a,x\in(0,1]$. This concludes our proof.
\end{proof}
\begin{lem}\label{new6} Let $\phi,\psi:\mathbb{R}_+\rightarrow\mathbb{C}$ satisfy all assumptions of Theorem \ref{theo3}. Then, for any $a\in(0,1],t\in\mathbb{R}_+,z\in\Sigma=\frac{1}{2}+\im\mathbb{R}$, with $\tau_t$ as dilation,
\begin{equation}\label{app8}
	\int_0^1u^{-z}\left[\int_0^1(\tau_{tu}\psi)(x)\big((I-K_t)^{-1}\tau_{ta}\phi\big)(x)\,\d x\right]\d u=\frac{1}{t}\big((I-K_t)^{-1}h_z\big)(a),
\end{equation}
\begin{equation}\label{app9}
	\int_0^1u^{z-1}\left[\int_0^1(\tau_{tu}\phi)(x)\big((I-K_t^{\ast})^{-1}\tau_{ta}\psi\big)(x)\,\d x\right]\d u=\frac{1}{t}\big((I-K_t^{\ast})^{-1}g_z\big)(a),
\end{equation}
with
\begin{equation*}
	g_z(x):=\int_0^1K_t^{\ast}(x,u)u^{z-1}\,\d u,\ \ \ \ h_z(x):=\int_0^1K_t(x,u)u^{-z}\,\d u,\ \ \ \ x\in(0,1].
\end{equation*}
\end{lem}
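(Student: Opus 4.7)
My plan is to mirror the proof of Lemma \ref{new4} verbatim, replacing translations by dilations and Fourier-type kernels $\e^{\pm\im zu}$ by Mellin-type kernels $u^{z-1},u^{-z}$, while using the multiplicative analogue Lemma \ref{new5} in place of Lemma \ref{new3}. By the symmetry $\phi\leftrightarrow\psi$ (which swaps $K_t\leftrightarrow K_t^{\ast}$ and \eqref{app8}$\leftrightarrow$\eqref{app9} modulo the replacement $z\mapsto 1-z$), it suffices to establish \eqref{app9}.

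First I would check that both sides of \eqref{app9} are well-defined for $a\in(0,1]$ and $z\in\Sigma$. The integrand on the left is absolutely integrable: the inner integral defines a bounded continuous function of $u\in(0,1]$ by the properties of $\phi,\psi$ imposed in Theorem \ref{theo3} (using $(I-K_t^{\ast})^{-1}\tau_{ta}\psi\in L^1_{\circ}(0,1)$ and $\phi\in L^1_{\circ}(\mathbb{R}_+)$ together with dominance), while $|u^{z-1}|=u^{-1/2}$ is integrable near $0$ against this bounded function. On the right, $g_z\in L^2(0,1)$ as in the proof of Theorem \ref{theo3}, so $(I-K_t^{\ast})^{-1}g_z$ is well-defined with a continuous representative on $(0,1]$.

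Second, I would apply Lemma \ref{new5} in the inner integral to swap the arguments:
\begin{equation*}
    \big((I-K_t^{\ast})^{-1}\tau_{ta}\psi\big)(x)=\big((I-K_t^{\ast})^{-1}\tau_{tx}\psi\big)(a),
\end{equation*}
and then expand the resolvent using its kernel representation:
\begin{equation*}
    \big((I-K_t^{\ast})^{-1}\tau_{tx}\psi\big)(a)=\int_0^1(I-K_t^{\ast})^{-1}(a,y)(\tau_{tx}\psi)(y)\,\d y=\int_0^1(I-K_t^{\ast})^{-1}(a,y)(\tau_{ty}\psi)(x)\,\d y,
\end{equation*}
where I have used the symmetry $(\tau_{tx}\psi)(y)=\psi(txy)=(\tau_{ty}\psi)(x)$ inherent in multiplicative dilation. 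Substituting and applying Fubini's theorem (justified by the absolute integrability established in step one, via Tonelli with $|u^{z-1}|=u^{-1/2}$ and standard dominance), the inner $x$-integral then reassembles to produce $K_t^{\ast}(y,tu)$ up to the factor $t$:
\begin{equation*}
    \int_0^1(\tau_{tu}\phi)(x)(\tau_{ty}\psi)(x)\,\d x=\frac{1}{t}K_t^{\ast}(y,tu),
\end{equation*}
after the substitution $z\mapsto z/t$ inside the definition \eqref{b:2} of $K_t^{\ast}$.

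Third, I would perform the $u$-integration in Mellin form. After Fubini, the left side of \eqref{app9} becomes
\begin{equation*}
    \frac{1}{t}\int_0^1(I-K_t^{\ast})^{-1}(a,y)\left[\int_0^1u^{z-1}K_t^{\ast}(y,tu)\,\d u\right]\d y.
\end{equation*}
Here the bracketed expression is exactly $g_z(y)$: indeed, the definition $g_z(y)=\int_0^1K_t^{\ast}(y,u)u^{z-1}\,\d u$ equals $\int_0^1 K_t^{\ast}(y,tu)(tu)^{z-1}t\,\d u$ after the substitution $u\mapsto tu$ only when we are careful about the domain — so in practice I would work directly from \eqref{b:2}, absorbing the factor $t$ from \eqref{b:2} into the integral and recognizing the result as $\frac{1}{t}g_z(y)$ by Fubini applied to the triple integral over $x,u,z'$ (where $z'$ is the inner dummy in $K_t^{\ast}$). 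The main obstacle is keeping this bookkeeping with the factor $t$ and the dilations consistent; the cleanest route is to write $K_t^{\ast}(x,y)=t\int_0^1\psi(xzt)\phi(zyt)\,\d z$ and verify term by term that the triple-Fubini manipulation indeed produces $\frac{1}{t}g_z(y)$. Then \eqref{app9} follows immediately by recognizing $\int_0^1(I-K_t^{\ast})^{-1}(a,y)g_z(y)\,\d y=((I-K_t^{\ast})^{-1}g_z)(a)$.
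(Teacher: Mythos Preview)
Your strategy matches the paper's exactly: invoke Lemma \ref{new5} to swap arguments, expand the resolvent via its kernel, apply Fubini to recognize $K_t^{\ast}$, and apply Fubini once more to assemble $g_z$. The only issue is a bookkeeping slip in your key displayed identity: the correct computation is
\[
\int_0^1(\tau_{tu}\phi)(x)(\tau_{ty}\psi)(x)\,\d x=\int_0^1\phi(xtu)\psi(xty)\,\d x=\frac{1}{t}K_t^{\ast}(y,u),
\]
not $\frac{1}{t}K_t^{\ast}(y,tu)$, since by \eqref{b:2} one has $K_t^{\ast}(y,u)=t\int_0^1\psi(yxt)\phi(xut)\,\d x$ directly, with no substitution needed. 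With this correction your third step becomes immediate: after Fubini the left side of \eqref{app9} is
\[
\frac{1}{t}\int_0^1(I-K_t^{\ast})^{-1}(a,y)\left[\int_0^1K_t^{\ast}(y,u)u^{z-1}\,\d u\right]\d y=\frac{1}{t}\big((I-K_t^{\ast})^{-1}g_z\big)(a),
\]
and the domain and substitution worries you flagged vanish---they were artifacts of the spurious extra $t$ in the second argument of $K_t^{\ast}$.
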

\begin{proof} As argued in the proof of Theorem \ref{theo3}, the left and right hand sides in \eqref{app9} are well-defined for all $a\in(0,1]$ and $z\in\Sigma$. To establish the identity, we use Lemma \ref{new5} in the left hand side of \eqref{app9} and then simplify by Fubini's theorem,
\begin{align}
	\int_0^1&(\tau_{tu}\phi)(x)\big((I-K_t^{\ast})^{-1}\tau_{ta}\psi\big)(x)\,\d x=\int_0^1(\tau_{tu}\phi)(x)\big((I-K_t^{\ast})^{-1}\tau_{tx}\psi\big)(a)\,\d x\label{app10}\\
	&=\int_0^1(I-K_t^{\ast})^{-1}(a,y)\left[\int_0^1(\tau_t\psi)(yx)(\tau_t\phi)(xu)\,\d x\right]\d y=\frac{1}{t}\int_0^1(I-K_t^{\ast})^{-1}(a,y)K_t^{\ast}(y,u)\,\d y\nonumber,
\end{align}
so one more time with Fubini,
\begin{align*}
	\int_0^1u^{z-1}&\left[\int_0^1(\tau_{tu}\phi)(x)\big((I-K_t^{\ast})^{-1}\tau_{ta}\psi\big)(x)\,\d x\right]\d u\stackrel{\eqref{app10}}{=}\frac{1}{t}\int_0^1(I-K_t^{\ast})^{-1}(a,y)\left[\int_0^1K_t^{\ast}(y,u)u^{z-1}\,\d u\right]\d y\\
	&\hspace{0.5cm}=\frac{1}{t}\big((I-K_t^{\ast})^{-1}g_z\big)(a),\ \ \ a\in(0,1],\ \ \ z\in\Sigma.
\end{align*}
This concludes our proof of Lemma \ref{new6}, using once more the symmetry $\phi\leftrightarrow\psi$ which swaps \eqref{app9} with \eqref{app8}, modulo the replacement $z\mapsto 1-z$.
\end{proof}

\end{appendix}

%%%%%%%%%%%%%%%%%%%%%%%%%%%%%%%%%%%%%%%%%%%%%%%%%%%%%%%%%%%%%%

\begin{bibsection}
\begin{biblist}

\bib{AC}{book}{
AUTHOR = {Ablowitz, M. J.},
author={Clarkson, P. A.},
     TITLE = {Solitons, nonlinear evolution equations and inverse
              scattering},
    SERIES = {London Mathematical Society Lecture Note Series},
    VOLUME = {149},
 PUBLISHER = {Cambridge University Press, Cambridge},
      YEAR = {1991},
     PAGES = {xii+516},
      ISBN = {0-521-38730-2},
   MRCLASS = {35Qxx (35-02 35P25 58F07 81T13)},
  MRNUMBER = {1149378},
MRREVIEWER = {Walter Oevel},
       DOI = {10.1017/CBO9780511623998},
       URL = {https://doi-org.bris.idm.oclc.org/10.1017/CBO9780511623998},
}

\bib{A}{article}{
author={Akhiezer, N.I.},
title={The continuous analogue of some theorems on Toeplitz matrices},
journal={Ukrain. Mat. Zh.}
volume={16},
year={1964},
pages={455-462},
}

%\bib{ACQ}{article}{
%AUTHOR = {Amir, Gideon},
%author={Corwin, Ivan},
%author={Quastel, Jeremy},
%     TITLE = {Probability distribution of the free energy of the continuum
%              directed random polymer in {$1+1$} dimensions},
%   JOURNAL = {Comm. Pure Appl. Math.},
%  FJOURNAL = {Communications on Pure and Applied Mathematics},
%    VOLUME = {64},
%      YEAR = {2011},
%    NUMBER = {4},
%     PAGES = {466--537},
%      ISSN = {0010-3640},
%   MRCLASS = {60K35 (60B20 60F05 60H15 82C22 82C44)},
%  MRNUMBER = {2796514},
%MRREVIEWER = {Timo Sepp\"{a}l\"{a}inen},
%       DOI = {10.1002/cpa.20347},
%       URL = {https://doi-org.bris.idm.oclc.org/10.1002/cpa.20347},
%}

\bib{Bai}{article}{
AUTHOR = {Bai, Z. D.},
author={Yin, Y. Q.},
     TITLE = {Necessary and sufficient conditions for almost sure
              convergence of the largest eigenvalue of a {W}igner matrix},
   JOURNAL = {Ann. Probab.},
  FJOURNAL = {The Annals of Probability},
    VOLUME = {16},
      YEAR = {1988},
    NUMBER = {4},
     PAGES = {1729--1741},
      ISSN = {0091-1798},
   MRCLASS = {60F99 (15A52)},
  MRNUMBER = {958213},
MRREVIEWER = {Eric V. Slud},
       URL =
              {http://links.jstor.org/sici?sici=0091-1798(198810)16:4<1729:NASCFA>2.0.CO;2-C&origin=MSN},
}

\bib{BB0}{article}{
AUTHOR = {Baik, Jinho},
author={Bothner, Thomas},
     TITLE = {The largest real eigenvalue in the real {G}inibre ensemble and
              its relation to the {Z}akharov-{S}habat system},
   JOURNAL = {Ann. Appl. Probab.},
  FJOURNAL = {The Annals of Applied Probability},
    VOLUME = {30},
      YEAR = {2020},
    NUMBER = {1},
     PAGES = {460--501},
      ISSN = {1050-5164},
   MRCLASS = {60B20 (45M05 60G70)},
  MRNUMBER = {4068316},
MRREVIEWER = {Zakhar Kabluchko},
       DOI = {10.1214/19-AAP1509},
       URL = {https://doi-org.bris.idm.oclc.org/10.1214/19-AAP1509},
}

\bib{BB}{article}{
AUTHOR = {Baik, Jinho}
author={Bothner, Thomas},
     TITLE = {Edge distribution of thinned real eigenvalues in the real Ginibre ensemble},
     JOURNAL={Ann. Henri Poincar\'e},
     FJOURNAL={Annales Henri Poincar\'e},
      YEAR = {2022},
   DOI={10.1007/s00023-022-01182-0},
}

\bib{BDS}{book}{
AUTHOR = {Baik, Jinho}
author={Deift, Percy}
author={Suidan, Toufic},
     TITLE = {Combinatorics and random matrix theory},
    SERIES = {Graduate Studies in Mathematics},
    VOLUME = {172},
 PUBLISHER = {American Mathematical Society, Providence, RI},
      YEAR = {2016},
     PAGES = {xi+461},
      ISBN = {978-0-8218-4841-8},
   MRCLASS = {60B20 (30E25 33E17 41A60 47B35 82C23)},
  MRNUMBER = {3468920},
MRREVIEWER = {Terence Tao},
}

\bib{BC}{article}{
AUTHOR = {Beals, R.},
author={Coifman, R. R.},
     TITLE = {Scattering and inverse scattering for first order systems},
   JOURNAL = {Comm. Pure Appl. Math.},
  FJOURNAL = {Communications on Pure and Applied Mathematics},
    VOLUME = {37},
      YEAR = {1984},
    NUMBER = {1},
     PAGES = {39--90},
      ISSN = {0010-3640},
   MRCLASS = {34A55 (34B25)},
  MRNUMBER = {728266},
MRREVIEWER = {David J. Kaup},
       DOI = {10.1002/cpa.3160370105},
       URL = {https://doi-org.bris.idm.oclc.org/10.1002/cpa.3160370105},
}

\bib{BDT}{book}{
AUTHOR = {Beals, Richard},
author={Deift, Percy},
author={Tomei, Carlos},
     TITLE = {Direct and inverse scattering on the line},
    SERIES = {Mathematical Surveys and Monographs},
    VOLUME = {28},
 PUBLISHER = {American Mathematical Society, Providence, RI},
      YEAR = {1988},
     PAGES = {xiv+209},
      ISBN = {0-8218-1530-X},
   MRCLASS = {58F07 (34A55 34B25 35Q20 47E05 81C12)},
  MRNUMBER = {954382},
MRREVIEWER = {M. Victor Wickerhauser},
       DOI = {10.1090/surv/028},
       URL = {https://doi-org.bris.idm.oclc.org/10.1090/surv/028},
}

\bib{BerBo}{article}{
AUTHOR = {Bertola, Marco},
author={Bothner, Thomas},
     TITLE = {Universality conjecture and results for a model of several
              coupled positive-definite matrices},
   JOURNAL = {Comm. Math. Phys.},
  FJOURNAL = {Communications in Mathematical Physics},
    VOLUME = {337},
      YEAR = {2015},
    NUMBER = {3},
     PAGES = {1077--1141},
      ISSN = {0010-3616},
   MRCLASS = {60B20 (15B48)},
  MRNUMBER = {3339172},
       DOI = {10.1007/s00220-015-2327-7},
       URL = {https://doi-org.bris.idm.oclc.org/10.1007/s00220-015-2327-7},
}

\bib{BerCa0}{article}{
AUTHOR = {Bertola, Marco},
author={Cafasso, Mattia},
     TITLE = {The transition between the gap probabilities from the
              {P}earcey to the airy process---a {R}iemann-{H}ilbert
              approach},
   JOURNAL = {Int. Math. Res. Not. IMRN},
  FJOURNAL = {International Mathematics Research Notices. IMRN},
      YEAR = {2012},
    NUMBER = {7},
     PAGES = {1519--1568},
      ISSN = {1073-7928},
   MRCLASS = {60B20 (35Q15)},
  MRNUMBER = {2913183},
MRREVIEWER = {Beno\^{\i}t Collins},
       DOI = {10.1093/imrn/rnr066},
       URL = {https://doi-org.bris.idm.oclc.org/10.1093/imrn/rnr066},
}

\bib{BerCa1}{article}{
AUTHOR = {Bertola, M.},
author={Cafasso, M.},
     TITLE = {Riemann-{H}ilbert approach to multi-time processes: the {A}iry
              and the {P}earcey cases},
   JOURNAL = {Phys. D},
  FJOURNAL = {Physica D. Nonlinear Phenomena},
    VOLUME = {241},
      YEAR = {2012},
    NUMBER = {23-24},
     PAGES = {2237--2245},
      ISSN = {0167-2789},
   MRCLASS = {60G55 (60B20)},
  MRNUMBER = {2998125},
       DOI = {10.1016/j.physd.2012.01.003},
       URL = {https://doi-org.bris.idm.oclc.org/10.1016/j.physd.2012.01.003},
}

\bib{Blow1}{article}{
AUTHOR = {Blower, Gordon},
     TITLE = {Integrable operators and the squares of {H}ankel operators},
   JOURNAL = {J. Math. Anal. Appl.},
  FJOURNAL = {Journal of Mathematical Analysis and Applications},
    VOLUME = {340},
      YEAR = {2008},
    NUMBER = {2},
     PAGES = {943--953},
      ISSN = {0022-247X},
   MRCLASS = {47B35 (15A52 42C05 47N30)},
  MRNUMBER = {2390900},
MRREVIEWER = {Razvan Teodorescu},
       DOI = {10.1016/j.jmaa.2007.09.034},
       URL = {https://doi-org.bris.idm.oclc.org/10.1016/j.jmaa.2007.09.034},
}

\bib{Blow2}{article}{
AUTHOR = {Blower, Gordon},
     TITLE = {Operators associated with soft and hard spectral edges from
              unitary ensembles},
   JOURNAL = {J. Math. Anal. Appl.},
  FJOURNAL = {Journal of Mathematical Analysis and Applications},
    VOLUME = {337},
      YEAR = {2008},
    NUMBER = {1},
     PAGES = {239--265},
      ISSN = {0022-247X},
   MRCLASS = {47B35 (15A52 47N30 60E99)},
  MRNUMBER = {2356068},
MRREVIEWER = {Estelle L. Basor},
       DOI = {10.1016/j.jmaa.2007.03.084},
       URL = {https://doi-org.bris.idm.oclc.org/10.1016/j.jmaa.2007.03.084},
}

\bib{BBu}{article}{
AUTHOR = {Bothner, Thomas},
author={Buckingham, Robert},
     TITLE = {Large deformations of the {T}racy-{W}idom distribution {I}:
              {N}on-oscillatory asymptotics},
   JOURNAL = {Comm. Math. Phys.},
  FJOURNAL = {Communications in Mathematical Physics},
    VOLUME = {359},
      YEAR = {2018},
    NUMBER = {1},
     PAGES = {223--263},
      ISSN = {0010-3616},
   MRCLASS = {60B20 (60F10)},
  MRNUMBER = {3781450},
MRREVIEWER = {Karol Kajetan Kozlowski},
       DOI = {10.1007/s00220-017-3006-7},
       URL = {https://doi-org.bris.idm.oclc.org/10.1007/s00220-017-3006-7},
}

\bib{Bris}{article}{
AUTHOR = {Brislawn, Chris},
     TITLE = {Kernels of trace class operators},
   JOURNAL = {Proc. Amer. Math. Soc.},
  FJOURNAL = {Proceedings of the American Mathematical Society},
    VOLUME = {104},
      YEAR = {1988},
    NUMBER = {4},
     PAGES = {1181--1190},
      ISSN = {0002-9939},
   MRCLASS = {47B38 (47B10)},
  MRNUMBER = {929421},
MRREVIEWER = {V. S. Sunder},
       DOI = {10.2307/2047610},
       URL = {https://doi-org.bris.idm.oclc.org/10.2307/2047610},
}

\bib{Bro}{article}{
title={Exponential ensemble for random matrices},
  author={Bronk, Burt V},
  journal={Journal of Mathematical Physics},
  volume={6},
  number={2},
  pages={228--237},
  year={1965},
  publisher={American Institute of Physics},
  URL={https://doi.org/10.1063/1.1704274},
  DOI={10.1063/1.1704274},
}

\bib{BJ}{article}{
AUTHOR = {Butzer, Paul L.},
author={Jansche, Stefan},
     TITLE = {A direct approach to the {M}ellin transform},
   JOURNAL = {J. Fourier Anal. Appl.},
  FJOURNAL = {The Journal of Fourier Analysis and Applications},
    VOLUME = {3},
      YEAR = {1997},
    NUMBER = {4},
     PAGES = {325--376},
      ISSN = {1069-5869},
   MRCLASS = {44A15 (44-01 47G10)},
  MRNUMBER = {1468369},
MRREVIEWER = {Henri-Michel Maire},
       DOI = {10.1007/BF02649101},
       URL = {https://doi-org.bris.idm.oclc.org/10.1007/BF02649101},
}

\bib{Ces}{article}{
AUTHOR = {Cipolloni, Giorgio},
author={Erd\H{o}s, L\'{a}szl\'{o}},
author={Schr\"{o}der, Dominik},
     TITLE = {Edge universality for non-{H}ermitian random matrices},
   JOURNAL = {Probab. Theory Related Fields},
  FJOURNAL = {Probability Theory and Related Fields},
    VOLUME = {179},
      YEAR = {2021},
    NUMBER = {1-2},
     PAGES = {1--28},
      ISSN = {0178-8051},
   MRCLASS = {60B20 (15B52)},
  MRNUMBER = {4221653},
MRREVIEWER = {Felix Krahmer},
       DOI = {10.1007/s00440-020-01003-7},
       URL = {https://doi.org/10.1007/s00440-020-01003-7},
}

\bib{CGS}{article}{
AUTHOR = {Claeys, Tom},
author={Girotti, Manuela},
author={Stivigny, Dries},
     TITLE = {Large gap asymptotics at the hard edge for product random
              matrices and {M}uttalib-{B}orodin ensembles},
   JOURNAL = {Int. Math. Res. Not. IMRN},
  FJOURNAL = {International Mathematics Research Notices. IMRN},
      YEAR = {2019},
    NUMBER = {9},
     PAGES = {2800--2847},
      ISSN = {1073-7928},
   MRCLASS = {60B20 (33C60 34M50)},
  MRNUMBER = {3947639},
MRREVIEWER = {Thomas Kriecherbauer},
       DOI = {10.1093/imrn/rnx202},
       URL = {https://doi-org.bris.idm.oclc.org/10.1093/imrn/rnx202},
}

\bib{Cor}{article}{
AUTHOR = {Corwin, Ivan},
     TITLE = {The {K}ardar-{P}arisi-{Z}hang equation and universality class},
   JOURNAL = {Random Matrices Theory Appl.},
  FJOURNAL = {Random Matrices. Theory and Applications},
    VOLUME = {1},
      YEAR = {2012},
    NUMBER = {1},
     PAGES = {1130001, 76},
      ISSN = {2010-3263},
   MRCLASS = {82B31 (60B20 60K35 60K37)},
  MRNUMBER = {2930377},
       DOI = {10.1142/S2010326311300014},
       URL = {https://doi.org/10.1142/S2010326311300014},
}

\bib{DInt}{article}{
AUTHOR = {Deift, P.},
     TITLE = {Integrable operators},
 BOOKTITLE = {Differential operators and spectral theory},
    SERIES = {Amer. Math. Soc. Transl. Ser. 2},
    VOLUME = {189},
     PAGES = {69--84},
 PUBLISHER = {Amer. Math. Soc., Providence, RI},
      YEAR = {1999},
   MRCLASS = {47G10 (35Q15 45P05 47B35 82-02)},
  MRNUMBER = {1730504},
MRREVIEWER = {Luen-Chau Li},
       DOI = {10.1090/trans2/189/06},
       URL = {https://doi.org/10.1090/trans2/189/06},
}

\bib{D}{book}{
AUTHOR = {Deift, P.},
     TITLE = {Orthogonal polynomials and random matrices: a
              {R}iemann-{H}ilbert approach},
    SERIES = {Courant Lecture Notes in Mathematics},
    VOLUME = {3},
 PUBLISHER = {New York University, Courant Institute of Mathematical
              Sciences, New York; American Mathematical Society, Providence,
              RI},
      YEAR = {1999},
     PAGES = {viii+273},
      ISBN = {0-9658703-2-4; 0-8218-2695-6},
   MRCLASS = {47B80 (15A52 30E25 33D45 37K10 42C05 47B36 60F99)},
  MRNUMBER = {1677884},
MRREVIEWER = {Alexander Vladimirovich Kitaev},
}

\bib{DIZ}{article}{
AUTHOR = {Deift, Percy},
author={Its, Alexander},
author={Zhou, Xin},
     TITLE = {A {R}iemann-{H}ilbert approach to asymptotic problems arising
              in the theory of random matrix models, and also in the theory
              of integrable statistical mechanics},
   JOURNAL = {Ann. of Math. (2)},
  FJOURNAL = {Annals of Mathematics. Second Series},
    VOLUME = {146},
      YEAR = {1997},
    NUMBER = {1},
     PAGES = {149--235},
      ISSN = {0003-486X},
   MRCLASS = {47G10 (15A52 30E25 34A55 34E20 41A60 82B23 82B44)},
  MRNUMBER = {1469319},
MRREVIEWER = {John N. Palmer},
       DOI = {10.2307/2951834},
       URL = {https://doi-org.bris.idm.oclc.org/10.2307/2951834},
}

\bib{DZ}{article}{
AUTHOR = {Deift, P.},
author={Zhou, X.},
     TITLE = {A steepest descent method for oscillatory {R}iemann-{H}ilbert
              problems. {A}symptotics for the {MK}d{V} equation},
   JOURNAL = {Ann. of Math. (2)},
  FJOURNAL = {Annals of Mathematics. Second Series},
    VOLUME = {137},
      YEAR = {1993},
    NUMBER = {2},
     PAGES = {295--368},
      ISSN = {0003-486X},
   MRCLASS = {35Q53 (34A55 34L25 35Q15 35Q55)},
  MRNUMBER = {1207209},
MRREVIEWER = {Alexey V. Samokhin},
       DOI = {10.2307/2946540},
       URL = {https://doi.org/10.2307/2946540},
}

\bib{DF}{article}{
AUTHOR = {Desrosiers, Patrick},
author={Forrester, Peter J.},
     TITLE = {Relationships between {$\tau$}-functions and {F}redholm
              determinant expressions for gap probabilities in random matrix
              theory},
   JOURNAL = {Nonlinearity},
  FJOURNAL = {Nonlinearity},
    VOLUME = {19},
      YEAR = {2006},
    NUMBER = {7},
     PAGES = {1643--1656},
      ISSN = {0951-7715},
   MRCLASS = {15A90 (15A52 34M55 45B05 60A10 82C31)},
  MRNUMBER = {2229797},
MRREVIEWER = {Romuald A. Janik},
       DOI = {10.1088/0951-7715/19/7/012},
       URL = {https://doi-org.bris.idm.oclc.org/10.1088/0951-7715/19/7/012},
}

\bib{Di}{article}{
AUTHOR = {Dieng, Momar},
     TITLE = {Distribution functions for edge eigenvalues in orthogonal and
              symplectic ensembles: {P}ainlev\'{e} representations},
   JOURNAL = {Int. Math. Res. Not.},
  FJOURNAL = {International Mathematics Research Notices},
      YEAR = {2005},
    NUMBER = {37},
     PAGES = {2263--2287},
      ISSN = {1073-7928},
   MRCLASS = {60A10 (15A18 15A52 33E17 60E05 82B31)},
  MRNUMBER = {2181265},
MRREVIEWER = {Catherine Donati-Martin},
       DOI = {10.1155/IMRN.2005.2263},
       URL = {https://doi.org/10.1155/IMRN.2005.2263},
}

\bib{Du}{article}{
AUTHOR = {Duflo, Michel},
     TITLE = {Repr\'{e}sentations induites d'alg\`ebres de {L}ie},
   JOURNAL = {C. R. Acad. Sci. Paris S\'{e}r. A-B},
  FJOURNAL = {Comptes Rendus Hebdomadaires des S\'{e}ances de l'Acad\'{e}mie des
              Sciences. S\'{e}ries A et B},
    VOLUME = {272},
      YEAR = {1971},
     PAGES = {A1157--A1158},
      ISSN = {0151-0509},
   MRCLASS = {17.30},
  MRNUMBER = {277578},
MRREVIEWER = {F. W. Lemire},
}

\bib{Dy}{article}{
AUTHOR = {Dyson, Freeman J.},
     TITLE = {Fredholm determinants and inverse scattering problems},
   JOURNAL = {Comm. Math. Phys.},
  FJOURNAL = {Communications in Mathematical Physics},
    VOLUME = {47},
      YEAR = {1976},
    NUMBER = {2},
     PAGES = {171--183},
      ISSN = {0010-3616},
   MRCLASS = {81.47},
  MRNUMBER = {406201},
MRREVIEWER = {Marcel Coz},
       URL = {http://projecteuclid.org.bris.idm.oclc.org/euclid.cmp/1103899727},
}

\bib{E}{article}{
AUTHOR = {Ehrhardt, Torsten},
     TITLE = {The asymptotics of a {B}essel-kernel determinant which arises
              in random matrix theory},
   JOURNAL = {Adv. Math.},
  FJOURNAL = {Advances in Mathematics},
    VOLUME = {225},
      YEAR = {2010},
    NUMBER = {6},
     PAGES = {3088--3133},
      ISSN = {0001-8708},
   MRCLASS = {60B20 (33C10 47A75 47B35)},
  MRNUMBER = {2729003},
MRREVIEWER = {Nizar Demni},
       DOI = {10.1016/j.aim.2010.05.020},
       URL = {https://doi-org.bris.idm.oclc.org/10.1016/j.aim.2010.05.020},
}

\bib{FT}{book}{
AUTHOR = {Faddeev, Ludwig D.},
author={Takhtajan, Leon A.},
     TITLE = {Hamiltonian methods in the theory of solitons},
    SERIES = {Classics in Mathematics},
   EDITION = {English},
      NOTE = {Translated from the 1986 Russian original by Alexey G. Reyman},
 PUBLISHER = {Springer, Berlin},
      YEAR = {2007},
     PAGES = {x+592},
      ISBN = {978-3-540-69843-2},
   MRCLASS = {37K10 (35P25 35Q51 35Q55 35R30 37J35 37N20 81R12)},
  MRNUMBER = {2348643},
}

\bib{FS}{article}{
AUTHOR = {Ferrari, Patrik L.}
author={Spohn, Herbert},
     TITLE = {A determinantal formula for the {GOE} {T}racy-{W}idom
              distribution},
   JOURNAL = {J. Phys. A},
  FJOURNAL = {Journal of Physics. A. Mathematical and General},
    VOLUME = {38},
      YEAR = {2005},
    NUMBER = {33},
     PAGES = {L557--L561},
      ISSN = {0305-4470},
   MRCLASS = {82B41 (60K35)},
  MRNUMBER = {2165698},
       DOI = {10.1088/0305-4470/38/33/L02},
       URL = {https://doi.org/10.1088/0305-4470/38/33/L02},
}

\bib{FTZ}{article}{
AUTHOR = {FitzGerald, Will},
author={Tribe, Roger},
author={Zaboronski, Oleg},
     TITLE = {Asymptotic expansions for a class of {F}redholm {P}faffians
              and interacting particle systems},
   JOURNAL = {Ann. Probab.},
  FJOURNAL = {The Annals of Probability},
    VOLUME = {50},
      YEAR = {2022},
    NUMBER = {6},
     PAGES = {2409--2474},
      ISSN = {0091-1798},
   MRCLASS = {60G55 (47N30 60B20 60J90 82C22)},
  MRNUMBER = {4499280},
       DOI = {10.1214/22-aop1586},
       URL = {https://doi-org.bris.idm.oclc.org/10.1214/22-aop1586},
}

\bib{FIKN}{book}{
AUTHOR = {Fokas, Athanassios S.}
author={Its, Alexander R.}
author={Kapaev, Andrei A.}
author={Novokshenov, Victor Yu.},
     TITLE = {Painlev\'{e} transcendents},
    SERIES = {Mathematical Surveys and Monographs},
    VOLUME = {128},
      NOTE = {The Riemann-Hilbert approach},
 PUBLISHER = {American Mathematical Society, Providence, RI},
      YEAR = {2006},
     PAGES = {xii+553},
      ISBN = {0-8218-3651-X},
   MRCLASS = {33E17 (30E25 34M50 34M55 37K15 37K20)},
  MRNUMBER = {2264522},
MRREVIEWER = {Galina V. Filipuk},
       DOI = {10.1090/surv/128},
       URL = {https://doi.org/10.1090/surv/128},
}

\bib{FZ}{article}{
AUTHOR = {Fokas, A. S.}
author={Zhou, Xin},
     TITLE = {On the solvability of {P}ainlev\'{e} {${\rm II}$} and {${\rm
              IV}$}},
   JOURNAL = {Comm. Math. Phys.},
  FJOURNAL = {Communications in Mathematical Physics},
    VOLUME = {144},
      YEAR = {1992},
    NUMBER = {3},
     PAGES = {601--622},
      ISSN = {0010-3616},
   MRCLASS = {34A20 (34A12 34B30)},
  MRNUMBER = {1158763},
MRREVIEWER = {Andre\u{\i} Bolibrukh},
       URL = {http://projecteuclid.org.bris.idm.oclc.org/euclid.cmp/1104249411},
}

\bib{F2}{article}{
AUTHOR = {Forrester, P. J.},
     TITLE = {The spectrum edge of random matrix ensembles},
   JOURNAL = {Nuclear Phys. B},
  FJOURNAL = {Nuclear Physics. B. Theoretical, Phenomenological, and
              Experimental High Energy Physics. Quantum Field Theory and
              Statistical Systems},
    VOLUME = {402},
      YEAR = {1993},
    NUMBER = {3},
     PAGES = {709--728},
      ISSN = {0550-3213},
   MRCLASS = {82B41 (15A18 15A52 15A90 82B05)},
  MRNUMBER = {1236195},
MRREVIEWER = {Pawel S. Kurzepa},
       DOI = {10.1016/0550-3213(93)90126-A},
       URL = {https://doi.org/10.1016/0550-3213(93)90126-A},
}

\bib{F0}{article}{
AUTHOR = {Forrester, P. J.},
     TITLE = {Hard and soft edge spacing distributions for random matrix
              ensembles with orthogonal and symplectic symmetry},
   JOURNAL = {Nonlinearity},
  FJOURNAL = {Nonlinearity},
    VOLUME = {19},
      YEAR = {2006},
    NUMBER = {12},
     PAGES = {2989--3002},
      ISSN = {0951-7715},
   MRCLASS = {82B44 (15A52 33C10)},
  MRNUMBER = {2275509},
MRREVIEWER = {Gernot Akemann},
       DOI = {10.1088/0951-7715/19/12/015},
       URL = {https://doi-org.bris.idm.oclc.org/10.1088/0951-7715/19/12/015},
}

\bib{F1}{book}{
AUTHOR = {Forrester, P. J.},
     TITLE = {Log-gases and random matrices},
    SERIES = {London Mathematical Society Monographs Series},
    VOLUME = {34},
 PUBLISHER = {Princeton University Press, Princeton, NJ},
      YEAR = {2010},
     PAGES = {xiv+791},
      ISBN = {978-0-691-12829-0},
   MRCLASS = {82-02 (33C45 60B20 82B05 82B41 82B44)},
  MRNUMBER = {2641363},
MRREVIEWER = {Steven Joel Miller},
       DOI = {10.1515/9781400835416},
       URL = {https://doi-org.bris.idm.oclc.org/10.1515/9781400835416},
}

\bib{Gin}{article}{
AUTHOR = {Ginibre, Jean},
     TITLE = {Statistical ensembles of complex, quaternion, and real
              matrices},
   JOURNAL = {J. Mathematical Phys.},
  FJOURNAL = {Journal of Mathematical Physics},
    VOLUME = {6},
      YEAR = {1965},
     PAGES = {440--449},
      ISSN = {0022-2488},
   MRCLASS = {22.60 (53.90)},
  MRNUMBER = {173726},
MRREVIEWER = {J. Dieudonn\'{e}},
       DOI = {10.1063/1.1704292},
       URL = {https://doi.org/10.1063/1.1704292},
}

\bib{Gir1}{article}{
AUTHOR = {Girotti, Manuela},
     TITLE = {Gap probabilities for the generalized {B}essel process: a
              {R}iemann-{H}ilbert approach},
   JOURNAL = {Math. Phys. Anal. Geom.},
  FJOURNAL = {Mathematical Physics, Analysis and Geometry. An International
              Journal Devoted to the Theory and Applications of Analysis and
              Geometry to Physics},
    VOLUME = {17},
      YEAR = {2014},
    NUMBER = {1-2},
     PAGES = {183--211},
      ISSN = {1385-0172},
   MRCLASS = {60G55 (60B20)},
  MRNUMBER = {3239311},
       DOI = {10.1007/s11040-014-9149-2},
       URL = {https://doi-org.bris.idm.oclc.org/10.1007/s11040-014-9149-2},
}

\bib{Gir2}{article}{
AUTHOR = {Girotti, Manuela},
     TITLE = {Riemann-{H}ilbert approach to gap probabilities for the
              {B}essel process},
   JOURNAL = {Phys. D},
  FJOURNAL = {Physica D. Nonlinear Phenomena},
    VOLUME = {295/296},
      YEAR = {2015},
     PAGES = {103--121},
      ISSN = {0167-2789},
   MRCLASS = {60G55 (30E25 34M55)},
  MRNUMBER = {3317257},
MRREVIEWER = {Ana Foulqui\'{e} Moreno},
       DOI = {10.1016/j.physd.2014.12.009},
       URL = {https://doi-org.bris.idm.oclc.org/10.1016/j.physd.2014.12.009},
}

%\bib{GGK}{book}{
%AUTHOR = {Gohberg, Israel}
%author={Goldberg, Seymour}
%author={Krupnik, Nahum},
%     TITLE = {Traces and determinants of linear operators},
%    SERIES = {Operator Theory: Advances and Applications},
%    VOLUME = {116},
% PUBLISHER = {Birkh\"{a}user Verlag, Basel},
%      YEAR = {2000},
%     PAGES = {x+258},
%      ISBN = {3-7643-6177-8},
%   MRCLASS = {47B10 (45B05 45P05 47A53 47G10 47L10)},
%  MRNUMBER = {1744872},
%MRREVIEWER = {Hermann K\"{o}nig},
%       DOI = {10.1007/978-3-0348-8401-3},
%       URL = {https://doi.org/10.1007/978-3-0348-8401-3},
%}

\bib{H}{article}{
AUTHOR = {Hsu, P. L.},
     TITLE = {On the distribution of roots of certain determinantal
              equations},
   JOURNAL = {Ann. Eugenics},
  FJOURNAL = {Annals of Eugenics. A Journal Devoted to the Genetic Study of
              Human Populations},
    VOLUME = {9},
      YEAR = {1939},
     PAGES = {250--258},
      ISSN = {2050-1420},
   MRCLASS = {62.0X},
  MRNUMBER = {1500},
MRREVIEWER = {C. C. Craig},
}

\bib{IIKS}{article}{
AUTHOR = {Its, Alexander}
author={Izergin, Anatoli}
author={Korepin, Vladimir}
author={Slavnov, Nikita}
     TITLE = {Differential equations for quantum correlation functions},
 BOOKTITLE = {Proceedings of the {C}onference on {Y}ang-{B}axter
              {E}quations, {C}onformal {I}nvariance and {I}ntegrability in
              {S}tatistical {M}echanics and {F}ield {T}heory},
   JOURNAL = {Internat. J. Modern Phys. B},
  FJOURNAL = {International Journal of Modern Physics B},
    VOLUME = {4},
      YEAR = {1990},
    NUMBER = {5},
     PAGES = {1003--1037},
      ISSN = {0217-9792},
   MRCLASS = {82B10 (35Q40 58G40 82C10)},
  MRNUMBER = {1064758},
MRREVIEWER = {Anatoliy K. Prykarpatsky},
       DOI = {10.1142/S0217979290000504},
       URL = {https://doi.org/10.1142/S0217979290000504},
}

\bib{K}{article}{
AUTHOR = {Krajenbrink, Alexandre},
     TITLE = {From {P}ainlev\'{e} to {Z}akharov-{S}habat and beyond: {F}redholm
              determinants and integro-differential hierarchies},
   JOURNAL = {J. Phys. A},
  FJOURNAL = {Journal of Physics. A. Mathematical and Theoretical},
    VOLUME = {54},
      YEAR = {2021},
    NUMBER = {3},
     PAGES = {035001, 51},
      ISSN = {1751-8113},
   MRCLASS = {37K10 (34M55 37J65)},
  MRNUMBER = {4209129},
       DOI = {10.1088/1751-8121/abd078},
       URL = {https://doi-org.bris.idm.oclc.org/10.1088/1751-8121/abd078},
}

\bib{KDO}{article}{
  title = {Tilted elastic lines with columnar and point disorder, non-Hermitian quantum mechanics, and spiked random matrices: Pinning and localization},
  author = {Krajenbrink, Alexandre},
  author={Le Doussal, Pierre},
  author={O'Connell, Neil},
  journal = {Phys. Rev. E},
  volume = {103},
  issue = {4},
  pages = {042120},
  numpages = {37},
  year = {2021},
  month = {Apr},
  publisher = {American Physical Society},
  doi = {10.1103/PhysRevE.103.042120},
  url = {https://link.aps.org/doi/10.1103/PhysRevE.103.042120}
}

\bib{DMS}{article}{
  title = {Multicritical Edge Statistics for the Momenta of Fermions in Nonharmonic Traps},
  author = {Le Doussal, Pierre}
  author={Majumdar, Satya N.}
  author={Schehr, Gr\'egory},
  journal = {Phys. Rev. Lett.},
  volume = {121},
  issue = {3},
  pages = {030603},
  numpages = {7},
  year = {2018},
  month = {Jul},
  publisher = {American Physical Society},
  doi = {10.1103/PhysRevLett.121.030603},
  url = {https://link.aps.org/doi/10.1103/PhysRevLett.121.030603}
}

\bib{DMSa}{article}{
 title = {Multicritical Edge Statistics for the Momenta of Fermions in Nonharmonic Traps},
  author = {Le Doussal, Pierre}
  author={Majumdar, Satya N.}
  author={Schehr, Gr\'egory},
year={2018},
eprint={https://arxiv.org/abs/1802.06436},
    archivePrefix={arXiv},
    primaryClass={cond-mat.stat-mech}
}

\bib{Ka}{article}{
AUTHOR = {Kac, M.},
     TITLE = {Toeplitz matrices, translation kernels and a related problem
              in probability theory},
   JOURNAL = {Duke Math. J.},
  FJOURNAL = {Duke Mathematical Journal},
    VOLUME = {21},
      YEAR = {1954},
     PAGES = {501--509},
      ISSN = {0012-7094},
   MRCLASS = {42.4X},
  MRNUMBER = {62867},
MRREVIEWER = {W. Rudin},
       URL = {http://projecteuclid.org/euclid.dmj/1077465879},
}

\bib{Kat}{book}{
AUTHOR = {Katznelson, Yitzhak},
     TITLE = {An introduction to harmonic analysis},
    SERIES = {Cambridge Mathematical Library},
   EDITION = {Third},
 PUBLISHER = {Cambridge University Press, Cambridge},
      YEAR = {2004},
     PAGES = {xviii+314},
      ISBN = {0-521-83829-0; 0-521-54359-2},
   MRCLASS = {43-01 (42-02 43-02)},
  MRNUMBER = {2039503},
       DOI = {10.1017/CBO9781139165372},
       URL = {https://doi-org.bris.idm.oclc.org/10.1017/CBO9781139165372},
}

\bib{MP}{article}{
AUTHOR = {Mar\v{c}enko, V. A.},
author={Pastur, L. A.},
     TITLE = {Distribution of eigenvalues in certain sets of random
              matrices},
   JOURNAL = {Mat. Sb. (N.S.)},
    VOLUME = {72 (114)},
      YEAR = {1967},
     PAGES = {507--536},
   MRCLASS = {60.30 (60.90)},
  MRNUMBER = {0208649},
MRREVIEWER = {J. G. Wendel},
}

\bib{NIST}{book}{
TITLE = {N{IST} handbook of mathematical functions},
    EDITOR = {Olver, Frank W. J.}
    editor={Lozier, Daniel W.}
    editor={Boisvert, Ronald F.}
    editor={Clark, Charles W.},
      %NOTE = {With 1 CD-ROM (Windows, Macintosh and UNIX)},
 PUBLISHER = {U.S. Department of Commerce, National Institute of Standards
              and Technology, Washington, DC; Cambridge University Press,
              Cambridge},
      YEAR = {2010},
     PAGES = {xvi+951},
      ISBN = {978-0-521-14063-8},
   MRCLASS = {33-00 (00A20 65-00)},
  MRNUMBER = {2723248},
}

\bib{PTZ}{article}{
AUTHOR = {Poplavskyi, Mihail},
author={Tribe, Roger},
author={Zaboronski, Oleg},
     TITLE = {On the distribution of the largest real eigenvalue for the
              real {G}inibre ensemble},
   JOURNAL = {Ann. Appl. Probab.},
  FJOURNAL = {The Annals of Applied Probability},
    VOLUME = {27},
      YEAR = {2017},
    NUMBER = {3},
     PAGES = {1395--1413},
      ISSN = {1050-5164},
   MRCLASS = {60B20 (60F10)},
  MRNUMBER = {3678474},
MRREVIEWER = {Ofer Zeitouni},
       DOI = {10.1214/16-AAP1233},
       URL = {https://doi.org/10.1214/16-AAP1233},
}

\bib{RS}{article}{
AUTHOR = {Rider, Brian},
author={Sinclair, Christopher D.},
     TITLE = {Extremal laws for the real {G}inibre ensemble},
   JOURNAL = {Ann. Appl. Probab.},
  FJOURNAL = {The Annals of Applied Probability},
    VOLUME = {24},
      YEAR = {2014},
    NUMBER = {4},
     PAGES = {1621--1651},
      ISSN = {1050-5164},
   MRCLASS = {60B20 (60G25 60G70)},
  MRNUMBER = {3211006},
MRREVIEWER = {Anna Lytova},
       DOI = {10.1214/13-AAP958},
       URL = {https://doi-org.bris.idm.oclc.org/10.1214/13-AAP958},
}

\bib{S}{book}{
AUTHOR = {Simon, Barry},
     TITLE = {Trace ideals and their applications},
    SERIES = {Mathematical Surveys and Monographs},
    VOLUME = {120},
   EDITION = {Second},
 PUBLISHER = {American Mathematical Society, Providence, RI},
      YEAR = {2005},
     PAGES = {viii+150},
      ISBN = {0-8218-3581-5},
   MRCLASS = {47L20 (47A40 47A55 47B10 47B36 47E05 81Q15 81U99)},
  MRNUMBER = {2154153},
MRREVIEWER = {Pavel B. Kurasov},
}

\bib{S2}{book}{
AUTHOR = {Simon, Barry},
     TITLE = {Real analysis},
    SERIES = {A Comprehensive Course in Analysis, Part 1},
      NOTE = {With a 68 page companion booklet},
 PUBLISHER = {American Mathematical Society, Providence, RI},
      YEAR = {2015},
     PAGES = {xx+789},
      ISBN = {978-1-4704-1099-5},
   MRCLASS = {26-01 (28-01 33-01 35-01 42-01 46-01 54-01 60-01)},
  MRNUMBER = {3408971},
MRREVIEWER = {Fritz Gesztesy},
       DOI = {10.1090/simon/001},
       URL = {https://doi-org.bris.idm.oclc.org/10.1090/simon/001},
}

\bib{Sos}{article}{
AUTHOR = {Soshnikov, Alexander},
     TITLE = {Universality at the edge of the spectrum in {W}igner random
              matrices},
   JOURNAL = {Comm. Math. Phys.},
  FJOURNAL = {Communications in Mathematical Physics},
    VOLUME = {207},
      YEAR = {1999},
    NUMBER = {3},
     PAGES = {697--733},
      ISSN = {0010-3616},
   MRCLASS = {82B41 (15A52 60F99 82B44)},
  MRNUMBER = {1727234},
MRREVIEWER = {Boris A. Khoruzhenko},
       DOI = {10.1007/s002200050743},
       URL = {https://doi.org/10.1007/s002200050743},
}

\bib{SS}{book}{
AUTHOR = {Stein, Elias M.},
author={Shakarchi, Rami},
     TITLE = {Complex analysis},
    SERIES = {Princeton Lectures in Analysis},
    VOLUME = {2},
 PUBLISHER = {Princeton University Press, Princeton, NJ},
      YEAR = {2003},
     PAGES = {xviii+379},
      ISBN = {0-691-11385-8},
   MRCLASS = {30-01},
  MRNUMBER = {1976398},
MRREVIEWER = {Heinrich Begehr},
}

\bib{Tit}{book}{
AUTHOR = {Titchmarsh, E. C.},
     TITLE = {Introduction to the theory of {F}ourier integrals},
   EDITION = {Third},
 PUBLISHER = {Chelsea Publishing Co., New York},
      YEAR = {1986},
     PAGES = {x+394},
      ISBN = {0-8284-0324-4},
   MRCLASS = {42-03 (01A75)},
  MRNUMBER = {942661},
}

\bib{TW}{article}{
AUTHOR = {Tracy, Craig A.}
author={Widom, Harold},
     TITLE = {Level-spacing distributions and the {A}iry kernel},
   JOURNAL = {Comm. Math. Phys.},
  FJOURNAL = {Communications in Mathematical Physics},
    VOLUME = {159},
      YEAR = {1994},
    NUMBER = {1},
     PAGES = {151--174},
      ISSN = {0010-3616},
   MRCLASS = {82B05 (33C90 47A75 47G10 47N55 82B10)},
  MRNUMBER = {1257246},
MRREVIEWER = {Estelle L. Basor},
       URL = {http://projecteuclid.org/euclid.cmp/1104254495},
}

\bib{TW2}{article}{
AUTHOR = {Tracy, Craig A.},
author={Widom, Harold},
     TITLE = {Level spacing distributions and the {B}essel kernel},
   JOURNAL = {Comm. Math. Phys.},
  FJOURNAL = {Communications in Mathematical Physics},
    VOLUME = {161},
      YEAR = {1994},
    NUMBER = {2},
     PAGES = {289--309},
      ISSN = {0010-3616},
   MRCLASS = {82B05 (33C90 47A75 47G10 47N55 82B10)},
  MRNUMBER = {1266485},
MRREVIEWER = {Estelle L. Basor},
       URL = {http://projecteuclid.org.bris.idm.oclc.org/euclid.cmp/1104269903},
}

\bib{TW3}{article}{
AUTHOR = {Tracy, Craig A.},
author={Widom, Harold},
     TITLE = {On orthogonal and symplectic matrix ensembles},
   JOURNAL = {Comm. Math. Phys.},
  FJOURNAL = {Communications in Mathematical Physics},
    VOLUME = {177},
      YEAR = {1996},
    NUMBER = {3},
     PAGES = {727--754},
      ISSN = {0010-3616},
   MRCLASS = {82B44 (15A52 47N55 60H25)},
  MRNUMBER = {1385083},
MRREVIEWER = {Oleksiy Khorunzhiy},
       URL = {http://projecteuclid.org.bris.idm.oclc.org/euclid.cmp/1104286442},
}

\bib{TW4}{article}{
AUTHOR = {Tracy, Craig A.},
author={Widom, Harold},
     TITLE = {Matrix kernels for the {G}aussian orthogonal and symplectic
              ensembles},
   JOURNAL = {Ann. Inst. Fourier (Grenoble)},
  FJOURNAL = {Universit\'{e} de Grenoble. Annales de l'Institut Fourier},
    VOLUME = {55},
      YEAR = {2005},
    NUMBER = {6},
     PAGES = {2197--2207},
      ISSN = {0373-0956},
   MRCLASS = {82C31 (47B34 47N30 60F99 82C22)},
  MRNUMBER = {2187952},
MRREVIEWER = {Bruce A. Watson},
       URL = {http://aif.cedram.org.bris.idm.oclc.org/item?id=AIF_2005__55_6_2197_0},
}

\bib{Wid}{article}{
AUTHOR = {Widom, Harold},
     TITLE = {Integral operators in random matrix theory},
 BOOKTITLE = {Random matrices, random processes and integrable systems},
    SERIES = {CRM Ser. Math. Phys.},
     PAGES = {229--249},
 PUBLISHER = {Springer, New York},
      YEAR = {2011},
   MRCLASS = {60B20 (15B52 33E17 47G10 60-02)},
  MRNUMBER = {2858437},
       DOI = {10.1007/978-1-4419-9514-8\_3},
       URL = {https://doi-org.bris.idm.oclc.org/10.1007/978-1-4419-9514-8_3},
}

\bib{Wig}{article}{
AUTHOR={Wigner, E.P.},
TITLE={Statistical properties of real symmetric matrices with many dimensions},
BOOKTITLE={in: Canadian Mathematical Congress},
PAGES={174-184},
PUBLISHER={University of Toronto Press, Toronto, Canada},
YEAR={1957},
URL={http://lptms.u-psud.fr/nicolas_pavloff/files/2010/03/wigner1.pdf},
}

\bib{Wis}{article}{
AUTHOR={Wishart, John},
TITLE={The Generalised Product Moment Distribution in Samples from a Normal Multivariate Population},
 journal = {Biometrika},
 number = {1/2},
 pages = {32--52},
 publisher = {[Oxford University Press, Biometrika Trust]},
 volume = {20A},
 year = {1928},
 DOI = {10.2307/2331939},
 URL = {http://www.jstor.org/stable/2331939},
 }

\bib{ZS}{article}{
AUTHOR = {Zakharov, V. E.},
author={Shabat, A. B.},
     TITLE = {Exact theory of two-dimensional self-focusing and
              one-dimensional self-modulation of waves in nonlinear media},
   JOURNAL = {\v{Z}. \`Eksper. Teoret. Fiz.},
  FJOURNAL = {\v{Z}. \`Eksper. Teoret. Fiz.},
    VOLUME = {61},
      YEAR = {1971},
    NUMBER = {1},
     PAGES = {118--134},
   MRCLASS = {81.35},
  MRNUMBER = {0406174},
}

\end{biblist}
\end{bibsection}
\end{document}